\newcommand{\gke}{\textsc{Gkeyll}\xspace}
\newcommand{\dx}[1]{\textnormal{d}#1}
\newcommand{\pderiv}[2]{
\frac{\partial #1}{\partial #2}
}
\newcommand{\deriv}[2]{
\frac{\dx{#1}}{\dx{#2}}
}
\newcommand{\pderivInline}[2]{
\partial #1/\partial #2
}
\renewcommand{\vec}[1]{\ensuremath{\mbox{\boldmath$ {#1} $}}} 
\newcommand{\uv}[1]{\ensuremath{\mathbf{\hat{#1}}}} 
\newcommand{\uvg}[1]{\boldsymbol{\hat{\boldsymbol{#1}}}}
\newcommand{\jac}{\mathcal{J}}
\let\ccdot=\cdot
\renewcommand{\cdot}{\boldsymbol{\ccdot}}
\newcommand{\dashover}[2][\mathop]{#1{\mathpalette\df@over{{\dashfill}{#2}}}}
\newcommand{\fillover}[2][\mathop]{#1{\mathpalette\df@over{{\solidfill}{#2}}}}
\newcommand{\df@over}[2]{\df@@over#1#2}
\newcommand\df@@over[3]{%
  \vbox{
    \offinterlineskip
    \ialign{##\cr
      #2{#1}\cr
      \noalign{\kern1pt}
      $\m@th#1#3$\cr
    }
  }%
}
\newcommand{\dashfill}[1]{%
  \kern-.5pt
  \xleaders\hbox{\kern.5pt\vrule height.4pt width \dash@width{#1}\kern.5pt}\hfill
  \kern-.5pt
}
\newcommand{\dash@width}[1]{%
  \ifx#1\displaystyle
    2pt
  \else
    \ifx#1\textstyle
      1.5pt
    \else
      \ifx#1\scriptstyle
        1.25pt
      \else
        \ifx#1\scriptscriptstyle
          1pt
        \fi
      \fi
    \fi
  \fi
}
\newcommand{\solidfill}[1]{\leaders\hrule\hfill}
\theoremstyle{plain}
\newtheorem{proposition}{Proposition}
\newtheorem{lemma}{Lemma}
\theoremstyle{remark}
\newtheorem*{remark}{Remark}
\author{Noah Roth Mandell}
\title{Magnetic Fluctuations in Gyrokinetic Simulations of Tokamak Scrape-Off Layer Turbulence}
\abstract{
Understanding turbulent transport physics in the tokamak edge and scrape-off layer (SOL) is critical to developing a successful fusion reactor. The dynamics in these regions plays a key role in achieving high fusion performance by determining the edge pedestal that suppresses turbulence in the high-confinement mode (H-mode). Additionally, the survivability of a reactor is set by the heat load to the vessel walls, making it important to understand turbulent spreading of heat as it flows along open magnetic field lines in the SOL. Large-amplitude fluctuations, magnetic X-point geometry, and plasma interactions with material walls make simulating turbulence in the edge/SOL more challenging than in the core region, necessitating specialized gyrokinetic codes. Further, the inclusion of electromagnetic effects in gyrokinetic simulations that can handle the unique challenges of the boundary plasma is critical to the understanding of phenomena such as the pedestal and  edge-localized modes, for which electromagnetic dynamics are expected to be important.

In this thesis, we develop the first capability to simulate electromagnetic gyrokinetic turbulence on open magnetic field lines. This is an important step towards comprehensive electromagnetic gyrokinetic simulations of the coupled edge/SOL system. By using a continuum full-$f$ approach via an energy-conserving discontinuous Galerkin (DG) discretization scheme that avoids the Amp\`ere cancellation problem, we show that electromagnetic fluctuations can be handled in a robust, stable, and efficient manner in the gyrokinetic module of the \gke code. We then present results which roughly model the scrape-off layer of the National Spherical Torus Experiment (NSTX), and show that electromagnetic effects can affect blob dynamics and transport. We also formulate the gyrokinetic system in field-aligned coordinates for modeling realistic edge and scrape-off layer geometries in experiments. A novel DG algorithm for maintaining positivity of the distribution function while preserving conservation laws is also presented. 
}
\begin{document}

\makefrontmatter

\chapter{Introduction} \label{ch:intro}

\section{Motivation: the promise of fusion energy}

After Einstein first discovered the relationship between energy and mass, governed by the iconic equation $E= mc^2$, it was soon realized that this relationship was the key to the process that produces the energy of the Sun and stars: nuclear fusion. Fifteen years after Einstein's discovery, British astrophysicist Arthur Eddington was the first to describe how the Sun and similarly-sized stars create their energy by fusing hydrogen atoms into helium. Eddington realized that the tiny difference in mass between a helium atom and its constituent hydrogen parts, as had been recently shown by Aston, meant that the `missing' mass is converted into energy via Einstein's equation. ``The store is well nigh inexhaustible, if only it could be tapped,'' Eddington said in a lecture at the annual meeting of the British Association for the Advancement of Science in Cardiff \citep{eddington1920}. Thus began the promise of man-made fusion as a terrestrial energy source. 

One of the main allures of fusion power is the abundance of the fuel. Unlike fossils fuels, which at current energy-consumption rates would be burned through in less than 1,000 years (causing catastrophic global warming in the process), the fuel for fusion is virtually limitless because it can be extracted from seawater. The most promising fusion reaction for use on Earth is not the proton-proton reaction that powers the Sun, but an easier-to-initiate reaction between deuterium ($^2$H) and tritium ($^3$H):
\begin{equation}
    ^2\mathrm{H} + {}^3\mathrm{H} \rightarrow {}^4\mathrm{He}\ (3.5\ \mathrm{MeV}) + \mathrm{n}\ (14.1\ \mathrm{MeV}).
\end{equation}
Deuterium is a naturally abundant isotope of hydrogen that can be readily extracted from seawater at minimum cost, with each liter of seawater containing $\sim 0.02$ g of deuterium. Tritium is not naturally abundant due to its relatively short half-life of 12.3 years. However, fusion reactors can use the energetic neutron from the D-T reaction to breed their own tritium via lithium ($^6$Li) blankets via the reaction
\begin{equation}
    \mathrm{n}  + {}^6\mathrm{Li} \rightarrow {}^4\mathrm{He}\ (2.1\ \mathrm{MeV}) + {}^3\mathrm{H}\ (2.7\ \mathrm{MeV}).
\end{equation}
Current world lithium supplies are approximately 13.5 million tons, but lithium is also contained in seawater at a concentration of 0.2 mg per liter. Thus there is enough fusion fuel readily available in the oceans to power the Earth for millions of years, several orders of magnitude longer than other terrestrial fuel sources other than solar energy \citep{cowley2016}. 

Other major benefits of fusion are its minimal environmental impact and operational safety. Fusion would be clean and virtually carbon-neutral, emitting no greenhouse gases and not contributing to climate change. While this benefit is also shared by nuclear fission (where energy is produced by splitting the nuclei of heavy elements like uranium), fusion has the additional advantage that it has no long-lived radioactive byproducts. Helium is an inert gas, and while the energetic neutron from the D-T reaction can transmute the materials in the walls of a reactor and make them radioactive over time, the use of low-activation wall materials would make the waste substantially safer than fission waste. Further, a fusion power plant would be safer to operate than a fission power plant, as there is no runaway meltdown scenario. Unlike fission reactions, fusion reactions immediately shut down when the fuel is removed or cooled.

Unfortunately, a fusion reaction is very difficult to get started. To produce fusion, the positively-charged fuel nuclei must have enough energy to overcome the repulsive Coulomb force between them; only then can they get close enough to fuse together via the nuclear strong force and release energy. Unlike in the Sun, where immense gravitational pressure creates conditions necessary for fusion, terrestrial fusion must achieve fusion conditions via other methods. The most promising approach is to heat a gas of deuterium and tritium to very high temperature, so that particles have enough energy that random collisions can overcome the Coulomb repulsion. The energies required, of order $10$ keV, are well above the electron binding energy, resulting in the fuel gases ionizing fully and becoming a plasma.

At these extreme temperatures, the fuel cannot simply be contained by material walls. Instead, we can take advantage of the fact that a plasma is composed of charged particles. In the presence of magnetic fields, charged particles spiral helically around the field lines, providing a way to control the particle motion. Particles are still free to move parallel to the magnetic field lines, so one way to confine them is to wrap the field lines into a torus shape, creating a `magnetic bottle'. However, a simple ring configuration leads to vertical particle drifts, making the configuration intrinsically unstable. This problem can be overcome by twisting the field lines into a helical shape wrapping around the torus. The twisting magnetic field guides the particles up or down, counteracting the drift motion and enhancing confinement. This configuration is the basis of both the tokamak and stellarator concepts. In a tokamak, the twist in the field is produced by a current driven toroidally through the plasma, whereas in a stellarator the twist is produced by shaped helical field coils. 
These two configurations are shown schematically in \cref{fig:tokamak-stellarator}. 
We will focus on tokamaks in this work.

\begin{figure}
    \centering
    \includegraphics[width=\textwidth]{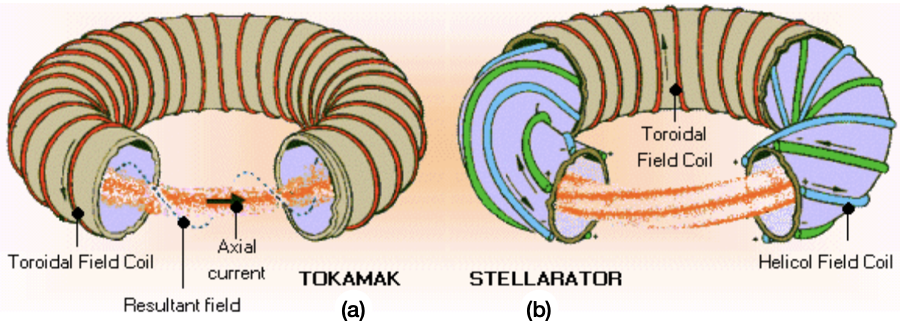}
    \caption[Schematic diagrams for a tokamak and a stellarator.]{Schematic diagrams for a tokamak (left) and a stellarator (right). In a tokamak, the poloidal component of the magnetic field that gives the helical twist is produced by current driven toroidally through the plasma. In a stellarator, specially-shaped helical field coils produce the poloidal component of the magnetic field. (Source: CEA)}
    \label{fig:tokamak-stellarator}
\end{figure}

\section{Turbulent transport in fusion plasmas}

With the plasma confined (to lowest order), the plasma can be heated without direct contact with the vessel walls. The goal is then to keep the plasma hot enough and dense enough for long enough for fusion to occur. This is the idea behind the fusion triple product, $n T \tau_E$, where $n$ is the plasma density, $T$ is the mean temperature, and $\tau_E$ is the energy confinement time. Lawson's criterion gives the condition for `break-even', at which point the plasma's self-heating from fusion exceeds its losses \citep{wesson2005},
\begin{equation}
    n T \tau_E \geq 10^{21}\ \mathrm{keV\cdot s/m^3}.
\end{equation}
In practice, the energy confinement time has proved to be the most challenging component to maximize. It is defined as
\begin{equation}
    \tau_E = \frac{W}{P_\mathrm{loss}},
\end{equation}
where $W$ is the energy content of the plasma and $P_\mathrm{loss}$ is the energy loss rate. 
While the particles are well-confined along the magnetic field lines so that the parallel (with respect to the field lines) confinement time is very large, particles can also diffuse radially outward, perpendicular to the magnetic field.
As a result, the perpendicular diffusion rate is the limiting factor on the energy confinement time. The transport was originally thought to be dominated by collisional processes (yielding ``classical'' and geometry-modified ``neoclassical'' transport), but these processes were found to greatly under-predict the transport seen in tokamaks, with most of the measured transport denoted ``anomalous''. It is now recognized that plasma turbulence is responsible for this anomalous component, so that tokamak plasma confinement is dominated by turbulent transport. 

In the tokamak core, turbulence is driven by small-scale, low-frequency ``micro-instabilities''. These instabilities feed off the density and temperature gradients that inherently result from the requirement that the temperature must be low ($\sim 10^3$ K) near the walls of the device but very hot in the core ($\sim 10^8$ K). Despite fluctuation levels of only order $1\%$, core turbulence leads to significant transport of particles, momentum, and heat. The fluctuations typically have length scales perpendicular to the background magnetic field on the order of the ion gyroradius $\rho_i=v_{ti}/\Omega_i$ and frequencies (and growth rates) on the order of the diamagnetic drift frequency, $\omega_\ast = k_\theta \rho_i v_{ti}/L_n$, where $k_\theta$ is a typical poloidal wavenumber, $v_{ti} = \sqrt{T_i/m_i}$ is the ion thermal speed, $\Omega_i= Ze B/m_i$ is the ion cyclotron frequency, and $L_n = -(\dx{\ln n}/\dx{r})^{-1}$ is the density scale length. Given these length and time scales, we can make a simple mixing-length estimate of the diffusivity,
\begin{equation}
    D \sim \frac{(\Delta x)^2}{\Delta t} \sim \rho_i^2 \omega_\ast \sim \rho_i^2 k_\theta \rho_i \frac{v_{ti}}{L_n}.
\end{equation}
Taking $k_\theta \rho_i \sim 1$ yields the so-called gyro-Bohm diffusivity, $D_\mathrm{gB} \sim \rho_i^2 v_{ti}/L_n$. While this gives a rough scaling of the transport, additional theory and numerical simulation are required for meaningful understanding and quantitative prediction of turbulent transport in tokamaks.


\section{The boundary plasma}

\begin{figure}
    \centering
    \includegraphics[width=.5\textwidth]{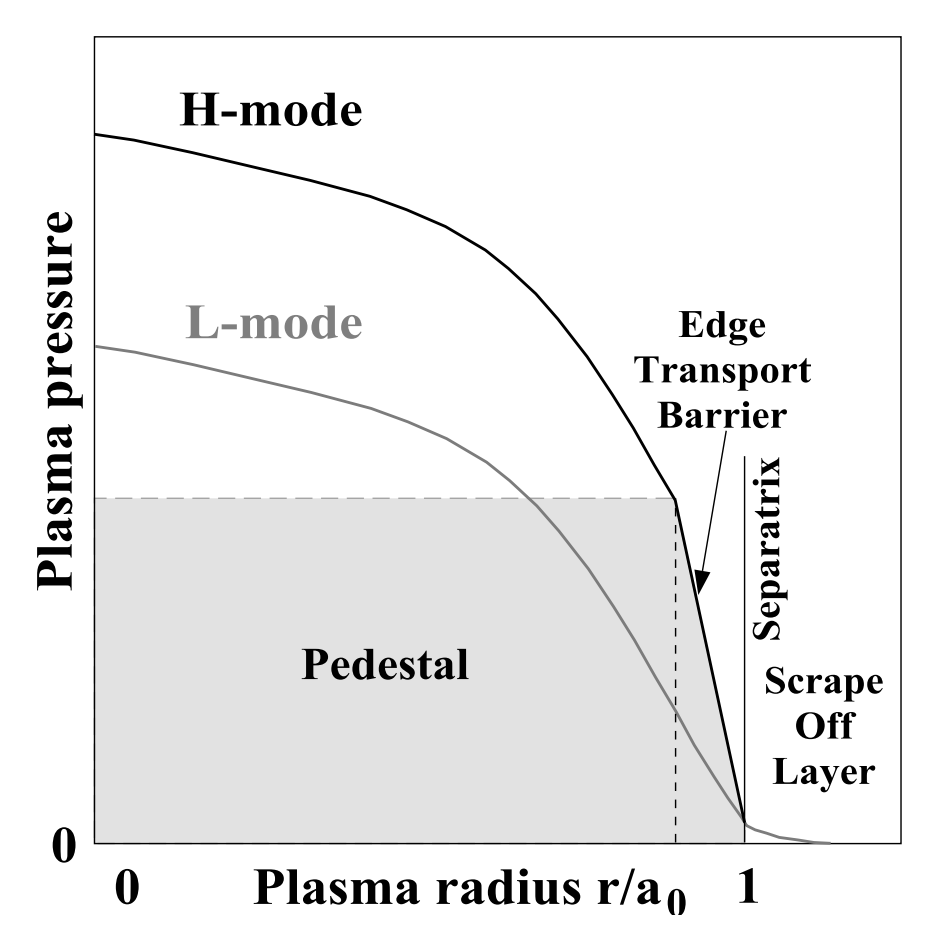}
    \caption[Diagram of plasma profiles in L- and H-mode regimes.]{Diagram of plasma profiles in L- and H-mode regimes. In H-mode profiles are elevated by the pedestal, which is the result of a transport barrier that forms near the last closed flux surface. Figure from \citep{lang2013}.}
    \label{fig:pedestal}
\end{figure}

While the core attracted much of the focus in the early days of fusion research, it was soon realized that the edge and scrape-off layer (SOL), which we together refer to as the boundary plasma, greatly affect the device performance and dynamics. Performance is strongly determined by the edge profiles because core profiles of density and temperature are relatively stiff \citep{doyle2007,kinsey2011}. A primary example of this is the high-confinement mode (H-mode), first discovered by \citet{wagner1982}, where a steep-gradient transport barrier region called the pedestal forms in the edge and raises the core profiles (as if they were standing on a pedestal), as shown in \cref{fig:pedestal}. Strong sheared poloidal flows are observed in this region, correlated with a reduction in turbulent fluctuation levels and fluxes. Understanding pedestal formation and predicting the pedestal height are of great current interest \citep{snyder2011}, and a major motivator for first-principles modeling of the boundary plasma.

The scrape-off layer (SOL) is the region outside the last closed flux surface (LCFS) where the field lines are open and terminate on material walls. Charged particles move freely along the field lines and are lost when they strike the walls (until they recombine and reenter the plasma as cold neutrals, a process called recycling). The dynamics in the SOL is primarily set by the interplay between particles and heat crossing the LCFS from the edge, parallel losses to the walls, cross-field turbulent transport, and plasma surface interactions (PSIs), including recycling and impurity fluxes. As a result of these processes, the SOL plasma is rather cold, with $T_e \sim 10-100$ eV. 

\begin{figure}
    \centering
    \includegraphics[width=\textwidth]{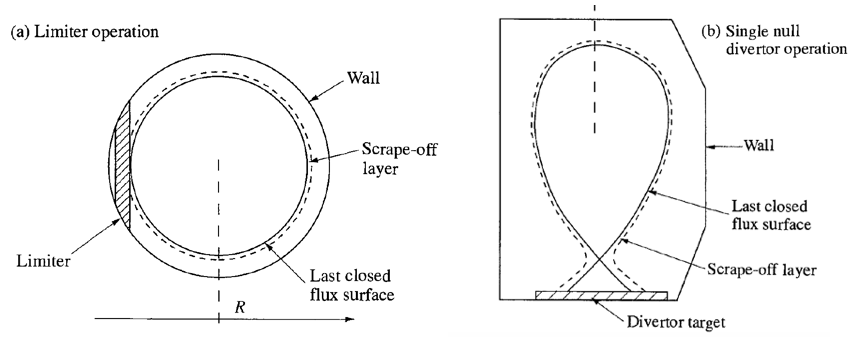}
    \caption[Diagram of limiter and divertor configurations of tokamak operation.]{Diagram of limiter $(a)$ and divertor $(b)$ configurations of tokamak operation. Figure from \citet{wesson2005}.}
    \label{fig:limiter-divertor}
\end{figure}

The termination points of the open field lines in the SOL are determined by whether the tokamak is operated in a limiter or divertor configuration, as shown in the diagram in \cref{fig:limiter-divertor}. In the former, material limiters are placed at various locations on the first wall. The field lines that intersect the limiters then define the SOL. While the limiter configuration is operational,
the divertor configuration is generally preferred in high-performance devices. In the divertor configuration, an external current in the direction of the plasma current is applied at the top and/or bottom of the device, resulting in the formation of X-point nulls. This moves the plasma-wall interactions onto the divertor targets, which are much further away from the main core plasma than limiter plates. This is beneficial since neutrals and impurities released from the divertor plates cannot directly enter the core plasma. Divertor configurations are also preferable for handling the heat exhaust requirements of the SOL and removing impurities and fusion ash via pumping \citep{wesson2005}.

\subsection{Intermittent SOL transport and blob dynamics} \label{sec:blob-dynamics}

The cross-field transport in the SOL is highly intermittent. Unlike in the core, where the transport is dominated by small fluctuations, fluctuations in the SOL can be comparable to the equilibrium quantities. This is primarily due to the convective transport of coherent structures of enhanced density and temperature called blobs or filaments. These structures propagate quasi-ballistically, moving radially outwards and resulting in significant particle and heat transport.
Blobs are highly extended along the field line with parallel lengths $\sim 1-10$ m and much smaller scales $\sim 1-10$ cm perpendicular to the field \citep{zweben2017}. The intermittent nature of blob transport suggests that a simple picture of diffusive transport is inadequate \citep{naulin2007}. Instead, the transport is avalanche-like, suggesting that the system gets pushed up against some critical gradient threshold and then intermittently releases bursts of transport when the threshold is exceeded \citep{labombard2005,labombard2008}. An expansive review of experimental evidence and theoretical understanding of intermittent edge turbulence and blobs is given by \citet{dippolito2011}.

The basic mechanism of blob transport is plasma polarization due to magnetic drifts. On the outboard side of the tokamak, the curvature and $\nabla B$ drifts are vertical, with ions drifting in one direction and electrons drifting in the other. The resulting charge polarization produces a vertical electric field across the blob, giving a radially outward $E\times B$ drift \citep{krasheninnikov2001}. This is shown schematically in \cref{fig:blob-diagram}. 

\begin{figure}[t]
    \centering
    \includegraphics[width=.55\textwidth]{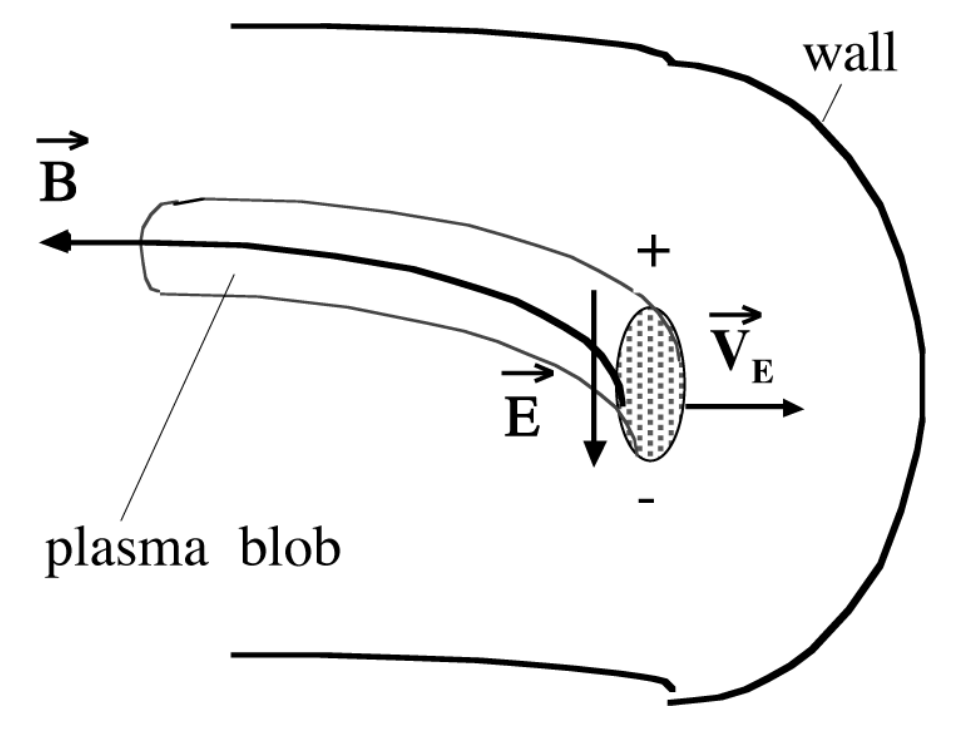}
    \caption[Diagram of basic blob transport mechanism from charge polarization.]{Diagram of basic blob transport mechanism from charge polarization. Figure from \citep{krasheninnikov2008}.}
    \label{fig:blob-diagram}
\end{figure}

\begin{figure}[t]
    \centering
    \includegraphics[width=.7\textwidth]{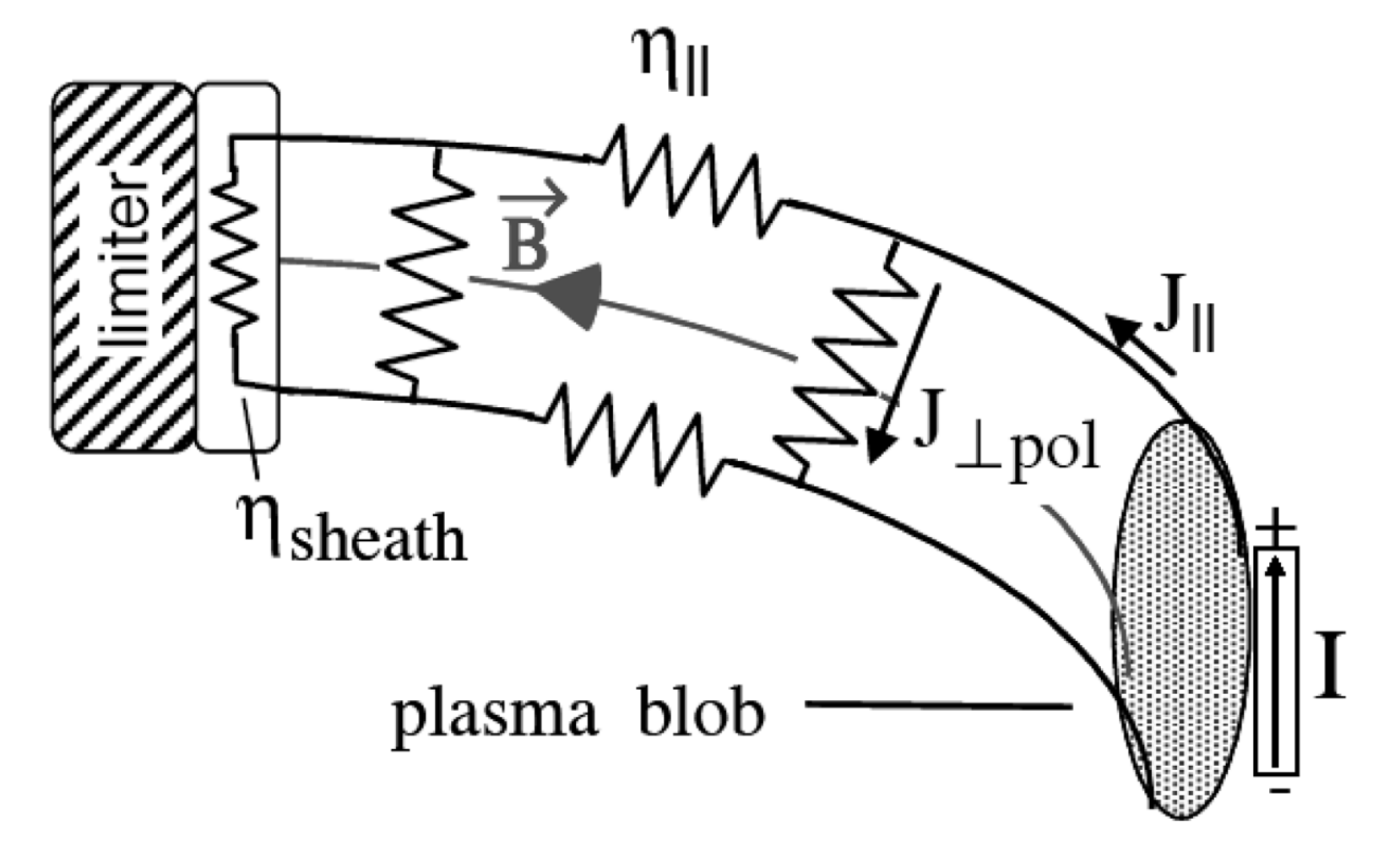}
    \caption[Diagram of equivalent blob circuit.]{Diagram of equivalent blob circuit. Charge polarization in the blob provides a constant current source. The current flows along the field lines in a dipolar structure. The current can be closed through the sheath or through the plasma, depending on the resistivities of the respective current paths. Figure from \citep{krasheninnikov2008}. Note that the circuit element through which the polarization current flows may be more appropriately characterized as a capacitor \citep{xu2010}.
    }
    \label{fig:blob-circuit}
\end{figure}

The magnitude of the blob electric field, and thereby the blob speed, is affected by the balance between the polarization current and parallel currents. To explain this, it is useful to visualize the currents in the blob via a blob equivalent circuit \citep{myra2005,krasheninnikov2008, xu2010}, as shown in the circuit diagram in \cref{fig:blob-circuit} from \citet{krasheninnikov2008}. (Note that the circuit element through which the polarization current flows may be more appropriately characterized as a capacitor due to plasma inertia \citep{xu2010}.) The magnetic drifts act as a local current source. At constant current, the potential drop across the blob is determined by the resistance in the circuit. 
If the plasma has low resistivity ($\eta_\parallel$), the current flows freely along the field lines to the sheath, and the effective sheath resistance ($\eta_\mathrm{sheath}$) will determine the blob potential and thereby the blob velocity. This is known as the sheath-limited regime, and it can lead to reduced blob speed and transport as the blob polarization current can be effectively shorted out by the current closure through the sheath. Conversely, if the plasma resistivity $\eta_\parallel$ is larger due to increased collisionality, the effective resistivity in the circuit will increase and lead to larger blob velocity. At large enough resistivity, parallel currents are hindered enough that cross-field current closure happens away from the sheath via ion polarization currents or collisional currents, with complete disconnection giving the inertial or resistive-ballooning regime. Magnetic shear (especially near the X-point) can have the opposite effect on the blob velocity, as it can lead to a thin, elongated region of the blob where magnetic shear is strong. This makes it easier for cross-field currents to close the circuit through the thin sheared part of the blob, reducing the resistivity of the current loop and thereby slowing the blob. Current closure through regions of high magnetic shear can thus also effectively disconnect the blob from the sheath. However, notice that sheath disconnection due to increased collisionality gives the opposite effect on blob velocity than sheath disconnection via magnetic shear; the former results in increased effective blob circuit resistivity and larger velocities, while the latter decreases resistivities and slows the blobs \citep{dippolito2011,krasheninnikov2008}.

\subsection{SOL heat exhaust problem}

Particles and heat from the core are transported across the LCFS and exhausted in the SOL. The heat flows quickly along the open field lines to the walls, with the parallel heat flux in the SOL reaching above 500 MW/m$^2$ in some present devices and expected to be $\sim 1$ GW/m$^2$ in ITER \citep{loarte2007}. The maximum heat load for present materials with active cooling is typically $10$ MW/m$^2$ normal to the surface in steady state and $20$ MW/m$^2$ for transients \citep{loarte2007}. Thus the heat load must be reduced below these material limits in order to avoid damage to the wall plates and the introduction of impurities that degrade fusion performance. 
The heat load can be reduced in part by making the incidence angle of the magnetic field lines on the walls very shallow $(\sim 2-5^\circ)$ to reduce the component of the flux normal to the walls, but this still leaves a significant portion of the heat to be dissipated via other means. The width of the heat flux channel becomes an important parameter, since spreading the heat over a larger area reduces the peak heat load. Here, cross-field turbulent transport is beneficial as it can widen the heat flux width. An empirical scaling of the heat flux width, $\lambda_q$, computed from a multi-machine database has shown that the heat-flux width, mapped to the outboard midplane, varies strongest with the inverse of the plasma current (or equivalently, the inverse of the poloidal magnetic field strength) \citep{eich2013}. Simply extrapolating the scaling from present-day experiments to the upcoming ITER experiment suggests that the heat flux width for the ITER $Q=10$ baseline could be $\approx 1$ mm \citep{eich2013}, much smaller than the $3-3.5$ mm result from the ITER physics basis based mostly on JET ELM-averaged data \citep{loarte2007}. SOLPS transport modeling has suggested $\lambda_q = 3.6$ mm \citep{kukushkin2013}. The validity of these empirical scalings for the ITER heat flux width is an important issue that must be addressed by first-principles modeling. A recent XGC1 electrostatic gyrokinetic simulation predicted $\lambda_q \approx 5.9$ mm \citep{chang2017}, with the width widened due to electron turbulence. Additional analysis of XGC1 data has suggested that trapped electron mode (TEM) turbulence in particular is responsible for increased SOL heat transport. While $E \times B$ shear suppresses TEM in the SOL of present devices, $E\times B$ shear is predicted to be weaker in ITER, allowing TEM to drive transport. These results have suggested a new scaling of $\lambda_q \sim 1/B_\mathrm{pol}(a/\rho_{i,\mathrm{pol}})$, with the new parameter $a/\rho_{i,\mathrm{pol}}$ related to the neoclassical $E\times B$ shearing rate \citep{chang2020}.

%

\section{Electromagnetic effects in the boundary plasma}

In this thesis we will focus in particular on electromagnetic effects in the plasma boundary.
The edge/SOL region features steep pressure gradients, especially in the H-mode transport barrier and SOL regions, which contribute to the importance of electromagnetic effects.
Experimental evidence has indicated that the edge plasma state is controlled by electromagnetic drift wave dynamics \citep{labombard2005,labombard2008}. 
In this regime, the parallel electron dynamics is no longer fast relative to the drift turbulence, so electrons can no longer be treated adiabatically \citep{scott1997}. This leads to coupling of the perpendicular vortex motions and kinetic shear Alfv\'en waves, which results in field-line bending \citep{xu2010}. 
The slowing of parallel electron dynamics can also add impedance along the field line, leading to blobs becoming electrically disconnected from the sheath and resulting in enhanced blob velocities. While in the electrostatic case the sheath potential is communicated to the upstream plasma rapidly on the order of the electron transit time ($\tau_e = L_\parallel / v_{te}$, with $v_{te} = \sqrt{T_e/m_e}$), in the electromagnetic case Alfv\'en waves communicate the potential on the order of the Alfv\'en time, $\tau_A = L_\parallel / v_A = L_\parallel / (\sqrt{2/\beta} c_\mathrm{s}) = L_\parallel\sqrt{\mu_0 n m_i}/B$, with $c_\mathrm{s} = \sqrt{T_e/m_i}$ the sound speed. Thus a basic condition for electromagnetic effects to alter sheath connection is $v_A < v_{te}$, or $\hat{\beta} \equiv (\beta/2)m_i/m_e > 1$. If in the time $\tau_A$ the blob is able to move more than its width across the field, the information about the sheath will never reach it. Thus the blob will move as if the sheath did not exist if $\tau_A \gtrsim L_\perp / v_\perp$, or ${\beta} \gtrsim (L_\perp/L_\parallel)^2(c_\mathrm{s}/v_\perp)^2$, where $L_\perp$ is the typical length scale of the potential of the blob, and $v_\perp$ is the blob radial velocity at the midplane  \citep{lee2015a,hoare2019}. Given these conditions, electromagnetic effects could especially be important for the high beta filaments found in edge localized modes (ELMs), which are large-scale magnetohydrodynamic (MHD) modes that result in large, high pressure filaments originating from the pedestal. ELM filaments also carry a large unidirectional current, distinguishing them from standard blobs and further enhancing the electromagnetic effects of ELMs by inducing magnetic field perturbations \citep{myra2007,kirk2005, kirk2006, migliucci2010,vianello2011}. Additionally, experiments have found correlations between (non-ELM) large blobs and MHD modes \citep{zweben2020}. 

The following subsections briefly illustrate the role of magnetic induction in determining the parallel electron dynamics and producing field-line bending, mostly following \citet{xu2010} and \citet{scott1997}.

\subsection{Parallel electron dynamics and the role of magnetic induction}

The strong mobility of electrons along the field line makes it important to understand how the parallel current responds to forces due to parallel gradients in the density $n$, electron temperature $T_e$, and the electrostatic potential $\Phi$. The linear response determines the propagation of wave-like disturbances along the field line. The dynamics is governed by the electron parallel force balance equation, also known as the parallel component of the generalized Ohm's law. In the electrostatic limit, we have
\begin{equation}
     \frac{m_e}{n e^2}\deriv{J_\parallel}{t} + \eta_\parallel J_\parallel= \frac{1}{n e} {\nabla}_\parallel p_e - {\nabla}_\parallel \Phi .
\end{equation}
On the right-hand side, we have the balance between the parallel pressure and electric forces, where $p_e=n T_e$ is the electron pressure, $n=n_i=n_e$ is the plasma density (assuming a quasi-neutral plasma with singly charged ions), and $E_\parallel = -\nabla_\parallel \Phi$ is the parallel electric field in the electrostatic limit. Here, $\nabla_\parallel = \uv{b}\cdot\nabla$ denotes a derivative in the direction of the background magnetic field, $\uv{b} = \vec{B}_0/B$. On the left-hand side, the first term is electron inertia, which gives finite-electron-mass ($m_e$), collisionless effects. Here, $\dx{}/\dx{t} = \pderivInline{}{t}+ \vec{v}_E\cdot\nabla$ is the total time derivative, with $\vec{v}_E = (1/B)\uv{b}\times\nabla \Phi$ the $E\times B$ velocity. The second term on the left-hand side is resistive friction, with $J_\parallel \approx - e n u_{\parallel e}$ the parallel current  (dominated by electron parallel flow $u_{\parallel e}$) and $\eta_\parallel = 0.51 m_e \nu_{ei}/(n e^2)$ the parallel resistivity, which is proportional to the electron collision frequency. The electrons are said to be ``adiabatic'' when the forces on the right-hand side balance. After linearizing and assuming the electrons are sufficiently fast to isothermalize along the field line so that $\nabla_\parallel T_e = 0$, we have
\begin{equation}
    T_{e0} \nabla_\parallel n = n_0 e \nabla_\parallel \Phi,
\end{equation}
which results in the adiabatic electron density response, given by the Boltzmann distribution $n= n_0e \Phi/T_{e0}$, with subscript $0$ denoting background quantities. 

Now we will introduce electromagnetic (finite $\beta$) effects. We will consider only perpendicular magnetic fluctuations of the form $\vec{B}_1 = \nabla \times (A_\parallel \uv{b}) \approx - \uv{b}\times \nabla A_\parallel$, where $A_\parallel$ is the parallel component of the magnetic vector potential.  This is related to the parallel current via the parallel component of Amp\`ere's law, 
\begin{equation}
    -\nabla_\perp^2 A_\parallel = \mu_0 J_\parallel.
\end{equation}

Electromagnetic effects enter into parallel force balance in two ways. First, the parallel gradient must be taken along perturbed field lines, resulting in an additional ``magnetic flutter'' component due to $\vec{B}_1$:
\begin{equation}
    \tilde{\nabla}_\parallel \equiv \frac{1}{B}(\vec{B}_0 + \vec{B}_1)\cdot \nabla = \nabla_\parallel - \uv{b}\times \nabla A_\parallel \cdot \nabla.
\end{equation}
Second, magnetic induction adds to the parallel electric field, which is now given by
\begin{equation}
    E_\parallel = -\tilde{\nabla}_\parallel \Phi - \pderiv{A_\parallel}{t}.
\end{equation}
As a result, the parallel force balance equation becomes
\begin{equation}
    \pderiv{A_\parallel}{t} + \frac{m_e}{n e^2}\deriv{J_\parallel}{t} + \eta_\parallel J_\parallel= \frac{1}{n e} \tilde{\nabla}_\parallel p_\parallel - \tilde{\nabla}_\parallel \Phi .
\end{equation}
Balancing the first two terms on the left-hand side, we can see that induction is dominant over inertia at perpendicular scales larger than the collisionless skin depth $d_e = \sqrt{m_e/(n_0 e^2 \mu_0)}$, so that $k_\perp d_e < 1$. Balancing the first and third terms on the left-hand side gives that induction is dominant over resistivity at perpendicular scales larger than the collisional skin depth, so that $k_\perp d_e \sqrt{\nu_{ei}/\omega}<1$, with $\omega$ some characteristic frequency so that $\partial/\partial {t}\sim\omega$. 
Any imbalance of the forces on the right-hand side will result in non-adiabatic electrons, providing a channel to exchange the internal particle energy with the magnetic energy of field-line bending (via induction), or producing irreversible dissipation of magnetic energy (via resistivity). Defining the parallel electromotive force (emf) as \citep{hinton2003,xu2010}
\begin{equation}
    \psi \equiv \int (\eta_\parallel J_\parallel - E_\parallel)\dx{\ell},
\end{equation}
with $\ell$ the length along the perturbed field line, we can rewrite parallel force balance compactly as
\begin{equation}
    \pderiv{A_\parallel}{t} + \eta_\parallel J_\parallel  = \tilde{\nabla}_\parallel(\psi - \Phi). \label{force-bal}
\end{equation}

\subsection{Field-line bending}

To compute the evolution (bending) of magnetic field lines, we use Faraday's law,
\begin{equation}
\pderiv{\vec{B}_1}{t}= - \nabla\times \vec{E}.
\end{equation}
The electric field is given by 
\begin{equation}
    \vec{E} = -\nabla \Phi - \uv{b}\pderiv{A_\parallel}{t} = -\nabla \phi - \tilde{\nabla}_\parallel(\psi -\Phi)\uv{b} + \eta \vec{J} = -\nabla_\perp \Phi - \tilde{\nabla}_\parallel \psi \uv{b}+ \eta \vec{J},
\end{equation}
where we have used parallel force balance and dropped the parallel subscripts on the resistivity term ($\eta_\parallel J_\parallel \gg \eta_\perp J_\perp$). Note that $\nabla_\perp \Phi =\vec{v}_E\times (\vec{B}_0 + \vec{B}_1)$. Substituting into Faraday's law, we have
\begin{equation}
    \pderiv{\vec{B}_1}{t} = - \nabla\times \left[\vec{v}_E\times(\vec{B}_0 + \vec{B}_1)\right] + \nabla\times(\tilde{\nabla}_\parallel \psi \uv{b})+ D_m \nabla^2 \vec{B}_1 - \nabla \eta \times \vec{J}.
\end{equation}
From left to right, on the right-hand side we have a frozen-in term, a drift term, a magnetic diffusion term (with $D_m = \eta/\mu_0$) and a resistivity gradient term. In the limit of small resistivity, we can write
\begin{equation}
    \pderiv{\vec{B}_1}{t} \approx \nabla\times\left[ \tilde{\nabla}_\parallel(\psi -\Phi)\uv{b}\right].
\end{equation}
This shows that the net parallel gradient force (the right-hand side of \cref{force-bal}) drives line bending via non-adiabatic electrons exchanging energy with the magnetic field  \citep{xu2010}.

Note that we can also write the electric field as 
\begin{equation}
    \vec{E} = (\vec{B}_0+\vec{B}_1)\times \vec{v}_F - \nabla \psi +\eta \vec{J},
\end{equation}
where $\vec{v}_F \equiv (1/B)\uv{b}\times\nabla(\Phi-\psi)$ is the velocity of field lines (neglecting resistive magnetic diffusion). From this, Faraday's law becomes
\begin{equation}
    \pderiv{\vec{B}_1}{t} = - \nabla\times \left[\vec{v}_F\times(\vec{B}_0+\vec{B}_1)\right] + D_m \nabla^2 \vec{B}_1 - \nabla \eta \times \vec{J}.
\end{equation}
From this it follows that magnetic flux is conserved in the limit of no resistivity or diffusion, with field lines advected with velocity $\vec{v}_F$. The difference between the $E\times B$ velocity and the velocity of the field lines is a function of the parallel emf: $\Delta \vec{v} = \vec{v}_E - \vec{v}_F = (1/B)\uv{b}\times \nabla \psi$ \citep{xu2010}.

\section{Modeling the boundary plasma}

The boundary of a tokamak is a complicated nonlinear system. As such, numerical modeling is a critical tool for helping to understand the physics of the boundary plasma. As detailed below, several approaches have produced valuable results and insights at varying levels of complexity and computational expense. Overviews of some of the numerical modeling approaches and associated simulation codes for the boundary plasma are given by \citet{ricci2015, loarte2007, shi-thesis}, and briefly detailed below.

\subsection{Empirical modeling}
Empirical extrapolation of data obtained from present devices serves as the basis for much of the modeling and design of tokamak divertors and wall systems. Codes solve a simplified set of transport equations based on the Braginskii fluid equations in two dimensions, assuming axisymmetry. Since plasma turbulence is not captured directly in these models, \emph{ad hoc} cross-field anomalous diffusion coefficients are used, with the parameters adjusted to fit existing experimental data. This system is often coupled to a Monte-Carlo neutral particle model so that pumping, fueling, and plasma-wall interactions can be modeled. Codes using this approach include SOLPS (formerly B2-EIRENE) \citep{reiter1991,schneider1992}, UEDGE \citep{rognlien1994}, EDGE2D \citep{simonini1994}, and SOLDOR \citep{shimizu2003}. SOLPS has been used extensively as the SOL simulation code for the ITER divertor design \citep{pitts2009,kukushkin2011}. 

\subsection{Fluid modeling}
Given the relatively low temperatures and high collisionalities of the scrape-off layer, a fluid approach is reasonable to reduce the computational cost of global turbulence simulations. As such, models based on the drift-reduced Braginskii equations \citep{braginskii1965, zeiler1997} have provided valuable results and insights on boundary plasma phenomenon. Since these models only evolve the first three moments of the distribution function, they rely on high collisionality to provide fluid closure. This implicitly assumes that the distribution function is close to thermal equilibrium. Codes employing the drift-reduced Braginskii approach include BOUT++ \citep{xu2008boundary}, GBS \citep{ricci2012simulation,halpern2016a}, TOKAM3X \citep{tamain2010tokam}, GDB \citep{zhu2018}, and GRILLIX \citep{stegmeir2018}.

There have also been efforts to extend the validity of the moment-based approach to more kinetic regimes by using gyrofluid models \citep{ribeiro2008, madsen2013, held2016}, based on earlier work on gyrofluid models for core turbulence \citep{hammett1990, dorland1993, beer1996, snyder2001}. Another recent approach uses an Hermite-Laguerre formulation to allow the use of an arbitrary number of moments, although this approach has not yet produced numerical results \citep{jorge2017, frei2020}.

\subsection{Gyrokinetic modeling}
Despite the high collisionality of the plasma boundary, kinetic treatments will inevitably be necessary for reliable quantitative predictions in some cases \citep{jenko2001nonlinear,cohen2008}. Significant deviations from thermal equilibrium can occur due to transient events such as ELMs \citep{batishcheva1996}. Kinetic treatments are also required if one wishes to cross the LCFS and model the coupled dynamics of the pedestal and the SOL within a single framework, since the fluid approximations break down in the hot pedestal.

While the most general approach would involve solving the full six-dimensional Vlasov-Maxwell or Fokker-Planck-Maxwell system to model the plasma, this is impractical due to the high dimensionality and wide range of timescales involved, including the fast cyclotron motion. Instead, we can take advantage of the fact that the characteristic turbulent modes have frequencies much lower than the cyclotron frequency, allowing us to average over the cyclotron motion and eliminate one of the velocity dimensions (the gyrophase angle). The result is the gyrokinetic model, which describes the evolution of particle guiding centers in a reduced five-dimensional phase space.
Gyrokinetic theory and direct numerical simulation have become important tools for studying turbulence and transport in fusion plasmas, especially in the core region \citep{dimits2000comparisons}. This includes the simulation codes GEM \citep{parker1993gyrokinetic}, GS2 \citep{kotschenreuther1995comparison, dorland2000electron}, GTC \citep{lin2000gyrokinetic}, GENE \citep{jenko2000}, EUTERPE \citep{jost2001}, 
GYRO \citep{candy2003},  
GT3D \citep{idomura2003}, GKV \citep{watanabe2005velocity},  GTS \citep{wang2006}, ORB5 \citep{jolliet2007, lanti2019orb5}, GT5D \citep{idomura2008}, GKW \citep{peeters2009nonlinear}, CGYRO \citep{candy2016}, and GX \citep{mandell2018}. 
In the edge and SOL, gyrokinetic simulations are particularly challenging because the large, intermittent fluctuations in the SOL make assumptions of scale separation between equilibrium and fluctuations not strongly valid. This necessitates a full-$f$ approach that self-consistently evolves the full distribution function, $f$ (as opposed to the $\delta f$ approach commonly used in the core, where one assumes $f=F_0+\delta f$ with a fixed background $F_0$ so that only $\delta f$ perturbations must be evolved, and the parallel electric field nonlinearity is frequently neglected). Additional complications of the edge/SOL region include: open field line regions requiring sheath boundary conditions and models of plasma-wall interactions; X-point geometry in diverted configurations, which makes the use of efficient field-aligned coordinate systems challenging;
a wide range of collisionality regimes, from the hot pedestal top to the cold SOL; and atomic physics and neutral interactions. Major extensions to existing core gyrokinetic codes or altogether new efforts are required to meet these challenges. 
To this end, steady progress in gyrokinetic boundary plasma modeling has been made with both particle-in-cell (PIC) and continuum methods. Codes employing the PIC method in the plasma boundary include XGC1 \citep{ku2009full,ku2016new} and ELMFIRE \citep{korpilo2016gyrokinetic}. Continuum methods are used by the codes COGENT \citep{dorf2016continuum}, Gkeyll \citep{shi2017,shi2019,mandell2020} and a modified version of GENE \citep{pan2018}. Both PIC and continuum methods have their own advantages and disadvantages, as we detail briefly below. XGC1 is currently the most sophisticated code for the plasma boundary, capable of simulating electrostatic gyrokinetic turbulence in realistic diverted geometries, including neutral and atomic physics. It is critical to have at least a few successful codes that can cross-check against each other on the difficult problems in the edge/SOL, so as to give more confidence to the predictions.

\subsubsection{Particle-in-cell (PIC) approach}
The first gyrokinetic simulation algorithms used  particle-in-cell (PIC) methods \citep{lee1983, dimits1993, parker1993fully, denton1995, dimits1996}. 
In the PIC approach, the 5D phase space is sampled with an ensemble of $N_p$ markers or `superparticles', representing some clump of physical particles with given position and velocity. The markers are advanced through the domain according to the characteristics of the gyrokinetic equation, while the electromagnetic fields are evaluated and solved on a fixed three-dimensional grid. Communication between the markers and fields requires interpolation: in order to solve the field equations, the markers must be interpolated onto the grid positions so that charges and currents can be computed; likewise, the effects of the fields must be interpolated onto the marker positions to advance the particles. Since the PIC method is essentially a Monte Carlo sampling technique, sampling noise (which scales as $1/\sqrt{N_p}$) arises in moment calculation and can be problematic in some cases \citep{nevins2005,krommes2007,wilkie2016}. The sampling noise can be reduced but not completely eliminated by $\delta f$ methods \citep{denton1995}. Various other techniques have been used to reduce noise \citep{chen2007, garbet2010}. Noise-related issues have contributed to the challenges of handling electromagnetic fluctuations in PIC codes due to the Amp\`ere cancellation problem, as we discuss below. 
In general, PIC methods benefit from being rather intuitive, fairly efficient, and easily parallelizable, with straight-forward generalization to higher dimensionality. The lack of a need for a velocity-space grid is attractive. Further, PIC methods automatically guarantee positivity of the distribution function. PIC methods also have a longer history to draw on than continuum methods. 

\subsubsection{Continuum (grid-based) approach}
The first continuum gyrokinetic codes were developed some years later \citep{kotschenreuther1995comparison, dorland2000electron,jenko2000, jenko2001nonlinear, candy2003}. In the continuum method, the full five-dimensional gyrokinetic distribution function is discretized on a 5D phase-space grid. Conventional numerical methods for solving partial differential equations are then used to advance the distribution function according to the gyrokinetic equation, including finite-difference, finite-volume, (pseudo)spectral, finite-element, and discontinuous Galerkin (DG) methods. Since the electromagnetic fields are discretized on the configuration-space subset of the grid, no interpolation is required to solve the field equations, only moment calculations. Continuum methods do not suffer from statistical noise issues, which has contributed to the success of continuum codes in including electromagnetic effects where some PIC codes have failed.  
Discretization on a five-dimensional phase-space grid presents some additional challenges for parallelization and memory handling, but these issues can still be handled efficiently with well-designed schemes. In particular, continuum schemes can make use of high-order methods that perform more calculations per grid point and potentially enable faster convergence.
One key disadvantage of continuum methods is the strict Courant-Friedrichs-Lewy (CFL) stability limit placed on the time step for explicit time-advance schemes, which can be especially restrictive for electrostatic simulations \citep{lee1987gyrokinetic} and highly-collisional regimes. Another disadvantage of the continuum approach is that the typical numerical methods used do not guarantee positivity of the distribution function, which can cause numerical stability issues.

\subsubsection{Including electromagnetic effects}

Including electromagnetic effects in gyrokinetic simulations has proved numerically and computationally challenging, both in the core and in the edge. The so-called Amp\`ere cancellation problem is one of the main numerical issues that has troubled primarily PIC codes
\citep{reynders1993gyrokinetic,cummings1994gyrokinetic}. Various $\delta f$ PIC schemes to address the cancellation problem have been developed and there are interesting recent advances in this area \citep{chen2003deltaf,mishchenko2004,hatzky2007,mishchenko2014pullback,startsev2014,bao2018conservative}.
Meanwhile, some continuum $\delta f$ core codes avoided the cancellation problem completely \citep{rewoldt1987,kotschenreuther1995comparison}, while others had to address somewhat minor issues resulting from it \citep{jenko2000,candy2003}. 
With respect to the cancellation problem, one possible reason for the differences might be that in continuum codes the fields and particles are discretized on the same grid, whereas in PIC codes the particle positions do not coincide with the field grid. 
Because particle positions are randomly located relative to the field grid, one might need to be more careful in some way when treating the interaction of the particles and electromagnetic fields.

Prior to the work described in this thesis, all published nonlinear electromagnetic gyrokinetic results had focused on the core region, mostly within the $\delta f$ formulation neglecting the $E_\parallel$ nonlinearity (although the ORB5 PIC code includes the $E_\parallel$ nonlinearity and is effectively full-$f$ \citep{lanti2019orb5}). 
The XGC1 code is also full-$f$ and is focused on both the core and the edge/SOL; it has an option for a gyrokinetic ion/drift-fluid massless electron hybrid model \citep{hager2017verification}, with a fully kinetic implicit electromagnetic scheme based on \citet{chen2015multi} recently implemented and under further development \citep{ku2018fully}. The GENE-X code is a recent extension of the core gyrokinetic code GENE \citep{jenko2000} to a full-$f$ electromagnetic formulation similar to the one presented in this work. GENE-X has now produced preliminary (but not yet published) global electromagnetic gyrokinetic simulations including the SOL and X-point. 
Other gyrokinetic codes working on the SOL are not yet electromagnetic. To our knowledge, the results presented here were the first nonlinear electromagnetic full-$f$ gyrokinetic turbulence simulations on open field lines. The demonstration of full-$f$ electromagnetic capabilities, handled in stable and efficient manner that does not significantly increase the computational cost, is a major contribution of this thesis.

\subsubsection{Handling diverted geometries with X-point}
Another challenge is the magnetic geometry of the edge/SOL region, which requires treatment of open and closed magnetic field-line regions and the resulting plasma interactions with material walls on open field lines. The X-point in a diverted geometry is an additional complication which makes the use of field-aligned coordinates challenging.

Core gyrokinetic codes typically use such a field-aligned coordinate system, which allows one to take advantage of the elongated nature of the turbulence along the field line. This reduces the computational demands by allowing a coarse grid along the field line. Unfortunately, field-aligned coordinate systems are singular at the separatrix in diverted geometries due to the presence of the X-point \citep{stegmeir2016}. This has lead some codes to abandon field-aligned coordinates altogether, opting instead for simpler cylindrical coordinates. XGC1 uses a cylindrical coordinate system for the particle motion and an unstructured field-following triangular mesh for the field solver \citep{ku2016new}. BOUT++ uses multiple blocks, each with separate field-aligned coordinates systems, that conform to the X-point but still avoid it \citep{leddy2017}. Recent interest has focused on ideas like the flux-coordinate independent (FCI) approach, which abandons field- and flux-aligned coordinates in the poloidal plane but retains a field-line-following discretization of the parallel gradient operator to regain some of the advantages of field-aligned domains \citep{hariri2013,hariri2014,stegmeir2016}. This approach has been pioneered by the GRILLIX fluid code \citep{stegmeir2016,stegmeir2018}, and recently adopted by several codes, including GDB, GBS, and GENE-X. 
Another recent approach by the COGENT code uses a flux-aligned poloidal grid with controlled dealignment near the X-point \citep{McCorquodale2015,dorf2016continuum,dorf2020}. After breaking the toroidal direction into several blocks (wedges), a local field-aligned coordinate system is used in each block. Interpolation (similar to what is done in the FCI approach) is required to compute the parallel derivatives between blocks. 

Among gyrokinetic codes, currently only XGC1 \citep{ku2016new} has published results simulating turbulence in a three-dimensional diverted geometry with an X-point. As mentioned above, the recently-developed GENE-X code is also capable of including the X-point in global gyrokinetic turbulence simulations.

\section{Thesis overview}

First-principles modeling is crucial for understanding the dynamics in the boundary plasma. In particular, there is a need for comprehensive gyrokinetic simulations including electromagnetic effects. To this end, our efforts in this thesis are focused on demonstrating and advancing the capabilities of the gyrokinetic modules of the \gke plasma simulation framework (which also includes solver modules for the Vlasov--Maxwell system \citep{cagas2017continuum,juno2018} and multi-moment fluid equations \citep{wang2015comparison}). \gke was the first successful continuum gyrokinetic code on open field lines due to the pioneering work of \citet{shi-thesis, shi2017,shi2019}. In this work we make the critical step of including electromagnetic fluctuations in \gke and demonstrating that this additional physics can be handled in a stable and efficient manner. A primary goal is then to investigate how electromagnetic effects can influence SOL turbulence and transport dynamics.

In \cref{ch:emgk} we derive the 5D full-$f$ electromagnetic gyrokinetic system in Hamiltonian form using the symplectic ($v_\parallel$) formulation. In \cref{ch:dg} we describe an energy-conserving high-order discontinuous Galerkin discretization scheme for the EMGK system that has been implemented in \gke, building on the electrostatic scheme of \citet{shi-thesis}. In \cref{ch:nstx-results} we leverage \gke's new electromagnetic capabilities to produce the first published electromagnetic gyrokinetic results on open field lines. These simulations use a simple helical scrape-off layer as a model of the SOL of the National Spherical Torus Experiment (NSTX) experiment at PPPL, extending the electrostatic results of \citet{shi2019}. \cref{ch:geometry} moves towards more realistic geometry by describing and formulating field-aligned coordinate systems for use in SOL geometries with magnetic shear and shaping. \cref{ch:positivity} develops a novel positivity-preserving DG scheme which can improve robustness and accuracy of the simulations while maintaining critical conservation laws. Finally, we conclude in \cref{ch:conclusion} by reviewing the main results and describing important areas for future work.

\chapter{Theoretical background: the full-$f$ electromagnetic gyrokinetic system} \label{ch:emgk}

Turbulence in strongly magnetized plasma is characterized by frequencies much smaller than the ion cyclotron frequency $(\omega \ll \Omega_i)$ and strong anisotropy, with correlation lengths along the background field much longer than perpendicular to it $(k_\parallel \ll k_\perp)$. These two properties are the basis for gyrokinetic theory, which reduces the full six-dimensional (three position dimensions and three velocity dimensions) kinetic phase space to five dimensions by averaging over the cyclotron motion. This eliminates a velocity coordinate (the gyrophase angle) and results in a kinetic description of the dynamics of charged gyro-rings. The first derivations of gyrokinetics used a recursive procedure to generate an order-by-order asymptotic expansion, yielding the local $\delta f$ gyrokinetic equation with the distribution function separated into equilibrium ($F_0$) and perturbed ($\delta f$) parts \citep{catto1978,antonsen1980,frieman1982,abel2013}. 

Alternative approaches were later presented which derived (global, full-$f$) gyrokinetics via Lagrangian and Hamiltonian Lie-transform perturbation methods \citep{dubin1983,hahm1988,brizard2007foundations,sugama2000}. We will take this latter approach in this chapter, using phase-space-Lagrangian Lie perturbation methods \citep{littlejohn1983} to systematically derive self-consistent, energy-conserving, global gyrokinetic equations. We primarily follow \citet{brizard2007foundations} and references therein, but we have also found a series of Ph.D. dissertations from the GENE group \citep{dannert-thesis,pueschel-thesis,gorler-thesis,lapillonne-thesis,told-thesis}\footnote{\citep{dannert-thesis} is written in German but Google Translate does an admirable job of parsing it.} to be helpful for understanding parts of the derivation at a more introductory level.

\section{Gyrokinetic single-particle dynamics}
The goal of this first section is to obtain gyrokinetic equations of motion for single particles. 
We start from the description of a non-relativistic charged particle with charge $q$, mass $m$, and velocity $\vec{v}$ at position $\vec{x}$ in the presence of an electrostatic potential $\Phi(\vec{x})$ and magnetic potential $\vec{A}(\vec{x})$. The single-particle phase-space Lagrangian is
\begin{equation}
    L(\vec{x},\vec{v},\dot{\vec{x}},\dot{\vec{v}},t) = \left(m\vec{v} + q \vec{A}\right)\cdot\dot{\vec{x}} -\left(\frac{1}{2}m|\vec{v}|^2 + q\Phi\right) \equiv \vec{p}\cdot\dot{\vec{x}}-H, \label{lagrangian}
\end{equation}
where $\vec{p}=m\vec{v} + q \vec{A}$ is the canonical momentum of the particle 
(in SI units),
$\vec{v}=\dot{\vec{x}}\equiv \dx{\vec{x}}/\dx{t}$, and $H$ is the Hamiltonian \citep[for an introduction to the phase-space Lagrangian formulation of mechanics, see section II of][]{cary2009}. This Lagrangian will now be subjected to a series of coordinate transformations that will separate the fast gyromotion from the guiding-center and gyrocenter dynamics.

\subsection{Ordering assumptions}
The fundamental ordering requirement that allows us to effectively ignore the fast gyromotion of a charged particle in a magnetic field is
\begin{equation}
    \frac{\omega}{\Omega} \sim \epsilon \ll 1, \label{freq-ordering}
\end{equation}
where $\omega$ is a typical frequency of interest, and $\Omega=qB/m$ is the gyrofrequency for some species of interest. We will focus on drift wave turbulence, which has characteristic frequencies  $\omega \sim \omega_* \sim v_t/L_p$, where $v_t$ is the thermal speed and $L_p$ is a characteristic macroscopic scale length over which profile quantities vary. This implies that
\begin{equation}
    \frac{\rho}{L_p} \sim \epsilon \ll 1 \label{length-ordering}
\end{equation}
is another small parameter, where $\rho = v_t/\Omega$ is the thermal gyroradius. This ordering is valid in many tokamaks over a wide range of experimental conditions, including the edge and scrape-off layer. The frequency and length-scale orderings from \cref{freq-ordering,length-ordering} comprise the primary ordering in $\epsilon$. We must then decide how to deal with fluctuations, flows, and magnetic geometry within the model, resulting in additional parameters ordered with $\epsilon$ and $\epsilon^2$. 

The standard nonlinear gyrokinetic ordering \citep{frieman1982} also assumes small fluctuations,
\begin{equation*}
    \frac{\delta f}{F_0} \sim \frac{q \delta\Phi}{T} \sim \epsilon,
\end{equation*}
where $\delta f$ and $\delta \Phi$ are perturbations of the distribution function and potential, respectively, and $F_0$ is the equilibrium distribution function. Typical wavenumbers (of the  perturbations) are then ordered as
\begin{equation*}
    k_\parallel \rho \sim \epsilon, \qquad k_\perp \rho \sim 1.
\end{equation*}
This is the ``$\delta f$'' ordering, which is usually well-satisfied in the core of tokamak plasmas, and has been used successfully to study core microturbulence for many years \citep{parker1993gyrokinetic,kotschenreuther1995comparison,lin2000gyrokinetic,dimits2000comparisons,dorland2000electron,jenko2000,jost2001,candy2003,idomura2003,watanabe2005velocity,jolliet2007,idomura2008,peeters2009nonlinear,lanti2019orb5}. In the edge region, however, the $\delta f$ ordering is not strongly valid due to the presence of large fluctuations, even though the fundamental frequency ordering is still satisfied. 

The ordering can be generalized to allow larger perturbations by instead taking a drift ordering \citep{dimits1992,parra2008,dimits2012} 
\begin{equation}
    \epsilon_V \equiv \frac{v_E}{v_t} \simeq k_\perp \rho \frac{q\Phi}{T} \sim \epsilon \ll 1, \label{weak-flow}
\end{equation}
where $v_E$ is the $E\times B$ drift velocity from $\Phi$. Here, we have defined a new ordering parameter $\epsilon_V$, which we take to be $\mathcal{O}(\epsilon)$. This is commonly referred to as the ``weak-flow'' ordering since it takes $E\times B$ flows to be small compared to the thermal speed; this is generally satisfied in the edge and scrape-off layer \citep{brower1987,gohil1994,zweben2015}. By constraining gradients of $\Phi$ instead of $\Phi$ itself, the weak-flow ordering simultaneously allows large perturbations $q\Phi/T\sim1$ at long wavelengths ($k_\perp \rho\sim\epsilon_V\ll1$) and small perturbations $q\Phi/T\sim\epsilon_V$ at short wavelengths ($k_\perp \rho\sim 1$), along with perturbations at intermediate scales. Another way to think about this is that one can use $\Phi(\vec{R})$ at the center of a gyro-orbit (denoted $\vec{R}$) as a reference point, and then one can require that the variation of the potential energy around a gyro-orbit be small compared to the kinetic energy,
\begin{equation}
    q\Phi(\vec{R}+\vec{\rho}) - q\Phi(\vec{R}) \approx q\vec{\rho}\cdot\nabla \Phi \ll T,
\end{equation}
which leads to the same criterion as above \citep{hammettprivate2016}. Here $\vec{\rho}$ is the gyroradius vector which points from the center of the gyro-orbit $\vec{R}$ to the particle location $\vec{x}$ (it will be defined more precisely below). 
The ordering has also been extended further to allow strong $E\times B$ flows of order the thermal speed ($\epsilon_V\sim1$) \citep{artun1994,brizard1995,hahm1996,qin2007,hahm2009,dimits2010,sharma2015,mcmillan2016,sharma2020}; we will not consider this here.

We also need an ordering parameter pertaining to the magnetic geometry. For this, we introduce the equilibrium magnetic field scale length $L_B\sim|\nabla_\perp \ln B|^{-1}$ $\sim R$, where $B$ is the background magnetic field and $R$ is the major radius. In the core we expect $L_B\sim L_p$, but the edge features much stronger gradients so that $L_p/L_B\sim L_p/R \lesssim \rho/L_p\sim\epsilon$ is small \citep{gohil1994,burrell1994,zweben2007}. This leads us to a strong-gradient ordering, in which we define an additional ordering parameter
\begin{equation}
    \epsilon_B \equiv \frac{\rho}{L_B} \sim \epsilon^2.
\end{equation}
This ordering has been employed in several edge gyrokinetic models \citep{hahm2009,dimits2012,frei2020}, as it allows the gyrokinetic derivation to proceed fully consistently (up to second order in $\epsilon$) in a two-step process, first by using the $\epsilon_B$ ordering to derive the guiding-center motion, and then subsequently using the $\epsilon_V$ ordering to introduce electromagnetic perturbations. Without the strong-gradient ordering, $\epsilon_B \sim \epsilon_V \sim \epsilon$ are of the same order, which is usually the case in the core. Even though many derivations still use the two-step procedure in this case \citep[see \emph{e.g.}][]{brizard2007foundations}, \citet{parra2011} have shown that the two-step procedure does not yield fully consistent results at second order (and higher), as it misses terms of order $\mathcal{O}(\epsilon_V\epsilon_B)$ that involve both geometric effects and field perturbations. The strong-gradient ordering $\epsilon_B\sim\epsilon^2$ eliminates these concerns, since the geometry only enters at even order in $\epsilon$.

With these ordering assumptions, we take the  background magnetic field to be $\vec{B}_0=\nabla\times\vec{A}_0$, with $\vec{A}_0\sim\mathcal{O}(1/\epsilon_B)$ the background vector potential. We do not include an $\mathcal{O}(1/\epsilon_B)$ background electrostatic potential because this would violate the weak-flow ordering. We then consider electromagnetic perturbations $\vec{A}_1$ and $\Phi_1$ of the form \citep{dimits1992}
\begin{gather}
    \vec{A}_1(\vec{x},t) = \vec{A}_1(\vec{R},t) + \epsilon_V\,\delta\vec{A}_1(\vec{R},\vec{\rho},t) \label{a-pert} \\
    \Phi_1(\vec{x},t) = \Phi_1(\vec{R},t) + \epsilon_V\,\delta\Phi_1(\vec{R},\vec{\rho},t), \label{phi-pert}
\end{gather}
where the guiding-center component of the perturbation is $\mathcal{O}(1)$, and the $\mathcal{O}(\epsilon_V)$ part of the perturbation is the deviation of the potential around the gyro-orbit, effectively giving the finite-Larmor-radius (FLR) correction to the potential, 
\begin{align}
    \delta \vec{A}_1 &\equiv \vec{A}_1(\vec{x},t)-\vec{A}_1(\vec{R},t) \approx \vec{\rho}\cdot\nabla\vec{A}_1 \sim \frac{T}{q}\frac{ B_{1\perp}}{B}\sim\frac{T}{q}\frac{v_{f}}{v_t}\sim \mathcal{O}(\epsilon_V) \\
    \delta \Phi_1 &\equiv \Phi_1(\vec{x},t)-\Phi_1(\vec{R},t) \approx \vec{\rho}\cdot\nabla\Phi_1 \sim \frac{T}{q}\frac{v_E}{v_t}\sim \mathcal{O}(\epsilon_V),
\end{align}
with $v_f = v_\parallel B_{1\perp}/B$ the magnetic flutter velocity. Thus the total 
electromagnetic potentials can be written as
\begin{align}
    \vec{A}(\vec{x}) &= \frac{1}{\epsilon_B}\vec{A}_0(\vec{x}) + \vec{A}_1(\vec{x}) \notag \\&= \left[\frac{1}{\epsilon_B}\vec{A}_0(\vec{R})+ \vec{\rho}\cdot\nabla \vec{A}_0(\vec{R}) + \mathcal{O}(\epsilon_B)\right] + \vec{A}_1(\vec{R}) + \epsilon_V \delta\vec{A}_1(\vec{R},\vec{\rho}) \label{a-tot} \\
    \Phi(\vec{x}) &= \Phi_1(\vec{x}) = \Phi_1(\vec{R})+\epsilon_V \delta\Phi_1(\vec{R},\vec{\rho}), \label{phi-tot}
\end{align}
where we have Taylor expanded the background magnetic potential $\vec{A}_0(\vec{x})=\vec{A}_0(\vec{R}+\vec{\rho})$ to first order in $\epsilon_B$ around the guiding-center position $\vec{R}$. With these definitions, the Lagrangian from \cref{lagrangian} can be written as
\begin{align}
    {L} = {L}_0 + \epsilon_V \delta{L} &= \left[\frac{q}{\epsilon_B}\vec{A}_0(\vec{R}) + q\vec{\rho}\cdot\nabla \vec{A}_0(\vec{R}) +q\vec{A}_1(\vec{R})+  m \vec{v}+\mathcal{O}(\epsilon_B)\right]\cdot\dot{\vec{x}} \notag \\
    &\quad- \left[\frac{1}{2}mv^2 + q\Phi_1(\vec{R}) \right] + \epsilon_V\left[q\,\delta\vec{A}_1(\vec{R},\vec{\rho}) \cdot \dot{\vec{x}} - q\,\delta\Phi_1(\vec{R},\vec{\rho})\right]. \label{lagrangiansplit}
\end{align}

Note that we will hereafter drop the subscript $0$ in the background magnetic field $\vec{B}_0$, unless it is needed for clarity, and simply write $\vec{B}$. For magnetic perturbations we will retain the subscript in $\vec{B}_1$, but these will frequently be expressed instead in terms of the perturbed vector potential. The total magnetic field, including background and perturbation, will be written as $\vec{B}_0 + \vec{B}_1$.

Writing the potentials in the form of \cref{a-tot,phi-tot} has the advantage that we can clearly see that FLR corrections are higher order. Thus, we could self-consistently neglect FLR corrections by taking only the zeroth-order terms. While we will proceed with the general derivation up to order $\epsilon\sim\epsilon_V$ in this chapter, for simplicity we have elected to neglect FLR corrections in the current implementation of the system in the \gke code. Extension to the more general system including FLR corrections is left as important future work. The system that we solve in the current version of \gke is summarized in \cref{sec:gk-summary}. 

\subsection{Transformation to guiding-center coordinates}

Following \citet{littlejohn1983} and \citet{cary2009}, we first transform the zeroth order Lagrangian $L_0$ to guiding-center coordinates, $\vec{Z}=(t,\vec{R},v_\parallel,\mu,\vartheta)$, where $\vec{R}$ is the guiding-center position, $v_\parallel$ is the guiding-center velocity along the background magnetic field, $\mu=mv_\perp^2/(2B)$ is the  lowest-order magnetic moment, and $\vartheta$ is the gyrophase angle.
In terms of the guiding-center coordinates, the particle coordinates $\vec{z}=(t,\vec{x},\vec{v})$ coordinates can be expressed (with the time coordinate $t$ staying the same) as
\begin{gather}
    \vec{x} = \vec{R} + \epsilon_B\rho(\vec{R})\uv{a}(\vec{R},\vartheta) = \vec{R} + \epsilon_B \vec{\rho} \label{gcx}\\
    \vec{v} = v_\parallel \uv{b}(\vec{R}) + v_\perp(\vec{R})\uv{c}(\vec{R},\vartheta) + \epsilon_V\vec{u}_\perp(\vec{R},v_\parallel)= v_\parallel \uv{b} + \sqrt{\frac{2\mu B}{m}} \uv{c} + \epsilon_V \vec{u}_\perp \label{gcv},
\end{gather}
where $\rho(\vec{R}) = v_\perp/\Omega(\vec{R}) = \sqrt{2m \mu/(q^2 B(\vec{R}))}$ is the gyroradius,  $\uv{b}=\vec{B}/B$ is the unit vector along the background field, and $\vec{u}_\perp(\vec{R},v_\parallel)$ is a to-be-defined $\mathcal{O}(\epsilon_V)$ velocity perpendicular to $\uv{b}$, taken to be the velocity of the reference frame; note that $\vec{u}_\perp$ is evaluated at the guiding-center position $\vec{R}$ and is assumed to be gyrophase-independent. From standard guiding-center motion, we might expect this reference frame velocity to be something like the $E\times B$ drift velocity.\footnote{A moving reference frame is typically used in strong-flow derivations of gyrokinetics, where $\epsilon_V\sim1$ so that all terms in \cref{gcv} are the same order. In most weak-flow derivations \citep[][is an exception]{frei2020}, the reference frame is assumed to be stationary ($\vec{u}_\perp=0$), but here we allow for a slowly moving frame.} We also define
\begin{gather}
    \uv{a}(\vec{R},\vartheta)=\cos\vartheta\,\vec{e}_1(\vec{R})-\sin\vartheta\,\vec{e}_2(\vec{R}) \\ \uv{c}(\vec{R},\vartheta)=\pderiv{\uv{a}(\vec{R},\vartheta)}{\vartheta}=-\sin\vartheta\, \vec{e}_1(\vec{R}) - \cos\vartheta\, \vec{e}_2(\vec{R})
\end{gather}
to be unit vectors in the radial and tangential directions to the gyro-orbit that rotate with $\vartheta$, where $\vec{e}_1$ and $\vec{e}_2$ are some arbitrary pair of perpendicular unit vectors in the plane perpendicular to the background field such that $\vec{e}_1\times\vec{e}_2=\uv{b}$. Here and in the following, we will keep track of the order of various terms in $\epsilon_B$ and $\epsilon_V$, but formally these parameters are equal to unity so that the expressions retain the same dimensional form after taking $\epsilon_B=\epsilon_V=1$.

Inserting the guiding-center coordinate transformations into \cref{lagrangiansplit}, the zeroth order Lagrangian in guiding-center coordinates is
\begin{equation}
    {L}_0 = \left[\frac{q}{\epsilon_B} \vec{A}_0+q\vec{A}_1 + q(\vec{\rho}\cdot\nabla) \vec{A}_0 +  mv_\parallel \uv{b} + m v_\perp \uv{c}\right]\cdot \left(\dot{\vec{R}}+\dot{\vec{\rho}}\right) -H_0,
\end{equation}
where  $H_0 = mv_\parallel^2/2 + \mu B + q\Phi_1$ is the zeroth order Hamiltonian, and spatially-varying quantities are evaluated at the guiding-center position $\vec{R}$ unless otherwise noted. Note that although the gyroradius vector $\vec{\rho}$ is $\mathcal{O}(\epsilon_B)$, its time derivative $\dot{\vec{\rho}}$ is $\mathcal{O}(1)$, as it is given by
\begin{align}
    \dot{\vec{\rho}} &= \epsilon_B\left[(\dot{\vec{R}}\cdot\nabla)\vec{\rho} + \dot{\mu}\pderiv{\vec{\rho}}{\mu}\right] + \dot{\vartheta}\pderiv{\vec{\rho}}{\vartheta}= \epsilon_B\left[\frac{\vec{\rho}}{2B}\dot{\vec{R}}\cdot\nabla B + \frac{1}{qv_\perp}\uv{a}\dot{\mu}\right] + \frac{v_\perp}{\Omega}\uv{c}\dot{\vartheta} \notag \\
    &= \frac{v_\perp}{\Omega}\uv{c}\dot{\vartheta} + \mathcal{O}(\epsilon_B),
\end{align}
since $\dot{\vartheta}=\Omega\sim \epsilon_B^{-1}$ \citep{cary2009}.

We then make a series of gauge transformations to eliminate the dependence on the gyrophase to lowest order in $\epsilon_B$, following \citet{littlejohn1983}. These gauge transformations take the form of adding a total time derivative to the Lagrangian, ${L}\rightarrow{L}+\dot{S}$, which does not affect the equations of motion. Taking $S=-q \vec{\rho}\cdot(\vec{A}_0+\epsilon_B\vec{A}_1)$, so that
\begin{align}
    \dot{S} &
    = -\left(\frac{q}{\epsilon_B}\vec{A}_0+q\vec{A}_1\right) \cdot\dot{\vec{\rho}} - q\dot{\vec{R}}\cdot\nabla\vec{A}_0\cdot\vec{\rho} + \mathcal{O}(\epsilon_B),
\end{align}
the Lagrangian can be transformed as
\begin{align}
    {L}_0 \rightarrow {L}_0 + \dot{S} 
    &=\left(\frac{q}{\epsilon_B}\vec{A}_0^*+q\vec{A}_1\right)\cdot\dot{\vec{R}} + \left[q\vec{\rho}\cdot\nabla\vec{A}_0+mv_\parallel\uv{b}+mv_\perp\uv{c}\right]\cdot\dot{\vec{\rho}}-H_0,
\end{align}
where cancellations resulted from noting that $q\left[\vec{\rho}\cdot\nabla\vec{A}_0- 
    (\nabla \vec{A}_0)\cdot\vec{\rho}\right]=-q\vec{\rho}\times(\nabla\times\vec{A}_0)=-q\vec{\rho}\times\vec{B}=-mv_\perp\uv{c}$, and we have defined the modified vector potential
\begin{equation}
    \vec{A}_0^* \equiv \vec{A}_0  +\epsilon_B \frac{mv_\parallel}{q}\uv{b}. \label{a0star}
\end{equation} 
Recognizing $\vec{A}_0^*$ as the gyroaveraged canonical momentum from the background field,
we can see that this gauge transformation has effectively gyroaveraged the first term in \cref{lagrangiansplit}.

We then make an additional gauge transformation with $S=-\epsilon_B (q/2)(\vec{\rho}\cdot\nabla)\vec{A}_0\cdot\vec{\rho}$, which gives
\begin{align}
    \dot{S} = -\frac{q}{2}\left[(\nabla\vec{A}_0)\cdot\vec{\rho}+\left(\vec{\rho}\cdot\nabla\right)\vec{A}_0\right]\cdot\dot{\vec{\rho}} - \epsilon_B\frac{q}{2}\vec{\rho}\cdot\frac{\dx{(\nabla \vec{A}_0)}}{\dx{t}}\cdot\vec{\rho},
\end{align}
where we will drop the last term because it is higher order. The Lagrangian is then transformed as
\begin{align}
    {L}_0\rightarrow{L}_0+\dot{S} 
    &=\left(\frac{q}{\epsilon_B}\vec{A}_0^*+q\vec{A}_1\right)\cdot\dot{\vec{R}} + \frac{m\mu }{q}\dot{\vartheta} - \left[\frac{1}{2}mv_\parallel^2 + \mu B + q\Phi_1 \right] + \mathcal{O}(\epsilon_B). \label{gam0}
\end{align}
This Lagrangian describes the motion of charged particles in a strong background magnetic field and slowly varying electromagnetic potentials (with no variation at the gyroradius scale), and it could be used to derive the drift-kinetic Vlasov equation (up to zeroth order in $\epsilon_B$). Since \cref{gam0} is independent of gyrophase, Noether's theorem gives that the quantity $\partial {L}_0/\partial \dot{\vartheta}=m\mu/q$ is a constant, which is confirmation that $\mu$ is an adiabatic invariant in the absence of electromagnetic perturbations on the scale of the gyroradius (to lowest order). For an alternative derivation of the guiding-center Lagrangian, which eliminates the gyrophase dependence via gyroaveraging instead of gauge transformations, see \citet{helandersigmar}.

\subsection{Transformation to gyrocenter coordinates}

Now we must account for the variations of the electromagnetic fields on the scale of the gyroradius, which are contained in $\delta L$ from \cref{lagrangiansplit},
\begin{align}
    \delta{L} &= \left[q\,\delta\vec{A}_1(\vec{R},\vec{\rho})+ m \vec{u}_\perp \right]\cdot(\dot{\vec{R}}+\dot{\vec{\rho}}) -\left[q\,\delta \Phi_1(\vec{R},\vec{\rho}) + m\vec{v}_\perp\cdot\vec{u}_\perp + \frac{\epsilon_V}{2}m u_\perp^2 \right]\notag\\
    &= q\,\delta\vec{A}_1^*\cdot\dot{\vec{R}} + \frac{m}{B}\vec{v}_\perp\cdot\delta\vec{A}_1^*\,\dot{\vartheta} 
    -\left[q\,\delta \Phi_1 + m\vec{v}_\perp\cdot\vec{u}_\perp + \frac{\epsilon_V}{2}m u_\perp^2 \right] + \mathcal{O}(\epsilon_B),
\end{align}
where we have defined 
\begin{equation}
    \delta \vec{A}_1^*\equiv \delta \vec{A}_1(\vec{R},\vec{\rho}) + (m/q)\vec{u}_\perp(\vec{R},v_\parallel).
\end{equation}
Since the perturbations $\delta \vec{A}_1$ and $\delta \Phi_1$ depend on $\vec{\rho}$, we have reintroduced gyrophase dependence in the Lagrangian and broken $\mu$ conservation. Unlike in the lowest-order case above, we cannot simply use gauge transformations to eliminate the gyrophase dependence at this order because the perturbations depend non-trivially on $\vec{\rho}$. Instead, we use another kind of coordinate transformation known as a Lie transform \citep[for details, see ][]{cary1981,littlejohn1982,cary1983}. The Lie transform offers a systematic method for making perturbative coordinate transformations and computing the resulting changes in functions of those coordinates. 

For these transformations, it will be convenient to adopt the Poincar\'e-Cartan one-form formalism \citep[see \emph{e.g.}][]{cary1983}, where the one-form $\gamma(\vec{Z})$ is defined via the action integral
\begin{equation}
   \mathcal{I} = \int L(\vec{Z})\, \dx{t} = \int \gamma(\vec{Z}).
\end{equation}
Here,
\begin{align}
    \gamma &= \gamma_{Z^\alpha} \dx{Z^\alpha} = L\, \dx{t}\label{gamma}
\end{align}
for $Z^\alpha$ (with $\alpha=0,1,\dots,6$) the components of the extended phase-space coordinates that include time as the zeroth element so that $\gamma_t = - H$, the Hamiltonian. The remaining components, $\gamma_{Z^i}$ with $i=1,\dots,6$, are together called the \emph{symplectic} component of the one-form. 

We will define $\gamma = \gamma_0 + \epsilon_V \gamma_1$, with
\begin{align}
    \gamma_0 &\equiv L_0 \,\dx{t} = q\left(\vec{A}_0^*+ \vec{A}_1\right)\cdot\dx{\vec{R}} + \frac{m\mu }{q}\dx{\vartheta} - \left[\frac{1}{2}mv_\parallel^2 + \mu B + q\Phi_1 \right]\dx{t} \label{Gam0}\\
    \gamma_1 &\equiv \delta L\, \dx{t} = q\,\delta\vec{A}_1^*\cdot\dx{\vec{R}} + \frac{m}{B}\vec{v}_\perp\cdot\delta\vec{A}_1^*\,\dx{\vartheta}-\left[q\,\delta \Phi_1 + m\vec{v}_\perp\cdot\vec{u}_\perp + \frac{\epsilon_V}{2}m u_\perp^2 \right]\dx{t},
    \end{align}
where all quantities are evaluated at $\vec{R}$ (although $\delta \vec{A}_1$ and $\delta\Phi_1$ still also depend on $\vec{\rho}$), and we have dropped the $1/\epsilon_B$ ordering parameter on $\vec{A}_0^*$.
%
The Lie transformation that yields the gyrocenter one-form $\Gamma = \Gamma_0 + \epsilon_V \Gamma_1 + \epsilon_V^2 \Gamma_2 + \dots$ is given (up to second order) by
\begin{align}
    \Gamma_0 &= \gamma_0 + \dx{S_0} \\
    \Gamma_1 &= \gamma_1 - \mathscr{L}_1 \gamma_0 + \dx{S_1} \label{lie1a}\\
    \Gamma_2 &= \gamma_2 - \mathscr{L}_2 \gamma_0 -\frac{1}{2}\mathscr{L}_1(\gamma_1 + \Gamma_1) + \dx{S_2}, \label{lie2}
\end{align}
where $\mathscr{L}_n$ denotes the $n$th order Lie derivative and $S_n$ is an arbitrary $n$th order scalar gauge function. The Lie derivative acting on a one-form $\gamma$ is given by\footnote{The expression in \cref{liedef} is only part of the formal Lie derivative. There is another part of the form $\dx{(G_n^\beta \gamma_{Z^\beta})}$, but this part can be absorbed into $\dx{S}$ in \cref{lie1a,lie2} because $S$ can be chosen arbitrarily.}
\begin{equation}
    \mathscr{L}_n \gamma =G_n^\beta \omega_{\alpha \beta}\dx{Z^\alpha} =  G_n^\beta\left(\pderiv{\gamma_{Z^\alpha}}{Z^\beta} - \pderiv{\gamma_{Z^\beta}}{Z^\alpha}\right)\dx{Z^\alpha}, \label{liedef}
\end{equation}
where the functions $G_n^\beta$ are the components of the $n$th order generating vector field of the Lie transform, and $\omega_{\alpha\beta}$ are the elements of the Lagrange tensor. With this, the first order gyrocenter one-form can be rewritten as
\begin{equation}
    {\Gamma}_{1, Z^\alpha} = \gamma_{1,Z^\alpha} - G_1^\beta\left(\pderiv{\gamma_{0,Z^\alpha}}{Z^\beta}-\pderiv{\gamma_{0,Z^\beta}}{Z^\alpha}\right) + \pderiv{S_1}{Z^\alpha}. \label{lie1}
\end{equation}

The goal now is to find generating functions $\vec{G}_n$ and gauge functions $S_n$ such that the gyrocenter one-form no longer depends on the gyrophase at each order. At zeroth order, we will simply take $S_0=0$, so that
\begin{equation}
    {\Gamma}_0 = \gamma_0,
\end{equation}
since we have already removed gyrophase dependence from the zeroth order one-form, \cref{Gam0}. 

At first order, we can use \cref{lie1} to compute
\begin{align}
    \Gamma_{1,\mathbold{R}} &= q\,\delta\vec{A}_1^* + qG_1^{\mathbold{R}}\times \vec{B}^* - m G_1^{v_\parallel}\uv{b} + \nabla S_1 \label{G1X}\\
    \Gamma_{1,v_\parallel} &= m\uv{b}\cdot G_1^{\mathbold{R}} + \pderiv{S_1}{v_\parallel} \\
    \Gamma_{1,\mu} &= \frac{m}{q}G_1^\vartheta + \pderiv{S_1}{\mu} \\
    \Gamma_{1,\vartheta} &= \frac{m}{B}\vec{v}_\perp\cdot\delta\vec{A}_1^*-\frac{m}{q}G_1^\mu + \pderiv{S_1}{\vartheta} \\
    \Gamma_{1,t} &= -q\,\delta\Phi_1 -\epsilon_V q  G_1^{\mathbold{R}}\cdot \vec{E}_1^*  + G_1^{v_\parallel}m v_\parallel + G_1^\mu B + \pderiv{S_1}{t},
\end{align}
where $\vec{E}_1^* = -\epsilon_V(\nabla \Phi_1 + \partial \vec{A}_1/\partial t) -\epsilon_B (\mu/q) \nabla B= \epsilon_V\vec{E}_1 - \epsilon_B(\mu/q) \nabla B \sim\mathcal{O}(\epsilon_V)$ and  $\vec{B}^* = \nabla \times \vec{A}_0^*+\epsilon_V \nabla\times\vec{A}_1 = \vec{B}_0^* + \epsilon_V \vec{B}_1$. We have also taken $G_1^t=0$ since we do not need to make a coordinate transformation in time. 

We now have some freedom to choose the $\vec{G}_n$ and $S_n$ to simplify the form of the gyrocenter Lagrangian. To this end, we choose to enforce $\Gamma_{1,v_\parallel}=\Gamma_{1,\mu}=\Gamma_{1,\vartheta}=0$, which gives
\begin{align}
    \uv{b}\cdot G_1^{\mathbold{R}} = -\frac{1}{m} \pderiv{S_1}{v_\parallel},\qquad 
    G_1^\vartheta = -\frac{q}{m}\pderiv{S_1}{\mu}, \qquad
    G_1^\mu = \frac{q}{B}\vec{v}_\perp\cdot\delta\vec{A}_{1}^* +\frac{q}{m}  \pderiv{S_1}{\vartheta} .
\end{align}
Dotting \cref{G1X} with $\vec{B}^*$ gives
\begin{gather}
    \vec{B}^*\cdot \Gamma_{1,\mathbold{R}} = q\vec{B}^*\cdot\delta\vec{A}_1^* - m G_1^{v_\parallel} B_\parallel^* + \vec{B}^*\cdot\nabla S_1,
\end{gather}
while crossing \cref{G1X} with $\uv{b}$ gives
\begin{align}
    \uv{b}\times\Gamma_{1,\mathbold{R}} 
    &= q\uv{b}\times \delta\vec{A}_1^* + qB_\parallel^* G_1^{\mathbold{R}} + \frac{q}{m}\vec{B}^*\pderiv{S_1}{v_\parallel} + \uv{b}\times \nabla S_1,
\end{align}
so that we have
\begin{align}
    G_1^{v_\parallel} &= \frac{\vec{B}^*}{m B_\parallel^*} \cdot\left( - \Gamma_{1,\mathbold{R}} + \nabla S_1 + q\delta \vec{A}_1^* \right) \\
    G_1^{\mathbold{R}} &= -\frac{\vec{B}^*}{m B_\parallel^*}\pderiv{S_1}{v_\parallel} - \frac{\uv{b}}{q B_\parallel^*} \times \left( - \Gamma_{1,\mathbold{R}} + \nabla S_1 + q\delta \vec{A}_1^* \right),
\end{align}
where $B_\parallel^*\equiv \uv{b}\cdot\vec{B}^*\approx B + \mathcal{O}(\epsilon_B)$.
The first order gyrocenter Hamiltonian is then
\begin{align}
    H_1 &= - \Gamma_{1,t} =q\,\delta \Phi_1 + m\vec{v}_\perp\cdot\vec{u}_\perp + \epsilon_V\frac{1}{2}m u_\perp^2 - \epsilon_V\frac{q}{m B_\parallel^*}\pderiv{S_1}{v_\parallel}\vec{B}^*\cdot\vec{E}_1^* \notag \\
    &\qquad - \frac{\epsilon_V\vec{E}_1^*\times\uv{b} + v_\parallel \vec{B}^*}{B_\parallel^*}\cdot\left(-\Gamma_{1,\mathbold{R}} + \nabla S_1 + q\,\delta \vec{A}_1^*\right)  - q \vec{v}_\perp\cdot \delta\vec{A}_1^* - \Omega \pderiv{S_1}{\vartheta} - \pderiv{S_1}{t} \notag\\
    &= q\left( \delta\Phi_1 -  \vec{v}_\perp\cdot \delta\vec{A}_1\right)+\epsilon_V\frac{1}{2}mu_\perp^2- \frac{\epsilon_V\vec{E}_1^*\times\uv{b} + v_\parallel \vec{B}^*}{B_\parallel^*}\cdot\left(-\Gamma_{1,\mathbold{R}} + q\,\delta \vec{A}_1^*\right) - \frac{\dx{S_1}}{\dx{t}} \label{ham1}.
\end{align}
We now choose the gauge function $S_1$ to cancel the gyrophase dependence in \cref{ham1}. We will leave $\Gamma_{1,\mathbold{R}}$ unspecified for now;
by construction, the gyrocenter one-form will be gyrophase-independent, so the choice of $\Gamma_{1,\mathbold{R}}$ will not affect the choice of $S_1$. We can define $S_1$ as the solution to
\begin{gather}
    \frac{\dx{S_1}}{\dx{t}} \approx \Omega \pderiv{S_1}{\vartheta} + \mathcal{O}(\epsilon_V) = q\left( \widetilde{\delta\Phi_1} - \widetilde{ \vec{v}_\perp\cdot \delta\vec{A}_1}\right) - \frac{\epsilon_V\vec{E}_1^*\times\uv{b} + v_\parallel \vec{B}^*}{B_\parallel^*}\cdot q\,\widetilde{\delta \vec{A}_1} ,
\end{gather}
where
\begin{equation}
    \widetilde{A} = A(\vec{R}+\vec{\rho}) - \langle A \rangle
\end{equation}
is the gyrophase-dependent part of a quantity $A(\vec{R}+\vec{\rho})$, and $\langle \cdot \rangle$ denotes a gyroaverage, defined by
\begin{equation}
    \langle A(\vec{R}+\vec{\rho}) \rangle \equiv  \frac{1}{2\pi} \int_0^{2\pi} A(\vec{R}+\vec{\rho})\,\dx{\vartheta}. \label{gyavg}
\end{equation}
Noting that $\widetilde{\delta\Phi_1}=\widetilde{\Phi_1}$, $\widetilde{\delta \vec{A}_1} = \widetilde{ \vec{A}_1}$, and $\widetilde{\vec{v}_\perp\cdot\delta \vec{A}_1}=\widetilde{\vec{v}_\perp\cdot \vec{A}_1}-\vec{v}_\perp\cdot\vec{A}_1$, the gauge function $S_1$ becomes
\begin{align}
    S_1 \approx \frac{q}{\Omega}\int^\vartheta \left[ \widetilde{\Phi_1} - \widetilde{ \vec{v}_\perp\cdot \vec{A}_1} + \vec{v}_\perp\cdot\vec{A}_1 - \frac{\epsilon_V\vec{E}_1^*\times\uv{b} + v_\parallel \vec{B}^*}{B_\parallel^*}\cdot \widetilde{\vec{A}_1}\right]\dx{\vartheta'},
\end{align}
where we will take the solution with $\langle S_1 \rangle =0$ (this is required to prevent $S_1$ from becoming unbounded; see \citet{cary1983}). The Hamiltonian then becomes
\begin{align}
    H_1 &= q\left[\langle\Phi_1 - \vec{v}_\perp \cdot\vec{A}_1 \rangle - \Phi_1\right]+\epsilon_V\frac{1}{2}mu_\perp^2 \notag \\
    &\quad- \frac{\epsilon_V\vec{E}_1\times\uv{b} + \epsilon_B(\mu/q)\uv{b}\times\nabla B + v_\parallel \vec{B}^*}{B_\parallel^*}\cdot\left(-\Gamma_{1,\mathbold{R}} + q\langle \vec{A}_1\rangle - q\vec{A}_1 + m \vec{u}_\perp \right),
\end{align}
so that the gyrocenter one-form is
\begin{equation}
    \Gamma=\Gamma_0 + \epsilon_V\Gamma_1 = q \left(\vec{A}_0^* + \vec{A}_1 + \epsilon_V \Gamma_{1,\mathbold{R}}\right)\cdot\dx{\vec{R}} + \frac{m\mu}{q}\dx{\vartheta} - \left[\frac{1}{2}mv_\parallel^2 + \mu B + q\Phi_1 + \epsilon_V H_1\right]\dx{t}.
\end{equation}

Now we must choose $\Gamma_{1,\mathbold{R}}$. One option is to use $\Gamma_{1,\mathbold{R}}=-q\vec{A}_1$. This would eliminate $\vec{A}_1$ from the symplectic part of the one-form, opting instead to move all dependence on field perturbations to the Hamiltonian. This is known as the ``Hamiltonian'' formulation of electromagnetic gyrokinetics \citep{brizard2007foundations}, so-named because all field perturbations reside in the Hamiltonian. This approach has the advantage that the equations of motion do not contain explicit time derivatives of the magnetic potential, which can be advantageous in some discretization schemes. As a result, however, the parallel momentum coordinate becomes the canonical momentum, which depends on the perturbed magnetic potential. In other discretization schemes (namely, in the one that we pursue in \cref{ch:dg}) having the perturbed magnetic potential in the Hamiltonian can be disadvantageous.

Thus we will take another approach, known as the ``symplectic'' formulation, so-named because the symplectic part of the gyrocenter one-form is allowed to retain gyrophase-independent parts of the perturbed fields. In this approach, we eliminate the dependence on $\vec{A}_1$ from the Hamiltonian at first order. This results in the parallel momentum coordinate remaining the kinetic momentum. Thus we take
\begin{gather}
    \Gamma_{1,\mathbold{R}} = q\langle \delta \vec{A}_1 \rangle = q\langle \vec{A}_1(\vec{R}+\vec{\rho}) - \vec{A}_1(\vec{R}) \rangle = q\langle \vec{A}_1 \rangle - q\vec{A}_1 \label{Gam1X},
\end{gather}
where note that $\Gamma_{1,\mathbold{R}}\neq - \widetilde{\vec{A}}_1$, since the non-averaged term in \cref{Gam1X} is evaluated at $\vec{R}$, not $\vec{R}+\vec{\rho}$. With this choice, the gyrocenter one-form is given by
\begin{equation}
    \Gamma =  q\left(\vec{A}_0^*+ \langle\vec{A}_1\rangle\right)\cdot\dx{\vec{R}} + \frac{m\mu }{q}\dx{\vartheta} - \mathcal{H}\dx{t},
\end{equation}
with the total gyrocenter Hamiltonian
\begin{align}
    \mathcal{H} &= \frac{1}{2}mv_\parallel^2 + \mu B +q\langle\Phi_1-\vec{v}_\perp\cdot\vec{A}_1\rangle + \epsilon_V^2\frac{1}{2}mu_\perp^2 \notag \\
    &\qquad- \left[\frac{v_\parallel \vec{B}_0^*+\epsilon_V\left(\vec{E}_1\times\uv{b} +v_\parallel \vec{B}_{1\perp}\right)+ \epsilon_B(\mu/q)\uv{b}\times\nabla B }{B_\parallel^*}\right]\cdot \epsilon_V m \vec{u}_\perp,
\end{align}
and $\vec{B}_{1\perp}=(\nabla\times \vec{A}_1)_\perp$.
By taking the velocity $\vec{u}_\perp$ to be the $\mathcal{O}(\epsilon_V)$ part of the term in square brackets above, which is equivalent to the sum of the guiding-center $E\times B$ velocity, $\vec{v}_E$, and the ``magnetic flutter'' component of the parallel velocity perpendicular to the background field, $\vec{v}_f$, so that
\begin{equation}
    \vec{u}_\perp(\vec{R},v_\parallel) \equiv \frac{\vec{E}_1\times\uv{b}}{B_\parallel^*} + v_\parallel \frac{ \vec{B}_{1\perp}}{B_\parallel^*}= \vec{v}_E + \vec{v}_{f},
\end{equation}
we can reduce the Hamiltonian to
\begin{equation}
    \mathcal{H} = \frac{1}{2}mv_\parallel^2 + \mu B +q\langle\Phi_1-\vec{v}_\perp\cdot\vec{A}_1\rangle - \epsilon_V^2\frac{m}{2}\left|\vec{v}_E + \vec{v}_f \right|^2 - \epsilon_V\epsilon_B\,
    m\vec{v}_d \cdot (\vec{v}_E+\vec{v}_f). \label{ham2}
\end{equation}
Here,
\begin{equation}
    \vec{v}_d \equiv \frac{mv_\parallel^2}{qB_\parallel^*}\uv{b}\times(\uv{b}\cdot\nabla\uv{b}) + \frac{\mu}{qB_\parallel^*}\uv{b}\times\nabla B \sim \mathcal{O}(\epsilon_B)
\end{equation}
is the combined curvature and $\nabla B$ drifts, with $(\nabla\times\uv{b})_\perp=\uv{b}\times(\uv{b}\cdot\nabla\uv{b})$. Thus the final form of the gyrocenter one-form is
\begin{align}
    \Gamma &=  q\left(\vec{A}_0^*+ \langle\vec{A}_1\rangle\right)\cdot\dx{\vec{R}} + \frac{m\mu }{q}\dx{\vartheta} - \left[\frac{1}{2}mv_\parallel^2 + \mu B +q\langle\Phi_1-\vec{v}_\perp\cdot\vec{A}_1\rangle \right. \notag \\
    &\qquad- \left. \epsilon_V^2\frac{1}{2}m\left|\vec{v}_E + \vec{v}_f \right|^2 - \epsilon_V\epsilon_B\,
    m\vec{v}_d \cdot (\vec{v}_E+\vec{v}_f) \right]\dx{t}. \label{gam}
\end{align}

We can now see that the first-order correction to the gyrocenter one-form, $\Gamma_1$, has effectively  replaced the guiding-center potentials $\vec{A}_1(\vec{R})$ and $\Phi_1(\vec{R})$ that appeared in $\Gamma_0$ with gyroaveraged versions. This also resulted in additional higher-order terms in the Hamiltonian. While we could (and should) develop the Lie transform to second order using \cref{lie2} to obtain FLR corrections to these second-order terms, and possibly other second-order terms, the guiding-center (long-wavelength) versions of these terms as they appear in \cref{gam} will be sufficient for our current purposes, since we will only use the first-order Hamiltonian to compute the equations of motion. 

Proper treatment of the second-order $E\times B$ energy term $-m/2|\vec{v}_E|^2$ is necessary for deriving an energetically-consistent gyrokinetic Poisson equation, as we will see in \cref{sec:fieldtheory}. We have obtained this term without needing to compute the next-order Lie transform by making a convenient choice for the reference frame velocity $\vec{u}_\perp$ in \cref{gcv}, effectively guessing an $\mathcal{O}(\epsilon_V)$ correction to the velocity $\vec{v}$ that one could also find from continuing the Lie transform. We can compare the second-order terms in \cref{gam} to Eq. (54) from \citet{brizard2007foundations}, which gives the Hamiltonian that results from computing the Lie transform to second order, with the second-order terms given in the long-wavelength limit. We see that indeed we have recovered some of the second-order terms, but missed a term of the form $\mu|\vec{B}_{\perp1}|^2/(2B)$.

\subsection{Gyrocenter equations of motion} \label{sec:eom}
Now that we have the gyrocenter one-form given by \cref{gam}, we can derive the gyrocenter Poisson bracket and the gyrocenter equations of motion. At this point, we will simplify the system by assuming
$\vec{A}_1=A_\parallel\uv{b}$, so that
\begin{equation}
\vec{B}_{1} = \nabla \times (A_\parallel \uv{b})=-\uv{b}\times\nabla A_\parallel + A_\parallel \nabla \times \uv{b}. \label{b1}
\end{equation}
This results in the neglect of most compressional fluctuations of the magnetic field, although even in this form there remains a small compressional component, $\uv{b}\cdot\vec{B}_1=A_\parallel \uv{b}\cdot\nabla\times\uv{b}$, which may vanish or be finite depending on the particular magnetic geometry. Note that the second term in \cref{b1} is frequently dropped since it is smaller than the first term by $\mathcal{O}(\epsilon_B)$, but we will choose to keep it, in part so that $\nabla \cdot \vec{B}_1 = 0$ exactly. Future work will include the full compressional fluctuations $\delta B_\parallel$, which can influence microinstabilities not only at large $\beta \sim 1$ but also when gradients of $\beta$ are large, particularly in spherical torus machines like NSTX \citep{bourdelle2003,joiner2010,belli2010,zocco2015}.

We will also drop some second-order terms in the Hamiltonian, but for now we will leave the exact form of the Hamiltonian unspecified, since the Hamiltonian does not affect the form of the Poisson bracket. Thus we will write the gyrocenter Lagrangian as
\begin{equation}
    \mathcal{L} = q\left(\vec{A}_0^*+ \langle{A}_\parallel\rangle\uv{b}\right)\cdot\dot{\vec{R}} + \frac{m\mu }{q}\dot{\vartheta} - H = \vec{\Lambda} \cdot \dot{\vec{Z}} - H, \label{Ls}
\end{equation}
where $\vec{\Lambda}$ denotes the symplectic part of the Lagrangian.

The phase-space Euler-Lagrange equations are then given by
\begin{equation}
    \frac{\dx{}}{\dx{t}}\left(\pderiv{\mathcal{L}}{\dot{Z}^i} \right) = \pderiv{\mathcal{L}}{Z^i},
\end{equation}
which yields
\begin{equation}
    \dot{\Lambda}_i = \pderiv{\Lambda_j}{Z^i}\dot{Z}^j - \pderiv{H}{Z^i},
\end{equation}
where $Z^i\ (i=1,\dots,6)$ are the phase-space coordinates $\vec{Z}$ not including time.
We can expand the total time derivative $\dot{\Lambda}_i$ and rearrange terms to obtain
\begin{equation}
    \omega_{ij}\dot{ Z}^j = \pderiv{H}{Z^i} + \pderiv{\Lambda_i}{t}, \label{eom-omega}
\end{equation}
where the Lagrange tensor $\boldsymbol{\omega}$ is defined by
\begin{equation}
    \omega_{ij} \equiv \pderiv{\Lambda_j}{Z^i} - \pderiv{\Lambda_i}{Z^j}. \label{lagrange-tensor}
\end{equation}
Assuming $\det \boldsymbol{\omega}\neq 0$,
we can define the Poisson tensor $\mathbf{\Pi}$ to be the inverse of the Lagrange tensor, (i.e., $\Pi^{ik}\omega_{kj}=\delta^i_j$), so that the Euler-Lagrange equations from \cref{eom-omega} can be inverted to give the equations of motion as
\begin{equation}
    \dot{Z}^i=\Pi^{ij}\left(\pderiv{H}{Z^j} + \pderiv{\Lambda_j}{t}\right). \label{eom}
\end{equation}
Defining the (non-canonical) Poisson bracket as
\begin{equation}
    \{f,g\}\equiv \pderiv{f}{Z^i}\Pi^{ij}\pderiv{g}{Z^j}, \label{pbpi}
\end{equation}
and recognizing that $\Pi^{ij}=\{Z^i,Z^j\}$, we can also write the equations of motion as
\begin{equation}
   \dot{Z}^i=\{Z^i,H\} + \{Z^i,Z^j\}\pderiv{\Lambda_j}{t}. \label{eom2}
\end{equation}
Inserting the gyrocenter phase-space Lagrangian from \cref{gam} into \cref{lagrange-tensor}, the non-zero tensor elements are \citep{cary2009}
\begin{gather}
    \omega_{R_i v_\parallel} = -\omega_{v_\parallel R_i} = -m\, b_i \\
    \omega_{R_i R_j} = -\omega_{R_j R_i} = q\, \epsilon_{ijk} \bar{B}^{*k} \\
    \omega_{\mu \vartheta} = - \omega_{\vartheta \mu} = -\frac{m}{q},
\end{gather}
where $\epsilon_{ijk}$ is the Levi-Civita tensor and $\bar{B}^{*k}$ are the components of $\vec{\bar{B}}^*\equiv\langle \vec{B}^* \rangle = \vec{B}_0^*+ \nabla\times(\langle A_\parallel \rangle \uv{b})$, so that the full tensor takes the form
\begin{equation}
    \vec{\omega} = \begin{pmatrix}
    0 & q \bar{B}^{*3} & -q \bar{B}^{*2} & -m b_1 & 0 & 0 \\
    -q \bar{B}^{*3} & 0 & q \bar{B}^{*1} & -m b_2 & 0 & 0 \\
    q \bar{B}^{*2} & -q \bar{B}^{*1} & 0 & -m b_3 & 0 & 0 \\
    m b_1 & m b_2 & m b_3 & 0 & 0 & 0 \\
    0 & 0 & 0 & 0 & 0 & -\frac{m}{q} \\
    0 & 0 & 0 & 0 & \frac{m}{q} & 0 
    \end{pmatrix}.
\end{equation}
We can then invert this to obtain the Poisson tensor,
\begin{equation}
    \mathbf{\Pi} = \vec{\omega}^{-1} = \frac{1}{\bar{B}_\parallel^*} \begin{pmatrix}
    0 & -\frac{1}{q} b_3 & \frac{1}{q} b_2 & \frac{1}{m} \bar{B}^{*1} & 0 & 0 \\
    \frac{1}{q} b_3 & 0 & -\frac{1}{q} b_1 & \frac{1}{m} \bar{B}^{*2} & 0 & 0 \\
    -\frac{1}{q} b_2 & \frac{1}{q} b_1 & 0 & \frac{1}{m} \bar{B}^{*3} & 0 & 0 \\
    -\frac{1}{m} \bar{B}^{*1} & -\frac{1}{m} \bar{B}^{*2} & -\frac{1}{m} \bar{B}^{*3} & 0 & 0 & 0 \\
    0 & 0 & 0 & 0 & 0 & \frac{q}{m} \\
    0 & 0 & 0 & 0 & - \frac{q}{m} & 0 
    \end{pmatrix},
\end{equation}
with $\bar{B}_\parallel^*=\uv{b}\cdot\vec{\bar{B}}^*$.
The Poisson bracket is then given by \cref{pbpi},
\begin{align}
    \{f,g\} &= \frac{\vec{\bar{B}}^*}{m\bar{B}_\parallel^*}\cdot\left(\nabla f \pderiv{g}{v_\parallel} - \pderiv{f}{v_\parallel}\nabla g\right) - \frac{\uv{b}}{q \bar{B}_\parallel^*}\cdot \nabla f\times \nabla g + \frac{q}{m}\left(\pderiv{f}{\vartheta}\pderiv{g}{\mu} - \pderiv{f}{\mu}\pderiv{g}{\vartheta}\right).
\end{align}
We can also compute the Jacobian of the transformation from particle coordinates to gyrocenter coordinates $\vec{z}=(\vec{x},\vec{v})\rightarrow\vec{Z}=(\vec{R},v_\parallel,\mu,\vartheta)$,
\begin{equation}
    \mathcal{J} = \frac{1}{m^3}\sqrt{\det \boldsymbol{\omega}} = \frac{\bar{B}_\parallel^*}{m} = \frac{1}{m}\left[B + \left(\frac{m}{q}v_\parallel + \langle A_\parallel \rangle\right) \uv{b}\cdot\nabla\times \uv{b} \right] = \frac{B}{m} + \mathcal{O}(\epsilon_B),
    \label{jacob}
\end{equation}
where we will neglect $\uv{b}\cdot\nabla\times\uv{b} \lesssim \mathcal{O}(\epsilon_B)$ in the Jacobian. This approximation breaks the exact equivalence of $\vec{\omega}^{-1}$ and $\mathbf{\Pi}$, but otherwise does not affect conservation properties.
Note that the factor of $1/m^3$ comes from the Jacobian of the transformation from canonical to non-canonical particle coordinates $(\vec{x},\vec{p})\rightarrow(\vec{x},\vec{v})$.\footnote{In some texts this $1/m^3$ factor does not appear and the Jacobian is given as $m^2 B_\parallel^*$, which is the Jacobian of the transformation $(\vec{x},\vec{p})\rightarrow \vec{Z}$; an additional factor of $1/m$ can also appear when the parallel momentum $p_\parallel$ is used as a gyrocenter coordinate instead of $v_\parallel$.}

Now we can use \cref{eom2} to obtain the gyrocenter equations of motion, 
\begin{align}
    \dot{\vec{R}} &= \{\vec{R},H\} + \frac{\uv{b}}{q \bar{B}_\parallel^*}\times \pderiv{\vec{\Lambda}}{t} = \frac{\vec{\bar{B}}^*}{m\bar{B}_\parallel^*}\pderiv{H}{v_\parallel} + \frac{\uv{b}}{q \bar{B}_\parallel^*}\times\nabla H 
    \label{eomR}\\
   \dot{v}_\parallel &= \{v_\parallel,H\} - \frac{\vec{\bar{B}}^*}{m\bar{B}_\parallel^*}\cdot \pderiv{\vec{\Lambda}}{t}= -\frac{\vec{\bar{B}}^*}{m\bar{B}_\parallel^*}\cdot\nabla H - \frac{q}{m}\pderiv{\langle A_\parallel\rangle}{t} 
   \label{eomV} .
\end{align}
Finally, note that the zeroth-order (guiding-center) equations of motion are the same, except with all gyroaverages replaced by evaluation of the quantity at the guiding-center position.

\section{Gyrokinetic field theory} \label{sec:fieldtheory}
In the previous section, we derived the phase-space Lagrangian and equations of motion for a single charged particle in the presence of electromagnetic fields. Now we describe the collective behavior of a system of many such particles and the interactions between the particles and the fields. 

The system Lagrangian is given by integrating the single-particle Lagrangian over phase space, weighted by the distribution function $f_s$ and summed over all species $s$, plus an additional field term:
\begin{equation}
    \mathcal{L} = \sum_s \int\mathcal{J} f_s \mathcal{L}_s\, \dx{^6\vec{Z}} + \int   \mathcal{L}_f\, \dx{^3\vec{x}}.
\end{equation}
Here, $\dx{^6\vec{Z}}\equiv\dx{^3\vec{R}}\,\dx{v_\parallel}\, \dx{\mu} \,\dx{\vartheta}$ is the gyrocenter phase-space volume element with $\mathcal{J}=B/m$ the Jacobian (dropping the $\mathcal{O}(\epsilon_B)$ terms in \cref{jacob}), $\mathcal{L}_s$ is the single-particle Lagrangian for species $s$, and $\mathcal{L}_f$ is the field Lagrangian. Note that formally the Jacobian might be included in the definition of $\dx{^6\vec{Z}}$, but we instead opt to have the Jacobian appear explicitly in the expressions.

\subsection{The gyrokinetic Vlasov equation}
In the absence of sources and collisions (which we address later), the evolution of the distribution function $f$ is governed by the gyrokinetic Vlasov equation. This takes the form of Liouville's equation, which states that the distribution function is conserved along the nonlinear phase-space characteristics. This is expressed by
\begin{equation}
    \frac{\dx{f(\vec{Z},t)}}{\dx{t}}= \pderiv{f}{t} + \dot{\vec{Z}}\cdot\pderiv{f}{\vec{Z}} = \pderiv{f}{t} + \dot{\vec{R}}\cdot\nabla f + \dot{v}_\parallel \pderiv{f}{v_\parallel} = 0,
\end{equation}
with the phase-space characteristics given by \cref{eomR,eomV}. From \cref{eom2}, this can also be written in terms of the Poisson bracket as
\begin{equation}
    \pderiv{f}{t} + \{f,H\} + \{f,\vec{Z}\}\cdot\pderiv{\vec{\Lambda}}{t} = \pderiv{f}{t} + \{f,H\} - \frac{q}{m}\pderiv{\langle A_\parallel \rangle}{t}\pderiv{f}{v_\parallel} = 0.
\end{equation}

Together with Liouville's theorem, which states that phase-space volume is conserved, as expressed by
\begin{equation}
    \pderiv{\mathcal{J}}{t} + \pderiv{}{\vec{Z}}\cdot\left(\mathcal{J} \dot{\vec{Z}}\right) = \pderiv{\mathcal{J}}{t} +\nabla\cdot\left(\mathcal{J}\dot{\vec{R}}\right) + \pderiv{}{v_\parallel}\left(\mathcal{J}\dot{v}_\parallel\right) = 0, \label{liouvillethm}
\end{equation}
the gyrokinetic Vlasov equation can also be written in conservative form as
\begin{equation}
    \pderiv{(\mathcal{J}f)}{t} + \pderiv{}{\vec{Z}}\cdot\left(\mathcal{J}f \dot{\vec{Z}}\right) = \pderiv{(\mathcal{J}f)}{t} +\nabla\cdot\left(\mathcal{J}f\dot{\vec{R}}\right) + \pderiv{}{v_\parallel}\left(\mathcal{J}f\dot{v}_\parallel\right) = 0.
\end{equation}

\subsection{Variational derivation of the gyrokinetic field equations} \label{variational}
We follow \citet{sugama2000,scott2010}, to derive the gyrokinetic field equations. The field equations are  derived directly from the Lagrangian by requiring variations of the action, $\mathcal{I} = \int \mathcal{L}\, \dx{t}$, to vanish with respect to the fields $\Phi_1$ and $\vec{A}_1$. In this way, approximations and simplifications can be made at the level of the Lagrangian, and then the resulting field equations will be  consistent with those approximations, so that momentum and energy conservation are preserved.

Thus we must first specify the form of the Lagrangian. We consider three different cases: (1) keeping second-order terms in the single-particle Hamiltonian; (2) dropping second-order terms in the Hamiltonian; (3) dropping first- and second-order terms in the single-particle Lagrangian, resulting in the guiding-center Lagrangian. 

\subsubsection{Case 1: Single-particle Hamiltonian with second-order terms}
In the first case, we take the single-particle Lagrangian to be the gyrocenter Lagrangian from \cref{Ls}. For the Hamiltonian, we will keep second-order terms, but we will neglect all second-order terms involving magnetic fluctuations in \cref{ham2}, so that we are left with
\begin{equation}
    H = \frac{1}{2}mv_\parallel^2 + \mu B + q\langle \Phi\rangle - \frac{1}{2}m v_E^2 = \frac{1}{2}mv_\parallel^2 + \mu B + q\langle \Phi\rangle - \frac{m}{2B^2}|\nabla_\perp \Phi|^2.
\end{equation}
Note that here and after we will drop the subscript on $\Phi_1$, since there is no $\Phi_0$ to confuse it with.

The field Lagrangian $\mathcal{L}_f$ comes from the standard electrodynamic field term $(E^2-B^2)/(2\mu_0)$, but we assume quasineutrality, which eliminates the electric field term. After neglecting parallel fluctuations of the magnetic field, we have
\begin{equation}
    \mathcal{L}_f = -\frac{B_{1\perp}^2}{2\mu_0} \approx -\frac{|\nabla_\perp A_\parallel|^2}{2\mu_0}.
\end{equation}
The system Lagrangian for this case is now
\begin{align}
    \mathcal{L} &= \sum_s \int \mathcal{J} f_s \left[q_s\left(\vec{A}_0^* + \langle A_\parallel \rangle \uv{b}\right) \cdot \dot{\vec{R}} + \frac{m_s\mu}{q_s}\dot{\vartheta}\right. \notag \\ &\qquad\left.- \left(\frac{1}{2}m_s v_\parallel^2 + \mu B + q\langle \Phi\rangle - \frac{m_s}{2B^2}|\nabla_\perp \Phi|^2\right) \right]\dx{^6\vec{Z}} -\int \frac{|\nabla_\perp A_\parallel|^2}{2\mu_0}\dx{^3\vec{x}}.
\end{align}

The field equation for the electrostatic potential $\Phi$ is found from the requirement that variations of the action with respect to $\Phi$ vanish. This gives the condition $\delta \mathcal{I}/\delta \Phi(\vec{x})=0$, where the functional derivative is given by
\begin{align}
    \frac{\delta \mathcal{I}}{\delta \Phi(\vec{x})} &= \sum_s \int \left[ -\mathcal{J}f_s \pderiv{H_s}{\Phi(\vec{x})} + \nabla\cdot\left( \mathcal{J}f_s \pderiv{H_s}{\nabla \Phi(\vec{x})}\right) \right] \dx{^6\vec{Z}} \notag \\
    &= \sum_s \int \left[-q_s\mathcal{J} f_s \delta(\vec{x} - \vec{R} - \vec{\rho})  -\nabla\cdot\left( \mathcal{J}f_s\delta(\vec{x}-\vec{R})\frac{m_s}{B^2}\nabla_\perp \Phi\right) \right]\dx{^6\vec{Z}}.
\end{align}
Requirement that this quantity vanish yields an equation for $\Phi(\vec{x})$ that takes the form of the quasineutrality condition,
\begin{equation}
    \sigma_{gy} + \sigma_{pol} = \sigma_{gy} - \nabla \cdot \vec{P} = 0, \label{qneut}
\end{equation}
where the gyrocenter charge density is
\begin{equation}
    \sigma_{gy} = \sum_s q_s \int \langle\mathcal{J} f_s \rangle^\dagger \dx{^3\vec{v}} \equiv \sum_s q_s \bar{n}_s, \label{gycharge}
\end{equation}
with $\dx{^3\vec{v}}\equiv 2\pi \dx{v_\parallel}\dx{\mu}$. We will continue writing $\dx{^3\vec{v}}$ in integrals defined this way throughout, even though there are only two evolved velocity dimensions; the factor of $2\pi$ comes from a trivial integration over the gyroangle, the third velocity dimension. This factor cancels the factors of $(2\pi)^{-1}$ included in the defintions of the gyroaveraging operations, where integration over the gyroangle does appear explicitly. Again, we also choose to leave the phase-space Jacobian out of our definition of $\dx{^3\vec{v}}$ so that it appears explicitly in the expressions. The notation
\begin{equation}
    \langle f \rangle^\dagger \equiv \frac{1}{2\pi}\int_0^{2\pi} f(\vec{x}-\vec{\rho})\, \dx{\vartheta}
\end{equation}
denotes a gyroaverage taken at constant $\vec{x}$ (as opposed to constant $\vec{R}$). Note that this operator is the adjoint of the gyroaverage taken at constant $\vec{R}$ defined in \cref{gyavg}, \emph{i.e.} it satisfies the property
\begin{equation}
    \int \langle f \rangle\, g\, \dx{^3\vec{R}} = \int  f\langle g \rangle^\dagger\, \dx{^3\vec{x}}.
\end{equation}
The polarization charge density $\sigma_{pol}$ is given as the divergence of the polarization vector
\begin{equation}
    \vec{P} = - \sum_s \int  \mathcal{J}f_s \frac{m_s}{B^2}\nabla_\perp \Phi\,\dx{^3\vec{v}} = - \sum_s \frac{m_s n_s}{B^2}\nabla_\perp \Phi. \label{pol}
\end{equation}
The quasineutrality condition can then be written as the gyrokinetic Poisson equation, 
\begin{equation}
    -\nabla \cdot \sum_s \frac{m_s n_s}{B^2}\nabla_\perp \Phi = \sum_s q_s \bar{n}_s.
\end{equation}
Finally, note that the second order $E\times B$ energy term in the Hamiltonian was required to obtain the polarization charge density. Without it, the quasineutrality condition would not give us an equation for the potential.

The field equation for the parallel vector potential $A_\parallel$ is found from the requirement that variations of the action with respect to $A_\parallel$ vanish. This gives the condition $\delta \mathcal{I}/\delta A_\parallel(\vec{x})=0$, where the functional derivative is given by
\begin{align}
    \frac{\delta \mathcal{I}}{\delta A_\parallel(\vec{x})} &= -\nabla \cdot \pderiv{\mathcal{L}_f}{\nabla A_\parallel(\vec{x})} + \sum_s \int \mathcal{J} f_s \pderiv{\mathcal{L}_s}{A_\parallel(\vec{x})}\,\dx{^6\vec{Z}} \notag \\
    &= \frac{1}{\mu_0}\nabla_\perp^2 A_\parallel + \sum_s q_s\int \uv{b}\cdot\dot{\vec{R}}\, \mathcal{J}f_s\, \delta(\vec{x}-\vec{R}-\vec{\rho})\, \dx{^6 \vec{Z}} \notag \\
    &= \frac{1}{\mu_0}\nabla_\perp^2 A_\parallel + \sum_s q_s\int \frac{1}{m_s}\pderiv{H_s}{v_\parallel}\mathcal{J} f_s\, \delta(\vec{x}-\vec{R}-\vec{\rho})\,  \dx{^6 \vec{Z}}\notag \\
     &= \frac{1}{\mu_0}\nabla_\perp^2 A_\parallel + \sum_s q_s\int v_\parallel \mathcal{J}f_s\, \delta(\vec{x}-\vec{R}-\vec{\rho})\, \dx{^6 \vec{Z}}, \label{ampderiv}
\end{align}
where we have used \cref{eomR} to substitute for $\uv{b}\cdot\dot{\vec{R}}=v_\parallel$.
The requirement that this quantity vanish results in the parallel component of Amp\`ere's law,
\begin{equation}
    -\nabla_\perp^2 A_\parallel = \mu_0 \bar{J}_{\parallel}, \label{ampere}
\end{equation}
with
\begin{equation}
    \bar{J}_{\parallel} \equiv \sum_s q_s \int  v_\parallel \langle \mathcal{J}f_s \rangle^\dagger\,\dx{^3\vec{v}} = \sum_s q_s \overline{n_s u}_{\parallel s}
\end{equation}
the gyrocenter parallel current density.

\subsubsection{Case 2: Single-particle Hamiltonian without second-order terms}
In this case, we will drop all second-order terms in the Hamiltonian, so that we are left with
\begin{equation}
    H = \frac{1}{2}mv_\parallel^2 + \mu B + q\langle \Phi\rangle.
\end{equation}
As we noted above, the second order $E \times B$ energy term (which we have now dropped) was needed to obtain the polarization charge density in the quasineutrality equation. Without the second-order term, we need another way to obtain the polarization term. 

Following \citet{sugama2000,scott2010}, we can instead cast the higher-order $E\times B$ energy term into a field term, \emph{i.e.} as part of the field Lagrangian $\mathcal{L}_f$. To do this, we replace the distribution function multiplying this term in the system Lagrangian by a time-independent background distribution function $f_0$, giving
\begin{align}
    \mathcal{L} &=  \sum_s \int \mathcal{J}f_s\left(\mathcal{L}_{s0}+\mathcal{L}_{s1}\right)\dx{^6\vec{Z}}  + \sum_s \mathcal{J}f_{0s}\mathcal{L}_{s2}\,\dx{^6\vec{Z}}  -\int \frac{|\nabla_\perp A_\parallel|^2}{2\mu_0} \dx{^3\vec{x}} \\
    &=\sum_s \int\mathcal{J} f_s \left[q_s\left(\vec{A}_0^* + \langle A_\parallel \rangle \uv{b}\right) \cdot \dot{\vec{R}} + \frac{m_s\mu}{q_s}\dot{\vartheta} - \left(\frac{1}{2}m_s v_\parallel^2 + \mu B + q_s\langle \Phi\rangle \right) \right] \dx{^6\vec{Z}} \notag \\&\qquad + \sum_s \int\mathcal{J}f_{0s}\frac{m_s}{2B^2}|\nabla_\perp \Phi|^2\,\dx{^6\vec{Z}}-\int \frac{|\nabla_\perp A_\parallel|^2}{2\mu_0}\dx{^3\vec{x}}, \label{lagrangian2}
\end{align}
with the entire second line of \cref{lagrangian2} now comprising $\mathcal{L}_f$.

After following the same steps as in Case 1 to derive the quasineutrality condition from $\delta \mathcal{I}/\delta \Phi(\vec{x})=0$, we obtain the same expression for the gyrocenter charge density in \cref{gycharge}, but the polarization vector is modified as
\begin{equation}
    \vec{P} = - \sum_s \int \mathcal{J}f_{0s} \frac{m_s}{B^2}\nabla_\perp \Phi\, \dx{^3\vec{v}} = - \sum_s \frac{m_s n_{0s}}{B^2}\nabla_\perp \Phi,
\end{equation}
where the density in \cref{pol} has been replaced by some time-independent background density $n_0$. This result is sometimes referred to as \emph{linearized polarization}. From a computational perspective, it is helpful that in this case the kernel that must be inverted in the gyrokinetic Poisson equation does not change in time, allowing for parts of the inversion (\emph{e.g.} matrix factorization) to be done only at the beginning of the calculation. Thus the linearized polarization is commonly used for computational efficiency \citep{idomura2008,ku2018fast,shi2019}, even in the edge/SOL where large density fluctuations could lead to questions of the validity of replacing the full density with a background density.

It is important to note that using the linearized polarization approximation requires neglecting the second order $E\times B$ energy term in the Hamiltonian, and vice versa, in order for the resulting system to be energetically consistent. This will be shown explicitly in \cref{conservation}, where we show conservation properties of the system. 

Finally, the parallel Amp\`ere equation for $A_\parallel$ remains unchanged from \cref{ampere}; since the Hamiltonian does not depend on $A_\parallel$ (in the symplectic formulation), dropping second-order terms in the Hamiltonian has no effect on variations of the action with respect to $A_\parallel$.

\subsubsection{Case 3: Guiding-center single-particle Lagrangian}
We finally consider the case where we drop first- and second-order terms in the single-particle Lagrangian, resulting in the guiding-center Lagrangian (with no gyroaverages). Similar to Case 2, we can cast the first- and second-order terms into field terms multiplying a background distribution function:
\begin{align}
    \mathcal{L} &=  \sum_s \int \mathcal{J}f_s\mathcal{L}_{s0}\,\dx{^6\vec{Z}} + \sum_s \int\mathcal{J}f_{0s}\left(\mathcal{L}_{s1}+\mathcal{L}_{s2}\right)\dx{^6\vec{Z}} -\int \frac{|\nabla_\perp A_\parallel|^2}{2\mu_0}\dx{^3\vec{x}} \label{L0gen} \\
    &=\sum_s \int\mathcal{J} f_s \left[q_s\left(\vec{A}_0^* + A_\parallel \uv{b}\right) \cdot \dot{\vec{R}} + \frac{m_s\mu}{q_s}\dot{\vartheta} - \left(\frac{1}{2}m_s v_\parallel^2 + \mu B + q_s \Phi \right) \right]  \dx{^6\vec{Z}}\notag \\&\qquad + \sum_s \int\mathcal{J}f_{0s}\left(q_s[\langle A_\parallel \rangle -A_\parallel]\uv{b}\cdot\dot{\vec{R}} -q_s[\langle \Phi \rangle -\Phi]+ \frac{m_s}{2B^2}|\nabla_\perp \Phi|^2\right) \dx{^6\vec{Z}}\notag \\
    &\qquad-\int \frac{|\nabla_\perp A_\parallel|^2}{2\mu_0}\dx{^3\vec{x}}. \label{lagrangian3}
\end{align}

The functional derivative of the action with respect to variations of $\Phi(\vec{x})$ gives
\begin{align}
    \frac{\delta \mathcal{I}}{\delta \Phi(\vec{x})} &= \sum_s \int\bigg[ - q_s \mathcal{J}f_s \delta(\vec{x}-\vec{R}) - q_s\mathcal{J} f_{0s}[\delta(\vec{x}-\vec{R}-\vec{\rho})-\delta(\vec{x}-\vec{R})]\Bigg.\notag \\
    &\qquad \left.+ \nabla \cdot \left(\mathcal{J}f_{0s} \delta(\vec{x}-\vec{R})\frac{m_s}{B^2}\nabla_\perp \Phi\right)\right] \dx{^6\vec{Z}} \notag\\
    &= -\sum_s q_s\int  \mathcal{J}\left(f_s + [\langle f_{0s} \rangle^\dagger - f_{0s}]\right)\dx{^3\vec{v}} -\nabla\cdot \left(\sum_s \frac{m_s}{B^2}\nabla_\perp \Phi\int  \mathcal{J}f_{0s}\,\dx{^3\vec{v}}\right) \notag \\
    &= -\sum_s q_s (n_s + \bar{n}_{0s}-n_{0s}) -\nabla\cdot \sum_s \frac{m_s n_{0s}}{B^2}\nabla_\perp \Phi.
\end{align}
If we assume that the background density $n_0$ varies slowly on the gyroradius scale (consistent with the ordering $\rho/L_p\ll 1$), we can approximate $\bar{n}_{0s}-n_{0s}\approx 0$ (and to be consistent, we should also drop the $f_0 \mathcal{L}_1$ field term in \cref{L0gen}), so that the gyrokinetic Poisson equation becomes
\begin{equation}
    -\nabla \cdot \sum_s \frac{m_s n_{0s}}{B^2}\nabla_\perp \Phi = \sum_s q_s{n}_s. \label{poisson0}
\end{equation}
Consistent with dropping gyroaverages in the single-particle Lagrangian, the charge density is no longer gyroaveraged here compared to the other cases. Note that even after dropping all gyroaverage operations in the single-particle Lagrangian and the Poisson equation, the polarization density on the left-hand side of \cref{poisson0} still incorporates some lowest-order finite-Larmor-radius (FLR) effects.

Similarly, the Amp\`ere equation becomes
\begin{equation}
    -\nabla_\perp^2 A_\parallel = \mu_0 {J}_{\parallel} = \mu_0 \sum_s q_s n_s u_{\parallel s}, \label{ampere0}
\end{equation}
with the gyroaverage of the current density dropped as well.

\subsection{Conservation properties of the gyrokinetic Vlasov-Poisson-Amp\`ere system} \label{conservation}

The Hamiltonian structure of the gyrokinetic Vlasov-Poisson-Amp\`ere system guarantees conservation of arbitrary functions of $f$ along the characteristics,
\begin{equation}
    \pderiv{G(f)}{t} + \dot{\vec{Z}}\cdot\pderiv{}{\vec{Z}}G(f) = 0,
\end{equation}
along with corresponding Casimir invariants $\int \mathcal{J}G(f)\,\dx{^6\vec{Z}}$. Thus, the system has an infinite number of conserved quantities, including the total particle number (or $L_1$ norm) $N=\int \mathcal{J} f\,\dx{^6\vec{Z}}$, the $L_2$ norm $M=\int\mathcal{J} f^2\,\dx{^6\vec{Z}}$, and the kinetic entropy $S=-\int  \mathcal{J}f\ln f\,\dx{^6\vec{Z}}$ \citep{idomura2008}.

Conservation laws of energy and momentum can be derived by applying Noether's theorem to the action integral \citep{sugama2000}. The Noether energy $\mathcal{E}$ is given by varying the action with respect to time variations, which results in
\begin{align}
    \mathcal{E} &= \sum_s \int\mathcal{J} f_s \,\vec{\Lambda}_s\cdot\dot{\vec{Z}}\, \dx{^6\vec{Z}} -\mathcal{L} = \sum_s \int\mathcal{J} f_s H_s\, \dx{^6\vec{Z}} - \mathcal{L}_f,
\end{align}
where recall $\vec{\Lambda}_s=\partial \mathcal{L}_s/\partial \dot{\vec{Z}}$ is the symplectic part of the single-particle Lagrangian. We can verify that this is indeed a conserved quantity for each of the cases discussed in the previous section, by inserting the corresponding definitions of the Hamiltonian and field Lagrangian. The proof relies on the fact that the field equations have been derived consistently from the system Lagrangian, with all approximations made at the level of the Lagrangian. Considering Case 1 from the previous section, which includes the second order $E\times B$ term in the Hamiltonian and the full polarization density (as opposed to the linearized polarization density in the other cases), we can explicitly compute the time derivative of $\mathcal{E}$ as
\begin{align}
    \pderiv{\mathcal{E}}{t} &= \sum_s \int \left(\mathcal{J} f_s \pderiv{H_s}{t} + H_s \pderiv{(\mathcal{J}f_s)}{t}\right) \dx{^6\vec{Z}} - \pderiv{\mathcal{L}_f}{t} \notag \\
    &= \sum_s \int  \left(\mathcal{J}f_s\left[q_s\pderiv{\langle\Phi\rangle}{t} - \frac{m_s}{B^2}\nabla_\perp \Phi \cdot \nabla_\perp \pderiv{\Phi}{t} \right] \right. \notag \\
    &\qquad\left.-\ H_s\pderiv{}{\vec{Z}}\cdot\left(\mathcal{J}f_s\dot{\vec{Z}}\right) \right)\dx{^6\vec{Z}}+ \int \frac{1}{\mu_0}\nabla_\perp A_\parallel\cdot\nabla_\perp \pderiv{A_\parallel}{t}\, \dx{^3\vec{x}}
\end{align}
We can integrate by parts in several terms, and after assuming that boundary contributions vanish (boundary contributions are allowed, they just must be properly accounted for), this results in
\begin{align}
    \pderiv{\mathcal{E}}{t} &= \sum_s \int \left(\left[q_s \mathcal{J}f_s \pderiv{\langle\Phi\rangle}{t} +\nabla\cdot \left(\mathcal{J}f_s\frac{m_s}{B^2}\nabla_\perp\Phi\right)\pderiv{\Phi}{t}\right]\right. \notag \\
    &\qquad \left.+\ \mathcal{J}f_s \dot{\vec{Z}}\cdot\pderiv{H_s}{\vec{Z}}\right)\dx{^6\vec{Z}} - \int \frac{1}{\mu_0}\nabla_\perp^2 A_\parallel \pderiv{A_\parallel}{t}\,\dx{^3\vec{x}} \notag \\
    &= \sum_s \int \left(\left[q_s \langle \mathcal{J}f_s \rangle^\dagger +\nabla\cdot \left(\mathcal{J}f_s\frac{m_s}{B^2}\nabla_\perp\Phi\right)\right]\pderiv{\Phi}{t}\right. \notag \\
    &\qquad \left.-\ qv_\parallel\langle\mathcal{J}f_s\rangle^\dagger \pderiv{ A_\parallel}{t}\right)\dx{^3\vec{x}}\,\dx{^3\vec{v}} - \int\frac{1}{\mu_0}\nabla_\perp^2 A_\parallel \pderiv{A_\parallel}{t}\, \dx{^3\vec{x}} \notag \\
    &=\int \left( \left[\sigma_{gy} - \nabla \cdot \vec{P} \right] \pderiv{\Phi}{t} - \frac{1}{\mu_0}\left[\mu_0\bar{J}_\parallel + \nabla_\perp^2 A_\parallel \right]\pderiv{A_\parallel}{t}\right)\dx{^3\vec{x}} \notag \\
    &=0,
\end{align}
where we used \cref{qneut} and \cref{ampere}, and
\begin{align}
    \dot{\vec{Z}}\cdot\pderiv{H}{\vec{Z}}&= \{H,H\} + \{H,\vec{Z}\}\cdot\pderiv{\vec{\Lambda}}{t} =  \{H,\vec{Z}\}\cdot\pderiv{\vec{\Lambda}}{t} \notag \\&=-\uv{b}\cdot \dot{\vec{R}}\pderiv{ \langle A_\parallel\rangle}{ t}= -\frac{q}{m}\pderiv{H}{v_\parallel}\pderiv{\langle A_\parallel\rangle}{t} = - q v_\parallel \pderiv{\langle A_\parallel\rangle}{t},
\end{align}
with $\{H,H\}=0$ from antisymmetry of the Poisson bracket.

Similarly, the Noether toroidal momentum is given by varying the action with respect to spatial variations, which results in
\begin{equation}
    \mathcal{P} = \sum_s \int \mathcal{J} f_s P_\varphi\, \dx{^6\vec{Z}},
\end{equation}
where 
\begin{equation}
    P_\varphi \equiv \pderiv{\mathcal{L}}{\dot{\varphi}} = q{A}_\varphi + (q A_\parallel + mv_\parallel) b_\varphi,
\end{equation}
with $A_\varphi = \vec{A}_0\cdot(\partial \vec{R}/\partial \varphi)$ and $b_\varphi = \uv{b}\cdot(\partial \vec{R}/\partial \varphi)$.

\section{Summary of gyrokinetic system, in limit of current interest} \label{sec:gk-summary}
Here we summarize the gyrokinetic system, in the limit that we will use for the remainder of this thesis. As a first step towards full-$f$ electromagnetic gyrokinetic simulations of the plasma boundary region, we have implemented the lowest-order (guiding-center, or drift-kinetic) limit of the system in the \gke code, neglecting all gyroaveraging operations. This is a matter of simplicity, and implementing gyroaveraging effects given by the next order terms we have derived is important future work. We emphasize that this ``long-wavelength'' limit is a valid limit of our full-$f$ {gyrokinetic} derivation since we took care to include the guiding-center components of the field perturbations at $\mathcal{O}(1)$ in \cref{a-tot,phi-tot}. Further, although one may think of this as a drift-kinetic limit, the presence of the ion polarization term in the quasineutrality equation distinguishes the long-wavelength gyrokinetic model from versions of drift-kinetics that include the polarization drift in the equations of motion or that determine the potential from some other equation.

In this limit, the gyrokinetic Poisson bracket is given by
\begin{equation}
    \{F,G\} = \frac{\vec{B}^*}{m B_\parallel^*} \cdot \left(\nabla F \frac{\partial G}{\partial v_\parallel} - \frac{\partial F}{\partial v_\parallel}\nabla G\right) - \frac{ \uv{b}}{q B_\parallel^*}\times \nabla F \cdot \nabla G, \label{gkpb}
\end{equation}
with $\vec{B}^*=\vec{B} + (mv_\parallel/q)\nabla\times\uv{b}+\vec{B}_{1}$, $\vec{B}_{1}= \nabla\times(A_\parallel\uv{b})$, and $B_\parallel^*=\uv{b}\cdot\vec{B}^*\approx B$. The Hamiltonian is
\begin{equation}
    H = \frac{1}{2}m v_\parallel^2 + \mu B + q\Phi. \label{ham}
\end{equation}
Inserting this into \cref{eomR,eomV}, this results in the (guiding-center) equations of motion,
\begin{gather}
    \dot{\vec{R}} = \{\vec{R},H\} = \frac{\vec{B^*}}{B_\parallel^*}v_\parallel + \frac{\uv{b}}{q B_\parallel^*}\times\left(\mu\nabla B + q \nabla \Phi\right), \label{eomR0}\\
    \dot{v}_\parallel = \dot{v}^H_\parallel -\frac{q}{m}\pderiv{A_\parallel}{t} = \{v_\parallel,H\}-\frac{q}{m}\pderiv{A_\parallel}{t} = -\frac{\vec{B^*}}{m B_\parallel^*}{\cdot}\left(\mu\nabla B + q \nabla \Phi\right)-\frac{q}{m}\pderiv{A_\parallel}{t}. \label{eomV0}
\end{gather}
In \cref{eomV0} we have separated $\dot{v}_\parallel$ into a term that comes from the Hamiltonian, $\dot{v}^H_\parallel = \{v_\parallel,H\}$, and the term that comes from the symplectic part of the Lagrangian that is proportional to the inductive component of the parallel electric field, $(q/m)\partial {A_\parallel}/\partial{t}$. We use this notation for convenience, and so that the time derivative of $A_\parallel$ appears explicitly.

The gyrokinetic equation for species $s$ is then given by
\begin{align}
    \pderiv{f_s}{t} + \dot{\vec{R}} \cdot\nabla f_s + \dot{v}^H_\parallel \pderiv{f_s}{v_\parallel}- \frac{q_s}{m_s}\pderiv{A_\parallel}{t}\pderiv{f_s}{v_\parallel} = C[f_s] + S_s,
\end{align}
or equivalently,
\begin{equation}
    \pderiv{f_s}{t} + \{f_s,H_s\} - \frac{q_s}{m_s}\pderiv{ A_\parallel }{t}\pderiv{f_s}{v_\parallel} = C[f_s]+S_s,
\end{equation}
or in conservative form as
\begin{equation}
\pderiv{(\mathcal{J}f_s)}{t} + \nabla{\cdot}( \mathcal{J} \dot{\vec{R}} f_s) + \pderiv{}{v_\parallel}\left(\mathcal{J}\dot{v}^H_\parallel f_s\right)- \pderiv{}{v_\parallel}\left(\mathcal{J}\frac{q_s}{m_s}\pderiv{A_\parallel}{t} f_s \right)= \mathcal{J} C[f_s] + \mathcal{J} S_s. \label{emgk} 
\end{equation}
Here we have included collisions, $C[f_s]$, and sources, $S_s$, which we did not derive in this chapter. Details about the model collision operator are included briefly below.

The field equations are the ones derived in \cref{variational}, Case 3, consistent with neglecting gyroaveraging operations in the equations of motion. The gyrokinetic Poisson equation is 
\begin{equation}
    -\nabla \cdot \left(\epsilon_\perp \nabla_\perp \Phi\right) = \sum_s q_s \int  \mathcal{J} f_s\, \dx{^3\vec{v}}, \label{poisson1}
\end{equation}
with 
\begin{equation}
    \epsilon_\perp = \sum_s \frac{m_s n_{0s}}{B^2},
\end{equation}
and the parallel Amp\`ere equation is
\begin{equation}
    -\nabla_\perp^2 A_\parallel = \mu_0 \sum_s q_s \int  v_\parallel \mathcal{J} f_s\,\dx{^3\vec{v}}. \label{ampere1}
\end{equation}
Note that we can also take the time derivative of this equation to get a generalized Ohm's law which can be solved directly for $\pderivInline{A_\parallel}{t}$, the inductive component of the parallel electric field $E_\parallel$ \citep{reynders1993gyrokinetic, cummings1994gyrokinetic, chen2001gyrokinetic}i:
\begin{equation}
    -\nabla_\perp^2 \pderiv{A_\parallel}{t} = \mu_0 \sum_s q_s \int v_\parallel \pderiv{(\mathcal{J} f_s)}{t}\, \dx{^3\vec{v}}.
\end{equation}
Writing the gyrokinetic equation as
\begin{equation}
    \pderiv{(\mathcal{J}f_s)}{t} = 
    \pderiv{(\mathcal{J}f_s)}{t}^\star + \pderiv{}{v_\parallel}\left(\mathcal{J} \frac{q_s}{m_s}\pderiv{A_\parallel}{t} f_s\right), \label{fstar}
\end{equation}
where $\partial{(\mathcal{J}f_s)^\star}/\partial{t}$ denotes all the terms in the gyrokinetic equation (including sources and collisions) except the $\pderivInline{A_\parallel}{t}$ term, Ohm's law can be rewritten (after an integration by parts) as 
\begin{equation}
    \left(-\nabla_\perp^2 + \sum_s \frac{\mu_0 q_s^2}{m_s} \int\mathcal{J} f_s\, \dx{^3\vec{v}}\right) \pderiv{A_\parallel}{t} = \mu_0 \sum_s q_s \int v_\parallel \pderiv{(\mathcal{J}f_s)}{t}^\star\,\dx{^3\vec{v}}. \label{ohmstar}
\end{equation}

Finally, the conserved energy in this system is
\begin{align}
    \mathcal{E} &= \mathcal{E}_H - \mathcal{E}_E + \mathcal{E}_B \notag \\
    &= \sum_s \int \mathcal{J} f_s H_s \,\dx{^6\vec{Z}} - \int \frac{\epsilon_\perp}{2}|\nabla_\perp \Phi|^2\,\dx{^3\vec{R}} + \int \frac{1}{2\mu_0}|\nabla_\perp A_\parallel|^2\,\dx{^3\vec{R}}.
\end{align}

\section{Model collision operator}
To model the effect of collisions we use a conservative Lenard--Bernstein (or Dougherty) collision operator \citep{Lenard1958,Dougherty1964},
\begin{align} \label{eq:GkLBOEq}
\mathcal{J}C[f] &= \nu\left\lbrace\pderiv{}{v_\parallel}\left[\left(v_\parallel - u_\parallel\right)\mathcal{J} f+v_{t}^2\pderiv{(\mathcal{J} f)}{v_\parallel}\right]+\pderiv{}{\mu}\left[2\mu \mathcal{J} f+2\mu\frac{m}{B}v_{t}^2\pderiv{(\mathcal{J} f)}{\mu}\right]\right\rbrace,
\end{align}
where
\begin{align}
    n u_\parallel = \int  \mathcal{J} v_\parallel f \,\dx{^3\vec{v}}, \qquad\qquad n u_\parallel^2 + 3 n v_{t}^2 = \int  \mathcal{J}\left(v_\parallel^2 + 2\mu B/m\right)f\,\dx{^3\vec{v}},
\end{align}
with $n = \int \mathcal{J} f\,\dx{^3\vec{v}}$.
This collision operator contains the effects of drag and pitch-angle scattering, and it conserves number, momentum and energy density. Consistent with our present long-wavelength treatment of the gyrokinetic system, finite-Larmor-radius effects are ignored. For simplicity we restrict ourselves to the case in which the collision frequency $\nu$ is velocity independent, i.e. $\nu\neq\nu(v)$.  
Further details about this collision operator, including its conservation properties and its numerical discretization, are shown in  \citet{francisquez2020}.

\chapter{Numerical methods: an electromagnetic full-$f$ gyrokinetic scheme} \label{ch:dg}

The electromagnetic gyrokinetic system described in the previous chapter requires robust numerical methods that can preserve the underlying conservation laws. For this, we have chosen a numerical scheme based on the discontinuous Galerkin (DG) finite-element method. In this chapter, we develop a DG scheme that is explicitly constructed to conserve energy in Hamiltonian systems, and then we apply the general Hamiltonian scheme to electromagnetic gyrokinetics.

\section{The discontinuous Galerkin method}
The discontinuous Galerkin method comprises a class of Galerkin methods for numerically solving partial differential equations that combines attractive features of finite-element and finite-volume methods. The result is a method with flexibility in choice of local, arbitrarily high-order basis functions (as in finite-element methods), along with the ability to locally enforce conservation laws (as in finite-volume methods). DG methods first appeared in the study of neutron transport \citep{reed1973}. The work of \citet{Cockburn1998,Cockburn2001} introduced the Runge-Kutta discontinuous Galerkin (RKDG) method for the solution of nonlinear, time-dependent hyperbolic systems, leading to the use and study of DG methods for a wide variety of problems in computational fluid dynamics and other areas. For a more detailed introduction to DG methods, see the textbooks of \citet{hesthaven2007} and \citet{durran2010numerical} and the review by \citet{Cockburn2001}.

\subsection{DG for hyperbolic conservation laws} \label{sec:hyperbolic-cons}
To introduce the DG scheme, we will first focus on a scalar hyperbolic conservation law in one dimension of the generic form
\begin{equation}
    \pderiv{f}{t} + \pderiv{F(f)}{x} = 0, \label{pde}
\end{equation}
with ${F}(f)$ some arbitrary (possibly nonlinear) flux, and the system defined on some region $x\in \Omega$ and subject to some boundary conditions and initial conditions. 

We begin by dividing the region $\Omega$ into a mesh $\mathcal{T}$ of $N$ non-overlapping cells $\mathcal{K}_i\in \mathcal{T}$, with cell $i$ defined by $\mathcal{K}_i = [x_{i-1/2},x_{i+1/2}]$, where $x_{i+1/2}=(x_i+x_{i+1})/2$ and $x_i$ is the center of cell $i$. 
We next define a piecewise-polynomial approximation space for the solution,
\begin{equation}
    \mathcal{V}_h^p = \{\psi : \psi|_{\mathcal{K}_i}\in \vec{P}^p, \forall\ \mathcal{K}_i\in \mathcal{T}\},
\end{equation}
where $\vec{P}^p$ is some space of polynomials with maximum degree $p$ (by some measure) containing polynomial functions $\psi=\psi(x)$ local to each cell. The approximate solution is then defined in each cell as a finite sum of expansion functions $\psi_k(x)$,
\begin{equation}
    f_h^i(x,t) = \sum_{k=0}^{p} f_k^i(t) \psi_k(x). \label{fexp}
\end{equation}
The global approximate solution $f_h(x,t)$, composed as a direct sum of the $N$ local solutions as
\begin{equation}
    f_h(x,t) = \bigoplus_{i=1}^N f_h^i(x,t), \label{fexpand}
\end{equation}
is assumed to approximate the exact solution $f(x,t) \simeq f_h(x,t)$. In the form defined above, the global solution $f_h$ can be discontinuous at the interface between two cells; there are no restrictions on the local coefficients $f_k^i$ in neighboring cells, so continuity is not enforced in general. The discontinuous piecewise-polynomial form of the global solution is a key part of the discontinuous Galerkin method. 

Inserting the approximate solution $f_h$ into \cref{pde}, we obtain the residual
\begin{equation}
    \mathcal{R}_h(x, t) = \pderiv{f_h}{t} + \pderiv{F(f_h)}{x}.
\end{equation}
Various schemes can be given by particular choices for how to minimize the residual. The DG scheme is given by minimizing the residual via the Galerkin condition, which can be stated as the requirement that the residual in each cell be orthogonal to all test functions $\psi$ in the solution space,
\begin{equation}
     \int_{\mathcal{K}_i} \psi\,\mathcal{R}_h\,\dx{x} = \int_{\mathcal{K}_i} \psi \left(\pderiv{f_h}{t} + \pderiv{F(f_h)}{x}\right)\dx{x} = 0, \qquad \forall\ \psi\in \mathcal{V}_h^p. \label{galerkin}
\end{equation}
Since $F(f_h)$ can be discontinuous at cell boundaries, the spatial derivative of $F(f_h)$ that appears in \cref{galerkin} introduces delta functions at the boundaries. To avoid this, we integrate by parts in space to move the derivative off of $F$. This results in the DG \emph{weak form} of the system,
\begin{equation}
    \int_{\mathcal{K}_i} \psi \pderiv{f_h}{t}\,\dx{x} - \int_{\mathcal{K}_i} \pderiv{\psi}{x}F_h\,\dx{x} + \bigg[ \psi \hat{F} \bigg]^{x_{i+1/2}}_{x_{i-1/2}} = 0.
\end{equation}
The weak form now contains a volume integral term (the second term on the left-hand side above) and a surface integral term (the third term on the left-hand side, where the `surface' of cell $\mathcal{K}_i$ is just the endpoints of the cell in this simple 1D example). We have introduced the numerical flux $\hat{F}=\hat{F}(F^-,F^+)$ in the surface term. This arises because the flux $F_h=F(f_h)$ is not unique at the cell boundaries since $f_h$ can be discontinuous at the cell boundary, resulting in different values of the flux when evaluated just inside ($F^-$) or just outside ($F^+$) the boundary. The choice of the form of the numerical flux depends on the system of interest. For advection, \textit{i.e.} when $F(f) = u f$ with $u$ the advection velocity, a common choice is the upwind flux,
\begin{equation}
    \hat{F}(F^-,F^+) = \frac{1}{2}\left[F^+ + F^-\right] - \frac{1}{2}\text{sgn}(u)\left[F^+-F^-\right] =  \begin{cases} 
        F^- \quad \text{if } u>0 \\
        F^+ \quad \text{if } u<0.
    \end{cases}
\end{equation}  
The numerical flux is common to both sides of the cell boundary, so that the flux of particles out of one cell is identical to the the flux into the adjacent cell through the shared boundary. This ensures that the $L_1$ norm $N=\int f\, \dx{\vec{x}}$ is conserved. In general, the numerical flux must be consistent, so that $\hat{F}(F,F)=F$. Finally, drawing on the success of monotone finite-volume methods, the flux should be monotone by requiring it to be non-decreasing in the first argument ($F^-$) and non-increasing in the second argument ($F^+$) \citep{leveque2002}.

We can now obtain a system of coupled differential equations for the time evolution of the weights $f_k^i(t)$ by inserting the expansion from \cref{fexp} into the weak form:
\begin{equation}
    \sum_{k=0}^p\int_{\mathcal{K}_i} \psi_j \psi_k \pderiv{f_k^i}{t}\,\dx{x} - \int_{\mathcal{K}_i} \pderiv{\psi_j}{x}F_h\,\dx{x} + \bigg[ \psi_j \hat{F} \bigg]^{x_{i+1/2}}_{x_{i-1/2}} = 0, \qquad j=0,\dots,p. \label{dgweakgen1d}
\end{equation}
Note that in the special case where the polynomials $\psi\in\mathcal{P}^p$ are orthonormal, this reduces to
\begin{equation}
    \pderiv{f_j^i}{t} = \int_{\mathcal{K}_i} \pderiv{\psi_j}{x}F_h\,\dx{x} - \bigg[ \psi_j \hat{F} \bigg]^{x_{i+1/2}}_{x_{i-1/2}} = 0, \qquad j=0,\dots,p.
\end{equation}
This system of equations can now be discretized in time using an explicit scheme such as a high-order Runge--Kutta (RK) time discretization scheme, resulting in the RKDG method \citep{Cockburn1998}.

The scheme can be easily generalized to a multi-dimensional hyperbolic conservation law,
\begin{equation}
    \pderiv{f}{t} + \nabla\cdot \vec{F}(f) = 0. \label{pde2}
\end{equation}
As above, the DG weak form in cell $i$ is given by multiplying \cref{pde2} by a test function $\psi$, and integrating (by parts) over the cell. This gives
\begin{equation}
    \int_{\mathcal{K}_i}  \psi \pderiv{f_h}{t}\dx{\vec{x}}- \int_{\mathcal{K}_i} \vec{F}_h\cdot\nabla \psi\, \dx{\vec{x}}  + \oint_{\partial \mathcal{K}_i}\psi^- \vec{\hat{F}}\cdot\dx{\vec{s}} = 0, \label{DGweakgen}
\end{equation}
where now the surface term takes the form of a surface integral over the cell boundary $\partial \mathcal{K}_i$, with $\dx{\vec{s}}$ the differential element pointing outward normal to the surface.

\subsection{Choice of basis functions}

There is significant freedom for the choice of basis functions $\psi \in \vec{P}^p$. The class of possible basis functions is typically grouped into nodal and modal families. In the nodal approach, the basis functions are usually taken to be the Lagrange interpolating polynomials, which in one-dimension are given by
\begin{equation}
    \ell_k(x) = \prod_{\begin{smallmatrix}0\le m\le p\\ m\neq k\end{smallmatrix}} \frac{x-x_m}{x_k-x_m}.
\end{equation}
with $x_k$ the set of nodes chosen to represent the solution. The nodes are typical chosen to be quadrature points so that the integrals in the DG weak form can be computed efficiently. The coefficients in the expansion of the solution are then just the values of the solution at the nodes, so that $f_k^i(t) = f_h^i(x_k,t)$ in \cref{fexpand}.

We instead take the modal approach, where we obtain the expansion coefficients by projecting the solution onto some set of `modes' $\psi_k$, so that
\begin{equation}
    f_k^i(t) = \int_{\mathcal{K}_i} \psi_k(x) f_h^i(x,t) \, \dx{x}.
\end{equation}
It is convenient to map each cell $\mathcal{K}_i$ to the interval $[-1,1]$ in each dimension. For the one-dimensional weak form from \cref{dgweakgen1d}, this can be accomplished using the transformation
\begin{equation}
    \xi = \frac{2(x-x_i)}{\Delta x_i},
\end{equation}
where cell $\mathcal{K}_i=[x_i - \Delta x_i/2, x_i+\Delta x_i/2]$ has cell center $x_i$ and width $\Delta x_i$. This gives
\begin{equation}
    \dx{x} = \frac{\Delta x_j}{2}\dx{\xi}, \qquad \pderiv{}{x} = \frac{2}{\Delta x_j}\pderiv{}{\xi}, 
\end{equation}
so that \cref{dgweakgen1d} becomes
\begin{equation}
    \frac{\Delta x_i}{2}\sum_{k=0}^p \int_{-1}^1 \psi_j\psi_k \pderiv{f_k}{t}\, \dx{\xi} - \int_{-1}^1 \pderiv{\psi_j}{\xi} F_h\, \dx{\xi} + \bigg[ \psi_j \hat{F} \bigg]^{\,1}_{-1} = 0.
\end{equation}
The first term on the left-hand side contains the mass matrix, 
\begin{equation}
    M_{jk} \equiv \int_{-1}^1 \psi_j \psi_k\, \dx{\xi}.
\end{equation}
It is then efficient to choose an orthogonal basis so that the mass matrix is diagonal, or even better, an orthonormal basis so that the mass matrix is the identity matrix. This can be done by starting with the simple monomial basis $\psi_k(x) = x^k$ and using a Gram-Schmidt orthogonalization procedure to generate an orthogonal basis on the interval $[-1,1]$, which can then be appropriately normalized so that the basis is orthonormal. As a result, the 1D DG weak form in cell $i$ becomes
\begin{equation}
    \pderiv{f_k}{t} = \frac{2}{\Delta x_i}\int_{-1}^1 \pderiv{\psi_j}{\xi} F_h\, \dx{\xi} - \frac{2}{\Delta x_i} \bigg[ \psi_j \hat{F} \bigg]^{\,1}_{-1}.
\end{equation}

These approaches can be generalized to higher dimensions by taking Lagrange tensor products of the one-dimensional basis sets. This results in the number of basis functions within a cell scaling like $(p+1)^d$ for $d$ dimensions, which gives an exponential cost scaling with dimensionality. Since this can be prohibitive for a five-dimensional system like gyrokinetics, we instead reduce the tensor product basis by employing the serendipity basis set \citep{arnold2011serendipity}. Starting from a tensor product basis of the monomials with degree $p$, elements are dropped if they have super-linear degree greater than $p$, defined to be the total degree of the polynomial with respect to variables which enter super-linearly (so for example, the super-linear degree of $x^2 y z^3$ is 5). For a two-dimensional $p=2$ serendipity basis, this means that the $x^2 y^2$ element is dropped because its super-linear degree is four, while $x y^2$ and $x^2 y$ are kept because they have super-linear degree of two, equal to $p$. The resulting reduced set of monomials can then be orthogonalized and orthonormalized with a Gram-Schmidt procedure as in one dimension. The serendipity basis set has the advantage of using fewer basis functions while giving the same formal convergence order (although it is less accurate) as the Lagrange tensor
basis. Note however that for $p=1$ the serendipity basis is equivalent to the Lagrange tensor basis. 

A more complete treatment of the advantages of various choices for DG basis sets is given by \citet{juno-thesis}.

\section{An energy-conserving discontinuous Galerkin \\scheme for general Hamiltonian systems} \label{sec:dgenergy}
A broad class of problems in fluid mechanics and plasma physics are described by Hamiltonian systems. As we saw in \cref{ch:emgk}, this includes the electromagnetic gyrokinetic system. In this section, we discuss a discontinuous Galerkin scheme for general Hamiltonian systems that is explicitly constructed to be energy-conserving. We will apply this scheme to the electromagnetic gyrokinetic system in \cref{sec:gkscheme}.
The scheme presented here (and in more detail in \citet{hakim2019}) is a generalization of the DG scheme presented by \citet{liu2000high} for the 2D incompressible Euler equations, which can be expressed in Hamiltonian form as we will see below. 

\subsection{Evolution of general Hamiltonian systems} \label{sec:hamil}

The phase-space evolution of a Hamiltonian system is in general given by the Liouville equation, which describes the conservation of the phase-space distribution function $f(t,\vec{Z})$ along trajectories in phase space,
\begin{equation}
    \pderiv{f}{t} + \dot{\vec{Z}}\cdot\pderiv{f}{\vec{Z}}=0. \label{liouville}
\end{equation}
In this section we will assume that the coordinates $\vec{Z}=(Z^1,\ldots,Z^{N_d})$, which label the $N_d$-dimensional
phase space in which the distribution function evolves, are canonical or that they resulted from a time-independent non-canonical coordinate transformation\footnote{The symplectic formulation of electromagnetic gyrokinetics is derived via a time-\emph{dependent} non-canonical coordinate transformation. In \cref{sec:gkscheme} we will show that the scheme in this section can be generalized to account for time dependence in the symplectic structure.}. This means that the equations of motion can be written as
\begin{equation}
    \dot{\vec{Z}} = \{\vec{Z},H\},
\end{equation}
where $H$ is the Hamiltonian and $\{g,h\}$ is the Poisson bracket operator. Equivalently, \cref{liouville} can be written in terms of the Poisson bracket as
\begin{equation}
    \pderiv{f}{t} + \{f,H\} = 0. \label{eq:liouville}
\end{equation}
Liouville's theorem also provides that phase-space volume is conserved under (possibly non-canonical) coordinate transformations. Given a coordinate transformation $\bar{\vec{Z}}\rightarrow \vec{Z}$ with Jacobian $\jac$ such that $\dx{\bar{\vec{Z}}} = \jac \dx{\vec{Z}}$, this can be stated as
\begin{equation}
    \pderiv{}{\vec{Z}}\cdot\left(\jac \dot{\vec{Z}}\right) = 0,
\end{equation}
where again in this section we assume the Jacobian of the transformation is time-independent.
This allows us to write the Liouville equation in conservative form as
\begin{equation}
\pderiv{}{t}\left(\jac f\right) + \pderiv{}{\vec{Z}}\cdot\left(\jac f \dot{\vec{Z}} \right) = 0, \label{eqn:liouville-cons}
\end{equation}
which now has the same form of a hyperbolic conservation law as in \cref{sec:hyperbolic-cons}, with $\vec{F} = \jac f\dot{\vec{Z}}$ the flux. Finally, the form of the Hamiltonian and the equation(s) governing its evolution depend on the system of interest. 

Hamiltonian systems conserve the total energy of the system, given by
\begin{equation}
    \mathcal{E} = \int H \mathcal{J}f\, \dx{\vec{Z}} - \mathcal{L}_f, \label{hamil-energy}
\end{equation}
where $\mathcal{L}_f$ accounts for possible field terms,
such that
\begin{equation}
    \pderiv{\mathcal{E}}{t} = \int \left(H \pderiv{(\mathcal{J} f)}{t} + \mathcal{J} f\pderiv{H}{t}\right)\dx{\vec{Z}} - \pderiv{\mathcal{L}_f}{t} = 0.
\end{equation}
The first term vanishes upon integration by parts, assuming no boundary contributions, since 
$\dot{\vec{Z}}\cdot\partial H/\partial\vec{Z} = \{H, H\} = 0$; physically, this is because the flow $\dot{\vec{Z}}$ is along contours of constant energy in phase space. For systems with time-dependent Hamiltonians, the field equation governing the Hamiltonian is required to show that the second and third terms above cancel exactly.

%
\subsection{Discontinuous Galerkin discretization scheme} \label{sec:dg-hamil}

Now we can follow the ideas from \cref{sec:hyperbolic-cons} to make a DG discretization of \cref{eqn:liouville-cons}. We start by decomposing the global phase-space domain $\Omega$ into a {structured} phase-space mesh $\mathcal{T}$ with cells $\mathcal{K}_i \in \mathcal{T},\ i=1,...,N$. As above, we then introduce a piecewise-polynomial approximation space for the distribution function $f(t,\vec{Z})$,
\begin{equation}
    \mathcal{V}_h^p = \{\psi : \psi|_{\mathcal{K}_i}\in \vec{P}^p, \forall\ \mathcal{K}_i\in \mathcal{T}\},
\end{equation}
The DG weak form in cell $i$ is then obtained by multiplying \cref{eqn:liouville-cons} by a test function $\psi\in\mathcal{V}^p_h$ and integrating (by parts) in the cell, yielding
\begin{equation}
    \int_{\mathcal{K}_i}  \psi \pderiv{(\mathcal{J}f_h)}{t}\dx{\vec{Z}}- \int_{\mathcal{K}_i} \mathcal{J}f_h \dot{\vec{Z}}_h\cdot\pderiv{\psi}{\vec{Z}}\, \dx{\vec{Z}}  + \oint_{\partial \mathcal{K}_i}\psi^- \widehat{\jac f_h \dot{\vec{Z}}_h}\cdot\dx{\vec{s}} = 0, \label{eq:dis-weak-form}
\end{equation}
where $\dot{\vec{Z}}_h=\{\vec{Z},H_h\}$, and $\hat{\vec{F}} = \widehat{\mathcal{J} f_h \dot{\vec{Z}}_h}$ is a numerical flux function. Solving \cref{eq:dis-weak-form} for all test functions $\psi\in\mathcal{V}_h^p$  in all cells $\mathcal{K}_i\in\mathcal{T}$ yields the discretized distribution function $f_h\in\mathcal{V}^p_h$. However, noting that the quantity $\mathcal{J}f_h$ always appears together, it is convenient to instead discretize the Jacobian-weighted distribution function, $\mathcal{J} f_h \in \mathcal{V}^p_h$. 

We have not yet addressed the approximation space for the discrete Hamiltonian, $H_h$. For this, we will introduce a subset of $\mathcal{V}^p_h$ where the piecewise polynomials are continuous across cell interfaces, denoted by $\fillover{\mathcal{V}}^p_h=\mathcal{V}^p_h\cap C_0(\vec{Z})$, with $C_0(\vec{Z})$ the set of continuous functions.
As we will show later, in order to maintain energy conservation in our discrete scheme, we will require that the discrete Hamiltonian be continuous across cell interfaces, \textit{i.e.} $H_h\in\fillover{\mathcal{V}}^p_h$  \citep{hakim2019,liu2000high,shi2017,shi-thesis}.
This leads to the following Lemma:
\begin{lemma}\label{lem:norm-alpha}
  The component of the phase-space characteristic velocity normal to a face of a cell is continuous across the cell boundary, as long as both the Hamiltonian and the Poisson tensor are continuous across the boundary.
\end{lemma}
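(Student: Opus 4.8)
The plan is to show that the normal component of the characteristic velocity, $\mathcal{J}\dot{\vec{Z}}\cdot\uv{n}$ (or equivalently $\dot{\vec{Z}}\cdot\uv{n}$), depends only on quantities that are continuous across the cell face. The characteristic velocity is $\dot{\vec{Z}} = \{\vec{Z},H\}$, so in components $\dot{Z}^i = \{Z^i,H\} = \Pi^{ij}\,\partial H/\partial Z^j$, using the Poisson tensor $\mathbf{\Pi}$ from \cref{pbpi}. The key structural observation is that the Poisson bracket of a \emph{coordinate} with the Hamiltonian only involves derivatives of $H$ tangent to certain directions, never the derivative in the direction of the coordinate itself, because of the antisymmetry $\Pi^{ii}=0$.

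First I would fix a face of a cell and let $\uv{n}$ be its unit normal, which (on a structured mesh) points along one coordinate direction, say $Z^m$. The normal velocity is then $\dot{Z}^m = \Pi^{mj}\,\partial H/\partial Z^j$, summed over $j$. Because $\mathbf{\Pi}$ is antisymmetric, the $j=m$ term vanishes, so the sum runs only over $j\neq m$, i.e.\ over derivatives of $H$ \emph{tangent} to the face. Second I would argue continuity factor by factor: the components $\Pi^{mj}$ are continuous across the face by the hypothesis that the Poisson tensor is continuous; and the tangential derivatives $\partial H/\partial Z^j$ for $j\neq m$ are continuous because $H$ itself is continuous across the face ($H_h\in\fillover{\mathcal{V}}^p_h$), and tangential differentiation preserves continuity of a function that is continuous along the face. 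A product and sum of continuous quantities is continuous, which gives the result.

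The main subtlety to handle carefully is \textbf{why the normal derivative of $H$ does not appear}. A function continuous across a face need not have a continuous normal derivative, so if $\dot{Z}^m$ depended on $\partial H/\partial Z^m$, continuity of $H$ alone would not suffice. The whole point is the antisymmetry of $\mathbf{\Pi}$, which kills exactly that term. I would therefore make the antisymmetry argument the centerpiece, emphasizing that $\Pi^{mm}=0$ removes the one derivative of $H$ that could be discontinuous, leaving only tangential derivatives. For the gyrokinetic case one can verify this concretely against the explicit bracket \cref{gkpb}: the normal component of $\dot{\vec{R}}$ and the value of $\dot{v}_\parallel$ across a face in one of these directions each involve only gradients of $H$ in the complementary directions together with the geometric factors $\vec{\bar{B}}^*/(m\bar B_\parallel^*)$ and $\uv{b}/(q\bar B_\parallel^*)$, which are continuous.

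The \textbf{hardest part} is being precise about what ``tangential'' means for each choice of normal direction and confirming in every case that the normal coordinate's own derivative of $H$ is excluded — in particular checking the mixed spatial/velocity structure of the gyrokinetic bracket, where $\dot{\vec{R}}$ couples to $\partial H/\partial v_\parallel$ and $\dot{v}_\parallel$ couples to $\nabla H$. I would organize this by noting that in each row $m$ of $\mathbf{\Pi}$ the diagonal entry is zero, so the argument is uniform across all faces regardless of whether the normal is a configuration-space or velocity-space direction.
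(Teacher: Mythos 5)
Your proposal is correct and follows essentially the same route as the paper's proof: the antisymmetry of the Poisson tensor ($n_\alpha\Pi^{\alpha\beta}n_\beta=0$, which for a coordinate-aligned face reduces to your $\Pi^{mm}=0$) forces the normal velocity to involve only tangential derivatives of $H$, which are continuous since $H$ is, together with continuity of the tensor components themselves. Your identification of the exclusion of the normal derivative of $H$ as the crux is exactly the point the paper makes via the tangent vector $\vec{\tau}$.
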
 
\begin{proof}
  Recall from \cref{sec:eom} that for a general Hamiltonian
  system, the Poisson bracket operator is defined as
  \begin{align}
    \{f,g\} = \pderiv{f}{Z^\alpha}\Pi^{\alpha\beta}\pderiv{g}{Z^\beta},
  \end{align}
  where $\Pi^{\alpha\beta}$ is the anti-symmetric {Poisson tensor}. The
  characteristic velocity, $\dot{Z}^\alpha=\{Z^\alpha,H\}$, can then be written
  as $\dot{Z}^\alpha=\Pi^{\alpha\beta}\partial H/\partial Z^\beta$. 
  Let $n_\alpha$ be a unit
  vector normal to a cell surface in dimension $\alpha$. We have
  \begin{align}
    n_\alpha \dot{Z}^\alpha = n_\alpha \Pi^{\alpha\beta}\pderiv{H}{Z^\beta} = \tau^\beta \pderiv{H}{Z^\beta} =
    \vec{\tau}\cdot\pderiv{H}{\vec{Z}},
  \end{align}
  where $\tau^\beta \equiv n_\alpha \Pi^{\alpha\beta}$. Hence, $\vec{\tau}\cdot\vec{n}
  = n_\alpha\Pi^{\alpha\beta} n_\beta = 0$, as $\Pi^{\alpha\beta}$ is anti-symmetric, showing
  that the vector $\vec{\tau}$ is orthogonal to $\vec{n}$, and thus
  tangent to the cell surface. Hence, as the Hamiltonian is
  continuous within the cell (including the cell surface), the tangential component of its gradient (the normal
  component of the characteristic velocity) is also continuous on the cell surface. However, this also requires that the tangent vector $\vec{\tau}$ is continuous across the boundary, which means the Poisson tensor itself must be continuous across cell boundaries.
\end{proof}
\begin{remark}
When the conditions of \cref{lem:norm-alpha} are met so that the phase-space characteristic velocities are indeed continuous across cell interfaces, we can pull the characteristic velocity out of the numerical flux function, since we will have $\dot{Z}_h^{\alpha\,-} = \dot{Z}_h^{\alpha\,+}$ in each dimension $\alpha$. Thus the numerical flux becomes $\hat{\vec{F}}=\dot{\vec{Z}}_h \widehat{ \jac f_h}$.
\end{remark}
 
\subsection{Discrete energy conservation} \label{sec:dg-hamil-energy}
We will now show that the scheme given by \cref{eq:dis-weak-form} conserves the total energy of the system exactly in the continuous-time limit. The energy is given by
\begin{equation}
    \mathcal{E}_h = \int_\Omega H_h \mathcal{J}f_h \dx{\vec{Z}} - \mathcal{L}_f = \sum_i \int_{\mathcal{K}_i} H_h \mathcal{J}f_h \dx{\vec{Z}} - \mathcal{L}_f,
\end{equation}
where $\mathcal{L}_f$ is a possible field term. This quantity evolves as
\begin{equation}
    \pderiv{\mathcal{E}_h}{t} = \sum_i \int_{\mathcal{K}_i} \left(H_h \pderiv{(\mathcal{J}f_h)}{t} +\mathcal{J} f_h \pderiv{H_h}{t} \right) \dx{\vec{Z}} - \pderiv{\mathcal{L}_f}{t}. \label{DGenergyevolve}
\end{equation}
To show that the first term vanishes, we can take $\psi=H_h$ in \cref{eq:dis-weak-form} since $\psi\in \mathcal{V}_h^p$ and $H_h \in \fillover{\mathcal{V}}_h^p \subset \mathcal{V}_h^p$. This gives
\begin{align}
    \int_{\mathcal{K}_i}  H_h \pderiv{(\mathcal{J}f_h)}{t}\dx{\vec{Z}}& = \int_{\mathcal{K}_i} \mathcal{J}f_h \dot{\vec{Z}}_h\cdot\pderiv{H_h}{\vec{Z}}\, \dx{\vec{Z}}  - \oint_{\partial \mathcal{K}_i}H_h^- \widehat{\jac f_h} \dot{\vec{Z}}_h\cdot\dx{\vec{s}}.
\end{align}
Now note that the volume term vanishes exactly because $\dot{\vec{Z}_h}\cdot\partial H_h/\partial\vec{Z} = \{H_h, H_h\} = 0$; physically, this is because the (discrete) flow $\dot{\vec{Z}}_h$ is along contours of constant (discrete) energy in phase space. Summing over all cells, the surface term also vanishes; the requirement that the Hamiltonian is continuous across cell boundaries, along with Lemma 1, means that the integrand only differs by a sign across cell boundaries, resulting in exact cancellations between the surface contributions from either side of each cell face. Thus we have
\begin{equation}
    \sum_i  \int_{\mathcal{K}_i}  H_h \pderiv{(\mathcal{J}f_h )}{t}\dx{\vec{Z}} = 0.
\end{equation}
This gives energy conservation in systems in which the Hamiltonian is time-independent. In systems where the Hamiltonian is an evolving quantity, we must use the field equation governing the Hamiltonian to show that the remaining terms in \cref{DGenergyevolve} cancel. We will see an example of this in the next section, when we apply our energy-conserving DG scheme to the electromagnetic gyrokinetic system.

\subsection{Example: the 2D incompressible Euler system} \label{sec:incomp-euler}
A well-known example of a Hamiltonian system is the two-dimensional incompressible Euler equations, expressed in the vorticity stream-function formulation \citep{christiansen1973,olver1982}. In this formulation, the (incompressible) fluid flow $\vec{u}=\nabla\times(\vec{e}_z \phi)$ is expressed in terms of the stream function $\phi$, and the vorticity is $\varpi = \vec{e}_z\cdot\nabla\times\vec{u}$, with $\vec{e}_z$ the direction perpendicular to the plane of motion. The evolution of the vorticity is then given by the Liouville equation, 
\begin{equation}
    \pderiv{\varpi}{t} + \vec{u}\cdot\nabla \varpi = 0,
\end{equation}
or equivalently in terms of the canonical Poisson bracket $\{f,g\} = \partial_x f \partial_y g - \partial_y f \partial_x g$ as
\begin{equation}
    \pderiv{\varpi}{t} + \{\varpi,\phi\} = 0.  
\end{equation}
Comparing this equation to \cref{eq:liouville} above, we see that in the 2D incompressible Euler system the ``phase space'' is composed of the configuration space dimensions $(x,y)$, the vorticity $\varpi$ is the ``distribution function'', and the stream function $\phi$ plays the role of the Hamiltonian. The stream function is determined by the Poisson equation,
\begin{equation}
    -\nabla_\perp^2\phi = \varpi, \label{euler-poisson}
\end{equation}
with $\nabla_\perp = \vec{e}_x \partial_x + \vec{e}_y \partial_y $.

From \cref{hamil-energy}, the conserved energy in this system is
\begin{equation}
    \mathcal{E} = \int \phi\varpi \,\dx{\vec{Z}} - \mathcal{L}_f = \int \phi \varpi\,\dx{\vec{Z}} - \int \frac{1}{2}|\nabla_\perp \phi|^2\dx{\vec{Z}}= \int \frac{1}{2}|\nabla_\perp \phi|^2 \,\dx{\vec{Z}},
\end{equation}
where the field energy is
\begin{equation}
    \mathcal{L}_f = \int \frac{1}{2}|\nabla_\perp \phi|^2\dx{\vec{Z}}.
\end{equation}
To prove that energy is indeed conserved, we can compute $\pderivInline{\mathcal{E}}{t}$ by first taking
\begin{equation}
    \int \phi \pderiv{\varpi}{t}\dx{\vec{Z}} = -\int \phi\, \vec{u}\cdot \nabla\varpi\,\dx{\vec{Z}} = \int \varpi\, \vec{u}\cdot \nabla\phi\,\dx{\vec{Z}} = 0,
\end{equation}
where $\vec{u}\cdot\nabla \phi = \{\phi,\phi\} = 0$, and we have neglected boundary terms after integrating by parts. When the Hamiltonian $\phi$ is time-dependent, we also have
\begin{equation}
    \int \varpi \pderiv{\phi}{t}\dx{\vec{Z}} = -\int \nabla_\perp^2 \phi \pderiv{\phi}{t}\dx{\vec{Z}} = \int \nabla_\perp \phi \cdot \pderiv{}{t}\nabla_\perp \phi \,\dx{\vec{Z}} = \pderiv{}{t}\int \frac{1}{2}|\nabla_\perp \phi|^2\dx{\vec{Z}},
\end{equation}
which exactly cancels the evolution of the field energy term,
\begin{equation}
    \pderiv{\mathcal{L}_f}{t} = \pderiv{}{t}\int \frac{1}{2}|\nabla_\perp\phi|^2\dx{\vec{Z}},
\end{equation}
so that we are left with energy conservation,  $\pderivInline{\mathcal{E}}{t}=0$.

\subsubsection{Discontinuous Galerkin discretization (Liu-Shu scheme)}

In the scheme of \citet{liu2000high}, the discrete energy is conserved exactly by the spatial scheme for 2D incompressible flow if the basis functions for the stream function $\phi_h$ are in the continuous subset of the basis functions for the vorticity $\varpi_h$, irrespective of the numerical fluxes selected for
the vorticity equation. In our notation, this means $\varpi_h \in \mathcal{V}_h^p$ and $\phi_h \in \fillover{\mathcal{V}}^p_h$.  Identifying the vorticity as the ``distribution function'' and the stream function as the Hamiltonian, we can see that this is a special case of the general scheme prescribed in \cref{sec:dg-hamil}.

The DG weak form of the vorticity evolution equation in cell $i$ follows from \cref{eq:dis-weak-form}, and is given by
\begin{equation}
    \int_{\mathcal{K}_i} \psi \pderiv{\varpi_h}{t}\dx{\vec{Z}}- \int_{\mathcal{K}_i} \varpi_h \dot{\vec{Z}}_h\cdot\pderiv{\psi}{\vec{Z}}\, \dx{\vec{Z}}  + \oint_{\partial \mathcal{K}_i}\psi^- \widehat{\varpi_h \dot{\vec{Z}}_h}\cdot\dx{\vec{s}} = 0, \label{euler-dg}
\end{equation}
with $\dot{\vec{Z}}_h=\{\vec{Z},\phi_h\}$. In order to impose the continuity requirement on $\phi_h$, we can use the (continuous) finite-element method (FEM) to solve the Poisson equation. The discrete local weak form of the Poisson equation is obtained by multiplying \cref{euler-poisson} by a test function $\xi \in \fillover{\mathcal{V}}^p_h$ and integrating (by parts) in each cell $\mathcal{K}_i$,
\begin{equation}
    \int_{\mathcal{K}_i} \nabla_\perp \phi_h \cdot \nabla_\perp \xi^{(i)} \dx{\vec{Z}} - \oint_{\partial \mathcal{K}_i} \xi^{(i)} \nabla_\perp\phi_h \cdot \dx{\vec{s}} = \int_{\mathcal{K}_i} \xi^{(i)} \varpi_h \dx{\vec{Z}}, \label{euler-FEMpoisson}
\end{equation}
where  $\xi^{(i)}$ denotes the restriction of $\xi$ to cell $i$. The global weak form is then obtained by summing \cref{euler-FEMpoisson} over all cells, which results in cancellation of the surface terms at cell interfaces and leaves only a global $\partial \mathcal{T}$ boundary term.

To verify that the discrete energy, $\mathcal{E}_h = \int \phi_h \varpi_h \dx{\vec{Z}}- \int \frac{1}{2}|\nabla_\perp \phi_h|^2 \dx{\vec{Z}}$, is conserved by this discretization scheme, we can first take $\psi = \phi_h$ in \cref{euler-dg} (since $\psi \in \mathcal{V}_h^p$ and $\phi_h \in \fillover{\mathcal{V}}_h^p \subset \mathcal{V}_h^p$) and sum over all cells. This gives
\begin{equation}
    \sum_i \int_{\mathcal{K}_i} \phi_h \pderiv{\varpi_h}{t}\dx{\vec{Z}}=  \sum_i\int_{\mathcal{K}_i} \varpi_h \dot{\vec{Z}}_h\cdot\pderiv{\phi_h}{\vec{Z}}\, \dx{\vec{Z}}  - \sum_i\oint_{\partial \mathcal{K}_i}\phi_h^- \widehat{\varpi_h \dot{\vec{Z}}_h}\cdot\dx{\vec{s}} = 0,
\end{equation}
where as in \cref{sec:dg-hamil-energy}, the volume term vanishes exactly because  $\dot{\vec{Z}_h}\cdot\partial \phi_h/\partial\vec{Z} = \{\phi_h, \phi_h\} = 0$, and the surface terms cancel at cell boundaries because the integrand only differs by a sign on either side due to continuity of $\phi_h$. Thus the evolution of the Hamiltonian part of the discrete energy, $\mathcal{E}_{H\,h} = \int \phi_h \varpi_h \dx{\vec{Z}}$, reduces to
\begin{equation}
    \pderiv{\mathcal{E}_{H\,h}}{t} = \pderiv{}{t}\int \phi_h \varpi_h \dx{\vec{Z}} = \sum_i \int_{\mathcal{K}_i} \varpi_h \pderiv{\phi_h}{t} \dx{\vec{Z}}.
\end{equation}

This remaining term is canceled by the evolution of the field energy term, $\mathcal{L}_{f\,h} = \int \frac{1}{2}|\nabla_\perp\phi_h|^2 \dx{\vec{Z}}$, which is given by
\begin{equation}
    \pderiv{\mathcal{L}_{f\,h}}{t} = \sum_i \int_{\mathcal{K}_i} \nabla_\perp \phi_h \cdot \nabla_\perp \pderiv{\phi_h}{t} \dx{\vec{Z}} = \sum_i\int_{\mathcal{K}_i}\varpi_h  \pderiv{\phi_h}{t}  \dx{\vec{Z}},
\end{equation}
where the second equality is obtained by 
taking $\xi^{(i)} = \partial \phi_h/\partial t$ in \cref{euler-FEMpoisson} and summing over cells. Thus, together we have
\begin{equation}
    \pderiv{\mathcal{E}_h}{t} = \pderiv{\mathcal{E}_{H\,h}}{t} - \pderiv{\mathcal{L}_{f\,h}}{t} = 0,
\end{equation}
and so energy is indeed conserved by the Liu-Shu scheme. Note that this property is independent of the choice of numerical fluxes in the vorticity equation.

\section{Applying the scheme to electromagnetic gyrokinetics}\label{sec:gkscheme}
For the electromagnetic gyrokinetic system, we again start by decomposing the global 5D phase-space domain $\Omega$ into a {structured} phase-space mesh $\mathcal{T}$ with cells $\mathcal{K}_i \in \mathcal{T},\ i=1,...,N$. We then introduce a piecewise-polynomial approximation space for the Jacobian-weighted distribution function $\mathcal{J} f(\vec{R},v_\parallel,\mu)$,
\begin{equation}
    \mathcal{V}_h^p = \{\psi : \psi|_{\mathcal{K}_i}\in \vec{P}^p, \forall \mathcal{K}_i\in \mathcal{T}\},
\end{equation}
where $\vec{P}^p$ is some space of polynomials with maximum degree $p$ (by some measure). 
That is, $\psi(\vec{Z})$ are polynomial functions of $\vec{Z}$ in each cell, and $\vec{P}^p$ is the space of the linear combination of some set of multi-variate polynomials.
In the remainder of this work, we choose $\vec{P}^p$ to be an orthonormalized serendipity polynomial element space \citep{arnold2011serendipity}.
The serendipity basis set
has the advantage of using fewer basis functions while giving the same formal
convergence order (although it is less accurate) as the Lagrange tensor
basis, although note that for $p=1$ the serendipity basis is equivalent to the Lagrange tensor basis. 

We can then obtain the discrete weak form of the gyrokinetic equation by multiplying \cref{emgk} by any test function $\psi\in\mathcal{V}^p_h$
and integrating (by parts) in each cell, giving
\begin{align}
    \int_{\mathcal{K}_i}&\psi \pderiv{(\mathcal{J}f_h)}{t}\dx{^3\vec{R}}\,\dx{^3\vec{v}}  
    \notag \\ &\quad 
    - \int_{\mathcal{K}_i} \mathcal{J}f_h \dot{\vec{R}}_h\cdot\nabla \psi \,\dx{^3\vec{R}}\,\dx{^3\vec{v}} 
    - \int_{\mathcal{K}_i} \mathcal{J}f_h \left(\dot{v}^H_{\parallel h}-\frac{q}{m}\pderiv{A_{\parallel h}}{t}\right) \pderiv{\psi}{v_\parallel}\dx{^3\vec{R}}\,\dx{^3\vec{v}}
    \notag \\ &\quad
    + \oint_{\partial \mathcal{K}_i}\psi^- \widehat{\mathcal{J}f_h}\dot{\vec{R}}_h\cdot \dx{\vec{s}}_R\, \dx{^3\vec{v}}
    + \oint_{\partial \mathcal{K}_i}  \psi^- \widehat{\mathcal{J}f_h} \left(\dot{v}^H_{\parallel h}-\frac{q}{m}\pderiv{A_{\parallel h}}{t}\right)\dx{^3\vec{R}}\,\dx{s}_v 
    \notag \\ &\quad
    = \int_{\mathcal{K}_i} \psi\left(\mathcal{J}C[f_h] + \mathcal{J}S_h\right)\dx{^3\vec{R}}\,\dx{^3\vec{v}}. \label{DGgk}
\end{align}
The discrete phase-space characteristics are defined via the discrete version of the gyrokinetic Poisson bracket, \cref{gkpb}, as
\begin{gather}
    \dot{\vec{R}}_h = \{\vec{R},H_h\}_h = \frac{\vec{B}^*_h}{m B_{\parallel h}^*}\pderiv{H_h}{v_\parallel} + \frac{\uv{b}}{q B_{\parallel h}^*}\times \nabla H_h, \label{eomRh}\\
    \dot{v}_{\parallel h} = \dot{v}^H_{\parallel h} -\frac{q}{m}\pderiv{A_{\parallel h}}{t} = \{v_\parallel,H_h\}_h-\frac{q}{m}\pderiv{A_{\parallel h}}{t} = -\frac{\vec{B}^*_h}{m B_{\parallel h}^*}{\cdot}\nabla H_h-\frac{q}{m}\pderiv{A_{\parallel h}}{t}, \label{eomVh}
\end{gather}
with $\vec{B}^*_h = \vec{B}_{0 h} + (mv_\parallel/q)\nabla\times \uv{b} +\nabla \times(A_{\parallel h}\uv{b})$, and $B^*_{\parallel h} = \uv{b}\cdot\vec{B}^*_h\approx B_{0 h}$. 
Consistent with the energy-conserving DG algorithm formulated in \cref{sec:dgenergy}, we will require the discrete Hamiltonian $H_h$ to be continuous across cell interfaces. We do this by introducing a subset of $\mathcal{V}^p_h$ where the piecewise polynomials are continuous across cell interfaces, denoted by $\fillover{\mathcal{V}}^p_h$, and requiring  $H_h\in\fillover{\mathcal{V}}^p_h$. 
From Lemma 1, this ensures that the discrete phase-space characteristics, $\dot{\vec{R}}_h=\{\vec{R},H_h\}_h$ and $\dot{v}^H_{\parallel h}-({q_s}/{m_s})\partial{A_{\parallel h}}/\partial{t} = \{v_\parallel,H_h\}_h-({q_s}/{m_s})\partial{A_{\parallel h}}/\partial{t}$, are also continuous across cell interfaces in the direction of flow.\footnote{In a general non-orthogonal field-aligned geometry this is not necessarily true. This is because $\vec{B}^*_h\vec{\cdot}\nabla z$ contains $A_{\parallel _h}$, which can be discontinuous in the $z$ direction, making the Poisson tensor itself discontinuous in this direction. This makes the characteristic speed $\dot{\vec{R}}_h\vec{\cdot}\nabla z$ discontinuous across $z$ cell interfaces. We will discuss this issue in \cref{ch:geometry}.}

Solving \cref{DGgk} for all test functions $\psi\in\mathcal{V}_h^p$  in all cells $\mathcal{K}_i\in\mathcal{T}$ yields the discretized Jacobian-weighted distribution function $\mathcal{J}f_h\in\mathcal{V}^p_h$. In the surface terms, $\dx{\vec{s}_R}$ is the differential element on a configuration-space surface (pointing outward normal to the surface), and $\dx{s_v}=2\pi\, \dx{\mu}\, \vec{n}\vec{\cdot} (\partial{\vec{Z}}/\partial{v_\parallel})$ is the differential element on a $v_\parallel$ surface.  We choose to use standard upwind fluxes in our scheme, which depend on the local value of the phase-space characteristic flow normal to the surface evaluated at each Gaussian quadrature point on the surface. Given the phase-space flow $\dot{\vec{Z}}_h$, the upwind flux can be expressed as
\begin{equation}
    \widehat{f_h} = \frac{1}{2}\left(f_h^++f_h^-\right)-\frac{1}{2}\text{sgn}\left(\vec{n}{\cdot}\dot{\vec{Z}}_h\right)\left(f_h^+-f_h^-\right),
\end{equation}
where $\vec{n}=\dx{\vec{s}}/|\dx{\vec{s}}|$ is the unit normal pointing out of the $\partial \mathcal{K}_i$ surface.

We must also discretize the field equations. We introduce the \textit{restriction} of the phase-space mesh to configuration space, $\mathcal{T}^R$, and we  denote the configuration-space cells by $\mathcal{K}_i^R\in\mathcal{T}^R$ for $i=1,...,N_R$, where $N_R$ is the number of configuration-space cells. We also restrict $\mathcal{V}_h^p$ to configuration space as
\begin{equation}
    \mathcal{X}_h^p = \mathcal{V}_h^p \setminus \mathcal{T}^R.
\end{equation}
Further, we introduce the subset of polynomials that are piecewise continuous across configuration-space cell interfaces $\fillover{\mathcal{X}}_h^p \subset \mathcal{X}_h^p$, along with an additional subset $\dashover{\mathcal{X}}_h^p\subset \mathcal{X}_h^p$ where continuity is required in the directions perpendicular to the magnetic field, but not in the direction parallel to the field. Assuming a field-aligned coordinate system \citep[\emph{e.g.}][]{beer1995field}, we will take the perpendicular directions to be $x$ and $y$, and the parallel direction to be $z$.  

Since we require $H_h$ to be continuous across all cell interfaces, this means that we require $\Phi_h$ to be continuous, \textit{i.e.} $\Phi_h\in\fillover{\mathcal{X}}_h^p$. Thus to solve the GK Poisson equation, \cref{poisson1}, we use the (continuous) finite-element method (FEM). While one could ensure $\Phi_h$ is continuous in all directions by using a three-dimensional FEM solve, we instead use a two-dimensional FEM solve in the $x$ and $y$ directions, followed by a one-dimensional smoothing operation in the $z$ direction. That is, we first solve for $\dashover{\Phi}_h\in \dashover{\mathcal{X}}^p_h$ using a two-dimensional FEM solve, and then we use a smoothing/projection operation to ensure continuity in the $z$ direction. We will denote this operation as $\Phi_h = \mathcal{P}_z[\dashover{\Phi}_h]$ and define it below. We can make this splitting because $\nabla_\perp$ only produces coupling in the $x$ and $y$ (perpendicular) directions. 

For the two-dimensional solve, we solve for $\dashover{\Phi}_h\in \dashover{\mathcal{X}}^p_h$ by multiplying \cref{poisson1} by a test function $\xi \in \dashover{\mathcal{X}}^p_h$ and integrating (by parts) in each configuration-space cell $\mathcal{K}^R_i$ to obtain the discrete \emph{local} weak form
\begin{equation}
\int_{\mathcal{K}^R_i} \epsilon_{\perp h} \nabla_\perp \dashover{\Phi}\mskip0.01\thinmuskip_h \vec{\cdot} \nabla_\perp \xi^{(i)}\,\dx{^3\vec{R}} - \oint_{\partial \mathcal{K}^R_i}\xi^{(i)}\ \epsilon_{\perp h} \nabla_\perp\dashover{\Phi}\mskip0.01\thinmuskip_h\cdot \dx{\vec{s}_R} = \int_{\mathcal{K}^R_i} \xi^{(i)}\ \mathcal{P}^*_z[\sigma_{g\,h}]\,\dx{^3\vec{R}}, \label{FEMpoisson}
\end{equation}
where $\xi^{(i)}$ denotes the restriction of $\xi$ to cell $i$, $\epsilon_{\perp h} = \sum_s m_s n_{0s}/B_{0h}^2$, and
\begin{equation}
    \sigma_{g\,h}=\sum_s q_s \int_{\mathcal{T}^v}  \mathcal{J}f_{s\,h}\,\dx{^3\vec{v}},
\end{equation}
with $\mathcal{T}^v$ the restriction of $\mathcal{T}$ to velocity space.
The global weak form is then obtained by summing \cref{FEMpoisson} over cells in $x$ and $y$ (but not in $z$), which results in cancellation of the surface terms at cell interfaces and leaves only a global $\partial \mathcal{T}^R$ boundary term. Note that in order to maintain energetic consistency (as we will see below), the introduction of $\mathcal{P}_z$ necessitates the modification of the right-hand side of \cref{FEMpoisson} with $\mathcal{P}^*_z$, the adjoint of $\mathcal{P}_z$, defined as
\begin{equation}
\int_{\mathcal{T}^R} f \mathcal{P}_z[g]\,\dx{^3\vec{R}} = \int_{\mathcal{T}^R} \mathcal{P}^*_z[f] g\,\dx{^3\vec{R}}.
\end{equation}

For the smoothing operation $\Phi_h=\mathcal{P}_z[\dashover{\Phi}_h]$, we use a one-dimensional FEM solve in the $z$ direction. This can be written as the solution $\Phi_h$ of the global (in $z$) weak equality
\begin{equation}
    \int_{\mathcal{T}^z_j}d\vec{R}\ \chi\ \Phi_h\,\dx{^3\vec{R}} = \int_{\mathcal{T}^z_j}d\vec{R}\ \chi \dashover{\Phi}\mskip0.01\thinmuskip_h \,\dx{^3\vec{R}}, \label{smooth}
\end{equation}
where $\chi\in\widehat{\mathcal{X}}^p_h\subset\mathcal{X}^p_h$, with $\widehat{\mathcal{X}}^p_h$ a subset of the configuration-space basis where continuity is required only in the $z$ direction. Here, $\mathcal{T}^z_j$ denotes a restriction of the domain that is global in $z$ but cell-wise local in $x$ and $y$. 
We remark that using an FEM solve for this operation makes $\mathcal{P}_z$ self-adjoint, so that $\mathcal{P}^*_z=\mathcal{P}_z$. Note, however, that one could instead use a different, local smoothing operation that is not self-adjoint, so we will keep the distinction between $\mathcal{P}_z$ and $\mathcal{P}_z^*$. Also note that $\mathcal{P}_z$ is a projection operator, in that $\mathcal{P}_z[\mathcal{P}_z[\dashover{\Phi}_h]]=\mathcal{P}_z[\dashover{\Phi}_h]$.

The continuous discrete Hamiltonian $H_h\in\fillover{\mathcal{V}}_h^p$ is then given by
\begin{equation}
    H_h = \frac{1}{2}m{v_{\parallel\,h}^2} + \mu B_{0h} + q \mathcal{P}_z[\dashover{\Phi}\mskip0.01\thinmuskip_h],
\end{equation}
where ${v_{\parallel\,h}^2}$ is the projection of $v_\parallel^2$ onto $\fillover{\mathcal{V}}^p_h$. Note that this is only necessary when $v_\parallel^2$ is not in the basis (\emph{i.e.} when $p_v<2$, where $p_v$ is the maximum degree of the $v_\parallel$ monomials in the basis set), resulting in a continuous piecewise-linear approximation to $v_\parallel^2$.

Since $A_{\parallel h}$ does not appear in the Hamiltonian in the symplectic formulation of EMGK, we are free to allow discontinuity in $A_{\parallel h}$. Thus for the parallel Amp\`ere equation we will take $A_{\parallel h}\in\dashover{\mathcal{X}}^p_h$ so that $A_{\parallel h}$ is continuous in $x$ and $y$ but discontinuous in $z$. Multiplying \cref{ampere1} by a test function $\varphi\in\dashover{\mathcal{X}}^p_h$ and integrating, we can obtain the discrete weak form of this equation. The local weak form in cell $i$ is
\begin{equation}
\int_{\mathcal{K}^R_i} \nabla_\perp A_{\parallel h} \vec{\cdot} \nabla_\perp \varphi^{(i)}\,\dx{^3\vec{R}} - \oint_{\partial \mathcal{K}^R_i} \varphi^{(i)} \nabla_\perp A_{\parallel h}\cdot\dx{\vec{s}_R}  = \mu_0 \int_{\mathcal{K}^R_i} \varphi^{(i)}\ J_{\parallel h}\,\dx{^3\vec{R}}, \label{FEMampere-local}
\end{equation}
where again the surface terms will cancel on summing over cells except at the global $\partial \mathcal{T}^R$ boundary, and
\begin{equation}
J_{\parallel h} = \sum_s\frac{q_s}{m_s} \int_{\mathcal{T}^v} \pderiv{H_{s\,h}}{v_\parallel} \mathcal{J}f_{s\,h}\,\dx{^3\vec{v}}. \label{Jparh}
\end{equation}
Here, note that we have replaced the $v_\parallel$ in the $J_\parallel$ definition from \cref{ampere1} with $(1/m)\partial H_h/\partial v_\parallel$; this will be required for energy conservation in the $p_v=1$ case, since $\partial H_h/\partial v_\parallel\neq m v_\parallel$ when $v_\parallel^2$ is not in the basis. Instead, for $p_v=1$, $\partial H_h/\partial v_\parallel=m\bar{v}_\parallel$, the piecewise-constant projection of $mv_\parallel$. Looking back at the variational derivation of Amp\`ere's law in \cref{ampderiv}, we see that indeed using $(1/m)\partial H_h/\partial v_\parallel$ is energetically consistent. 
As before, we solve \cref{FEMampere-local} using a two-dimensional FEM solve in the $x$ and $y$ directions. Note, however, that we do not require the smoothing operation in $z$ here because $A_{\parallel h}$ is allowed to be discontinuous in the $z$ direction.

The discrete weak form of Ohm's law, \cref{ohmstar}, can be obtained by taking the time derivative of the discrete Amp\`ere's law, \cref{FEMampere-local}.  The details of the required manipulations are left to Appendix \ref{app:Ohm}. In the end, the distinction between $p_v=1$ and $p_v>1$ in the definition of $J_{\parallel h}$ leads to two different cases: in the $p_v=1$ case surface terms from the gyrokinetic update appear in the integrals, while volume terms vanish because $\pderivInline{\bar{v}_\parallel}{v_\parallel}=0$; in the $p_v>1$ case we have the opposite, with surface terms cancelling exactly at cell interfaces and volume terms remaining. The local weak form becomes
\begin{align}
    &\int_{\mathcal{K}^R_i} \nabla_\perp \pderiv{A_{\parallel h}}{t} \vec{\cdot} \nabla_\perp \varphi^{(i)}\,\dx{^3\vec{R}} 
    - \oint_{\partial \mathcal{K}^R_i} \varphi^{(i)}\nabla_\perp \pderiv{A_{\parallel h}}{t} \cdot \dx{\vec{s}_R} 
    \notag \\ &\quad
    - \int_{\mathcal{K}_i^R} \varphi^{(i)}\pderiv{A_{\parallel h}}{t} \left[\sum_{s,j} \frac{\mu_0 q_s^2}{m_s}\oint_{\partial\mathcal{K}^v_j}  \bar{v}_\parallel^- \widehat{\mathcal{J} f_{s\,h}}\,\dx{s_v}\right]\dx{^3\vec{R}} 
    \notag\\ &
    = \mu_0\sum_s q_s \int_{\mathcal{K}^R_i} \varphi^{(i)} \Bigg[\int_{\mathcal{T}^v} \bar{v}_\parallel \pderiv{(\mathcal{J}f_{s\,h})}{t}^\star \dx{^3\vec{v}}-\sum_j\oint_{\partial \mathcal{K}_j^v} \bar{v}_\parallel^-{\dot{v}^H_{\parallel h}} \widehat{\mathcal{J}f_{s\,h}}\,\dx{s_v} \Bigg]\dx{^3\vec{R}}, \quad\ (p_v=1) \label{Ohmp1} \\
   &\int_{\mathcal{K}^R_i}\nabla_\perp \pderiv{A_{\parallel h}}{t} \vec{\cdot} \nabla_\perp \varphi^{(i)}\,\dx{^3\vec{R}} - \oint_{\partial \mathcal{K}^R_i} \varphi^{(i)}\nabla_\perp \pderiv{A_{\parallel h}}{t} \cdot \dx{\vec{s}_R} 
   \notag \\ &\quad
   + \int_{\mathcal{K}_i^R} \varphi^{(i)}\pderiv{A_{\parallel h}}{t}\left[ \sum_s \frac{\mu_0 q_s^2}{m_s}\!\int_{\mathcal{T}^v}  \mathcal{J} f_{s\,h}\,\dx{^3\vec{v}}\right]\dx{^3\vec{R}} \notag \\
    &=\mu_0\sum_s q_s \!\!\int_{\mathcal{K}^R_i}  \varphi^{(i)}\left[\int_{\mathcal{T}^v} v_\parallel \pderiv{(\mathcal{J} f_{s\,h})}{t}^\star\dx{^3\vec{v}}\right]\dx{^3\vec{R}}, \qquad\qquad (p_v>1) \label{Ohmp2}
\end{align}
where $\pderivInline{A_{\parallel h}}{t}\in \dashover{\mathcal{X}}^p_h$, and
\begin{align}
    &\int_{\mathcal{K}_i}\psi \pderiv{(\mathcal{J}f_h)}{t}^\star\dx{^3\vec{R}}\,\dx{^3\vec{v}} = 
    \int_{\mathcal{K}_i} \mathcal{J}f_h \dot{\vec{R}}_h\cdot\nabla \psi \,\dx{^3\vec{R}}\,\dx{^3\vec{v}} 
    + \int_{\mathcal{K}_i} \mathcal{J}f_h \dot{v}^H_{\parallel h} \pderiv{\psi}{v_\parallel}\dx{^3\vec{R}}\,\dx{^3\vec{v}}
    \notag \\ &\quad 
    - \oint_{\partial \mathcal{K}_i}\psi^- \widehat{\mathcal{J}f_h}\dot{\vec{R}}_h\cdot \dx{\vec{s}}_R\, \dx{^3\vec{v}}
    + \int_{\mathcal{K}_i} \psi\left(\mathcal{J}C[f_h] + \mathcal{J}S_h\right)\dx{^3\vec{R}}\,\dx{^3\vec{v}}
 \label{partialGK}
\end{align}
so that the gyrokinetic equation can be written as
\begin{align}
    &\int_{\mathcal{K}_i} \psi \pderiv{(\mathcal{J}f_h)}{t}\dx{^3\vec{R}}\,\dx{^3\vec{v}} = 
    \int_{\mathcal{K}_i}\psi \pderiv{(\mathcal{J}f_h)}{t}^\star\dx{^3\vec{R}}\,\dx{^3\vec{v}} 
    \notag\\ &\quad
    - \oint_{\partial \mathcal{K}_i}  \psi^- \widehat{\mathcal{J}f_h} \left(\dot{v}^H_{\parallel h}-\frac{q}{m}\pderiv{A_{\parallel h}}{t}\right)\dx{^3\vec{R}}\,\dx{s}_v 
     - \int_{\mathcal{K}_i} \mathcal{J}f_h \frac{q}{m}\pderiv{A_{\parallel h}}{t} \pderiv{\psi}{v_\parallel}\dx{^3\vec{R}}\,\dx{^3\vec{v}}. \label{DGstar}
\end{align}
Note that some special attention is required to ensure that upwinding of the numerical fluxes is handled consistently in Eqs. (\ref{Ohmp1}) and (\ref{DGstar}) in the $p_v=1$ case. The upwind flow for the $v_\parallel$ surface terms is $\dot{v}^H_{\parallel h}-({q}/{m})\partial{A_{\parallel h}}/\partial{t}$; this is somewhat problematic because we cannot readily solve for $\pderivInline{A_{\parallel h}}{t}$ from \cref{Ohmp1} without first knowing {the upwind direction, which depends on $\pderivInline{A_{\parallel h}}{t}$}. Thus for $p_v=1$ only, we use an approximate $\widetilde{\pderivInline{A_{\parallel h}}{t}}$, calculated using \cref{Ohmp2} (which contains no surface term contributions), to compute the upwind direction for the $v_\parallel$ surface terms in Eqs. (\ref{Ohmp1}) and (\ref{DGstar}). One could extend this algorithm by iterating with a new estimate of the upwind direction based on the previous estimate of $\partial A_{\parallel\, h}/\partial t$, but we leave that for future work. The present algorithm seems to work well for the cases tested so far, {and we expect that $\widetilde{\pderivInline{A_{\parallel h}}{t}}$ results in the correct upwind direction most of the time}.

In our modal DG scheme, integrals in the above weak forms are computed analytically using a quadrature-free scheme that results in exact integrations (of the discrete integrands). This means there are no aliasing errors, and that integration by parts operations that led to these integrals are treated exactly, for the specified discrete representation of $f_h$ and other factors in the integrand.  This is important for ensuring the conservation properties of the scheme, since the conservation laws in the EMGK system are indirect, involving integrals of the gyrokinetic equation \citep{hakim2019}. The fact that integrations are exact also has important implications for the cancellation problem. Since integrals in the discrete Ohm's law are computed exactly, the discretization errors (which are solely embedded in the discrete integrands) cancel exactly, avoiding the cancellation problem. {For more details about the modal scheme, the analytical integrations and the avoidance of the cancellation problem, we have included in \cref{app:disp} a derivation of a semi-discrete Alfv\'en wave dispersion relation that results from our scheme.}

\subsection{Discrete conservation properties} \label{discons}

Now we would like to show that the discrete system (in the continuous-time limit) preserves various conservation laws of the continuous system. As with the continuous system, we will consider the conservation properties in the absence of collisions, sources and sinks, and we will assume that the boundary conditions are either periodic or that the distribution function vanishes at the boundary.

\begin{proposition}
The discrete system conserves total number of particles (the $L_1$ norm).
\end{proposition}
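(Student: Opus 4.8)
The plan is to exploit the fact that the constant function lies in the approximation space, so that we may take the test function $\psi=1$ in the discrete weak form of the gyrokinetic equation, \cref{DGgk}. This is the discrete analogue of integrating the Vlasov equation over all of phase space. The quantity to be controlled is $N=\sum_s\int_\Omega \mathcal{J}f_{s\,h}\,\dx{^3\vec{R}}\,\dx{^3\vec{v}}$, and the goal is to show $\dx{N}/\dx{t}=0$ in the continuous-time limit, with collisions and sources switched off and with periodic or vanishing boundary conditions.

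First I would set $\psi=1\in\mathcal{V}_h^p$ in \cref{DGgk} for each species. Since $\nabla\psi=0$ and $\partial\psi/\partial v_\parallel=0$, both volume flux terms (the second and third terms on the left-hand side) vanish identically, and dropping the collision and source terms leaves, in each cell $\mathcal{K}_i$,
\begin{equation*}
\int_{\mathcal{K}_i} \pderiv{(\mathcal{J}f_h)}{t}\,\dx{^3\vec{R}}\,\dx{^3\vec{v}} = -\oint_{\partial \mathcal{K}_i} \widehat{\mathcal{J}f_h}\,\dot{\vec{R}}_h\cdot \dx{\vec{s}}_R\, \dx{^3\vec{v}} - \oint_{\partial \mathcal{K}_i} \widehat{\mathcal{J}f_h}\left(\dot{v}^H_{\parallel h}-\frac{q}{m}\pderiv{A_{\parallel h}}{t}\right)\dx{^3\vec{R}}\,\dx{s}_v.
\end{equation*}
The left side is the rate of change of the particle content of the cell, and the right side is the net numerical flux through its faces.

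The crux is then to sum over all cells $\mathcal{K}_i\in\mathcal{T}$ and argue that the surface contributions cancel pairwise at every interior face. The key property is that the numerical flux value $\widehat{\mathcal{J}f_h}$ used on a shared face is single-valued: by \cref{lem:norm-alpha} the normal component of the phase-space flow is continuous across cell interfaces (the Hamiltonian is continuous, and since $A_{\parallel h}$ is a configuration-space quantity, $\partial A_{\parallel h}/\partial t$ does not depend on $v_\parallel$ and so is continuous across the $v_\parallel$ faces bounding the velocity-space surface term), so the upwind flux assigns the same value of $\widehat{\mathcal{J}f_h}$ to the two cells adjacent to the face. Because the outward normals $\dx{\vec{s}}$ of neighboring cells point in opposite directions, the two surface integrals differ only by a sign and cancel exactly. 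After this cancellation only the contributions on the global boundary $\partial\Omega$ survive, and these vanish under the assumed periodic or zero boundary conditions, giving $\dx{N}/\dx{t}=0$ for each species and hence for the total.

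I expect the main obstacle to be the bookkeeping for the $v_\parallel$ surface term in the $p_v=1$ case, where the upwind direction is selected using the approximate $\widetilde{\pderivInline{A_{\parallel h}}{t}}$ rather than the exact inductive term. The argument must make explicit that number conservation does not require the upwind direction to be exactly correct: what matters is only that one common numerical flux value is used on both sides of each interior face. As long as the same $\widetilde{\pderivInline{A_{\parallel h}}{t}}$, and hence the same upwind choice, enters the evaluation of $\widehat{\mathcal{J}f_h}$ from both adjacent cells, the pairwise cancellation is untouched, so the scheme conserves particle number regardless of the approximation made in choosing the upwind flux.
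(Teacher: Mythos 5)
Your proposal is correct and follows essentially the same route as the paper: take $\psi=1$ in \cref{DGgk}, note the volume terms vanish, and sum over cells so that the surface contributions cancel pairwise because both the numerical flux and the normal phase-space characteristics are single-valued at cell interfaces. Your added remark that the cancellation is insensitive to whether the upwind direction in the $p_v=1$ case is chosen with the provisional $\widetilde{\pderivInline{A_{\parallel h}}{t}}$ is a correct and worthwhile clarification, but it does not change the argument.
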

\begin{proof}
Taking $\psi=1$ in the discrete weak form of the gyrokinetic equation, \cref{DGgk}, and summing over all cells, we have
\begin{align}
    \sum_i &\pderiv{}{t}\int_{\mathcal{K}_i}  \mathcal{J}f_h \,\dx{^3\vec{R}}\,\dx{^3\vec{v}} + \sum_i \oint_{\partial \mathcal{K}_i} \widehat{\mathcal{J}f_h}\dot{\vec{R}}_h\cdot\dx{\vec{s}_R}\,\dx{^3\vec{v}}
    \notag \\ &\quad
    +\sum_i \oint_{\partial \mathcal{K}_i}\widehat{\mathcal{J}f_h}\left(\dot{v}^H_{\parallel h}-\frac{q}{m}\pderiv{A_{\parallel h}}{t}\right)\dx{^3\vec{R}}\,\dx{s_v} = 0 \notag \\
    &\qquad\Rightarrow \quad \pderiv{}{t}\int_{\mathcal{T}} \mathcal{J}f_h \,\dx{^3\vec{R}}\,\dx{^3\vec{v}}
    = 0 ,
\end{align}
where the surface terms cancel exactly at cell interfaces because the integrands (both the phase-space characteristics and the numerical fluxes) are continuous across the interfaces. 
\end{proof}
\begin{proposition}
The discrete system conserves a discrete total energy, $\mathcal{E}_h = \mathcal{E}_{H\,h} - \mathcal{E}_{E\,h} + \mathcal{E}_{B\,h}$, where
\begin{gather}
    \mathcal{E}_{H\,h} = \sum_s \int_\mathcal{T}   \mathcal{J}f_{s\,h} H_{s\,h}\,\dx{^3\vec{R}}\,\dx{^3\vec{v}}, \\
    \mathcal{E}_{E\,h} = \int_\mathcal{T}  \frac{\epsilon_{\perp h}}{2}|\nabla_\perp \dashover{\Phi}\mskip0.01\thinmuskip_h|^2\,\dx{^3\vec{R}},
\end{gather}
and
\begin{gather}
    \mathcal{E}_{B\,h} = \int_{\mathcal{T}} \frac{1}{2\mu_0}|\nabla_\perp A_{\parallel h}|^2\,\dx{^3\vec{R}}.
\end{gather}
\end{proposition}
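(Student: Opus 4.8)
The plan is to differentiate the candidate energy $\mathcal{E}_h = \mathcal{E}_{H\,h} - \mathcal{E}_{E\,h} + \mathcal{E}_{B\,h}$ in time and show the three contributions cancel, following the template established for the general Hamiltonian scheme in \cref{sec:dg-hamil-energy} and illustrated for the incompressible Euler system. The only genuinely time-dependent factors in $\mathcal{E}_{H\,h}$ are $\mathcal{J}f_{s\,h}$ and the potential buried in $H_{s\,h}=\tfrac{1}{2}m_s v_{\parallel\,h}^2+\mu B_{0h}+q_s\Phi_h$ (the coefficient $\epsilon_{\perp h}$ is frozen, since it depends only on the background density $n_0$). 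I would therefore split $\pderivInline{\mathcal{E}_{H\,h}}{t}$ into a distribution-function part $\sum_s\int H_{s\,h}\,\pderivInline{(\mathcal{J}f_{s\,h})}{t}$ and a field part $\sum_s\int (\pderivInline{H_{s\,h}}{t})\,\mathcal{J}f_{s\,h}$, aiming to identify the first with $-\pderivInline{\mathcal{E}_{B\,h}}{t}$ and the second with $\pderivInline{\mathcal{E}_{E\,h}}{t}$.

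For the distribution-function part, the key step is that $H_h\in\fillover{\mathcal{V}}_h^p\subset\mathcal{V}_h^p$, so $H_{s\,h}$ is itself an admissible test function. I would set $\psi=H_{s\,h}$ in the discrete weak form \cref{DGgk} and sum over cells. The two volume terms built from $\dot{\vec{R}}_h\cdot\nabla H_{s\,h}$ and $\dot{v}^H_{\parallel h}\,\pderivInline{H_{s\,h}}{v_\parallel}$ combine into $\int \mathcal{J}f_{s\,h}\{H_{s\,h},H_{s\,h}\}_h$, which vanishes pointwise by antisymmetry of the bracket. The surface contributions carrying the Hamiltonian characteristics cancel between adjacent cells: by \cref{lem:norm-alpha} and continuity of $H_h$, the normal characteristic speed, the numerical flux $\widehat{\mathcal{J}f_h}$, and the weight $H_h^-$ are all single-valued on each interior face, so the oppositely-oriented contributions annihilate. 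What survives is exactly the inductive piece $-\int (\pderivInline{A_{\parallel h}}{t})\,J_{\parallel h}\,\dx{^3\vec{R}}$, with $J_{\parallel h}$ as in \cref{Jparh}; crucially, the definition uses $(1/m_s)\,\pderivInline{H_{s\,h}}{v_\parallel}$ rather than $v_\parallel$, which is precisely what makes this term reproduce the current appearing in discrete Amp\`ere's law. Testing \cref{FEMampere-local} with $\varphi=\pderivInline{A_{\parallel h}}{t}$ and summing over cells then identifies this survivor with $-\pderivInline{\mathcal{E}_{B\,h}}{t}$.

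For the field part, I would carry out the velocity integral and use $\pderivInline{H_{s\,h}}{t}=q_s\,\pderivInline{\Phi_h}{t}$ to reduce it to $\int (\pderivInline{\Phi_h}{t})\,\sigma_{g\,h}\,\dx{^3\vec{R}}$, with $\sigma_{g\,h}=\sum_s q_s\int\mathcal{J}f_{s\,h}\,\dx{^3\vec{v}}$. The subtlety is the $z$-smoothing operator: since $\Phi_h=\mathcal{P}_z[\dashover{\Phi}_h]$, I would move $\mathcal{P}_z$ onto the charge density via its adjoint, $\int (\pderivInline{\Phi_h}{t})\sigma_{g\,h}=\int (\pderivInline{\dashover{\Phi}_h}{t})\,\mathcal{P}^*_z[\sigma_{g\,h}]$, which is exactly why $\mathcal{P}^*_z$ was inserted on the right-hand side of \cref{FEMpoisson}. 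Testing \cref{FEMpoisson} with $\xi=\pderivInline{\dashover{\Phi}_h}{t}$ and summing over the perpendicular cells (boundary terms assumed to vanish, consistent with the stated periodic or zero boundary conditions) yields $\int\epsilon_{\perp h}\nabla_\perp\dashover{\Phi}_h\cdot\nabla_\perp\pderivInline{\dashover{\Phi}_h}{t}=\pderivInline{\mathcal{E}_{E\,h}}{t}$, completing the match. Adding the pieces gives $\pderivInline{\mathcal{E}_{H\,h}}{t}=\pderivInline{\mathcal{E}_{E\,h}}{t}-\pderivInline{\mathcal{E}_{B\,h}}{t}$, hence $\pderivInline{\mathcal{E}_h}{t}=0$.

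The main obstacle is the careful bookkeeping of the surface integrals together with the two nonstandard algorithmic ingredients: the $\mathcal{P}_z/\mathcal{P}^*_z$ pair and the redefinition of $J_{\parallel h}$ through $(1/m)\,\pderivInline{H_h}{v_\parallel}$. I would need to verify that the $v_\parallel$-surface terms carrying $\pderivInline{A_{\parallel h}}{t}$ cancel as well; they do, because $A_{\parallel h}$ (and hence its time derivative) is independent of $v_\parallel$ and the Poisson tensor is continuous across $v_\parallel$ faces, so the full normal flow is single-valued there. I would also confirm that the $p_v=1$ versus $p_v>1$ distinction, in which $J_{\parallel h}$ reduces to surface-only or volume-only contributions, remains consistent with the Amp\`ere/Ohm weak forms \cref{Ohmp1,Ohmp2}. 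A caveat worth flagging, noted in the footnote to \cref{sec:gkscheme}, is that cancellation of the configuration-space surface terms relies on $\dot{\vec{R}}_h$ being continuous across cell faces, which can fail in a general non-orthogonal field-aligned geometry where $\vec{B}^*_h\cdot\nabla z$ inherits the $z$-discontinuity of $A_{\parallel h}$; the clean proof therefore implicitly assumes a geometry in which \cref{lem:norm-alpha} applies.
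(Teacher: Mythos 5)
Your proposal is correct and follows essentially the same route as the paper's proof: substituting $\psi=H_{s\,h}$ into the discrete weak form so that the bracket volume terms vanish by antisymmetry and the surface terms cancel by continuity of $H_h$ and the characteristics, identifying the surviving inductive term with $-\pderivInline{\mathcal{E}_{B\,h}}{t}$ via the Amp\`ere weak form tested with $\pderivInline{A_{\parallel h}}{t}$, and matching the $q_s\,\pderivInline{\Phi_h}{t}$ contribution to $\pderivInline{\mathcal{E}_{E\,h}}{t}$ through the $\mathcal{P}_z/\mathcal{P}^*_z$ adjoint pair and the Poisson weak form tested with $\pderivInline{\dashover{\Phi}_h}{t}$. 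Your flagged caveats (the $(1/m)\,\pderivInline{H_h}{v_\parallel}$ definition of $J_{\parallel h}$, the geometry-dependence of characteristic continuity, and independence from the choice of numerical flux) all match the paper's own remarks.
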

\begin{proof}
We start by calculating 
\begin{equation}
    \pderiv{\mathcal{E}_{H\,h}}{t} = \sum_{s,i}\int_{\mathcal{K}_i}\left( H_{s\,h}\pderiv{(\mathcal{J}f_{s\,h})}{t} + \mathcal{J}f_{s\,h}\pderiv{H_{s\,h}}{t}\right)\,\dx{^3\vec{R}}\,\,\dx{^3\vec{v}}.\label{dWHdt1}
\end{equation}
The first term can be calculated by taking $\psi=H_h$ in \cref{DGgk} and summing over cells and species, since $\psi\in\mathcal{V}^p_h$ and $H_h\in\fillover{\mathcal{V}}_h^p\subset \mathcal{V}_h^p$:
\begin{align}
    \sum_{s,i}&\int_{\mathcal{K}_i} H_{s\,h}\pderiv{(\mathcal{J}f_{s\,h})}{t}\,\dx{^3\vec{R}}\,\dx{^3\vec{v}}
     - \sum_{s,i}\int_{\mathcal{K}_i}\mathcal{J}f_{s\,h} \left(\dot{\vec{R}}_h\vec{\cdot} \nabla H_{s\,h}+\dot{v}^H_{\parallel h}\pderiv{H_{s\,h}}{v_\parallel}\right)\dx{^3\vec{R}}\,\dx{^3\vec{v}} \notag\\ 
     &+ \sum_{s,i}\int_{\mathcal{K}_i} \mathcal{J}f_{s\,h} \frac{q_s}{m_s}\pderiv{A_{\parallel h}}{t} \pderiv{H_{s\,h}}{v_\parallel}\,\dx{^3\vec{R}}\,\dx{^3\vec{v}} 
     + \sum_{s,i}\oint_{\partial \mathcal{K}_i}H_{s\,h}^- \widehat{\mathcal{J}f_{s\,h}}\dot{\vec{R}}_h\cdot \dx{\vec{s}}_R\, \dx{^3\vec{v}}
     \notag \\ &\quad 
     +  \sum_{s,i}\oint_{\partial \mathcal{K}_i}  H_{s\,h}^- \widehat{\mathcal{J}f_{s\,h}} \left(\dot{v}^H_{\parallel h}-\frac{q}{m}\pderiv{A_{\parallel h}}{t}\right)\dx{^3\vec{R}}\,\dx{s}_v 
     = 0. 
\end{align}
Here, we see why we must require $H_h$ to be continuous; we want the surface terms to vanish, which means the integrands must be continuous across cell interfaces so that the contributions from either side of the interface cancel exactly when we sum over cells. The numerical flux $\widehat{\mathcal{J}f_h}$ is by definition continuous across the interface, and we have already noted above that the phase-space characteristics $\dot{\vec{R}}_h$ and $\dot{v}^H_{\parallel h}-({q}/{m})\partial{A_{\parallel h}}/\partial{t}$ are also continuous across cell interfaces. This leaves the Hamiltonian, which we require to be continuous so that the surface terms do indeed vanish. Further, the first volume term vanishes exactly because $\dot{\vec{R}}_h\vec{\cdot} \nabla H_{h}+\dot{v}^H_{\parallel h}\pderivInline{H_{h}}{v_\parallel}=\{H_{h},H_{h}\}_h=0$ by definition of the Poisson bracket. However, since the symplectic formulation of EMGK is derived via a time-dependent coordinate transformation (which we did not consider in \cref{sec:hamil}), we still have a leftover term involving $\partial A_{\parallel h} /\partial t$, so that we have
\begin{align}
    \sum_{s,i}\int_{\mathcal{K}_i}  H_{s\,h}\pderiv{(\mathcal{J}f_{s\,h})}{t}\,\dx{^3\vec{R}}\,\dx{^3\vec{v}}
    &= 
    -\sum_{s,i}\int_{\mathcal{K}_i} \mathcal{J}f_{s\,h} \frac{q_s}{m_s}\pderiv{A_{\parallel h}}{t} \pderiv{H_{s\,h}}{v_\parallel} \,\dx{^3\vec{R}}\,\dx{^3\vec{v}} \notag \\
    &= 
    - \int_{\mathcal{T}^R}  \pderiv{A_{\parallel h}}{t} J_{\parallel h}\,\,\dx{^3\vec{R}}. \label{dWHdt2}
\end{align}
Here, we see why we have defined $J_{\parallel h}$ using the derivative of $H_h$ instead of $v_\parallel$, as noted after \cref{Jparh}. 
For the second term in \cref{dWHdt1}, we now take into account time dependence in the Hamiltonian, which gives
\begin{align}
    \sum_{s,i}\int_{\mathcal{K}_i}   \mathcal{J}f_{s\,h}\pderiv{H_{s\,h}}{t}\,\dx{^3\vec{R}}\,\dx{^3\vec{v}} &= \sum_{s,i}\int_{\mathcal{K}_i} \mathcal{J}f_{s\,h} q_s\mathcal{P}_z[\pderiv{\dashover{\Phi}\mskip0.01\thinmuskip_h}{t}]\,\dx{^3\vec{R}}\,\dx{^3\vec{v}} \notag \\
    &= \int_{\mathcal{T}^R} \sigma_{g\,h}\mathcal{P}_z[\pderiv{\dashover{\Phi}\mskip0.01\thinmuskip_h}{t}]\,\dx{^3\vec{R}}. \label{dWHdt2_h}
\end{align}
Thus we have
\begin{equation}
    \pderiv{\mathcal{E}_{H\,h}}{t} =- \int_{\mathcal{T}^R} \pderiv{A_{\parallel h}}{t} J_{\parallel h}\,\dx{^3\vec{R}}+ \int_{\mathcal{T}^R} \sigma_{g\,h}\mathcal{P}_z[\pderiv{\dashover{\Phi}\mskip0.01\thinmuskip_h}{t}]\,\dx{^3\vec{R}}. \label{dWHdt_h}
\end{equation}

Next, we calculate 
\begin{align}
    \pderiv{\mathcal{E}_{E\,h}}{t} &= \sum_{i} \int_{\mathcal{K}^R_i} \epsilon_{\perp h}\nabla_\perp\dashover{\Phi}\mskip0.01\thinmuskip_h \vec{\cdot} \nabla_\perp \pderiv{\dashover{\Phi}\mskip0.01\thinmuskip_h}{t}\,\dx{^3\vec{R}} = \int_{\mathcal{T}^R} \mathcal{P}^*_z[\sigma_{g\,h}]\pderiv{\dashover{\Phi}\mskip0.01\thinmuskip_h}{t}\,\dx{^3\vec{R}} \notag \\
    &=\int_{\mathcal{T}^R} \sigma_{g\,h}\mathcal{P}_z[\pderiv{\dashover{\Phi}\mskip0.01\thinmuskip_h}{t}]\,\dx{^3\vec{R}}, \label{dWEdt_h}
\end{align}
where we have used $\xi^{(i)}=\pderivInline{\dashover{\Phi}\mskip0.01\thinmuskip_h}{t}$ in \cref{FEMpoisson} to make the second equality, noting that the surface term vanishes upon summing over cells because $\dashover{\Phi}_h\in\dashover{\mathcal{X}}^p_h$ is continuous in the perpendicular directions. Here, we see why we modified the right-hand side of \cref{FEMpoisson} with $\mathcal{P}_z^*$, so that the resulting term in \cref{dWEdt_h} matches the one in \cref{dWHdt2_h}.

Finally, we calculate
\begin{align}
    \pderiv{\mathcal{E}_{B\,h}}{t} &= \sum_{i} \int_{\mathcal{K}^R_i} \frac{1}{\mu_0}\nabla_\perp A_{\parallel h} \vec{\cdot} \nabla_\perp \pderiv{A_{\parallel h}}{t} \,\dx{^3\vec{R}} = \int_{\mathcal{T}^R} \pderiv{A_{\parallel h}}{t} J_{\parallel h}\,\dx{^3\vec{R}},
\end{align}
where we have used $\varphi^{(i)}=({1}/{\mu_0})\partial{A_{\parallel h}}/\partial{t}$ in \cref{FEMampere-local} to make the second equality, again noting that the 
surface term vanishes upon summing over cells because $\pderivInline{A_{\parallel h}}{t}\in\dashover{\mathcal{X}}^p_h$ is continuous in the perpendicular directions.

We now have conservation of discrete total energy:
\begin{equation}
     \pderiv{\mathcal{E}_{h}}{t} =  \pderiv{\mathcal{E}_{H\,h}}{t}- \pderiv{\mathcal{E}_{E\,h}}{t} +  \pderiv{\mathcal{E}_{B\,h}}{t} = 0.
\end{equation}
We note that this proof did not rely on the particular choice of numerical flux function.
\end{proof}

\subsection{Time-discretization scheme}\label{sec:timedisc}
So far we have considered only the discretization of the phase space for the system, and we have considered the conservation properties of the scheme in the continuous-time limit. Indeed, in the discrete-time system the conservation properties are no longer exact due to truncation error in the non-reversible time-stepping methods that we consider. However the errors will be \emph{independent} of the phase-space discretization, and errors can be reduced by taking a smaller time step or by using a high-order time-stepping scheme to improve convergence. Following the approach of the Runge-Kutta discontinuous Galerkin method \citep{Cockburn1998,Cockburn2001,shu2009discontinuous}, we have implemented several explicit multi-stage strong stability-preserving Runge-Kutta high-order schemes \citep{gottlieb2001strong,shu2002survey}; most of the results in this thesis use a three-stage, third-order scheme (SSP-RK3), which is sufficiently accurate for our calculations; it is  also unconditionally stable if the CFL condition is satisfied, unlike SSP-RK2. These schemes have the property that a high-order scheme can be composed of several first-order forward-Euler stages. For example, for SSP-RK3, the time advance is given by
\begin{align}
    f^{(1)} &= \mathcal{F}[f^n, t^n] \\
    f^{(2)} &= \frac{3}{4}f^n + \frac{1}{4}\mathcal{F}[f^{(1)}, t^n+\Delta t] \\
    f^{n+1} &= \frac{1}{3}f^n + \frac{2}{3}\mathcal{F}[f^{(2)}, t^n+\Delta t/2],
\end{align}
where
\begin{equation}
    \mathcal{F}[f, t] = f + \Delta t\, \mathbb{L}[f]
\end{equation}
denotes a first-order forward-Euler step,
with $\mathbb{L}[f]$ denoting the right-hand side operator resulting from the DG spatial discretization scheme. 
Thus we will detail our time-stepping scheme for a single forward-Euler stage, which can then be combined into a multi-stage high-order scheme. 

Given $f_h^n=f_h(t=t^n)$ and $A_{\parallel h}^n=A_{\parallel h}(t=t^n)$ at time $t^n$, the steps of the forward-Euler scheme to advance to time $t^{n+1}=t^n+\Delta t$ are as follows:
\begin{enumerate}
    \item Calculate $\dashover{\Phi}_h^n$ using \cref{FEMpoisson}, and then $\Phi_h^n = \mathcal{P}_z[\dashover{\Phi}_h^n]$ using \cref{smooth}.
    \begin{gather}
\int_{\mathcal{K}^R_i} \epsilon_{\perp h} \nabla_\perp \dashover{\Phi}\mskip0.01\thinmuskip_h^n {\cdot} \nabla_\perp \xi^{(j)}\,\dx{^3\vec{R}} - \oint_{\partial \mathcal{K}^R_i} \xi^{(j)} \epsilon_{\perp h}\nabla_\perp\dashover{\Phi}\mskip0.01\thinmuskip_h^n\cdot\dx{\vec{s}}_R = \int_{\mathcal{K}^R_j} \xi^{(j)} \mathcal{P}^*_z[\sigma_{g\,h}^n]\,\dx{^3\vec{R}} \label{FEMpoisson-local-dt} \\
    \int_{\mathcal{T}^z_j}\chi \Phi_h^n\,\dx{^3\vec{R}}  = \int_{\mathcal{T}^z_j} \chi \dashover{\Phi}\mskip0.01\thinmuskip_h^n \,\dx{^3\vec{R}}
\end{gather}
    \item Calculate the partial EMGK update $\left({\partial(\mathcal{J}f_h)}^\star/{\partial t}\right)^n$ using \cref{partialGK}.
\begin{align}
    \int_{\mathcal{K}_i}\psi &\left(\pderiv{(\mathcal{J}f_h)}{t}^\star\right)^n\dx{^3\vec{R}}\,\dx{^3\vec{v}} = 
    \int_{\mathcal{K}_i} \mathcal{J}f_h^n \dot{\vec{R}}_h^n\cdot\nabla \psi \,\dx{^3\vec{R}}\,\dx{^3\vec{v}} 
    + \int_{\mathcal{K}_i} \mathcal{J}f_h^n \dot{v}^{H\,n}_{\parallel h} \pderiv{\psi}{v_\parallel}\dx{^3\vec{R}}\,\dx{^3\vec{v}}
    \notag \\ &\qquad \qquad
    - \oint_{\partial \mathcal{K}_i}\psi^- \widehat{\mathcal{J}f_h}^n\dot{\vec{R}}_h^n\cdot \dx{\vec{s}}_R\, \dx{^3\vec{v}}
    + \int_{\mathcal{K}_i} \psi\left(\mathcal{J}C[f_h^n] + \mathcal{J}S_h^n\right)\dx{^3\vec{R}}\,\dx{^3\vec{v}}  \label{partialGK-dt}
\end{align}
    \item Calculate $\left({\partial A_{\parallel h}}/{\partial t}\right)^n$ from \cref{Ohmp2} [\textit{for $p_v=1$, this is only a provisional value, which we will denote as $(\widetilde{\partial{A_{\parallel h}}/\partial{t}})^n$}].
\begin{align}
   \int_{\mathcal{K}^R_i}\nabla_\perp \left(\pderiv{A_{\parallel h}}{t}\right)^n \vec{\cdot}& \nabla_\perp \varphi^{(i)}\,\dx{^3\vec{R}} - \oint_{\partial \mathcal{K}^R_i} \varphi^{(i)}\nabla_\perp \left(\pderiv{A_{\parallel h}}{t}\right)^n \cdot \dx{\vec{s}_R} 
   \notag \\ &\quad
   + \int_{\mathcal{K}_i^R} \varphi^{(i)}\left(\pderiv{A_{\parallel h}}{t}\right)^n \left[ \sum_s \frac{\mu_0 q_s^2}{m_s}\!\int_{\mathcal{T}^v}  \mathcal{J} f_{s\,h}^n\,\dx{^3\vec{v}}\right]\dx{^3\vec{R}} \notag \\
    &=\mu_0\sum_s q_s \!\!\int_{\mathcal{K}^R_i}  \varphi^{(i)}\left[\int_{\mathcal{T}^v} v_\parallel \left(\pderiv{(\mathcal{J} f_{s\,h})}{t}^\star\right)^n\dx{^3\vec{v}}\right]\dx{^3\vec{R}} 
\end{align}
    \item ($p_v=1\ only$) Use the provisional $(\widetilde{\partial{A_{\parallel h}}/\partial{t}})^n$ from step 3 to calculate the upwinding direction in the surface terms in \cref{Ohmp1}, and then calculate $({\partial{A_{\parallel h}}/\partial{t}})^n$.
\begin{align}
    &\int_{\mathcal{K}^R_i} \nabla_\perp \left(\pderiv{A_{\parallel h}}{t}\right)^n \vec{\cdot} \nabla_\perp \varphi^{(i)}\,\dx{^3\vec{R}} 
    - \oint_{\partial \mathcal{K}^R_i} \varphi^{(i)}\nabla_\perp \left(\pderiv{A_{\parallel h}}{t}\right)^n \cdot \dx{\vec{s}_R} 
    \notag \\ &\quad
    - \int_{\mathcal{K}_i^R} \varphi^{(i)}\left(\pderiv{A_{\parallel h}}{t}\right)^n \left[\sum_{s,j} \frac{\mu_0 q_s^2}{m_s}\oint_{\partial\mathcal{K}^v_j}  \bar{v}_\parallel^- \widehat{\mathcal{J} f_{s\,h}}^n\,\dx{s_v}\right]\dx{^3\vec{R}} 
    \notag\\ &
    = \mu_0\sum_s q_s \int_{\mathcal{K}^R_i} \varphi^{(i)} \Bigg[\int_{\mathcal{T}^v} \bar{v}_\parallel \left(\pderiv{(\mathcal{J}f_{s\,h})}{t}^\star\right)^n \dx{^3\vec{v}}-\sum_j\oint_{\partial \mathcal{K}_j^v} \bar{v}_\parallel^-{\dot{v}^{H\,n}_{\parallel h}} \widehat{\mathcal{J}f_{s\,h}}^n\,\dx{s_v} \Bigg]\dx{^3\vec{R}}
\end{align}
    \item Calculate the full EMGK update, $\left(\pderivInline{(\mathcal{J}f_h)}{t}\right)^{n}$, using \cref{DGstar}. For $p_v=1$, the provisional $\left(\widetilde{\pderivInline{A_{\parallel h}}{t}}\right)^n$ from step 3 should again be used to calculate the upwinding direction in the surface terms for consistency.
\begin{align}
    &\int_{\mathcal{K}_i} \psi \left(\pderiv{(\mathcal{J}f_h)}{t}\right)^n\dx{^3\vec{R}}\,\dx{^3\vec{v}} = 
    \int_{\mathcal{K}_i}\psi \left(\pderiv{(\mathcal{J}f_h)}{t}^\star\right)^n\dx{^3\vec{R}}\,\dx{^3\vec{v}} 
    \notag\\ &\quad
    - \oint_{\partial \mathcal{K}_i}  \psi^- \widehat{\mathcal{J}f_h}^n \left[\dot{v}^{H\,n}_{\parallel h}-\frac{q}{m}\left(\pderiv{A_{\parallel h}}{t}\right)^n\right]\dx{^3\vec{R}}\,\dx{s}_v 
    \notag \\ &\quad
     - \int_{\mathcal{K}_i} \mathcal{J}f_h^n \frac{q}{m}\left(\pderiv{A_{\parallel h}}{t}\right)^n \pderiv{\psi}{v_\parallel}\dx{^3\vec{R}}\,\dx{^3\vec{v}}. \label{DGstar-dt}
\end{align}
    \item Advance $f_h$ and $A_{\parallel h}$ to time $t_{n+1}$.
\begin{gather}
   \mathcal{J} f_h^{n+1} = \mathcal{J} f_h^n + \Delta t \left(\pderiv{(\mathcal{J} f_h)}{t}\right)^n \label{f-dt} \\
   A_{\parallel h}^{n+1} = A_{\parallel h}^n + \Delta t \left(\pderiv{A_{\parallel h}}{t}\right)^n \label{Apar-dt}
\end{gather}
\end{enumerate}
Note that the parallel Amp\`ere equation, \cref{FEMampere-local}, is only used to solve for the initial condition of $A_{\parallel h}(t=0)$. For all other times, \cref{Apar-dt} is used to advance $A_{\parallel h}$. This prevents the system from being over-determined and ensures consistency between $A_{\parallel h}$ and $\pderivInline{A_{\parallel h}}{t}$.

\section{Linear benchmarks}
\label{sec:linear}
The scheme presented above has been implemented into the \gke plasma simulation framework. In this section we present some linear benchmarks that verify the implementation.

\subsection{Kinetic Alfv\'en wave}\label{res:kaw}
As a first benchmark of our electromagnetic scheme, we consider the kinetic Alfv\'en wave.
In a slab (straight background magnetic field) geometry, with stationary singly-charged ions (assuming $\omega\gg k_\parallel v_{ti}$), the gyrokinetic equation for electrons reduces to
\begin{equation}
    \pderiv{f_e}{t} = \{H_e,f_e\}-\frac{e}{m}\pderiv{f_e}{v_\parallel}\pderiv{A_\parallel}{t} = - v_\parallel \pderiv{f_e}{z} {-} \frac{e}{m}\pderiv{f_e}{v_\parallel}\left(\pderiv{\Phi}{z}+\pderiv{A_\parallel}{t} \right).
\end{equation}
Taking a single Fourier mode with perpendicular wavenumber $k_\perp$ and parallel wavenumber $k_\parallel$, the field equations become
\begin{gather}
    k_\perp^2 \frac{m_i n_{0}}{B^2}\Phi = e n_0-e\int f_e\,\dx{v_\parallel} \label{alf-poisson} \\
    k_\perp^2 A_\parallel = -\mu_0 e \int v_\parallel f_e \,\dx{v_\parallel} \label{alf-ampere} \\
    \left(k_\perp^2 + \frac{\mu_0 e^2}{m_e}\int f_e\,\dx{v_\parallel}\right)\pderiv{A_{\parallel}}{t} = -\mu_0 e\int v_\parallel \{H_e, f_e\}\,\dx{v_\parallel}
\end{gather}
After linearizing the gyrokinetic equation by assuming a uniform Maxwellian background with density $n_0$ and temperature $T_e$, so that $f_e = F_{Me} + \delta f_e$, the dispersion relation becomes
\begin{equation}
    \omega^2\left[1+\frac{\omega}{\sqrt{2}k_\parallel v_{te}}Z\left(\frac{\omega}{\sqrt{2}k_\parallel v_{te}}\right)\right] = \frac{k_\parallel^2 v_{te}^2}{\hat{\beta}}\left[1+k_\perp^2\rho_\mathrm{s}^2+\frac{\omega}{\sqrt{2}k_\parallel v_{te}}Z\left(\frac{\omega}{\sqrt{2}k_\parallel v_{te}}\right)\right], \label{kawdisp}
\end{equation}
where $\hat{\beta}=(\beta_e/2) m_i/m_e$, with $\beta_e = 2\mu_0 n_0 T_e/B^2$, $v_{te}=\sqrt{T_e/m_e}$ is the electron thermal speed, $\rho_\mathrm{s} = c_\mathrm{s}/\Omega_i$ is the ion sound gyroradius with $c_\mathrm{s}=\sqrt{T_e/m_i}$ the sound speed and $\Omega_i = e B/m_i$ the gyrofrequency, and $Z(x)$ is the plasma dispersion function \citep{fried1961plasma}. Note that $\rho_\mathrm{s}$ can also be defined in terms of the electron skin depth,  $d_e = (n_0 e^2 \mu_0/m_e)^{-1/2}$, so that $\rho_\mathrm{s}=d_e \hat{\beta}^{1/2}$. In the limit $k_\perp\rho_\mathrm{s}\ll1$ the wave becomes the standard shear Alfv\'en wave from magnetohydrodynamics (MHD), which is an undamped wave with frequency $\omega=k_\parallel v_A$, where $v_A=v_{te}/\hat{\beta}^{1/2}$ is the Alfv\'en velocity. For larger values of $k_\perp \rho_\mathrm{s}$, the mode is damped by kinetic effects.

\begin{figure}
    \hspace{-.5cm}
    \centering
    \includegraphics[width=.48\textwidth]{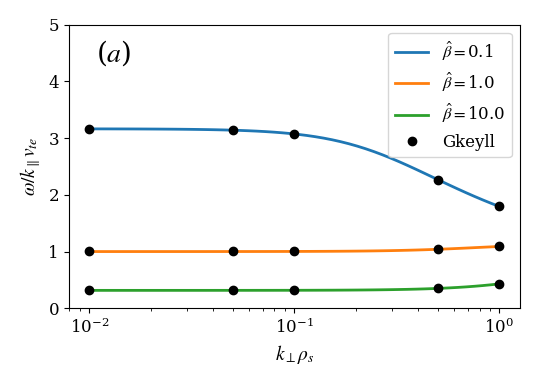}
    \includegraphics[width=.48\textwidth]{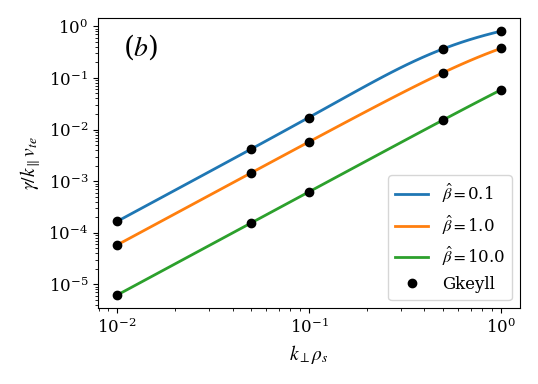}
    \caption[Real frequencies and damping rates for the kinetic Alfv\'en wave from \gke.]{Real frequencies ($a$) and damping rates ($b$) for the kinetic Alfv\'en wave vs $k_\perp \rho_\mathrm{s}$. Solid lines are the exact values from \cref{kawdisp} for three different values of $\hat{\beta}=(\beta_e/2) m_i/m_e$, and black dots are the numerical results from \gke.}
    \label{fig:kaw}
\end{figure}

In \cref{fig:kaw}, we show the real frequencies ($a$) and damping rates ($b$) obtained by solving \cref{kawdisp} for a few values of $\hat{\beta}$. We also show numerical results from \gke, which match the analytic results very well. These results are a good indication that our scheme avoids the Amp\`ere cancellation problem, which can cause large errors for modes with length-scales large compared to the electron skin depth, $k_\perp^2 d_e^2 \ll 1$, or equivalently, $\hat{\beta}/k_\perp^2\rho_\mathrm{s}^2\gg1$ (see \cref{sec:cancel}); we see no such errors, even for the case with $\hat{\beta}/k_\perp^2\rho_\mathrm{s}^2=10^5$. Each \gke simulation was run using piecewise-linear basis functions ($p=1$) in a reduced dimensionality mode with one configuration space dimension and one velocity space dimension, with $(N_z,N_{v_\parallel})=(32,64)$ the number of cells in each dimension. The perpendicular dimensions ($x$ and $y$), which appear only in the field equations in this simple system, were handled by replacing $\nabla_\perp^2 \rightarrow - k_\perp^2$, as in Eqs. (\ref{alf-poisson}) and (\ref{alf-ampere}). We use periodic boundary conditions in $z$ and zero-flux boundary conditions in $v_\parallel$.

We also show in \cref{fig:kaw-fields} the fields $\Phi_h$ and $\pderivInline{A_{\parallel h}}{t}$ for the case with $\hat{\beta}=10$ and $k_\perp \rho_\mathrm{s}=0.01$, which gives $\hat{\beta}/k_\perp^2\rho_\mathrm{s}^2=10^5$. For these parameters the system is near the MHD limit, which means we should expect $E_\parallel=-\pderivInline{\Phi}{z}-\pderivInline{A_\parallel}{t}\approx0$. While this condition is never enforced, getting the physics correct requires the scheme to allow $\pderivInline{\Phi_h}{z}\approx-\pderivInline{A_{\parallel h}}{t}$. The fact that our scheme allows discontinuities in $A_\parallel$ in the parallel direction is an advantage in this case. Because $\Phi_h$ is piecewise-linear here, $\pderivInline{\Phi_h}{z}$ is piecewise-constant; this is necessarily discontinuous for non-trivial solutions. Thus the scheme produces a piecewise-constant $\pderivInline{A_{\parallel h}}{t}$ in this MHD-limit case, as shown in \cref{fig:kaw-fields}, resulting in $E_{\parallel h}\approx 0$.
If our scheme did not allow discontinuities in $A_{\parallel h}$, a continuous $\pderivInline{A_{\parallel h}}{t}$ would never be able to exactly cancel a discontinuous $\pderivInline{\Phi_h}{z}$, and the resulting $E_{\parallel h}\neq0$ would make the solution inaccurate. Notably, this would be the case had we chosen the Hamiltonian ($p_\parallel$) formulation of the gyrokinetic system, which uses $p_\parallel=mv_\parallel + q A_\parallel$ as the parallel velocity coordinate. This is because $A_\parallel$ is included in the Hamiltonian in the $p_\parallel$ formulation, which would require continuity of $A_{\parallel h}$ (and thereby $\pderivInline{A_{\parallel h}}{t}$) to conserve energy in our discretization scheme. 

\begin{figure}
    \centering
    \includegraphics[width=.6\textwidth]{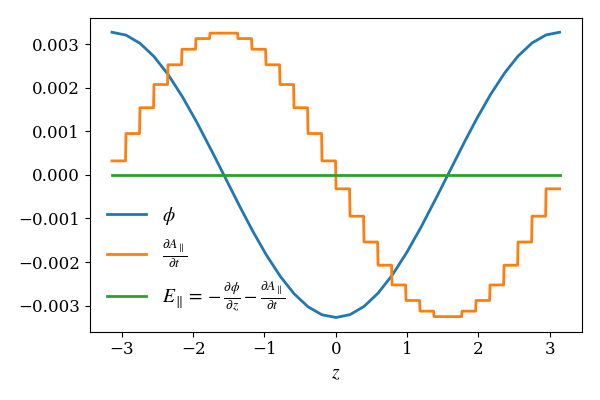}
    \caption[Electrostatic potential and magnetic vector potential for Alfv\'en wave with $\hat{\beta}=10$ and $k_\perp \rho_\mathrm{s}=0.01$.] {$\Phi_h$ (blue) and $\partial{A_{\parallel h}}/\partial {t}$ (yellow) for the case with $\hat{\beta}=10$ and $k_\perp \rho_\mathrm{s}=0.01$. The amplitude of $E_{\parallel h}$ (green) is $\sim 10^{-9}$.}
    \label{fig:kaw-fields}
\end{figure}

\subsection{Kinetic ballooning mode (KBM)} \label{sec:kbm}
We use the kinetic ballooning mode (KBM) instability in the local limit as a second linear benchmark of our electromagnetic scheme. \citet{kim1993electromagnetic} obtain the dispersion relation by solving 
\begin{gather}
    \omega\left[1 + \tau  - P_0\right]\Phi = \left[\tau(\omega-\omega_{*e}) - k_\parallel P_1\right]\frac{\omega}{k_\parallel}A_\parallel \label{kbmdisp1kim}\\
    \frac{2 k_\parallel^2 k_\perp^2}{\beta_i}A_\parallel = k_\parallel\left[k_\parallel P_1 - \tau(\omega-\omega_{*e})\right]\Phi - \left[k_\parallel^2P_2 - \tau\left(\omega(\omega-\omega_{*e})-2\omega_{de}(\omega-\omega_{*e}(1+\eta_e))\right)\right]A_\parallel \label{kbmdisp2}
\end{gather}
where
\begin{gather}
    P_m = \int_0^{\infty}dv_\perp \ v_\perp\int_{-\infty}^{\infty} dv_\parallel\  \frac{1}{\sqrt{2\pi}}e^{-(v_\parallel^2+v_\perp^2)/2}(v_\parallel)^m\frac{\omega-\omega_{*i}\left[1+\eta_i(v^2/2-3/2)\right]}{\omega-k_\parallel v_\parallel - \omega_{di}(v_\parallel^2+v_\perp^2/2)}J_0^2(v_\perp\sqrt{b}),
\end{gather}
with $\tau=T_i/T_e$, $\omega_{*e}=k_y$, $\omega_{*i}=-k_y$, $\eta_s=L_{n}/L_{Ts}$, $b=k_\perp^2$, and $\Gamma_0(b)=I_0(b)e^{-b}$ with $I_0(b)=J_0(i b)$ the modified Bessel function. Here, the wavenumbers $k_y$ and $k_\parallel$ are normalized to $\rho_i$ and $L_n$, respectively, and the frequencies $\omega$ and $\omega_*$ are normalized to $v_{ti}/L_n$. In the local limit, $\omega_{ds}=\omega_{*s}L_{n}/R$ and $k_\perp=k_y$ do not vary along the field line. The above equations include FLR effects even beyond the order of the general system that we derived in \cref{ch:emgk}, with the ion polarization density given by 
\begin{equation}
    n_\mathrm{pol}^\mathrm{Kim} = n_0\left[\Gamma_0(b) - 1\right]\frac{e \Phi}{T_i}.
\end{equation}
The term in square brackets on the left-hand side of \cref{kbmdisp1kim} implicitly contains  a term proportional to $-n_\mathrm{pol}^\mathrm{Kim}$. In our system we only keep the first-order part of the polarization density (because higher-order corrections require even higher-order terms in the Hamiltonian), leaving
\begin{equation}
    n_\mathrm{pol} = -n_0 b \frac{e \Phi}{T_i}.
\end{equation}
Thus we must modify the left-hand side of \cref{kbmdisp1kim} by taking $-n_\mathrm{pol}^\mathrm{Kim} \rightarrow -n_\mathrm{pol}$; we can do this by adding a term proportional to $-(n_\mathrm{pol} - n_\mathrm{pol}^\mathrm{Kim})$ in the square brackets, resulting in
\begin{gather}
    \omega\left[\Gamma_0(b) + b + \tau  - P_0\right]\Phi = \left[\tau(\omega-\omega_{*e}) - k_\parallel P_1\right]\frac{\omega}{k_\parallel}A_\parallel. \label{kbmdisp1mod}
\end{gather}
Finally, we additionally modify the FLR terms by taking $b\rightarrow0$ while keeping the first-order polarization density (which we now write in terms of $k_\perp^2$ instead of $b$) and $k_y=k_\perp\neq0$ in the non-FLR terms, which gives
\begin{gather}
    \omega\left[1 + k_\perp^2 + \tau  - P_0\right]\Phi = \left[\tau(\omega-\omega_{*e}) - k_\parallel P_1\right]\frac{\omega}{k_\parallel}A_\parallel, \label{kbmdisp1}
\end{gather}
where now we will also assume $b=0$ in all $P_m$ expressions.

The local limit can be achieved by simulating a helical flux tube with no magnetic shear, which gives a system with constant magnetic curvature that corresponds to $\omega_d=\text{const}$. This geometry has been previously used for SOL turbulence studies with \gke \citep{shi2019,bernard2019}, except in this section we take the boundary condition along the field lines to be periodic. We will provide further details about the helical geometry and the coordinates in \cref{ch:nstx-results}.

\begin{figure}
    \centering
    \includegraphics[width=.7\textwidth]{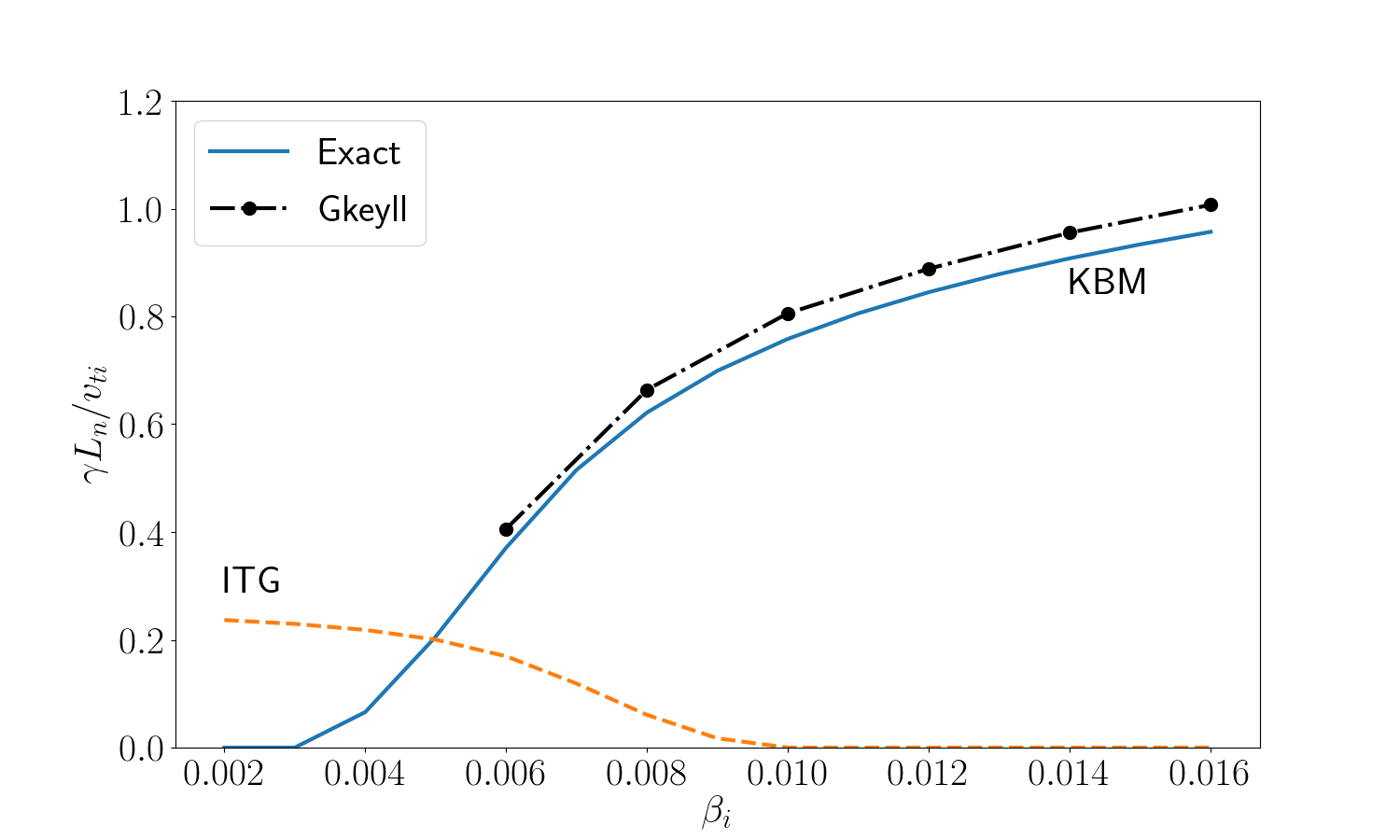}
    \caption[Growth rates for the KBM instability in the local limit.]{Growth rates for the KBM instability in the local limit, as a function of $\beta_i$, with $k_\perp \rho_i=0.5,\ k_\parallel L_n = 0.1,\ R/L_n=5,\ R/L_{Ti}=12.5,\  R/L_{Te}=10,$ and $\tau=1$. The black dots are numerical results from \gke, and the colored lines are the result of numerically solving the analytic dispersion relation given by Eqs. (\ref{kbmdisp1}-\ref{kbmdisp2}).}
    \label{fig:kbm}
\end{figure}

We show the results of \gke simulations of the KBM instability in the local-limit helical geometry for several values of $\beta_i$ in \cref{fig:kbm}. The results agree well with the analytic result obtained by numerically solving Eqs. (\ref{kbmdisp2}) and (\ref{kbmdisp1}). The parameters $k_\perp \rho_i=0.5,\ k_\parallel L_n = 0.1,\ R/L_n=5,\ R/L_{Ti}=12.5,\  R/L_{Te}=10,\ \tau=1$ are chosen to match those used in figure 1 of \citet{kim1993electromagnetic}, although the differences in FLR terms ($b=0$) cause our growth rates to be larger than those in \citet{kim1993electromagnetic}.
{Finally, we note that since \gke is designed primarily for nonlinear calculations, the fact that Fourier modes are not eigenfunctions of the DG discretization of the system makes these linear tests somewhat difficult for \gke. This may play a role in the small deviation of the results from the analytical theory. Because of this, Fourier modes other than the one initialized can grow and pollute the results. In particular, we have not included results from the ion temperature gradient (ITG) branch because we find that a mode with $k_\parallel=0$ grows and overcomes the  finite $k_\parallel$ mode before its growth rate has converged.}

\section{Avoiding the Amp\`ere cancellation problem} \label{sec:cancel}
Historically, including electromagnetic effects in gyrokinetic simulations has proved numerically and computationally challenging, both in the core and in the edge. The so-called Amp\`ere cancellation problem is one of the main numerical issues that has troubled primarily PIC codes
\citep{reynders1993gyrokinetic,cummings1994gyrokinetic}. 

To understand where the cancellation problem comes from, let us reexamine the simple Alfv\'en wave case from \cref{res:kaw}. The cancellation problem is usually discussed in the context of the $p_\parallel$ (Hamiltonian) formulation of electromagnetic gyrokinetics, which is the formulation used by most PIC codes (in order to avoid the appearance of the explicit time derivative of $A_\parallel$ in the gyrokinetic equation). In the $p_\parallel$ formulation, the simple gyrokinetic system that we looked at in \cref{res:kaw} becomes
\begin{gather}
    \pderiv{f_e}{t} = \{H_e,f_e\}= - \frac{1}{m_e}p_\parallel \pderiv{f_e}{z} {-} e\pderiv{f_e}{p_\parallel}\pderiv{\Phi}{z} \\
    k_\perp^2 \frac{m_i n_{0}}{B^2}\Phi = e n_0-e\int f_e\,\dx{p_\parallel} \label{app:alf-poissonH} \\
    \left(k_\perp^2 + C_N\ \frac{\mu_0 e^2}{m_e^2}\int  f_e\,\dx{p_\parallel}\right)A_{\parallel} = -C_{J}\ \frac{\mu_0 e}{m_e^2}\int  p_\parallel f_e\,\dx{p_\parallel} \label{app:alf-ohmH}.
\end{gather}
In \cref{app:alf-ohmH}, we have introduced two constants, $C_N$ and $C_J$. We will use these constants to represent small errors that could arise in the numerical calculation of these integrals. As in Section \ref{res:kaw}, we can calculate the dispersion relation for this system, but now we will take the limit $\omega\gg k_\parallel v_{te}$, so that the dispersion relation reduces to
\begin{equation}
    \omega^2 = \frac{k_\parallel^2 v_A^2}{C_N + k_\perp^2 \rho_\mathrm{s}^2/\hat{\beta}}\left[1+(C_N - C_{J})\frac{\hat{\beta}}{k_\perp^2\rho_\mathrm{s}^2}\right], \label{cancel-dispH}
\end{equation}
where recall that $\hat{\beta}=(\beta_e/2) m_i/m_e$. This reduces to the correct result if $C_N=C_{J}=1$. However, if $C_N\neq C_{J}$, there will be a spurious numerical term from the second term in the brackets, leading to large errors for modes with $\hat{\beta}/(k_\perp^2\rho_\mathrm{s}^2)\gg 1$. 
This means that one must be very careful in how the integrals in \cref{app:alf-ohmH} are computed. The integrals need not be computed exactly, but one must ensure that they are computed \emph{consistently} so that any numerical error is identical in both integrals (\emph{i.e.} $C_N=C_J$), resulting in the errors cancelling exactly. This can be challenging in PIC codes, in part because the moments of the distribution function involve some finite sampling noise. Another complication is that the particle positions do not coincide with the field grid, necessitating interpolations. Various $\delta f$ PIC schemes to address the cancellation problem have been developed and there are interesting recent advances in this area \citep{chen2003deltaf,mishchenko2004,hatzky2007,mishchenko2014pullback,startsev2014,bao2018conservative}.

Meanwhile, some continuum $\delta f$ core codes avoided the cancellation problem completely \citep{rewoldt1987,kotschenreuther1995comparison}, while others had to address somewhat minor issues resulting from it \citep{jenko2000,candy2003}. In particular, the use of the $v_\parallel$ (symplectic) formulation of EMGK in \citep{kotschenreuther1995comparison} results in an Amp\`ere's law that contains only one integral, as in \cref{ampere}, so one does not need to worry about two large integral terms cancelling appropriately. 

However, in our scheme based on the $v_\parallel$ formulation, we solve Ohm's law for $\pderivInline{A_\parallel}{t}$. Recall from \cref{res:kaw} that the simple gyrokinetic system is given by
\begin{gather}
    \pderiv{f_e}{t} = \{H_e,f_e\}-\frac{e}{m_e}\pderiv{f_e}{v_\parallel}\pderiv{A_\parallel}{t} = - v_\parallel \pderiv{f_e}{z} {-} \frac{e}{m}\pderiv{f_e}{v_\parallel}\left(\pderiv{\Phi}{z}+\pderiv{A_\parallel}{t} \right) \\
    k_\perp^2 \frac{m_i n_{0}}{B^2}\Phi = e n_0-e\int f_e\,\dx{v_\parallel} \label{app:alf-poisson} \\
    \left(k_\perp^2 + C_N\ \frac{\mu_0 e^2}{m_e}\int  f_e\,\dx{v_\parallel}\right)\pderiv{A_{\parallel}}{t} = -C_{J}\ \mu_0 e\int v_\parallel \{H_e, f_e\}\,\dx{v_\parallel} \label{app:alf-ohm}.
\end{gather}
Ohm's law \emph{does} have two integrals on either side of the equation. As above, we have inserted constants, $C_N$ and $C_{J}$ to represent small numerical errors in the calculation of the integrals. Again taking the limit $\omega\gg k_\parallel v_{te}$, the dispersion relation reduces to
\begin{equation}
    \omega^2 = \frac{k_\parallel^2 v_A^2}{C_N + k_\perp^2 \rho_\mathrm{s}^2/\hat{\beta}}\left[1+(C_N - C_{J})\frac{\hat{\beta}}{k_\perp^2\rho_\mathrm{s}^2}\right]. \label{cancel-disp}
\end{equation}
This is the same dispersion relation as we obtained in \cref{cancel-dispH} from the $p_\parallel$ formulation. Thus even though we are using the $v_\parallel$ formulation, we still have to worry about the cancellation problem when we use Ohm's law to solve for $\pderivInline{A_\parallel}{t}$. We must be careful to compute the integrals in Ohm's law consistently so that numerical errors cancel exactly. In the following section, we derive a semi-discrete Alfv\'en wave dispersion relation that results from our DG discretization scheme to show that our scheme does indeed avoid the cancellation problem.

\subsection{Semi-discrete dispersion relation for Alfv\'en wave} \label{app:disp}
Here we will derive a semi-discrete Alfv\'en wave dispersion relation by using a piecewise-linear DG discretization for only the $v_\parallel$ coordinate, with the remaining coordinates not discretized (for simplicity).
The main purpose is to show how our discrete scheme avoids the Amp\`ere cancellation problem. We will also show how the integrals in the DG weak form are computed analytically in our modal scheme.

The semi-discrete gyrokinetic weak form for this system is
\begin{gather}
    \int_{\mathcal{K}_j} \psi \pderiv{f_h}{t}\,\dx{v_\parallel}+ \int_{\mathcal{K}_j}\psi \frac{1}{m_e}\pderiv{H_h}{v_\parallel}\pderiv{f_h}{z}\,\dx{v_\parallel} - \frac{e}{m_e}\left(\pderiv{\Phi}{z}+\pderiv{A_\parallel}{t}\right)\int_{\mathcal{K}_j}\pderiv{\psi}{v_\parallel}f_h\,\dx{v_\parallel} \qquad\notag \\
    + \frac{e}{m_e}\left(\pderiv{\Phi}{z}+\pderiv{A_\parallel}{t}\right)(\psi^- \widehat{f}_h)\bigg\rvert_{\partial\mathcal{K}_j} = 0.
\end{gather}
We begin by mapping each cell $\mathcal{K}_j$ to $\xi\in [-1,1]$ via the transformation $\xi = 2(v_\parallel - \bar{v}_\parallel^j)/\Delta v_\parallel$, where $\bar{v}_\parallel^j$ is the cell center of cell $j$, resulting in
\begin{gather}
    \int_{-1}^1 \psi \pderiv{f_h}{t}\,\dx{\xi} + \int_{-1}^1 \psi \frac{2}{\Delta v_\parallel}\frac{1}{m_e}\pderiv{H_h}{\xi}\pderiv{f_h}{z}\,\dx{\xi} - \frac{e}{m_e}\left(\pderiv{\Phi}{z}+\pderiv{A_\parallel}{t}\right)\int_{-1}^1 \frac{2}{\Delta v_\parallel}\pderiv{\psi}{\xi}f_h\,\dx{\xi} \qquad \notag \\
    + \frac{e}{m_e}\left(\pderiv{\Phi}{z}+\pderiv{A_\parallel}{t}\right)\frac{2}{\Delta v_\parallel}(\psi^- \widehat{f}_h)\bigg\rvert_{-1}^{\ 1} = 0.
\end{gather}
Taking an orthonormal piecewise-linear basis in $\xi$, $\psi = [\frac{1}{\sqrt{2}}, \frac{\sqrt{3}}{\sqrt{2}}\xi]$, we expand $f_h$ on the basis in cell $j$ as
\begin{equation}
    f_h^j(z,v_\parallel,t) = \sum_k \psi_k(\xi) f_k^j(z,t) = \frac{1}{\sqrt{2}}f_0^j + \frac{\sqrt{3}}{\sqrt{2}} f_1^j \xi.
\end{equation}
(Note that in the fully discretized case all coordinate dependence would be contained in multi-variate basis functions.) We can then analytically integrate the weak form for each $\psi_k$ to obtain the modal evolution equation for each DG `mode' $f_k$:
\begin{gather}
    \pderiv{f_0^j}{t} + \bar{v}_\parallel^j\pderiv{f_0^j}{z} + \frac{e}{m_e}\left(\pderiv{\Phi}{z} + \pderiv{A_\parallel}{t}\right)\frac{\sqrt{2}}{\Delta v_\parallel}\widehat{f}_{h}^{\,j}\bigg\rvert_{-1}^{\ 1} = 0 \\
    \pderiv{f_1^j}{t} + \bar{v}_\parallel^j\pderiv{f_1^j}{z} + \frac{e}{m_e}\left(\pderiv{\Phi}{z} + \pderiv{A_\parallel}{t}\right)\frac{\sqrt{6}}{\Delta v_\parallel}\xi\widehat{f}_{h}^{\,j}\bigg\rvert_{-1}^{\ 1} - \frac{e}{m_e}\left(\pderiv{\Phi}{z} + \pderiv{A_\parallel}{t}\right)\frac{2\sqrt{3}}{\Delta v_\parallel}f_0^j = 0.
\end{gather}
Finally, we will make the ansatz $f_k = F_{Mk} + f_k e^{i k_\parallel z- i \omega t}$ and linearize:
\begin{gather}
    -i(\omega-k_\parallel \bar{v}_\parallel^i){f_0^j} + \frac{e}{m_e}\left(ik_\parallel{\Phi} + \pderiv{A_\parallel}{t}\right)\frac{\sqrt{2}}{\Delta v_\parallel}\widehat{F}_{Mh}^{j}\bigg\rvert_{-1}^{\ 1} = 0 \label{f0w} \\
    -i(\omega-k_\parallel \bar{v}_\parallel^i){f_1^j} + \frac{e}{m_e}\left(ik_\parallel{\Phi} + \pderiv{A_\parallel}{t}\right)\frac{\sqrt{6}}{\Delta v_\parallel}\xi\widehat{F}_{Mh}^{j}\bigg\rvert_{-1}^{\ 1} - \frac{e}{m_e}\left(ik_\parallel{\Phi} + \pderiv{A_\parallel}{t}\right)\frac{2\sqrt{3}}{\Delta v_\parallel}F_{M0}^j = 0.
\end{gather}

We now turn to the field equations. The Poisson equation is
\begin{equation}
    k_\perp^2\frac{m_i n_0}{B^2}\Phi = e n_0 - e\sum_j \int_{\mathcal{K}_j} f_h\,\dx{v_\parallel}.
\end{equation}
Expanding $f_h$ and using the ansatz, this becomes
\begin{equation}
    k_\perp^2\frac{m_i n_0}{B^2}\Phi = e n_0 - e\sum_j \frac{\Delta v_\parallel}{\sqrt{2}}F_{M0}^j - e \sum_j \frac{\Delta v_\parallel}{\sqrt{2}} f_0^j = - e \sum_j \frac{\Delta v_\parallel}{\sqrt{2}} f_0^j,
\end{equation}
where we will define $F_{Mh}$ so that $\sum_j \frac{\Delta v_\parallel}{\sqrt{2}}F_{M0}^j = n_0$ by definition. For Ohm's law, we must use the $p_v=1$ form from Eq. (\ref{Ohmp1}), which gives
\begin{equation}
    k_\perp^2\pderiv{A_\parallel}{t} - \pderiv{A_\parallel}{t} \frac{\mu_0 e^2}{m_e}\sum_j \bar{v}_\parallel^j \widehat{f}_{h}^{\,j}\bigg\rvert_{\partial \mathcal{K}_j} = - \mu_0 e \sum_j \int_{\mathcal{K}_j} \bar{v}_\parallel^j \pderiv{f_h}{t}^\star\,\dx{v_\parallel} + ik_\parallel \Phi \frac{\mu_0 e^2}{m_e}\sum_j \bar{v}_\parallel^j \widehat{f}_{h}^{\,j}\bigg\rvert_{\partial \mathcal{K}_j},
\end{equation}
where
\begin{equation}
    \int_{\mathcal{K}_j}\psi \pderiv{f_h}{t}^\star \,\dx{v_\parallel}= -i k_\parallel \int_{\mathcal{K}_j} \psi \frac{1}{m_e}\pderiv{H_h}{v_\parallel} f_h\,\dx{v_\parallel} + \frac{e}{m_e}ik_\parallel\Phi\int_{\mathcal{K}_j}dv_\parallel\ \pderiv{\psi}{v_\parallel}f_h\,\dx{v_\parallel}.
\end{equation}
Again expanding and using the ansatz, Ohm's law becomes
\begin{equation}
k_\perp^2\pderiv{A_\parallel}{t} - \pderiv{A_\parallel}{t} \frac{\mu_0 e^2}{m_e}\sum_j \bar{v}_\parallel^j \widehat{F}_{Mh}^{j}\bigg\rvert_{-1}^{\ 1} = - \mu_0 e(i k_\parallel) \sum_j \frac{\Delta v_\parallel}{\sqrt{2}}\bar{v}_\parallel^{j\,2} f_0^j + ik_\parallel \Phi \frac{\mu_0 e^2}{m_e}\sum_j \bar{v}_\parallel^j \widehat{F}_{Mh}^{j}\bigg\rvert_{-1}^{\ 1}. \label{ohmw} 
\end{equation}
Analogously to \cref{app:alf-ohm}, we can rewrite this equation as
\begin{equation}
k_\perp^2\pderiv{A_\parallel}{t} + \pderiv{A_\parallel}{t} \frac{\mu_0 e^2 n_0}{m_e} C_N = - \mu_0 e(i k_\parallel) \sum_j \frac{\Delta v_\parallel}{\sqrt{2}}\bar{v}_\parallel^{j\,2} f_0^j - ik_\parallel \Phi \frac{\mu_0 e^2 n_0}{m_e}C_J, \label{ohmw2} 
\end{equation}
where we have defined
\begin{gather}
    C_N = -\sum_j\frac{1}{n_0}\bar{v}_\parallel^j \widehat{F}_{Mh}^{j}\bigg\rvert_{-1}^{\ 1}, \\
    C_J = -\sum_j\frac{1}{n_0}\bar{v}_\parallel^j \widehat{F}_{Mh}^{j}\bigg\rvert_{-1}^{\ 1}.
\end{gather}
Clearly $C_N = C_J$, which allows us to move the $C_N$ term to the right-hand side, giving a term proportional to the total parallel electric field, $E_\parallel = -ik_\parallel\Phi -\pderiv{A_\parallel}{t}$:
\begin{equation}
    k_\perp^2\pderiv{A_\parallel}{t} = - \mu_0 e(i k_\parallel) \sum_j \frac{\Delta v_\parallel}{\sqrt{2}}\bar{v}_\parallel^{j\,2} f_0^j - \frac{\mu_0 e^2 n_0}{m_e}\left(ik_\parallel \Phi +\pderiv{A_\parallel}{t}\right)C_N.
\end{equation}
This is essential for avoiding the cancellation problem because if we instead had $C_N\neq C_J$, we would have had a leftover term proportional to $(C_N-C_J)\pderiv{A_\parallel}{t}$ on the left-hand side. This leftover term would then lead to the spurious term  proportional to $\hat{\beta}/(k_\perp^2\rho_\mathrm{s}^2)$ in Eq. (\ref{cancel-disp}).

In order to compute the integral quantities in the field equations, we use Eq. (\ref{f0w}) to compute
\begin{align}
    f_0^j &= -\frac{e}{m_e}\left( ik_\parallel \Phi + \pderiv{A_\parallel}{t}\right)\frac{i}{\omega - k_\parallel \bar{v}_\parallel^j}\frac{\sqrt{2}}{\Delta v_\parallel} \widehat{F}_{Mh}^{j}\bigg\rvert_{-1}^{\ 1} \notag\\
    &\approx -\frac{e}{m_e}\left( ik_\parallel \Phi + \pderiv{A_\parallel}{t}\right)\frac{i}{\omega}\left(1+\frac{k_\parallel \bar{v}_\parallel^j}{\omega} + \frac{k_\parallel^2 \bar{v}_\parallel^{j\,2}}{\omega^2} + \frac{k_\parallel^3 \bar{v}_\parallel^{j\,3}}{\omega^3}\right) \frac{\sqrt{2}}{\Delta v_\parallel} \widehat{F}_{Mh}^{j}\bigg\rvert_{-1}^{\ 1} \ \quad (\omega \gg k_\parallel v_{te}),
\end{align}
where we have expanded in the limit $\omega \gg k_\parallel v_{te}$.
Now we can calculate
\begin{gather}
    \sum_j \frac{\Delta v_\parallel}{\sqrt{2}} f_0^j = -\frac{e}{m_e}\left( ik_\parallel \Phi + \pderiv{A_\parallel}{t}\right)\frac{i}{\omega}\sum_j\left(1+\frac{k_\parallel \bar{v}_\parallel^j}{\omega} + \frac{k_\parallel^2 \bar{v}_\parallel^{j\,2}}{\omega^2} + \frac{k_\parallel^3 \bar{v}_\parallel^{j\,3}}{\omega^3}\right)  \widehat{F}_{Mh}^{j}\bigg\rvert_{-1}^{\ 1} \\
    \sum_j \frac{\Delta v_\parallel}{\sqrt{2}}\bar{v}_\parallel^{j\,2}f_0^j = -\frac{e}{m_e}\left( ik_\parallel \Phi + \pderiv{A_\parallel}{t}\right)\frac{i}{\omega}\sum_j\left(1+\frac{k_\parallel \bar{v}_\parallel^j}{\omega}\right) \bar{v}_\parallel^{j\,2} \widehat{F}_{Mh}^{j}\bigg\rvert_{-1}^{\ 1}
\end{gather}
Substituting these integral quantities into the field equations, the Poisson equation becomes
\begin{align}
    k_\perp^2 \frac{m_i n_0}{B^2}\Phi &= \frac{e^2}{m_e}\left(ik_\parallel \Phi + \pderiv{A_\parallel}{t}\right)\frac{i k_\parallel}{\omega^2}\sum_j \left[\bar{v}_\parallel^j \widehat{F}_{Mh}^{j}\bigg\rvert_{-1}^{\ 1} +\frac{k_\parallel}{\omega} \left(1+\frac{k_\parallel \bar{v}_\parallel^j}{\omega} \right)\bar{v}_\parallel^{j\,2} \widehat{F}_{Mh}^{j}\bigg\rvert_{-1}^{\ 1}\right] \notag \\
    &= \frac{e^2}{m_e}\left(ik_\parallel \Phi + \pderiv{A_\parallel}{t}\right)\frac{i k_\parallel}{\omega^2} \left[-n_0 C_N +\sum_j\frac{k_\parallel}{\omega} \left(1+\frac{k_\parallel \bar{v}_\parallel^j}{\omega} \right)\bar{v}_\parallel^{j\,2} \widehat{F}_{Mh}^{j}\bigg\rvert_{-1}^{\ 1}\right] \label{poissonw}
\end{align}
and Ohm's law becomes
\begin{align}
    k_\perp^2 \pderiv{A_\parallel}{t} &= \frac{\mu_0 e^2}{m_e} \left(ik_\parallel \Phi + \pderiv{A_\parallel}{t}\right)\sum_j \left[\bar{v}_\parallel^j \widehat{F}_{Mh}^{j}\bigg\rvert_{-1}^{\ 1} +\frac{k_\parallel}{\omega} \left(1+\frac{k_\parallel \bar{v}_\parallel^j}{\omega} \right)\bar{v}_\parallel^{j\,2} \widehat{F}_{Mh}^{j}\bigg\rvert_{-1}^{\ 1}\right] \notag \\
    &= \frac{\mu_0 e^2}{m_e} \left(ik_\parallel \Phi + \pderiv{A_\parallel}{t}\right) \left[-n_0 C_N +\sum_j\frac{k_\parallel}{\omega} \left(1+\frac{k_\parallel \bar{v}_\parallel^j}{\omega} \right)\bar{v}_\parallel^{j\,2} \widehat{F}_{Mh}^{j}\bigg\rvert_{-1}^{\ 1}\right], \label{ohmw3}
\end{align}
where we have substituted the definition of $C_N$.
We can now combine Eqs. (\ref{poissonw}) and (\ref{ohmw3}) by multiplying Eq. (\ref{poissonw}) by $i k_\parallel T_e/n_0$, multiplying Eq. (\ref{ohmw3}) by $\rho_\mathrm{s}^2 = m_i T_e/(e^2 B^2)$ and summing the two equations to get
\begin{align}
&k_\perp^2\rho_\mathrm{s}^2\left(ik_\parallel \Phi + \pderiv{A_\parallel}{t}\right)= \notag \\
&\quad \left(ik_\parallel \Phi + \pderiv{A_\parallel}{t}\right)\left(\hat{\beta}-\frac{k_\parallel^2 v_{te}^2}{\omega^2}\right) \left[-C_N +\frac{1}{n_0}\sum_j\frac{k_\parallel}{\omega} \left(1+\frac{k_\parallel \bar{v}_\parallel^j}{\omega} \right)\bar{v}_\parallel^{j\,2} \widehat{F}_{Mh}^{j}\bigg\rvert_{-1}^{\ 1}\right],
\end{align}
with $\hat{\beta}=(\beta_e/2)m_i/m_e$. This then yields the dispersion relation
\begin{align}
k_\perp^2\rho_\mathrm{s}^2 &=\left(\hat{\beta}-\frac{k_\parallel^2 v_{te}^2}{\omega^2}\right)\left[-C_N +\frac{1}{n_0}\sum_j\frac{k_\parallel}{\omega} \left(1+\frac{k_\parallel \bar{v}_\parallel^j}{\omega} \right)\bar{v}_\parallel^{j\,2} \widehat{F}_{Mh}^{j}\bigg\rvert_{-1}^{\ 1}\right]. \label{dispw}
\end{align}

To evaluate $C_N$ and the other sum, we need to project the background onto the basis in each cell. Taking $F_M = n_{0h} (2\pi v_t^2)^{-1/2}\exp\left(-v_\parallel^2/(2v_t^2)\right)$, we project onto the basis in cell $j$ as
\begin{align}
    F_{M0}^j &= \frac{1}{\sqrt{2}}\int_{-1}^1 \frac{1}{\sqrt{2\pi v_t^2}}e^{\frac{-\left(\bar{v}_\parallel^j+\Delta v_\parallel \xi/2\right)^2}{2 v_t^2}}\,\dx{\xi} \notag \\
    &= \frac{n_{0h}}{\Delta v_\parallel \sqrt{2}}\left[ \text{erf}\left(\frac{(j+1/2)\Delta v_\parallel}{v_t\sqrt{2}}\right) - \text{erf}\left(\frac{(j-1/2)\Delta v_\parallel}{v_t\sqrt{2}}\right)\right] \\
    F_{M1}^j &= \frac{\sqrt{3}}{\sqrt{2}}\int_{-1}^1  \xi\frac{1}{\sqrt{2\pi v_t^2}}e^{\frac{-\left(\bar{v}_\parallel^j+\Delta v_\parallel \xi/2\right)^2}{2 v_t^2}}\,\dx{\xi} \notag \\ &= -\frac{2 n_{0h}v_t}{\Delta v_\parallel^2}\sqrt{\frac{3}{\pi}}\left(e^{\frac{-\left((j+1/2)\Delta v_\parallel\right)^2}{2 v_t^2}} - e^{\frac{-\left((j-1/2)\Delta v_\parallel\right)^2}{2 v_t^2}} \right) \notag \\
    &\qquad-\frac{n_{0h}\sqrt{6}}{\Delta v_\parallel}j\left[ \text{erf}\left(\frac{(j+1/2)\Delta v_\parallel}{v_t\sqrt{2}}\right) - \text{erf}\left(\frac{(j-1/2)\Delta v_\parallel}{v_t\sqrt{2}}\right)\right],
\end{align}
where we have taken the cell center to be $\bar{v}_\parallel^j = j\Delta v_\parallel$. Now we can evaluate integrated quantities such as
\begin{align}
    \sum_{j=-N}^P \frac{\Delta v_\parallel}{\sqrt{2}}F_{M0}^j &= \frac{n_{0h}}{2}\left[\text{erf}\left(\frac{(P+1/2)\Delta v_\parallel}{v_t\sqrt{2}}\right) - \text{erf}\left(\frac{-(N+1/2)\Delta v_\parallel}{v_t\sqrt{2}}\right)\right] \notag \\
    &= \frac{n_{0h}}{2}\left[\text{erf}\left(\frac{v_\text{max}}{v_t\sqrt{2}}\right) - \text{erf}\left(\frac{v_\text{min}}{v_t\sqrt{2}}\right)\right] \notag \\
    &= n_{0h}\text{erf}\left(\frac{v_\text{max}}{v_t\sqrt{2}}\right) \qquad\qquad \text{assuming } v_\text{min} = -v_\text{max}
\end{align}
where now note that we have finite limits on the sum to indicate finite extents of the $v_\parallel\in [-v_\text{max}, v_\text{max}]$ grid.
 As we alluded to before, we will define $n_{0h}$ so that $\sum_j \frac{\Delta v_\parallel}{\sqrt{2}}F_{M0}^j = n_0$ by definition, which means 
\begin{equation}
    n_{0h} = \frac{n_0}{\text{erf}\left(\frac{v_\text{max}}{v_t\sqrt{2}}\right)}.
\end{equation}
Note that $\text{erf}(x)$ quickly approaches 1 with increasing $x$, so that for example when $v_\text{max} = 4 v_t$, $n_{0h} \approx 1.00006 \,n_0$. We can also calculate
\begin{align}
    C_N &= -\frac{1}{n_0}\sum_{j=-N}^P \bar{v}_\parallel^j \widehat{F}_{M h}^j\bigg\rvert_{-1}^{\ 1} \notag \\
    &= -\frac{1}{n_0}\sum_{j=-N}^P \frac{j\Delta v_\parallel}{2}\left[ F_{Mh}^j(1)+F_{Mh}^{j+1}(-1) - F_{Mh}^j(-1)-F_{Mh}^{j-1}(1)\right] \notag \\&\qquad\qquad +\frac{\sigma}{n_0}  \sum_{j=-N}^P\frac{j\Delta v_\parallel}{2}\left[ F_{Mh}^{j+1}(-1)-F_{Mh}^{j}(1) - F_{Mh}^j(-1)+F_{Mh}^{j-1}(1) \right] \notag \\
    &= \frac{1}{n_0}\sum_{j=-N}^P \frac{\Delta v_\parallel}{2}\left[F_{Mh}^j(1)+F_{Mh}^j(-1)\right] \notag \\
    &\qquad\qquad+\frac{\sigma}{n_0} \sum_{j=-N}^P \frac{\Delta v_\parallel}{2}\left[F_{Mh}^j(1)-F_{Mh}^j(-1)\right]+ \text{boundary terms}\notag \\
    &=  \frac{1}{n_0}\sum_{j=-N}^P \frac{\Delta v_\parallel}{\sqrt{2}} F_{M0}^j +\frac{\sigma}{n_0} \sum_{j=-N}^P \frac{\Delta v_\parallel\sqrt{3}}{\sqrt{2}} F_{M1}^j +  \text{boundary terms} \notag \\
    &= 1 + \text{boundary terms} \approx 1,
\end{align}
where $\sigma$ is the sign of the upwind velocity, and the boundary terms that result from the finite limits on the sum are small for $v_\text{max}\gtrsim 4 v_t$. Thus we have $C_N\approx1$ as expected, although it does not need to be exactly equal to unity to eliminate the cancellation problem. Instead, it was sufficient that $C_N=C_J$ on either side of Eq. (\ref{ohmw2}).

One can also show that 
\begin{gather}
    \sum_{j=-N}^P \bar{v}_\parallel^{j\,2} \widehat{F}_{M h}^j\bigg\rvert_{-1}^{\ 1} = \text{boundary terms }\approx 0 \\
    \sum_{j=-N}^P \bar{v}_\parallel^{j\,3} \widehat{F}_{M h}^j\bigg\rvert_{-1}^{\ 1} =-3n_0 v_t^2\left(1 - \frac{\Delta v_\parallel^2}{12 v_t^2}\right) + \text{boundary terms }\approx - 3n_0 v_t^2\left(1 - \frac{\Delta v_\parallel^2}{12 v_t^2}\right).
\end{gather}
Now substituting these results into the dispersion relation from Eq. (\ref{dispw}), we obtain
\begin{align}
k_\perp^2\rho_\mathrm{s}^2 \approx \left(\hat{\beta}-\frac{k_\parallel^2 v_{te}^2}{\omega^2}\right)\left(-C_N - \frac{3 k_\parallel^2 v_{te}^2}{\omega^2}\left(1 - \frac{\Delta v_\parallel^2}{12 v_{te}^2}\right)\right) \approx -\left(\hat{\beta}-\frac{k_\parallel^2 v_{te}^2}{\omega^2}\right)
\end{align}
after again taking the limit $\omega \gg k_\parallel v_{te}$ and assuming $\Delta v_\parallel \sim v_{te}$. This finally gives
\begin{equation}
    \omega^2 \approx \frac{k_\parallel^2 v_{te}^2}{\hat{\beta}+ k_\perp^2 \rho_\mathrm{s}^2},
\end{equation}
which is the expected dispersion relation.

\begin{subappendices}
\section{The discrete weak form of Ohm's law} \label{app:Ohm}
To obtain the discrete weak form of Ohm's law, we start by taking the time derivative of Eq. (\ref{FEMampere-local}):
\begin{align}
\int_{\mathcal{K}^R_i} \nabla_\perp \pderiv{A_{\parallel h}}{t} &\vec{\cdot} \nabla_\perp \varphi^{(i)}\,\dx{^3\vec{R}} - \oint_{\partial \mathcal{K}^R_i} \varphi^{(i)} \nabla_\perp \pderiv{A_{\parallel h}}{t}\cdot\dx{\vec{s}_R} \notag \\
&= \mu_0\sum_s\frac{q_s}{m_s} \int_{\mathcal{K}^R_i} \varphi^{(i)}\left[ \int_{\mathcal{T}^v} \pderiv{H_{s\,h}}{v_\parallel} \pderiv{(\mathcal{J}f_{s\,h})}{t}\,\dx{^3\vec{v}}\right]\dx{^3\vec{R}}. \label{ohm1}
\end{align}
Now, note that, analogously to Eq. (\ref{fstar}), we can write the discrete weak form of the gyrokinetic equation as
\begin{align}
    &\int_{\mathcal{K}_i} \psi \pderiv{(\mathcal{J}f_h)}{t}\dx{^3\vec{R}}\,\dx{^3\vec{v}} = 
    \int_{\mathcal{K}_i}\psi \pderiv{(\mathcal{J}f_h)}{t}^\star\dx{^3\vec{R}}\,\dx{^3\vec{v}} 
    \notag\\ &\quad
    - \oint_{\partial \mathcal{K}_i}  \psi^- \widehat{\mathcal{J}f_h} \left(\dot{v}^H_{\parallel h}-\frac{q}{m}\pderiv{A_{\parallel h}}{t}\right)\dx{^3\vec{R}}\,\dx{s}_v 
     - \int_{\mathcal{K}_i} \mathcal{J}f_h \frac{q}{m}\pderiv{A_{\parallel h}}{t} \pderiv{\psi}{v_\parallel}\dx{^3\vec{R}}\,\dx{^3\vec{v}}, \label{app:DGstar}
\end{align}
where
\begin{align}
    &\int_{\mathcal{K}_i}\psi \pderiv{(\mathcal{J}f_h)}{t}^\star\dx{^3\vec{R}}\,\dx{^3\vec{v}} = 
    \int_{\mathcal{K}_i} \mathcal{J}f_h \dot{\vec{R}}_h\cdot\nabla \psi \,\dx{^3\vec{R}}\,\dx{^3\vec{v}} 
    + \int_{\mathcal{K}_i} \mathcal{J}f_h \dot{v}^H_{\parallel h} \pderiv{\psi}{v_\parallel}\dx{^3\vec{R}}\,\dx{^3\vec{v}}
    \notag \\ &\quad 
    - \oint_{\partial \mathcal{K}_i}\psi^- \widehat{\mathcal{J}f_h}\dot{\vec{R}}_h\cdot \dx{\vec{s}}_R\, \dx{^3\vec{v}}
    + \int_{\mathcal{K}_i} \psi\left(\mathcal{J}C[f_h] + \mathcal{J}S_h\right)\dx{^3\vec{R}}\,\dx{^3\vec{v}}.
 \label{app:partialGK}
\end{align}
Substituting $\psi=\varphi^{(i)} \pderivInline{H_h}{v_\parallel}$ in Eq. (\ref{app:DGstar}) and summing over velocity cells, we obtain
\begin{align}
    &\int_{\mathcal{K}^R_i} \varphi^{(i)} \left[\int_{\mathcal{T}^v} \pderiv{H_{h}}{v_\parallel} \pderiv{(\mathcal{J}f_{h})}{t}\,\dx{^3\vec{v}}\right]\dx{^3\vec{R}} = \int_{\mathcal{K}^R_i}\varphi^{(i)}\left[ \int_{\mathcal{T}^v} \pderiv{H_{h}}{v_\parallel} \pderiv{(\mathcal{J}f_{h})^\star}{t}\,\dx{^3\vec{v}}\right]\dx{^3\vec{R}} \notag\\
    &\qquad- \int_{\mathcal{K}_i^R} \varphi^{(i)}\left[\sum\limits_j\oint_{\partial\mathcal{K}_j^v} \left(\dot{v}^H_{\parallel h}-\frac{q}{m}\pderiv{A_{\parallel h}}{t}\right) \pderiv{H_h}{v_\parallel}^- \widehat{\mathcal{J}f_h}\,\dx{s_v}\right]\dx{^3\vec{R}} \notag \\
    &\qquad- \int_{\mathcal{K}_i^R} \varphi^{(i)} \frac{q}{m}\pderiv{A_{\parallel h}}{t} \left[\int_{\mathcal{T}^v}  \mathcal{J}\pderiv{^2H_h}{v_\parallel^2}f_h\,\dx{^3\vec{v}}\right]\dx{^3\vec{R}}.
\end{align}
Note that, for $p_v>1$, the $v_\parallel$ surface term on the right-hand side vanishes because $\pderivInline{H_h}{v_\parallel}$ is continuous across $v_\parallel$ cell interfaces when $v_\parallel^2$ is included in the basis, resulting in cancellations. However, for $p_v=1$ this term is not continuous, and we must keep this surface term; further, the {last} term on the right-hand side vanishes for $p_v=1$ since $\pderivInline{^2H_h}{v_\parallel^2}=0$.
We can now substitute this result into the right-hand side of Eq. (\ref{ohm1}), giving
\begin{align}
    &\int_{\mathcal{K}^R_i} \nabla_\perp \pderiv{A_{\parallel h}}{t} \vec{\cdot} \nabla_\perp \varphi^{(i)}\,\dx{^3\vec{R}} 
    - \oint_{\partial \mathcal{K}^R_i} \varphi^{(i)}\nabla_\perp \pderiv{A_{\parallel h}}{t} \cdot \dx{\vec{s}_R} 
    \notag \\ &\quad
    - \int_{\mathcal{K}_i^R} \varphi^{(i)}\pderiv{A_{\parallel h}}{t} \left[\sum_{s,j} \frac{\mu_0 q_s^2}{m_s}\oint_{\partial\mathcal{K}^v_j}  \bar{v}_\parallel^- \widehat{\mathcal{J} f_{s\,h}}\,\dx{s_v}\right]\dx{^3\vec{R}} 
    \notag\\ &
    = \mu_0\sum_s q_s \int_{\mathcal{K}^R_i} \varphi^{(i)} \Bigg[\int_{\mathcal{T}^v} \bar{v}_\parallel \pderiv{(\mathcal{J}f_{s\,h})}{t}^\star \dx{^3\vec{v}}-\sum_j\oint_{\partial \mathcal{K}_j^v} \bar{v}_\parallel^-{\dot{v}^H_{\parallel h}} \widehat{\mathcal{J}f_{s\,h}}\,\dx{s_v} \Bigg]\dx{^3\vec{R}}, \quad\ (p_v=1) \label{app:Ohmp1} \\
   &\int_{\mathcal{K}^R_i}\nabla_\perp \pderiv{A_{\parallel h}}{t} \vec{\cdot} \nabla_\perp \varphi^{(i)}\,\dx{^3\vec{R}} - \oint_{\partial \mathcal{K}^R_i} \varphi^{(i)}\nabla_\perp \pderiv{A_{\parallel h}}{t} \cdot \dx{\vec{s}_R} 
   \notag \\ &\quad
   + \int_{\mathcal{K}_i^R} \varphi^{(i)}\pderiv{A_{\parallel h}}{t}\left[ \sum_s \frac{\mu_0 q_s^2}{m_s}\!\int_{\mathcal{T}^v}  \mathcal{J} f_{s\,h}\,\dx{^3\vec{v}}\right]\dx{^3\vec{R}} \notag \\
    &=\mu_0\sum_s q_s \!\!\int_{\mathcal{K}^R_i}  \varphi^{(i)}\left[\int_{\mathcal{T}^v} v_\parallel \pderiv{(\mathcal{J} f_{s\,h})}{t}^\star\dx{^3\vec{v}}\right]\dx{^3\vec{R}}, \qquad\qquad (p_v>1) \label{app:Ohmp2}
\end{align}
In Eq. (\ref{app:Ohmp1}), $\bar{v}_\parallel$ is the piecewise-constant projection of $v_\parallel$. 
\end{subappendices}

\chapter{Simulations of a helical scrape-off layer as a model of the NSTX SOL} \label{ch:nstx-results}
\section{Helical scrape-off layer model}
As a first step towards modeling the tokamak scrape-off layer, we consider a simple  helical scrape-off layer model. In this configuration, the magnetic field is composed of a toroidal component $B_\varphi$ and a vertical component $B_v$, giving helical field lines. All field lines are open, terminating on material walls at the top and bottom of the device. This configuration is also known as a simple magnetized torus (SMT), and has been studied experimentally via devices such as the Helimak \citep{gentle2008} and TORPEX \citep{fasoli2006}. Despite the relative simplicity of the helical SMT configuration, it contains unfavorable magnetic curvature. This gives rise to the interchange instability that drives turbulence and blob dynamics in the SOL. Thus the SMT configuration is a good testbed for investigating SOL blob dynamics. We will use parameters roughly modeling the SOL of the National Spherical Torus Experiment (NSTX) at PPPL.

\subsection{Simplified helical geometry} \label{sec:simple-geo}

We simulate a flux-tube-like domain that wraps helically around the torus and terminates on conducting plates at each end. For this, we use a non-orthogonal, field-aligned coordinate system \citep{beer1995field}, with $x$ the radial coordinate, $z$ the coordinate along the field lines, and $y$ the binormal coordinate that labels field lines at constant $x$ and $z$. One can think of these coordinates roughly mapping to physical cylindrical coordinates ($R,\varphi,Z)$ via $R=x$, $\varphi=(y\sin\chi+z\cos\chi)/R_c$, $Z=z\sin\chi$ (although this parametrization does not give a truly field-aligned coordinate system; see Appendix \ref{app:alt-heli-geo}). In this chapter, the field-line pitch angle $\chi=\sin^{-1}(B_v/B)$ is taken to be constant, with $B_v$ the vertical component of the magnetic field (analogous to the poloidal field in typical tokamak geometry), and $B$ the total magnitude of the background magnetic field. Further, $R_c=R_0+a$ is the radius of curvature at the center of the simulation domain, with $R_0$ the device major radius and $a$ the minor radius. As in \citet{shi2019}, we neglect all geometrical factors arising from the non-orthogonal coordinate system in this chapter, except for the assumption that perpendicular gradients of $f$ are much stronger than parallel gradients. Thus we can approximate
\begin{equation}
    (\nabla\times\uv{b})\vec{\cdot}\nabla f(x,y,z)\approx \left[(\nabla\times\uv{b})\vec{\cdot} \nabla y\right]\pderiv{f}{y} = \frac{1}{B}\pderiv{B}{x}\pderiv{f}{y}=-\frac{1}{x}\pderiv{f}{y},
\end{equation}
where we have used $\vec{B}\approx B_\text{axis}(R_0/x)\vec{e}_z$, with $B_\text{axis}$ the magnetic field strength at the magnetic axis, and neglected the contribution of the small vertical field $B_v$.\footnote{As a result of these approximations, the actual geometry that we are simulating is purely toroidal (\emph{i.e.} $B_v\rightarrow 0$), as we show in Appendix \ref{app:simple-heli-geo}.} This means that the magnetic (curvature plus $\nabla B$) drift,
\begin{equation}
    \vec{v}_d = \frac{m v_\parallel^2}{q B}\nabla\times\uv{b} + \frac{\mu}{q B}\uv{b}\times \nabla B,
\end{equation}
is purely in the $y$ direction, 
\begin{equation}
    \vec{v}_d \cdot\nabla y = - \left(\frac{m v_\parallel^2 + \mu B}{qB}\right)\frac{1}{x} = -\frac{m v_\parallel^2 + \mu B}{qB_\text{axis} R_0},\qquad \vec{v}_d\cdot\nabla x = \vec{v}_d\cdot\nabla z =0. \label{simplevd}
\end{equation}
Thus this simplified geometry has constant magnetic curvature (the curvature does not vary along the field line, so there is no ballooning structure), and we have neglected magnetic shear in the present setup. Note that while we make several approximations in specifying the geometry in this chapter, we will relax these approximations in \cref{ch:geometry}, where we will account for \emph{all} geometric factors arising from non-orthogonal coordinates in helical geometry and include magnetic shear.

\subsection{Modeling the Debye sheath via boundary conditions}

A distinguishing feature of the SOL is that the magnetic field lines terminate on material surfaces, resulting in the presence of the Debye sheath at the plasma-material interface. Sheath effects play a key role in blob dynamics \citep{krasheninnikov2008}, and can affect particle and heat fluxes to plasma-facing components. 

\begin{figure}
    \centering
    \includegraphics[width=\textwidth]{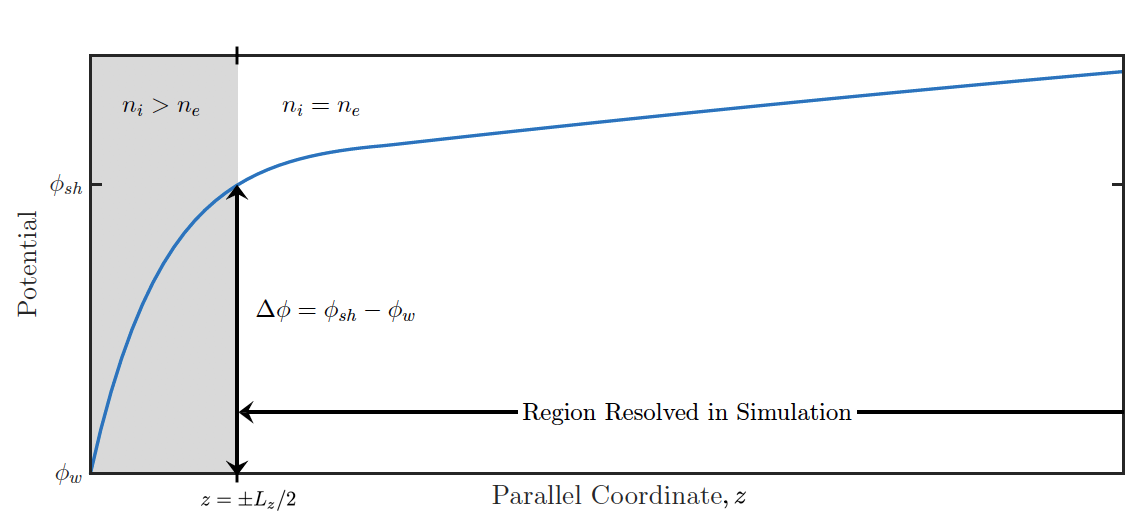}
    \caption[Illustration of the Debye sheath potential.]{Illustration of the Debye sheath potential. Since the non-quasi-neutral region (shaded) violates the gyrokinetic ordering assumptions, we cannot resolve this region directly in our simulations. Instead, we handle the sheath through model boundary conditions. This figure was adapted from \citet[][figure 2.7]{shi-thesis}.}
    \label{fig:sheath}
\end{figure}

The sheath forms because electrons move along field lines much faster than ions, resulting in electrons being initially lost more quickly to the wall. This leads to a layer of excess ions $(n_i > n_e)$ in the immediate vicinity of the wall, which breaks the quasi-neutrality condition. The plasma responds by generating an electric potential that drops near the wall, as shown in \cref{fig:sheath}, which accelerates ions into the wall and reflects low-energy electrons. A quasi-steady state is established, such that the fluxes of ions and electrons into the wall are approximately balanced so that the parallel outflow is roughly ambipolar.

Gyrokinetics, which assumes quasi-neutrality $(n_i = n_e)$, cannot handle the sheath directly. Apart from violating the gyrokinetic quasi-neutrality assumption, the length and time scales are also beyond the ordering regime of gyrokinetics ($\omega \ll \Omega_i$, $k_\perp \rho_i \sim 1$);
the sheath is a few electron Debye lengths wide $(\lambda_{De}\ll \rho_i)$, and it forms on the order of the electron plasma frequency $(\omega_{pe}\gg \Omega_i)$. Thus, we cannot resolve the sheath directly in our gyrokinetic simulations. Instead, we handle the sheath through model boundary conditions. 

\begin{figure}
    \centering
    \includegraphics[width=\textwidth   ]{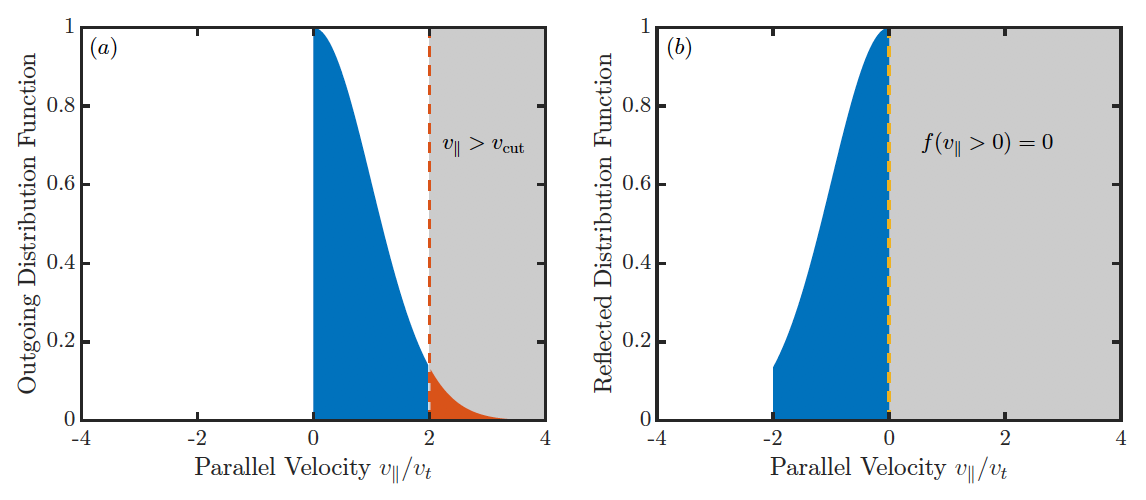}
    \caption[Illustration of the outgoing and reflected electron distribution function due to the conducting sheath boundary condition.]{Illustration of the outgoing ($a$) and reflected ($b$) electron distribution function due to the conducting sheath boundary condition. High-energy electrons with $v_\parallel > v_\mathrm{cut} = \sqrt{2|e|\Delta\Phi/m_e}$ are lost to the wall (orange region) while the lower energy electrons are reflected (blue region). This figure is from \citet[][figure 2.8]{shi-thesis}.}
    \label{fig:sheath2}
\end{figure}

We use a conducting-sheath boundary condition \citep{shi2017,shi2019}, which involves using the potential at the $z$ domain boundaries (obtained by solving the gyrokinetic Poisson equation on the whole domain) as the sheath potential, $\Phi_{sh}(x,y) = \Phi(x,y,z=z_{sh})$, with $z_{sh}=\pm L_z/2$ the $z$ domain boundaries.
By assuming that there is an unresolved non-quasi-neutral region in which the sheath potential drops to some potential at the wall, $\Phi_w$ (which is taken to be zero for a grounded wall), we can use the difference $\Delta \Phi = \Phi_{sh} - \Phi_w$ to reflect particles with $m_s v_\parallel^2/2 < -q_s\Delta \Phi$. For a typical sheath with $\Delta \Phi > 0$, this means that outgoing low-energy electrons ($q_s=-|e|$) will be reflected back into the domain, while high-energy electrons and all ions will be lost to the wall. The resulting reflected electron distribution function is shown in \cref{fig:sheath2}$b$.
Note that unlike in the standard logical sheath boundary condition \citep{parker1993suitable}, we have not directly imposed that the ion and electron currents at the sheath entrance be equal at all times. Instead, the conducting-sheath boundary condition allows local current fluctuations in and out of the sheath. We do not, however, impose the Bohm sheath criterion that ions must be supersonic as they enter the sheath \citep{bohm1949,stangeby2000}. This is one area of potential improvement to our model sheath boundary conditions. Another area of future work is accounting for the shallow incidence angle of the field lines intersecting the wall plates, leading to the development of the Chodura sheath \citep{chodura1982}. Recent work has studied the implications of the Chodura magnetic pre-sheath for gyrokinetic particle dynamics \citep{geraldini2017}. 

\section{Proof of concept: results from the first nonlinear electromagnetic gyrokinetic simulations on open field lines} \label{sec:emgk-res}
We now present preliminary nonlinear electromagnetic results from \gke. As detailed above, we simulate turbulence on helical, open field lines as a rough model of the tokamak scrape-off layer, using a flux-tube-like domain on the outboard side that wraps {helically around the torus} and terminates on conducting plates at each end in $z$. A cartoon diagram of our setup is shown in \cref{fig:heli-approx}.
These simulations are a direct extension of the work of \citet{shi2019} to include electromagnetic fluctuations. This work comprises the first published electromagnetic gyrokinetic results on open field lines, as detailed in \citet{mandell2020,hakim2020a}.

\subsection{Simulation setup}

The simulation box is centered at $(x,y,z)=(R_c,0,0)$ with dimensions $L_x=50\rho_{\mathrm{s}0}\approx 14.6$ cm, $L_y=100\rho_{\mathrm{s}0}\approx 29.1$ cm, and $L_z=L_\mathrm{pol}/\sin\chi=8$ m, where $L_\mathrm{pol}=2.4$ m and $\rho_{\mathrm{s}0}=c_{\mathrm{s}0}/\Omega_i$. {Note that although the domain that we simulate is a flux tube, the simulations are not performed in the local limit; the simulations include radial variation of the magnetic field and the profiles, and are thus effectively global.} The radial boundary conditions model conducting walls at the radial ends of the domain, given by the Dirichlet boundary condition $\Phi=A_\parallel=0$. The condition $\Phi=0$ prevents $E\times B$ flows into walls, while $A_\parallel=0$ makes it so that (perturbed) field lines never intersect the walls. For the latter, one can think of image currents in the conducting wall 
that mirror currents in the domain, resulting in exact cancellation of the perpendicular magnetic fluctuations at the wall. Also note that in this simple magnetic geometry the magnetic drifts do not have a radial component. Thus these radial boundary conditions on the fields are sufficient to ensure that there is no flux of the distribution function to the radial boundaries. 
Periodic boundary conditions are used in the $y$ direction. As discussed in the previous section, conducting-sheath boundary conditions are applied to the distribution function in the $z$ direction, with the end-plates taken to be grounded so that $\Phi_w = 0$. The fields do not require a boundary condition in the $z$ direction since only perpendicular derivatives appear in the field equations. The velocity-space grid has extents $-4v_{ts}\leq v_\parallel \leq 4 v_{ts}$ and $0\leq\mu\leq6T_{s0}/B_0$, where $v_{ts}=\sqrt{T_{s0}/m_s}$ and $B_0=B_\text{axis}R_0/R_c$. We use piecewise-linear ($p=1$) basis functions, with $(N_x,N_y,N_z,N_{v_\parallel},N_\mu)=(16,32,10,10,5)$ the number of cells in each dimension. For $p=1$ DG, one should double each of these numbers to obtain the equivalent number of grid-points for comparison with standard grid-based gyrokinetic codes, or with the number of particles per cell in PIC codes. This level of moderate velocity resolution ($\sim 200$ velocity grid-points per spatial grid-point) has been shown to be quite adequate for these types of problems \citep{candy2006}, where strong turbulence broadens the velocity resonances that might otherwise require high resolution to resolve. Further, since our algorithms conserve energy and particles, we do not need to increase velocity resolution to reduce conservation errors like in other non-conservative codes. Note however that the velocity resolution is far above that of Braginskii fluid codes, which typically keep only several fluid moments ($\sim$ velocity degrees of freedom). This will be more important when simulating the less-collisional pedestal region, where the Braginskii system is not strongly valid. 

\begin{figure}[t]
    \centering
    \includegraphics[width=\textwidth]{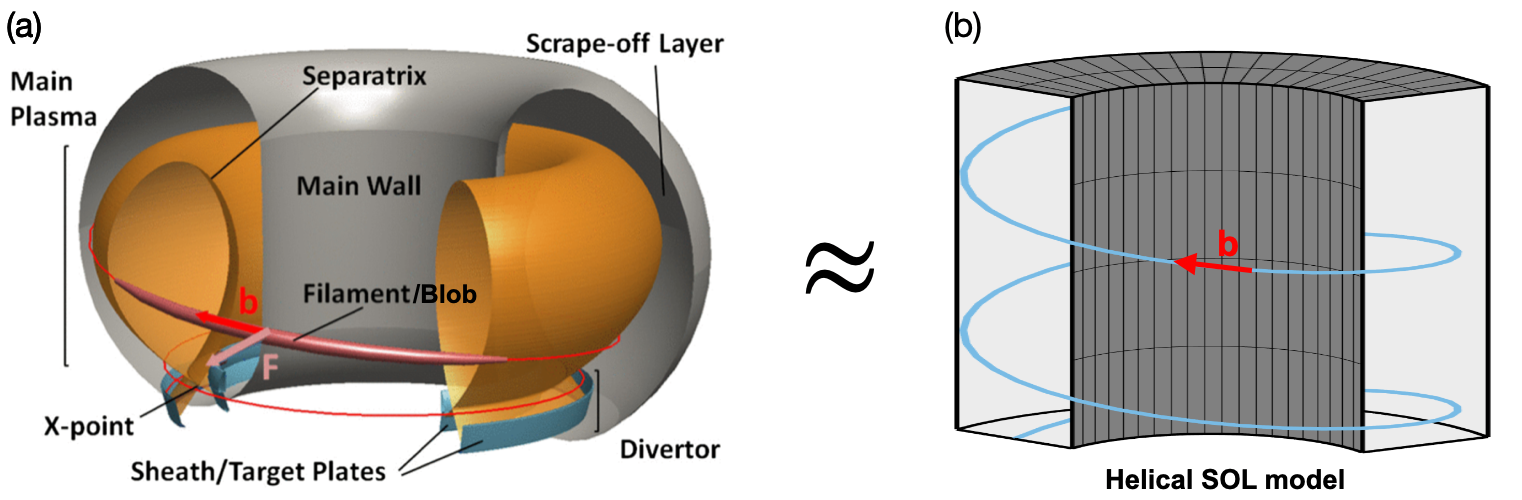}
    \caption[Cartoon diagrams of a tokamak SOL and the helical SOL model.]{$(a)$ Cartoon diagram of tokamak SOL, showing a blob filament elongated along the field line, adapted from \citet{Carralero2015}. $(b)$ Cartoon diagram of the helical SOL model.}
    \label{fig:heli-approx}
\end{figure}

The simulation parameters are similar to those used in \citet{shi2019}, roughly approximating an H-mode deuterium plasma in the NSTX SOL: $B_\text{axis}=0.5$ T, $R_0=0.85$ m, $a=0.5$ m. We use $T_{e0}=T_{i0}=40$ eV to set the velocity grid extents; these values approximate the temperatures that we expect in the simulation, and are used in the initial conditions, but the temperatures are free to evolve during the simulation. For the particle source, we use the same form as in \citet{shi2019} but we increase the source particle rate by a factor of 10 to access a higher $\beta$ regime where electromagnetic effects will be more important. This implies that the total power into the SOL is $P_\mathrm{SOL} = 54$ MW, and the total power into the simulation domain (which is a flux tube that covers a fraction of the SOL) is $P_\mathrm{src} = P_\mathrm{SOL} L_y L_z / (2\pi R_c L_\mathrm{pol}) = 6.2$ MW. The source is localized in the region $x<x_S +3\lambda_S$, with $x_S=R_c-0.05$ m and $\lambda_S=5\times10^{-3}$ m. The location $x=x_S +3\lambda_S$, which separates the source region from the SOL region, can be thought of as the separatrix. A floor of one tenth the peak particle source rate is used near the midplane to prevent regions of $n\ll n_0$ from developing at large $x$.  (In \cref{sec:power-scans}, we drop this floor on the particle source rate, after finding that it seems sufficient to put a floor on the \emph{initial} density.) The source particle rate and temperature are shown in the $x-z$ plane in \cref{fig:source-diagram-jpp}, along with an illustration of the boundary conditions. Unlike in \citet{shi2019} we do not use numerical heating to keep $f>0$ despite the fact that our DG algorithm does not guarantee positivity. While the simulations appear to be robust to negative $f$ in some isolated regions, lowering the source floor in the SOL region can sometimes lead to simulation failures due to positivity issues at large $x$. A more sophisticated algorithm for ensuring positivity is in progress, and detailed in \cref{ch:positivity}.

\begin{figure}[t!]
    \centering
    \includegraphics[width=\textwidth]{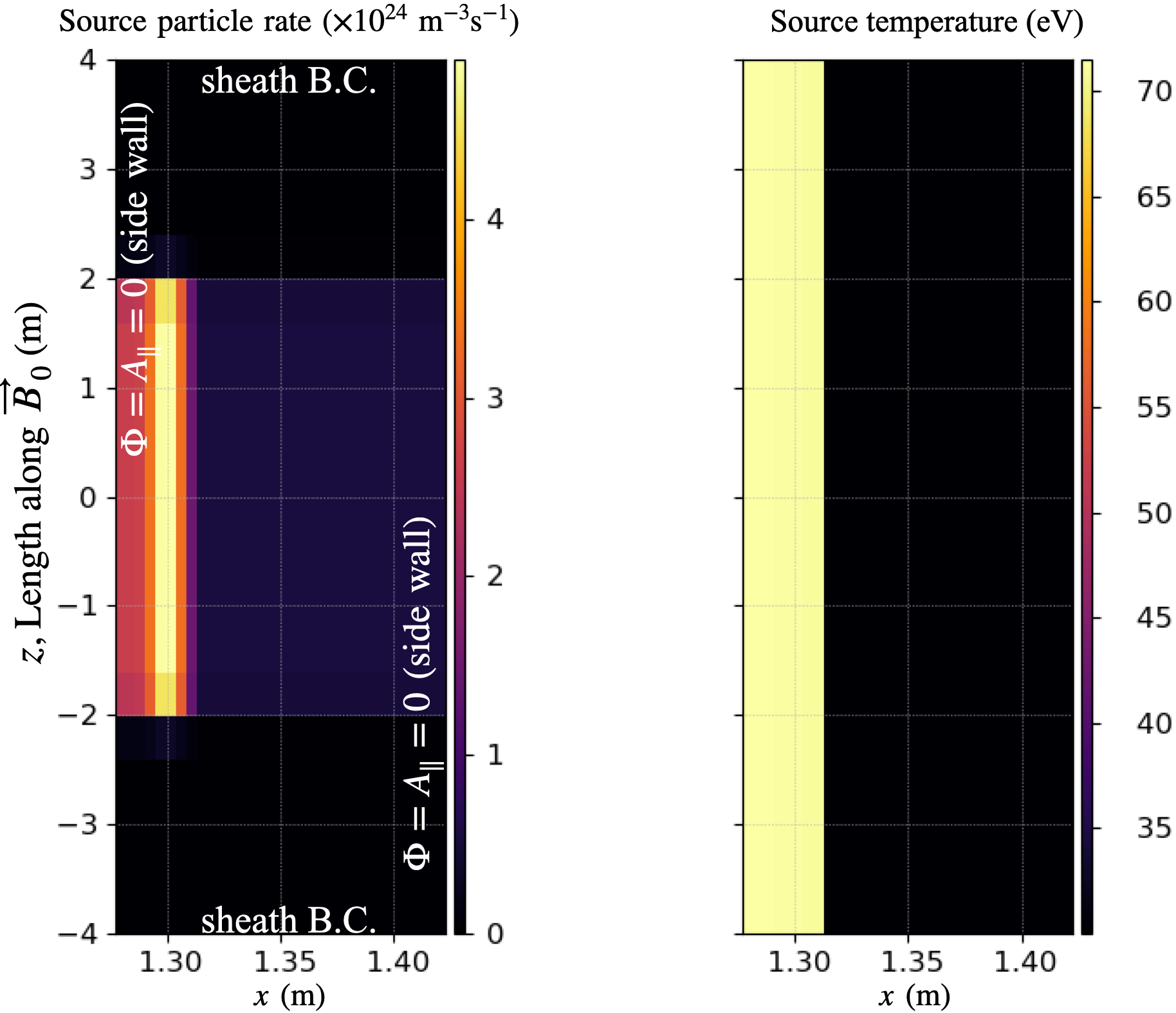}
    \caption[Diagram of the source particle rate and temperature for proof-of-principle simulations.]{Diagram of the source particle rate and temperature. In these cases, a floor of one tenth the peak particle source rate is used in the SOL region to prevent negative density regions from developing. Also shown in the left plot are the boundary conditions: Dirichlet at the radial boundaries with $\Phi=A_\parallel=0$, and sheath boundary conditions along the field line. Not pictured are periodic boundary conditions in the $y$ direction (out-of-plane here).}
    \label{fig:source-diagram-jpp}
\end{figure}

We also artificially lower the collision frequency to one tenth the physical value to offset the increased particle source rate so that the time-step limit from collisions does not become too restrictive. Further, in these initial simulations, we model only ion--ion and electron--electron collisions; cross-species collisions are not included in this section, but they are included in the simulations in \cref{sec:power-scans}. As a result, the typical ion-ion mean free path is $\lambda_{ii} \sim 3$ m, and the typical electron-electron mean free path is $\lambda_{ee} \sim 1$ m. 

The simulations were run in this configuration to  $t=1$ ms, with a quasi-steady state being reached around $t=600\ \mu\text{s}$ when the sources balance losses to the end plates. For reference, the ion transit time is $\tau_i=(L_z/2)/v_{ti}\approx 50\ \mu\text{s}$.
In terms of computational cost, the electromagnetic simulation is less than twice as expensive as the corresponding electrostatic simulation on a per-time-step basis. On 128 cores, the time per time step was 0.41 s for the electrostatic simulation and 0.68 s for the electromagnetic simulation. The increased cost is due to the additional field solves required for Ohm's law, along with additional terms in the gyrokinetic equation. However, due to time-step restrictions on an electrostatic simulation due to the electrostatic shear Alfv\'en mode (also known as the $\omega_H$ mode) \citep{lee1987gyrokinetic}, the electromagnetic simulation makes up some of the additional cost by taking slightly larger time steps. The total wall-clock time (on 128 cores) for the electrostatic simulation was approximately 65 h, and the electromagnetic simulation took about 82 h. Altogether, the cost of these simulations is relatively modest, and the addition of electromagnetic effects only makes the simulations marginally ($\sim 25$\%) more expensive. We also note that the new version of \gke, which uses a quadrature-free modal DG scheme, is approximately 10 times faster than the previous version of \gke used in \citet{shi2019}, which used nodal DG with Gaussian quadrature. For details about the improvements from the quadrature-free modal scheme, see \citet{hakim2020}.

\subsection{Electromagnetic simulation results}

We show snapshots of the density, temperature and $\beta$ of electrons (top row) and ions (bottom row) in \cref{fig:moms}. Note that the ion density is the guiding-center ion density, which does not include the ion polarization density. The snapshots are taken at the midplane ($z=0$) at $t=620\ \mu$s. We can see a blob with a mushroom structure being ejected from the source region. We also show in \cref{fig:fields} snapshots of the electromagnetic fields taken at the same time and location. We show the electrostatic potential $\Phi$, the parallel magnetic vector potential $A_\parallel$ and the normalized magnetic fluctuation amplitude $|\delta B_\perp|/B_0=|\nabla_\perp A_\parallel|/B_0$ (top row), along with the components of the parallel electric field $E_\parallel = -\nabla_\parallel \Phi - \pderivInline{A_\parallel}{t}$ (bottom row). Note that only $\Phi$, $A_\parallel$ and $\pderivInline{A_\parallel}{t}$ are evolved quantities in the simulation, with the other quantities derived. We see that $\pderivInline{A_\parallel}{t}$ is of comparable magnitude to $\nabla_\parallel \Phi$, indicating that the dynamics is in the electromagnetic regime. Significant magnetic fluctuations of over $2.5\%$ can be seen in $|\delta B_\perp|/B_0$ in this snapshot. 

\begin{figure}[t!]
    \centering
    \includegraphics[width=\textwidth]{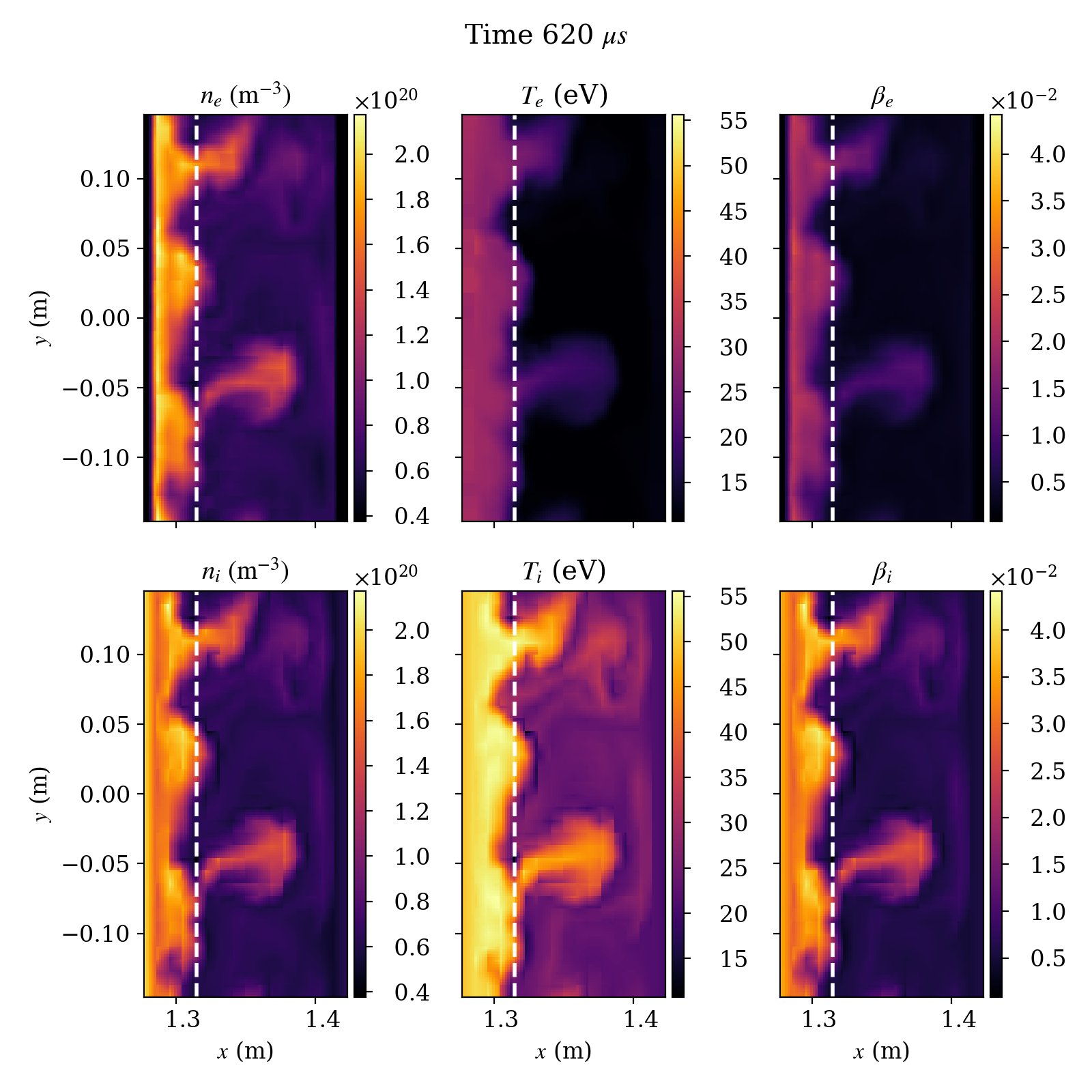}
    \caption[Snapshots of moments from an electromagnetic simulation on open, helical field lines.]{Snapshots from an electromagnetic simulation on open, helical field lines. From left to right, we show the density, temperature, and plasma beta of electrons (top row) and ions (bottom row). Note that the ion density is the guiding-center ion density, which does not include the ion polarization density.  The snapshots are taken at the midplane $(z=0)$ at $t=620\ \mu$s. The dashed line indicates the
boundary between the source and SOL regions. A blob with mushroom structure is being ejected from the source region and propagating radially outward into the SOL region.}
    \label{fig:moms}
\end{figure}

\begin{figure}[t!]
    \centering
    \includegraphics[width=\textwidth]{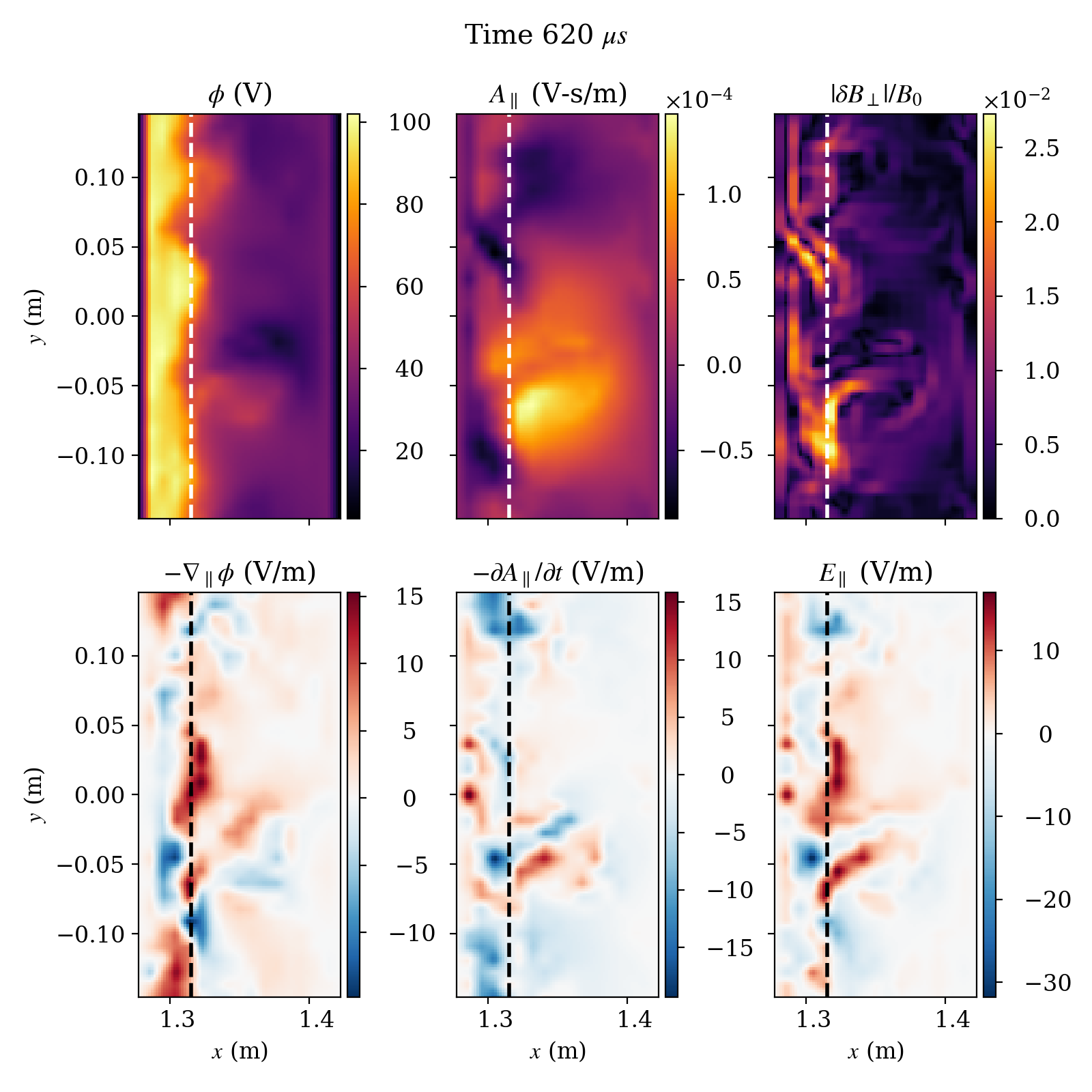}
    \caption[Snapshots of fields from an electromagnetic simulation on open, helical field lines.]{Snapshots (at $z=0$, $t=620\ \mu$s) of the electrostatic potential $\Phi$, parallel magnetic vector potential $A_\parallel$, and normalized magnetic fluctuation amplitude $|\delta B_\perp|/B_0=|\nabla_\perp A_\parallel|/B_0$ (top row), along with the components of the parallel electric field $E_\parallel = -\nabla_\parallel \Phi - \partial{A_\parallel}/\partial{t}$ (bottom row).}
    \label{fig:fields}
\end{figure}

In Figures \ref{fig:bstream-y} and \ref{fig:bstream-x} we show projections of the three-dimensional magnetic field line trajectories. These plots are created by integrating the field line equations for the total (background plus fluctuation) magnetic field. In \cref{fig:bstream-y}, each field line starts at $z=-4$ m and either $x=1.33$ m or $x=1.38$ m for a range of $y$ values and is traced to $z=4$ m. The starting points (at $z=-4$ m ) are marked with circles, while the ending points (at $z=4$ m) are marked with crosses. The trajectories have been projected onto the $x-y$ plane, and we have also plotted the ion density at $z=0$ m in the background. From left to right, we show a short time series of snapshots, with $t=230,\ 240$ and $250\ \mu$s. At $t=230\ \mu$s, a blob is starting to emerge from the source region at $y\approx0.04$ m. The field lines that start at $x=1.33$ m are beginning to be stretched radially outward as the blob emerges. In the $t=240\ \mu$s snapshot, we see that the blob is now propagating radially outward into the SOL region and the $x=1.33$ m field lines have been stretched further. The field lines that start at $x=1.38$ m are now also starting to be stretched near $y\approx0.02$ m, and they are stretched even more in the $t=250\ \mu$s snapshot as the blob continues to propagate. We can also see the remnants of another blob that was ejected near $y=-0.1$ m in previous frames. In the $t=230\ \mu$s snapshot, the field lines have been stretched by this blob, but by $t=250\ \mu$s the field lines in this region have returned closer to their equilibrium position. This behavior of blobs bending and stretching the field lines is an inherently full-$f$ phenomenon. The blobs have a higher density and temperature than the background, so they raise the local plasma $\beta$ as they propagate. This causes the field lines to move with the plasma, allowing the fields lines to be deformed and stretched by the radially propagating blobs and ultimately leading to larger magnetic fluctuations. This behavior has been seen in some electromagnetic Braginskii fluid modeling of SOL blobs \citep{lee2015,lee2015a, hoare2019}, but this is the first time this behavior has been shown with an electromagnetic full-$f$ gyrokinetic model in the SOL. The referenced fluid modeling has also focused on seeded blob simulations, whereas in our simulations the blobs form self-consistently. 

\begin{figure}
    \centering
    \includegraphics[width=\textwidth]{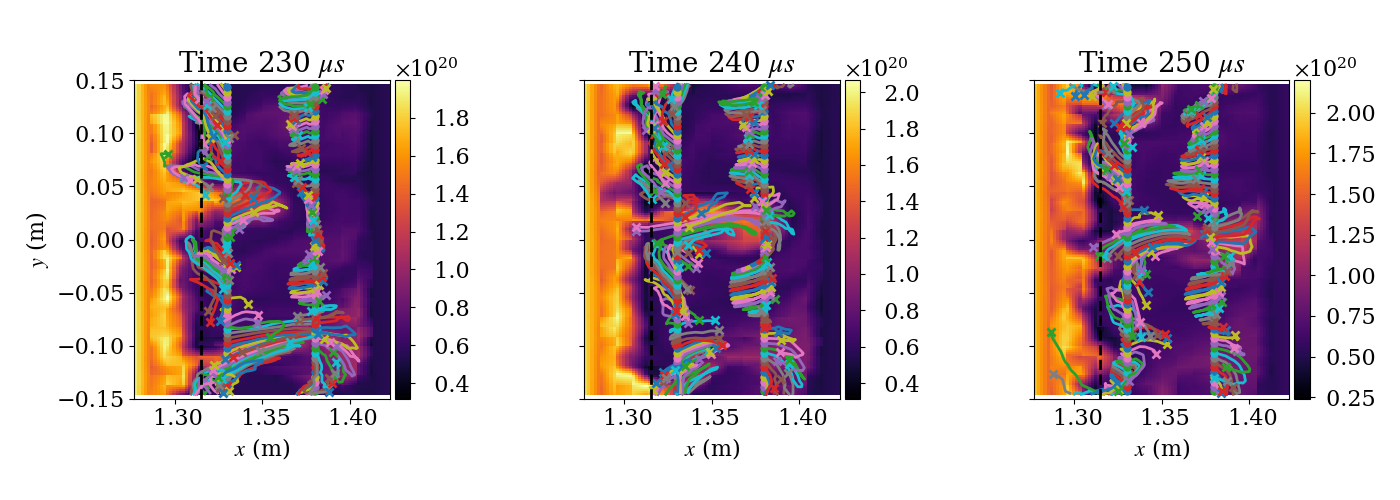}
    \caption[Three-dimensional magnetic field line trajectories for a short timeseries.]{Three-dimensional magnetic field line trajectories at $t=230$, $240$, and $250$~$\mu$s, projected onto the $x-y$ plane.  The ion density at $z=0$~m is plotted in the background.  Each field line starts at $z=-4$~m and either $x=1.33$~m or $x=1$.38~m for a range of $y$ values and is traced to $z=4$~m. The starting points are marked with circles and the ending points are marked with crosses. Focusing on the blob that is being ejected near $y=0$~m, we see that field lines are stretched and bent by the blob as it propagates into the SOL region. In previous frames (not shown) a blob was also ejected near $y=-0.1$~m. At $t=230$~$\mu$s the field lines are still stretched from this event, but they return closer to their equilibrium position by $t=250$~$\mu$s. (A full animation of this timeseries is included in the online supplementary materials of \citet{mandell2020}.)} 
    \label{fig:bstream-y}
    \centering
    \includegraphics[width=.8\textwidth]{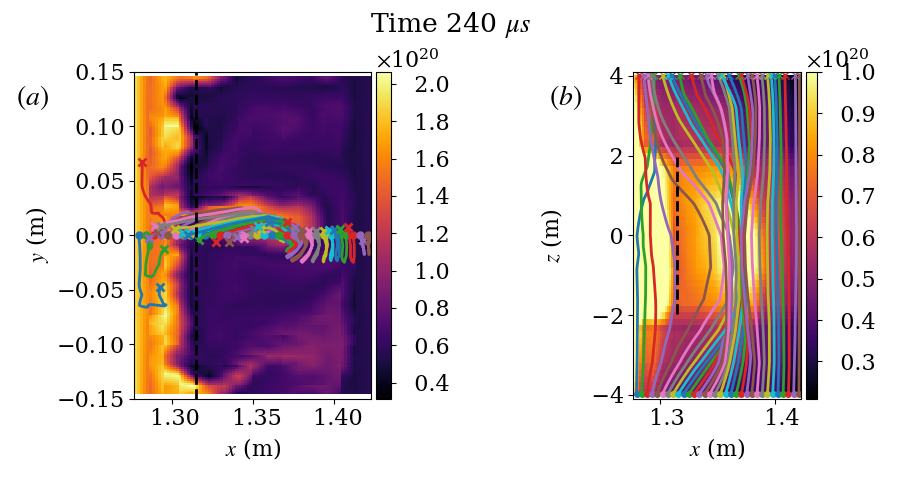}
    \caption[Three-dimensional magnetic field line trajectories in the midplane and along the background field.]{Three-dimensional magnetic field line trajectories at $t=240\ \mu$s, projected onto the $x-y$ plane in $(a)$ and the $x-z$ plane in $(b)$. The ion density is plotted in the background, at $z=0$ m in $(a)$ and averaged over $|y|<0.02$ m in $(b)$. Each field line starts at $y=0$ m and $z=-4$ m for a range of $x$ values and is traced to $z=4$ m. The starting points are marked with circles and the ending points are marked with crosses. Each field line is colored the same in both $(a)$ and $(b)$. The field lines in the near-SOL are stretched radially outward by a blob near $y=0$ m. }
    \label{fig:bstream-x}
\end{figure}
In \cref{fig:bstream-x} we show a slightly different view of the field-line trajectories at $t=240\ \mu$s. Field lines are still traced from the bottom ($z=-4$ m) to the top ($z=4$ m), but now each field line starts at $y=0$ m for a range of $x$. The starting points are again marked with circles and the ending points are marked with crosses. We have projected the three-dimensional trajectories onto the $x-y$ plane in \cref{fig:bstream-x}$(a)$, and onto the $x-z$ plane in \cref{fig:bstream-x}$(b)$. In $(a)$ we again plot the ion density at $z=0$ m in the background; in $(b)$ the ion density has been averaged over $|y|<0.02$ m. As can be seen in \cref{fig:bstream-x}$(b)$, the blob propagating near $y\approx0$ m has stretched several field lines radially outward near the midplane. These bowed-out field lines originate from a range of $x$ values, $1.3\ $m $\lesssim x \lesssim 1.35$ m, and have all been dragged along with the blob as it was ejected from the source region and propagated radially outward. We also see some degree of line-tying in these plots, with many of the field lines ending at a similar point in $x-y$ space to where they began, despite being stretched near the midplane. {The field lines are not perfectly line-tied, however; if they were, the crosses would perfectly align with their corresponding circles in the $x-y$ projections. Because our sheath boundary condition allows current fluctuations at the sheath interface, we can model the finite resistance of the sheath, which makes line-tying only partial \citep{kunkel1966interchange}. This allows the footpoints of the field lines to slip at the sheath interface \citep{ryutov2006}. Examining \cref{fig:bstream-y} and \cref{fig:bstream-x}, we see evidence of this in the simulation, with most of the end points moving slowly and smoothly in the vicinity of their origin, especially at larger $x$. In the source region, however, there are other field lines whose end points suddenly jump further away from their origin. This suggests that we are seeing line breaking 
(reconnection) due to electron inertia effects and numerical diffusion, as field lines are pushed close together by large perturbations in the source region.
}

\begin{figure}
    \centering
    \includegraphics[width=.6\textwidth]{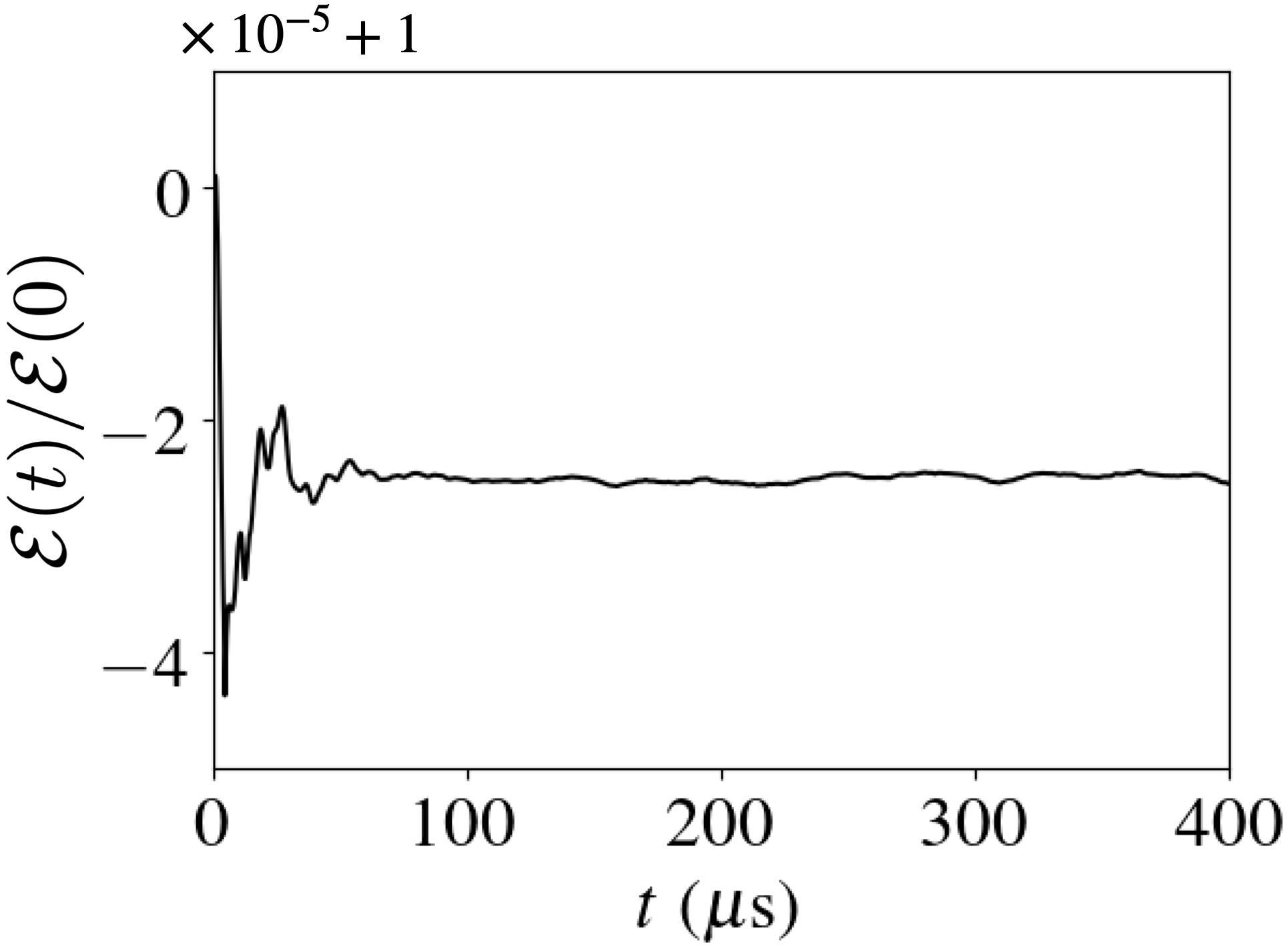}
    \caption{Energy conservation in the electromagnetic NSTX-like simulation, accounting for sources and sinks.}
    \label{fig:em-energy}
\end{figure}


We also show in \cref{fig:em-energy} a time trace of the total energy in the system, accounting for sources and losses to the sheath. 
This is given by
\begin{gather}
    \mathcal{E} = \mathcal{E}_\mathrm{tot} - \mathcal{E}_\mathrm{src} + \mathcal{E}_\mathrm{loss},
\end{gather}
with
\begin{align}
    \mathcal{E}_\mathrm{tot} &= \mathcal{E}_H - \mathcal{E}_E + \mathcal{E}_B \notag \\ &= \sum_s \int_\mathcal{T} \mathcal{J}f_{s\,h} H_{s\,h}\,\dx{^3\vec{R}}\,\dx{^3\vec{v}} -
    \int_\mathcal{T}  \frac{\epsilon_{\perp h}}{2}|\nabla_\perp \dashover{\Phi}\mskip0.01\thinmuskip_h|^2\,\dx{^3\vec{R}} + \int_\mathcal{T} \frac{1}{2\mu_0}|\nabla_\perp A_{\parallel h}|^2 \dx{^3\vec{R}}
\end{align}
and
\begin{gather}
    \mathcal{E}_\mathrm{src} = \sum_s \int \dx{t} \int_\mathcal{T} S_{s\,h} H_{s\,h}\,\dx{^3\vec{R}}\,\dx{^3\vec{v}} \\
    \mathcal{E}_\mathrm{loss} = \sum_s \int \dx{t} \oint_{\partial \mathcal{T}} H_{s\,h}^- \widehat{\mathcal{J} f_{s\,h}} \dot{\vec{R}}_h \cdot \dx{\vec{s}_R} \dx{^3\vec{v}} = \sum_s \int \dx{t} \int H_{s\,h}^- \widehat{\mathcal{J} f_{s\,h}} \uv{b} \cdot \dot{\vec{R}}_h \dx{x}\,\dx{y}\, \dx{^3\vec{v}} \Big|_{z_\mathrm{lower}}^{z_\mathrm{upper}}.
\end{gather}
Energy is preserved to $\sim\mathcal{O}(10^{-5})$, with these finite energy errors likely related to the discrete timestepping scheme \citep{shi-thesis}. 

\subsection{Electrostatic-electromagnetic qualitative comparison}

We have also run a corresponding electrostatic simulation in this configuration for direct comparison. This simulation is identical in configuration to the $L_z=8$ m case from \citet{shi2019} except for the increased particle source rate and lack of cross-species collisions.

An analysis of the blob dynamics in the two cases reveals differences that are supported by theory.  In the electrostatic case, the electron density response is strongly adiabatic. We can see this in \cref{fig:es-adiabatic}, where we break the electron density into adiabatic and non-adiabatic parts. To compute the adiabatic part, we assume that electrons are sufficiently fast to isothermalize along the field line and rapidly communicate the sheath potential upstream, so that parallel force balance becomes
\begin{equation}
    T_e \nabla_\parallel n_e \approx - e n_e E_\parallel = e n_e \nabla_\parallel \Phi.
\end{equation}
The resulting adiabatic density response is given by integrating this equation along the field line subject to the sheath boundary conditions, yielding
\begin{equation}
    n_\mathrm{adiab}(z) = n_\mathrm{sheath} \exp\left[e (\Phi(z) - \Phi_\mathrm{sheath})/T_e\right]. \label{neadiab}
\end{equation}
By subtracting the adiabatic density from the full electron density, we find that non-adiabatic density fluctuations are only of order $1\%$ in the electrostatic case. 
\begin{figure}[t!]
    \centering
    \includegraphics[width=\textwidth]{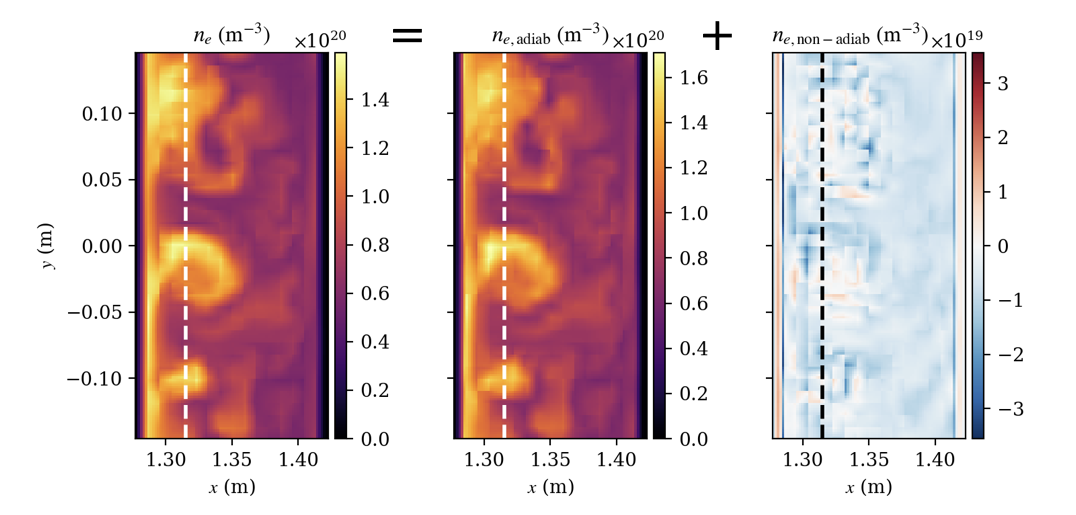}
    \caption[Electrostatic case: adiabatic and non-adiabatic components of electron density.]{Breaking the electron density into adiabatic and non-adiabatic components shows that the electrons are strongly adiabatic in the electrostatic case.}
    \label{fig:es-adiabatic}
    \includegraphics[width=.75\textwidth]{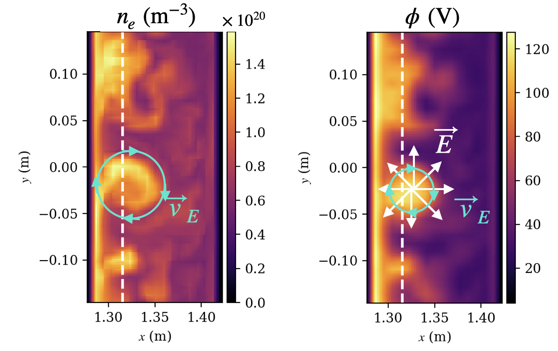}
    \caption[Electrostatic case: Boltzmann spinning of blobs.]{In the electrostatic case, blobs spin due to the Boltzmann spinning effect since the electrons are strongly adiabatic. \citep{angus2012}.}
    \label{fig:es-blob-spin}
\end{figure}
As a result of the strongly adiabatic dynamics, the blobs spin via the Boltzmann spinning effect \citep{angus2012}.  To see the origins of this effect, we rearrange \cref{neadiab} to find the blob potential along the field line, 
\begin{equation}
    \Phi_\mathrm{blob}(z) \approx \Phi_\mathrm{sheath} + (T_e/e) \ln\left( n(z) / n_\mathrm{sheath}\right). \label{boltzmann-blob}
\end{equation}
When the midplane ($z=0$) density is greater than the density at the endplates so that $n(0) > n_\mathrm{sheath}$, a radial (with respect to the blob center) variation in the blob density can give a radial variation in the blob potential via the second term in \cref{boltzmann-blob}. Since the blob density is peaked in the center of the blob, the resulting electric field then points radially outward from the blob center. This produces an $E\times B$ drift that spins the blob about its center, which is what we see in the electrostatic simulation, as shown in \cref{fig:es-blob-spin}. 

\begin{figure}[t!]
    \centering
    \includegraphics[width=\textwidth]{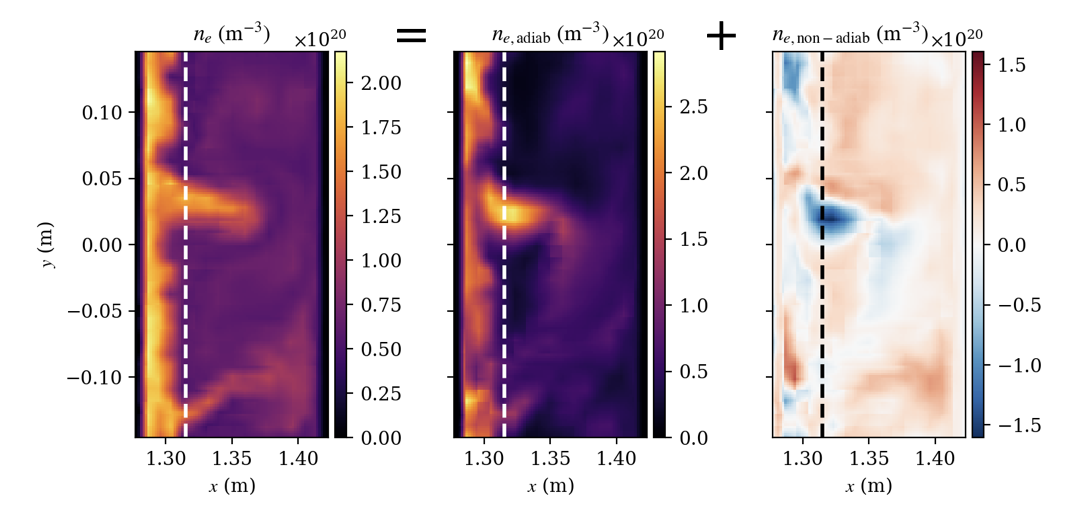}
    \caption[Electromagnetic case: adiabatic and non-adiabatic components of electron density.]{
    The electron density is moderately non-adiabatic in the electromagnetic case (though not so strongly non-adiabatic as to give an MHD-like response with $E_\parallel=0$).}
 \label{fig:em-adiabatic}
    \includegraphics[width=.75\textwidth]{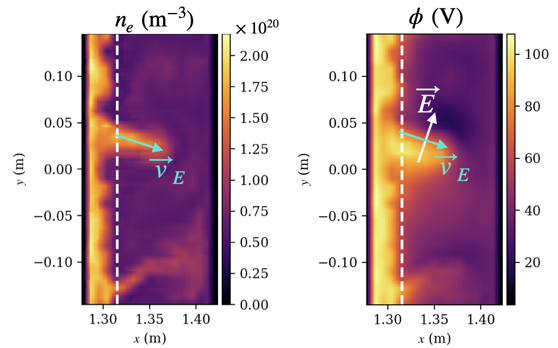}
    \caption[Electromagnetic case: ballistic radial blob propagation.]{In the electromagnetic case, blobs propagate ballistically radially outwards, which suggests electrical disconnection from the sheath \citep{krasheninnikov2008,dippolito2011}.}
    \label{fig:em-blob}
\end{figure}

When we make a similar comparison in the electromagnetic case, we find that the electron density is moderately \emph{non}-adiabatic, as shown in \cref{fig:em-adiabatic}, with adiabatic and non-adiabatic fluctuations on the same order. Note however that the electrons are not so strongly non-adiabatic as to give an MHD-like response with $E_\parallel=0$ (which would require $n_{e,\mathrm{non-adiab}}\gg n_{e,\mathrm{adiab}}$), since $E_\parallel$ is finite as shown in \cref{fig:fields}.

%

Here, the presence of a strong inductive component of $E_\parallel$ indicates that electromagnetic effects are important, so that the propagation speed of waves along the field line (which communicate information about the sheath to the upstream plasma, for example) is limited to the Alv\'en speed, $v_A = v_{te}/ \hat{\beta}^{1/2}$. In this case $\hat{\beta} = (\beta_e/2)m_i/m_e \sim 10$, so the parallel response time, $\tau_A = L_\parallel/v_A$, is about 3 times slower than in the electrostatic case (where the parallel response time is given by the electron transit time, $\tau_e = L_\parallel /v_{te}$).  If the time $\tau_A$ is longer than the time it takes the blob to move more than its width across the field, $\tau_\perp = L_\perp / v_\perp$, the information about the sheath will never reach it, leading to electrical disconnection from the sheath. Thus the blob will move as if the sheath did not exist if $\tau_A \gtrsim \tau_\perp$, or ${\beta} \gtrsim (L_\perp/L_\parallel)^2(c_\mathrm{s}/v_\perp)^2$, where $L_\perp$ is the typical length scale of the potential of the blob, and $v_\perp$ is the blob radial velocity at the midplane  \citep{lee2015a,hoare2019}. This means that the vertical charge polarization in the blob due to the curvature drift cannot be shorted out by the sheath, and the blob moves radially outwards due to the resulting $E\times B$ drift as shown in \cref{fig:em-blob}. The simulation has self-consistently produced the same dipolar potential structure and behavior as shown in \cref{fig:blob-diagram}. Note that there are other effects, including collisional viscosity and magnetic shear, that can cause sheath disconnection apart from electromagnetic effects \citep{myra2006,krasheninnikov2008}, although the resulting blob dynamics is not always the same \citep{dippolito2011}. Even without these other effects, electrostatic blobs could still be sheath-disconnected if the parallel connection length is long enough, so that $\tau_e \gtrsim \tau_\perp$. 

We might expect that we will see significant differences in blob dynamics when comparing electrostatic and electromagnetic simulations (which otherwise have identical parameters) if the blobs are sheath-connected in the electrostatic simulations ($\tau_{ES}\sim\tau_e \gtrsim \tau_\perp$) and sheath-disconnected in the electromagnetic simulations ($\tau_{EM}\sim\tau_A \lesssim \tau_\perp$). Together this gives a condition $\tau_{e} \lesssim \tau_\perp \lesssim \tau_A$ where including electromagnetic effects in the simulation might have the greatest impact on the blob dynamics.


\section{$\beta$ dependence of SOL dynamics}
\label{sec:power-scans}
Motivated by the differences in the electrostatic vs. electromagnetic blob dynamics observed in the previous section, we will now study the effect of $\beta$ on dynamics in our model helical SOL due to electromagnetic effects. In particular, we are interested in varying the Alfv\'en speed, since this can slow the parallel electron dynamics and reduce connectivity with the sheath. Noting that the Alfv\'en speed $v_A = B/\sqrt{\mu_0 n m_i}$ depends on the density $n$ but not the temperature, we will vary $\beta \sim \beta_e = 2\mu_0 n T_e/B^2$ by varying $n$ at constant $T_e$. To do this, we perform a parameter scan of the source particle rate, which roughly controls the density in these flux-driven simulations.

The simulations in this section use the same simplified helical geometry as in \cref{sec:emgk-res}. However, here we no longer use a source floor of one tenth the peak particle source rate. In these simulations, we have found that setting a floor on the \emph{initial} density is sufficient to avoid positivity issues, which seem to be most problematic when an initial burst of blobs propagate into a region of near-zero density. Ensuring that the initial density is finite mitigates this issue. This allows the simulations to run rather robustly without simulation-crashing positivity issues, even without a finite particle source rate in the entire domain.
We also extended the domain 2 cm further radially inward and slightly modified the source profile so that the peak density is more removed from the radial boundary. We again use piecewise-linear $(p=1)$ basis functions; we also slightly increased the resolution in the $z$ direction, so that $(N_x,N_y,N_z,N_{v_\parallel},N_\mu)=(16,32,14,10,5)$. (Recall that these are the number of DG cells in each direction, and that to get the effective number of grid-points for $p=1$ one should double each number.)
Further, we have included electron-ion and ion-electron collisions here, whereas in \cref{sec:emgk-res} only same-species collisions were included.
Note that here, as in the previous section, we have artificially reduced the collision frequency to $10\%$ of its physical value. We do this in part to avoid an expensive timestep restriction from large collisionality (this could be avoided in the future by using an implicit discretization of the collision operator), but also so that we can isolate how electromagnetic effects change with density from collisional effects that also scale with density \citep{myra2006}. In reality, collisional viscosity and magnetic induction compete to slow parallel electron dynamics, with the slowest timescale dominating the behavior \citep{scott1997}. 

The base case for this parameter scan is a case with $L_{z,\mathrm{base}} = 8$ m and $P_{\mathrm{SOL},\mathrm{base}}=5.4$ MW, which is the `nominal' experimental heating power. In the base case, the profiles are mostly unchanged when electromagnetic effects are included, as can be seen in the top row of \cref{fig:power-scan-profiles}. We then scan the source particle rate by taking $P_\mathrm{SOL} = \hat{n} P_\mathrm{SOL,base}$ with $\hat{n} = \{1, 2, 3.5, 5,10\}$ at constant source temperature. \cref{fig:source-diagram} shows the profiles in the $x-z$ plane of the source particle rate and the source temperature for the base case, along with the boundary conditions (which are the same as in \cref{sec:emgk-res}).

\begin{figure}[t!]
    \centering
    \includegraphics[width=.5\textwidth]{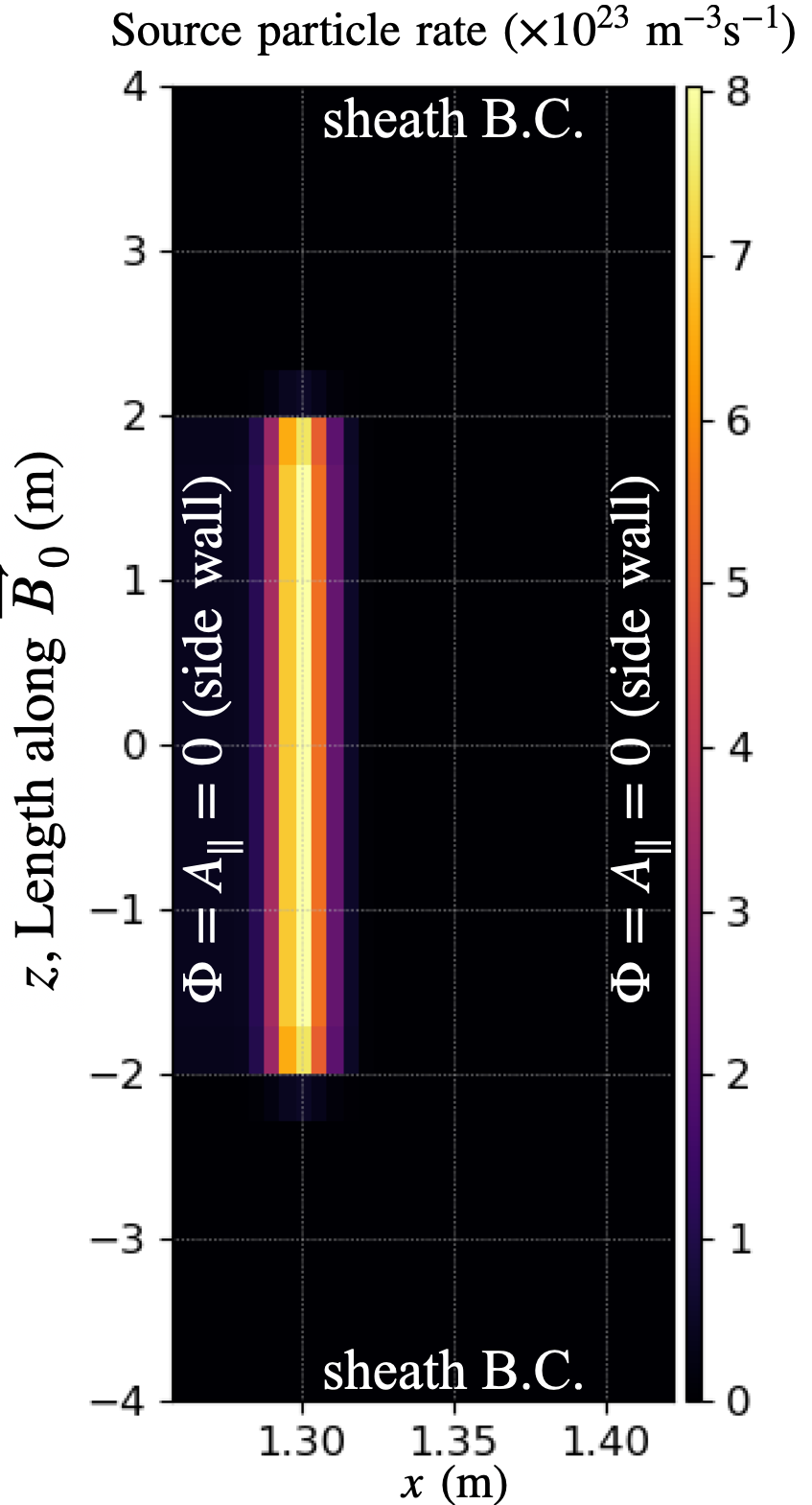}
    \caption[Diagram of the source particle rate and temperature for the power scan simulations.]{Diagram of the source particle rate for the base case ($\hat{n}=1$). The source temperature is  70 eV. For other cases the source profiles are identical, except for a scaling factor $\hat{n}$ on the source particle rate. Also shown in the left plot are the boundary conditions: Dirichlet at the radial boundaries with $\Phi=A_\parallel=0$, and sheath boundary conditions along the field line. Not pictured are periodic boundary conditions in the $y$ direction (out-of-plane here).}
    \label{fig:source-diagram}
\end{figure}

\subsection{Midplane radial profiles and gradients}

In \cref{fig:power-scan-profiles} we see time- and $y$-averaged midplane profiles of density, temperature, and $\beta$ for electrostatic and electromagnetic cases with $\hat{n} = \{1,3.5,10\}$. Electron quantities are shown with solid lines, while dashed quantities are ion guiding-center moments. The midplane density scales with source particle rate scaling factor $\hat{n}$ while the temperature does not, as one would expect. In all cases, we see that $T_i/T_e \sim 2$, which is consistent with experimental results showing $1\lesssim T_i/T_e \lesssim 4$ in the SOL \citep{kocan2011}. As $\beta \sim \hat{n}$ increases we see more differences in the profiles between the electromagnetic and electrostatic cases. At higher $\hat{n}$, electromagnetic effects seem to make the profiles steeper in the source region (shaded) and flatter in the non-source region, which we will denote as the ``SOL region''. Although the profiles in the source region are likely influenced by the form of the sources, the sources are the same in all cases (except for the scaling factor $\hat{n}$). This means that \emph{differences} in the profiles in the source region are still physical, even if the profiles themselves are not due to sensitivity to the source parameters.
\begin{figure}[t!]
    \includegraphics[width=1\textwidth]{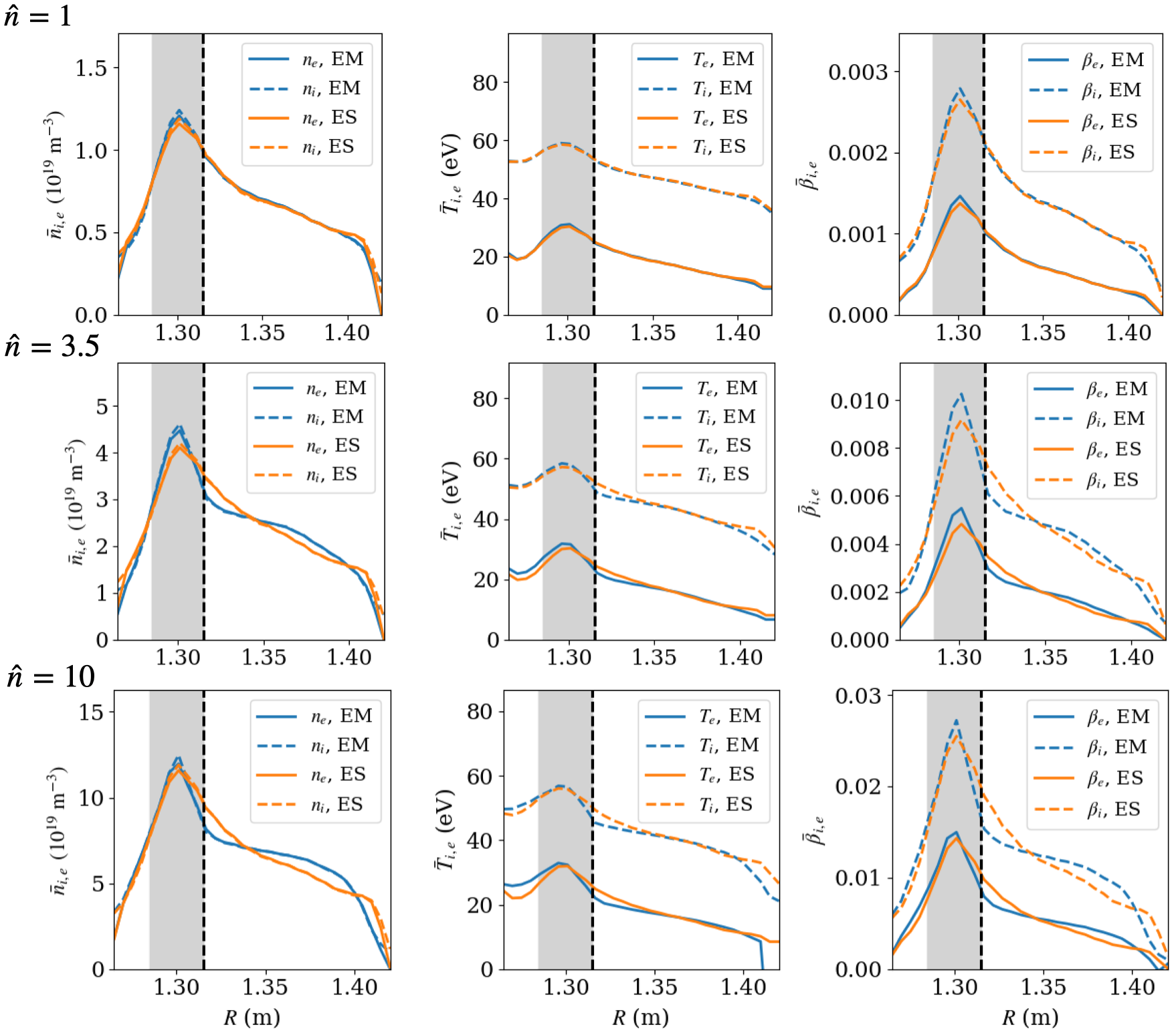}
    \caption[Midplane density, temperature, and $\beta$ profiles for a source power scan.]{Time- and $y-$averaged midplane profiles of density (left), temperature (middle), and $\beta$ (right) for electrostatic (ES) and electromagnetic (EM) cases with source scaling factor $\hat{n} = \{1,3.5,10\}$. Electron quantities are shown with solid lines, while dashed quantities are ion guiding-center moments. As $\hat{n}$ increases, electromagnetic effects cause steepening of gradients in the source region (shaded) and flattening in the SOL region.}
    \label{fig:power-scan-profiles}
\end{figure}

\begin{figure}[t!]
    \includegraphics[width=1\textwidth]{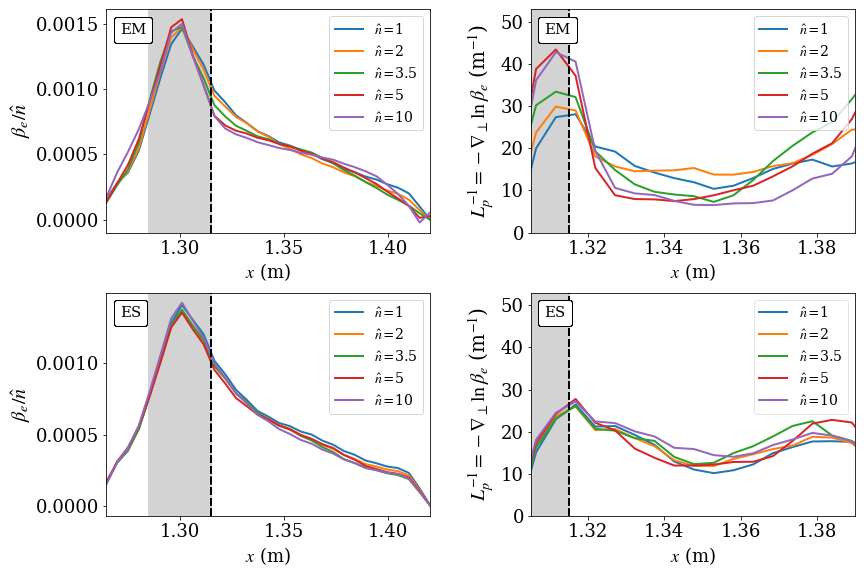}
    \caption[Comparison of electron $\beta$ profiles from electromagnetic and electrostatic cases in power scan.]{On the left, electron $\beta$ profiles normalized to power scaling factor $\hat{n}$ for electromagnetic (EM, top) and electrostatic (ES, bottom) cases. On the right, inverse pressure gradient scale lengths for each case, $L_p^{-1} =- \nabla_\perp \ln \beta_e$. While profiles and gradients change very little with $\hat{n}$ in the electrostatic cases, the electromagnetic cases show steepening of gradients in the source region (shaded) and flattening in the SOL region as $\hat{n}\sim \beta$ increases.}
    \label{fig:power-scan-beta-Lp-profiles}
\end{figure}

\begin{figure}[t!]
    \centering
    \includegraphics[width=.7\textwidth]{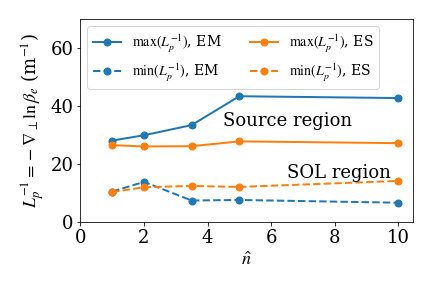}
    \caption[Maximum and minimum pressure gradient values as a function of source power scaling factor for the electromagnetic and electrostatic cases.]{Maximum (solid lines) and minimum (dashed) values of $L_p^{-1}$ as a function of $\hat{n}$ for the electromagnetic (EM) and electrostatic (ES) cases. The maximum gradients are found in the source region, while the minimum gradients are found in the SOL region. In each region, the electrostatic gradients are roughly constant as $\hat{n}$ increases. In the electromagnetic cases, the source-region gradients increase with $\hat{n}$ while the SOL gradients decrease.}
    \label{fig:power-scan-minmax-Lp}
\end{figure}

Focusing on $\beta$, on the left side of  \cref{fig:power-scan-beta-Lp-profiles} we show the electron $\beta$ profiles for the entire $\hat{n}$ parameter scan, this time normalized to $\hat{n}$. In the electromagnetic cases (top left), we can see again that as $\hat{n}$ increases, the gradients steepen in the source region, and flatten in the SOL region. In the electrostatic cases (bottom left), there is little change in the profiles as $\hat{n}$ increases. Since the collisionality is the only parameter changing with $\hat{n}$ in the electrostatic cases, this indicates that collisions are not playing a major role in changing the dynamics (at least at the 10\% reduced collisionality that we use here). Thus we are effectively isolating changes in dynamics due to electromagnetic effects as we scan $\hat{n}$. Even if the profiles look relatively similar, the changes in the gradients in the electromagnetic cases are still significant. On the right side of \cref{fig:power-scan-beta-Lp-profiles} we compute the inverse pressure gradient scale length, $L_p^{-1} = -\nabla_\perp \ln \beta_e$. Since larger values of $L_p^{-1}$ indicate steeper gradients, we can again see that increasing $\hat{n}$ gives steeper gradients in the source region, but only in the electromagnetic cases. Plotting the maximum gradient values in \cref{fig:power-scan-minmax-Lp}, we see that the  gradients in the source region increase by about $60\%$ over the electromagnetic $\hat{n}$ scan, while there is no change in the electrostatic cases. In the SOL region the gradients decrease with $\hat{n}$ in the electromagnetic cases; after plotting the minimum values of $L_n^{-1}$ in \cref{fig:power-scan-minmax-Lp}, we see that the SOL gradients fall by about $50\%$ over the scan. In the $\hat{n}=10$ case, the ratio between the gradients in the source region and the SOL region is $\sim 6$, while in the electrostatic case the ratio is only $\sim 2$.
A decrease in pressure gradient with increasing $\beta$ is consistent with the results of \citet{halpern2013}, which showed (in a circular-flux-surface geometry with a limiter on the high-field side) that there is transition between resistive and ideal ballooning modes at some critical $\beta_e$ that leads to flattening due to increased transport. 

We should note that experimental SOL profiles on NSTX are much steeper, falling off to near zero within a few centimeters of the last-closed-flux-surface. There are many effects that we are not currently modeling that could reduce transport and make the profiles steeper, including using the magnetic geometry from the experiment with magnetic shear and an X-point. This is left to future work (with some preliminary results including magnetic shear shown in \cref{sec:heli-geo}), and so for now we do not expect agreement between our profiles and the experiment. Nonetheless, we can still investigate interesting physical aspects of the simulations and the influence of electromagnetic effects on the dynamics.

\subsection{Interchange instability and $E\times B$ shear stabilization}

All of these cases are unstable to the interchange mode due to (constant) unfavorable curvature in our helical magnetic geometry. The ideal interchange growth rate is $\gamma_\mathrm{int} = \sqrt{2}c_s/\sqrt{R L_p}$, and the modes are constant along the field line so that  $k_\parallel=0$. This is analogous to the Rayleigh-Taylor instability in fluid dynamics, with unfavorable magnetic curvature giving an effective gravity $g_\mathrm{eff}=2 c_s^2/R$. 
On open field lines that end on conducting plates, true $k_\parallel=0$ ideal interchange modes are not possible because this would imply $\Phi=$ const everywhere (since $\Phi=$ const on the plates). One way to restore interchange dynamics is to consider sheath effects, which allow jumps in the potential near the ends so that we can have $k_\parallel\sim0$ in the interior with a finite electric field. The interchange growth rate can be  reduced at low $k_\perp \rho$ due to sheath boundary conditions when the current to the sheath is large, but this does not change the stability threshold \citep{myra1997}; for a nice derivation of this effect, see
\citet{shi-thesis}.  In \cref{fig:interchange}, we show the effective ideal interchange growth rate, $\gamma_\mathrm{int,eff}=\max ( \sqrt{2}c_s/\sqrt{R L_p} )$, for each electromagnetic and electrostatic case, computed using the maximum value of $\gamma_\mathrm{int}$ in the source region in each case. This does not account for stabilization from sheath-connection or possible electromagnetic effects. The effective interchange growth rate increases by about 20\% with increasing $\hat{n}$ in the electromagnetic cases, and stays relatively constant in the electrostatic cases. The fact that the effective ideal growth rate increases with $\hat{n}$ suggests that there is some stabilizing effect due to electromagnetic effects that allows the gradients to steepen.

It is well known that the interchange mode can also be stabilized by shear in the velocity of plasma flows. A recent study by \citet{zhang2020} that uses a constant effective gravity (like our constant curvature in helical geometry) and a pedestal-like density profile that has radial variation in both the density and its gradient appears particularly relevant to our results. In that work it was found that short wavelength interchange modes are very efficiently stabilized by $E\times B$ shear if the shearing rate $\omega_{E\times B}=v_E'$ is comparable to (but not necessarily larger than) the interchange growth rate, with significant stabilization at $\omega_{E\times B}/\gamma_\mathrm{int}\sim 0.4$. Recent work by \citet{goldston2020} has suggested that this stabilization effect could have important implications for the trigger for pedestal formation in the L-H transition.

\begin{figure}[t!]
    \centering
    \includegraphics[width=.7\textwidth]{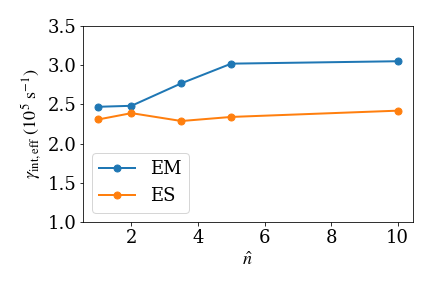}
    \caption[Effective ideal interchange growth rates for electromagnetic and electrostatic cases.]{Effective ideal interchange growth rates for the electromagnetic (EM) and electrostatic (ES) cases as a function of $\hat{n}$. The growth rate has been computed using the maximum value of $L_p^{-1}$ in the source region for each case. In the electromagnetic cases, the growth rate increases by about 20\% with increasing $\hat{n}$, while in the electrostatic cases the growth rate stays roughly constant with $\hat{n}$. These effective growth rates do not account for stabilization from sheath-connection or possible electromagnetic effects. }
    \label{fig:interchange}
    
    \includegraphics[width=\textwidth]{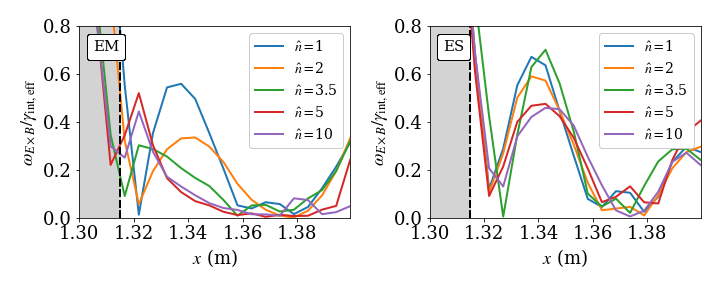}
    \caption[Ratios of the $E\times B$ shearing rate to the effective interchange growth rate.]{Ratio of the average $E\times B$ shearing rate (which varies radially), $\omega_{E\times B}$, to the effective interchange growth rate, $\gamma_\mathrm{int, eff}$. }
    \label{fig:exb-shear}
\end{figure}
In \cref{fig:exb-shear}, we show the ratio of the  average $E\times B$ shearing rate (which varies radially) to the growth rate, $\omega_{E\times B}/\gamma_\mathrm{int,max}$ for the electromagnetic and electrostatic cases. In the electrostatic cases (right), the ratio peaks in all cases near $x=1.34$ m, with the ratio decreasing somewhat with $\hat{n}$. In the electromagnetic cases, the peak in the ratio shifts radially inward in the higher $\hat{n}$ cases, so that the peak is just outside the source region near $x=1.32$ m in these cases. This is also the radial location where the gradients begin to steepen in the high-$\hat{n}$ electromagnetic cases, as shown in \cref{fig:power-scan-beta-Lp-profiles}. Thus it is plausible that elevated $E\times B$ shear just outside the source region is producing stabilization of the interchange mode. The gradients are then able to steepen until the interchange mode is destabilized again with a growth rate large enough to overcome the $E\times B$ shearing. Determining why the peak in the shearing rate moves radially inward in the electromagnetic cases is an important issue requiring further investigation.
Another possibility is that electromagnetic effects result in a change in the mode structure that allows steeper gradients. Since typically $\phi \sim T_e$, an increase in $T_e''$ that results from a steepening gradient could result in increased $E\times B$ shear. These two proposed mechanisms can also form a feedback loop, so it can be difficult to establish causality without identifying the initial trigger for the loop. Nonetheless, it is clear that electromagnetic effects are playing a key role since this behavior is not seen in the electrostatic cases. This mechanism has potential importance for pedestal formation and the L-H transition. 

\subsection{Destabilization of ballooning-type modes}

In the electromagnetic cases, ballooning-type modes with finite $k_\parallel$ can be destabilized as $\beta$ (or more precisely, the gradient of $\beta$) increases. In the core, the ideal ballooning stability parameter is typically defined as $\alpha = -q^2 R \nabla_\perp \beta = q^2 R \beta/L_p $, where $q$ is the safety factor, $R$ is the major radius, and $\beta = \beta_i + \beta_e$ is total plasma $\beta$. From the simplified ideal MHD ballooning mode equation in circular geometry \citep{coppi1977,connor1978, freidberg2014}, we have
\begin{equation}
    \pderiv{}{\theta}\left[ (1+\Lambda^2)\pderiv{X}{\theta}\right] + \alpha\left[ \hat{\omega}^2 \left(1 +\Lambda^2\right) + (\Lambda \sin\theta + \cos \theta)\right]X = 0, \label{ballooning}
\end{equation}
where $X=X(\theta)$ is the eigenfunction with $\theta$ is the ballooning angle, and $\Lambda = \hat{s}\theta - \alpha \sin\theta$ with $\hat{s}=(r/q)\dx{q}/\dx{r}$ the magnetic shear. From this one can obtain the complex frequency of the ballooning mode, $\hat{\omega} \equiv \omega/\gamma_\mathrm{int}$. In the core, ballooning is the result of unfavorable magnetic curvature on the outboard (low-field) side of the tokamak and favorable curvature on the inboard (high-field) side. This means the mode is most unstable on the outboard side, resulting in eigenmodes that peak (or balloon) on the outboard side. For circular flux surfaces, this variation in the curvature (and hence the ballooning drive) is given by the sinusoidal terms in \cref{ballooning}, with $\theta=0$ the outboard side where $\cos\theta$ is maximized. 

In our simple helical geometry, we neglect magnetic shear and Shafranov shift ($\Lambda =0$). We also have no favorable curvature, so that we take $\theta=0$ in the curvature term; our entire domain is effectively on the outboard side. Transforming the ballooning coordinate to be the length along the field line via $z=qR \theta$,
the result is the simple equation
\begin{equation}
    \pderiv{^2 X}{z^2} + \frac{\alpha}{q^2 R^2}(\hat{\omega}^2+1)X = 0.
\end{equation}
If we have constant curvature and no favorable curvature region, why should we have ballooning? The answer lies in the boundary condition along the field line. We have open field lines that end on conducting plates, resulting in line-tying. This means the footpoints of the field lines stay relatively fixed, while near the midplane the field lines are free to bend and bow with the plasma, as we saw in \cref{fig:bstream-y,fig:bstream-x}.  The result is the line-tied ballooning mode \citep{cowley1985,cowley1997,zhu2006}.

As a simple way to account for line-tying, we can constrain the eigenmode to vanish at the ends of our finite domain at $z=\pm L_z/2$. This means that the eigenmode is constrained to be a Fourier mode with wavenumber $k_\parallel = \ell\pi/L_z$, with $\ell$ some integer, so that we have  
\begin{equation}
    k_\parallel^2 = \frac{\alpha}{q^2 R^2}(\hat{\omega}^2+1)
\end{equation}
To find the critical value of $\alpha$ for instability, we take $\ell=1$ to get the lowest $k_\parallel$ eigenmode that has a zero crossing at the ends of the domain. This gives
\begin{equation}
    \frac{\alpha}{q^2 R^2}(\hat{\omega}^2+1) = \left(\frac{\pi}{L_z}\right)^2.
\end{equation}
Defining a new $\alpha$-like parameter for our helical SMT geometry,
\begin{equation}
 \alpha^\mathrm{SMT} \equiv \frac{L_z^2}{\pi^2}\frac{\alpha}{q^2 R^2}  = \frac{L_z^2}{\pi^2}\frac{\beta_e+\beta_i}{R L_p},
\end{equation}
the ideal ballooning instability growth rate is
\begin{equation}
    \gamma_\mathrm{bal} = \gamma_\mathrm{int}\sqrt{1 - \frac{1}{\alpha^\mathrm{SMT}}}. \label{gamma-bal}
\end{equation}
This gives an instability threshold of $\alpha^\mathrm{SMT}\gtrsim 1$. The growth rate is below the ideal interchange growth rate for all $\alpha^\mathrm{SMT}$, approaching $\gamma_\mathrm{int}$ from below for $\alpha^\mathrm{SMT} \gg 1$. \citet{halpern2013} showed a similar calculation for circular flux surfaces, and also showed that the threshold can be lowered due to non-ideal effects not captured in the simple derivation above. Additional sheath-related modifications could also be required, similar to the sheath-modified interchange mode \citep{myra1997}.

\begin{figure}[t!]
    \centering
    \includegraphics[width=.7\textwidth]{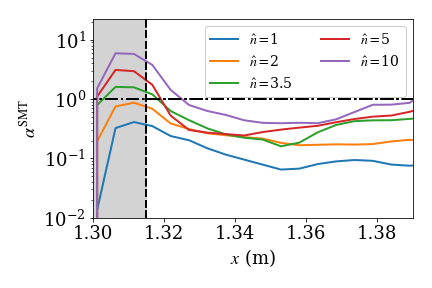}
    \caption[Ballooning stability parameter in electromagnetic cases.]{Ballooning stability parameter $\alpha^\mathrm{SMT} = (L_z/\pi)^2(\beta_e+\beta_i)/(R L_p)$ for each of the electromagnetic cases. The ballooning stability threshold $\alpha^\mathrm{SMT}=1$ is shown by the horizontal dash-dotted line. All but the $\hat{n}=1$ case are near or above the instability threshold in the source region, indicating that ballooning-type modes are destabilized.}
    \label{fig:power-scan-alpha}
\end{figure}

\begin{figure}[t!]
    \centering
    \includegraphics[width=\textwidth]{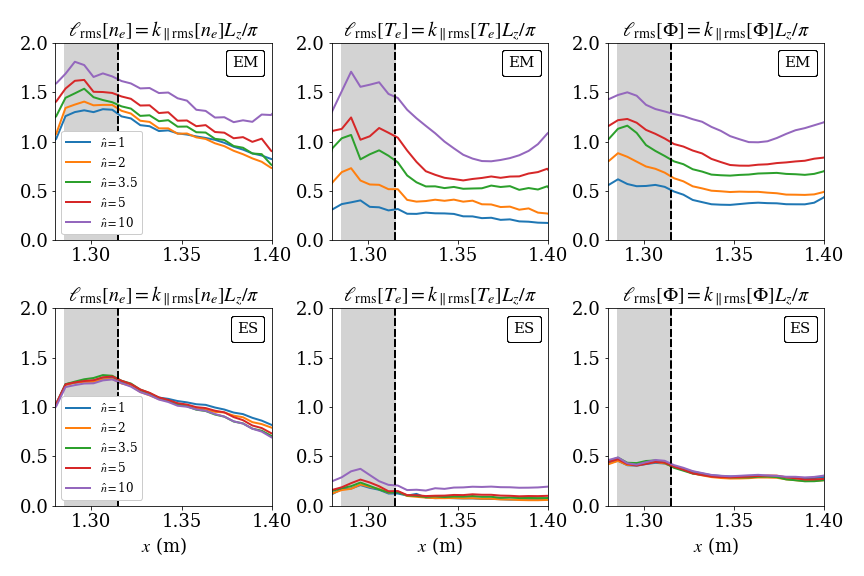}
    \caption[$k_{\parallel \mathrm{rms}}$ computed for electromagnetic and electrostatic cases.]{Radial profiles of $\ell_\mathrm{rms}=k_{\parallel \mathrm{rms}}L_z/\pi$, computed using fluctuations in the electron density (left), electron temperature (middle), and potential (right). In the electromagnetic cases (top row), $k_{\parallel \mathrm{rms}}$ peaks in the source region and increases with $\hat{n}$, consistent with ballooning-type modes becoming destabilized in the source region. In the electrostatic cases $k_{\parallel \mathrm{rms}}$ does not change with $\hat{n}$, and stays at or below the levels of the $\hat{n}=1$ electromagnetic case, with very low levels for the temperature and potential fluctuations indicating that interchange modes $(k_\parallel=0)$ are dominant.}
    \label{fig:power-scan-kpar-rms}
\end{figure}

In \cref{fig:power-scan-alpha} we plot radial profiles of the $\alpha^\mathrm{SMT}$ parameter for each of the electromagnetic cases. In the source region, all but the $\hat{n}=1$ case are near or above the $\alpha^\mathrm{SMT}\gtrsim 1$ threshold for ballooning instability, indicated in the plot with a dot-dashed line. This means that ballooning-type modes with finite $k_\parallel$ are destabilized in these cases.

We can also compute a measure of the root-mean-square (RMS) $k_\parallel$ in the fluctuations of field $f$ as
\begin{equation}
    \ell_\mathrm{rms}[f] = \frac{L_z}{\pi}k_{\parallel \mathrm{rms}}[f]= \frac{L_z}{\pi}\frac{1}{\tilde{f}_\mathrm{rms}}\left(\pderiv{\tilde{f}}{z}\right)_\mathrm{rms}
\end{equation}
In \cref{fig:power-scan-kpar-rms}, we compute radial profiles of this quantity for electron density, electron temperature, and potential fluctuations (\emph{i.e.}, $f=n_e,T_e,\Phi$). The top row shows the electromagnetic simulations and the bottom row shows the electrostatic ones. The trend is most noticeable in the temperature and potential fluctuations, with $k_{\parallel\mathrm{rms}}$ peaking in the source region and increasing with $\hat{n}$ in the electromagnetic cases, consistent with ballooning-type modes becoming destabilized in the source region. In the electrostatic cases $k_{\parallel\mathrm{rms}}$ stays at or below the levels of the $\hat{n}=1$ electromagnetic case for all $\hat{n}$, indicating that the transition to ballooning-type modes is a purely electromagnetic effect. While we expect finite $k_\parallel$ for the density fluctuations even in the electrostatic cases due to the parallel variations in the background density (and its gradient), the temperature and potential fluctuations show $k_\parallel \sim 0$ interchange modes are dominant in the electrostatic cases.

\subsection{Particle balance and transport}
The profiles in the SOL are set by a balance between the sources, cross-field (perpendicular) transport, and parallel transport, including parallel end losses to the walls. That is, in quasi-steady state, we have
\begin{equation}
    \nabla\cdot\vec{\Gamma} = \nabla_\perp\cdot \vec{\Gamma}_\perp + \nabla_\parallel \Gamma_\parallel = S,
\end{equation}
where $\vec{\Gamma}$ is the particle flux with perpendicular and parallel components $\vec{\Gamma}_\perp$ and $\Gamma_\parallel$, and $S$ is the particle source.
Since our numerical scheme conserves particles (and energy) both locally and globally, we are able to examine this particle balance and its consequences carefully. Recall that our radial boundary conditions are such that particles cannot leave through the side walls\footnote{In tokamak experiments there can be net particle and heat fluxes to the first wall, which can be concerning for large filaments and ELMs. Apart from large heat loads, this can also lead to main-chamber recycling that can degrade performance. This should be included in future models.}, so all losses are at the sheath entrances. Without cross-field transport upstream, the parallel fluxes to the endplates would have the same narrow footprint as the source in the simulations. Consequently, the widening of the footprint effectively gives the end result of the competition between upstream parallel and cross-field transport. In the following two sections we will examine the cross-field (perpendicular) and parallel particle transport.

\subsubsection{Cross-field (perpendicular) particle transport}


We compute the time- and $y$-averaged midplane profiles of cross-field (perpendicular, with respect to the background magnetic field) particle flux for the electromagnetic cases in  \cref{fig:power-scan-pflux}($a$), normalized in each case by $\hat{n}$. This is defined as
\begin{gather}
    \Gamma_{\perp e} = \langle \tilde{n}_e \tilde{v}_r \rangle + \langle \widetilde{n_e u_{\parallel e}} b_r \rangle
\end{gather}
where the first term is the contribution from the $E\times B$ drift, with $v_r = E_r/B = -(1/B)\pderivInline{\Phi}{y}$, and the second term is the flux due to magnetic flutter, with $b_r = (1/B)\pderivInline{A_\parallel}{y}$. The tilde indicates the fluctuation of a time-varying quantity, defined as $\tilde{A}=A - \bar{A}$ with $\bar{A}$ the time average of $A$. The brackets $\langle A \rangle$ denote an average in $y$ and time. The radial particle flux at the midplane scales linearly with source power, with very little change in the radial profile after scaling by $\hat{n}$. We would also see little difference if we directly compared the radial particle flux profiles between each electrostatic and electromagnetic case. 
Given the differences in the profiles and gradients we saw in \cref{fig:power-scan-profiles} and \cref{fig:power-scan-beta-Lp-profiles}, it is perhaps somewhat surprising that as $\hat{n}$ varies there are no differences in the profiles of radial particle flux at the midplane. In the core, where there is a clear scale separation in length scales between background and fluctuations, a linear flux-gradient parametrization of the transport in terms of an effective diffusivity $D_\perp$ and effective convective velocity $V_\perp$ can be justified, resulting in
\begin{equation}
    \Gamma_\perp = n V_\perp - D_\perp \nabla_\perp n.
\end{equation}
From this, one might expect that if gradients increase at constant flux then the diffusion coefficient must decrease (due to a mode transition, for example), and vice versa. In principle, the mode transition from interchange to ballooning that we observed in the previous section could result in a change in the diffusion coefficient.
However, in the edge/SOL we do not have the  scale separation required for this simple transport characterization, resulting in non-diffusive transport with large fluctuations and significant intermittency \citep{naulin2007}. 

\begin{figure}[t!]
    \centering
    \includegraphics[width=\textwidth]{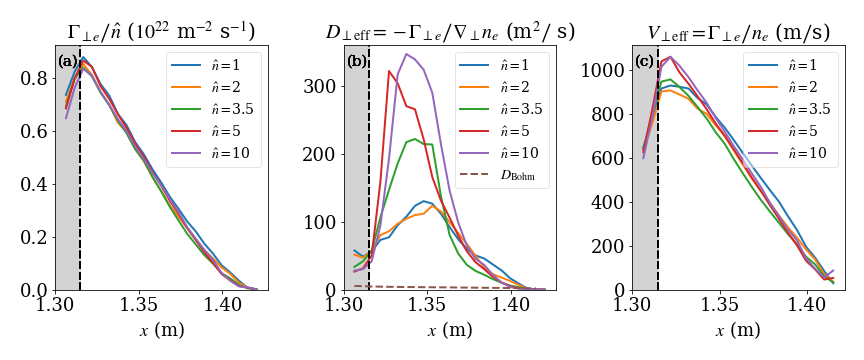}
    \caption[Electron radial particle fluxes, along with effective diffusion and convective velocity coefficients, at the midplane for the electromagnetic cases in the power scan.]{ Electron radial particle flux (normalized to $\hat{n}$) $(a)$, effective diffusion coefficient $(b)$, and effective convective velocity $(c)$ at the midplane for the electromagnetic cases. This includes transport contributions from $E\times B$ drifts and magnetic flutter. After normalizing the particle fluxes to $\hat{n}$ there is very little change in the radial profiles, so the particle flux scales linearly with $\hat{n}$. The effective diffusion coefficient increases by an order of magnitude from the source region to the SOL region. Also shown is the Bohm diffusion coefficient $D_\mathrm{Bohm}=\frac{1}{16}\frac{T_e}{e B}\sim 5$ m$^2/$s. The peak in the effective convective velocity just outside the source region is slightly higher in the higher $\hat{n}$ cases, which could indicate more convective blob transport due to sheath disconnection. Overall, the strong radial variation in these effective transport coefficients suggests that a simple linear flux-gradient relationship is not adequate.} 
    \label{fig:power-scan-pflux}
\end{figure}

\begin{figure}[t!]
    \centering
    \includegraphics[width=.8\textwidth]{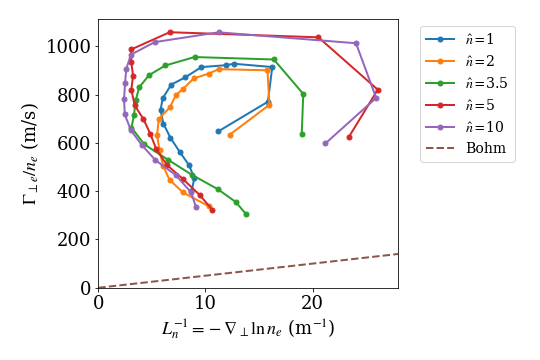}
    \caption[Flux-gradient relationships for the electromagnetic cases in the power scan.]{Flux-gradient relationships for the electromagnetic cases, computed using the time-averaged radial profiles of density and particle flux evaluated at each radial grid point. In each case, the starting point (around 600 on the $y$-axis) represents data taken at $x=1.3$ m, with $x$ increasing moving along each line until the endpoints (below 400 on the $y$-axis) representing data at $x=1.39$ m. The flux-gradient relationship is clearly nonlinear, indicating that the transport cannot be simply characterized in terms of an effective diffusion coefficient and a convective velocity. For reference, the effective flux from Bohm diffusion is also shown (brown dashed).}
    \label{fig:flux-gradient}
\end{figure}
In \cref{fig:power-scan-pflux}$(b)$ and $(c)$ we compute the effective flux-gradient parametrization parameters via
\begin{gather}
    D_{\perp \mathrm{eff}} = - \Gamma_{\perp e}/\nabla_\perp n_e \\
    V_{\perp \mathrm{eff}} =  \Gamma_{\perp e}/n_e.
\end{gather}
The large radial variation of these quantities suggests that the transport is inherently non-local, so that the transport is not determined by local background gradients but induced by propagating coherent structures \citep{xu2010}. We also plot $\Gamma_\perp/n$ versus $\nabla_\perp \ln n$ in \cref{fig:flux-gradient}, with the data taken from each radial point in the profiles. If we had diffusive transport so that the flux-gradient relationship was linear, one would expect that as the gradient increases the flux should also increase, and one could evaluate the coefficients $V_\perp$ and $D_\perp$ based on a linear fit. We see no such linear relationship, which is further indication that the transport is non-diffusive and non-local.

\begin{figure}[t!]
    \centering
    \includegraphics[width=\textwidth]{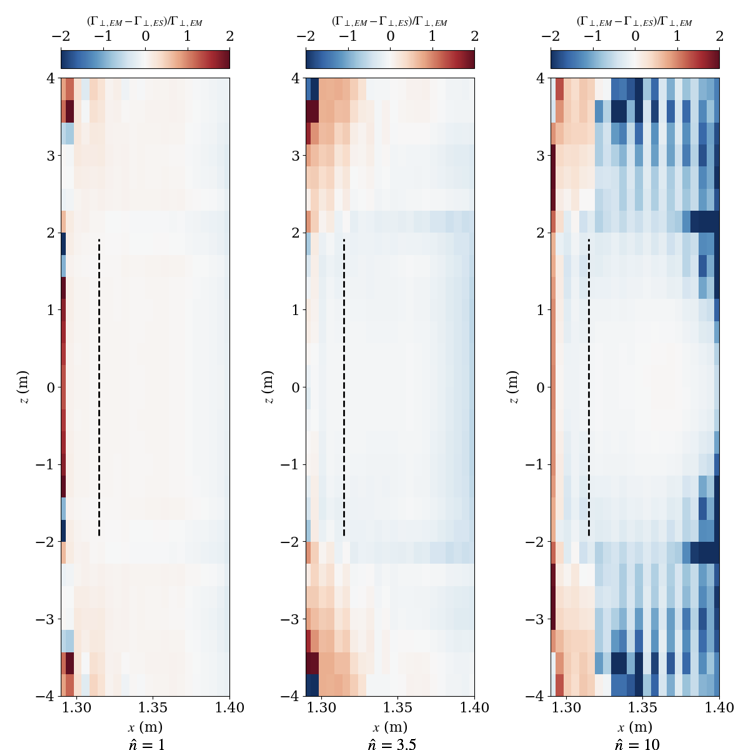}
    \caption[Difference in perpendicular particle transport between electromagnetic and electrostatic cases in the $x-z$ plane.]{Difference in perpendicular particle transport between corresponding electromagnetic and electrostatic cases in the $x-z$ plane for $\hat{n}=1,3.5,10$ cases, normalized to the flux from the electromagnetic case. Blue regions indicate $\Gamma_{\perp, ES}>\Gamma_{\perp, EM}$, while red regions indicate $\Gamma_{\perp ,ES}<\Gamma_{\perp, EM}$. Near the midplane $(z=0)$ the transport is roughly the same in the electromagnetic and electrostatic cases, consistent with the results of \cref{fig:power-scan-pflux}. However, off-midplane there is some reduction in the perpendicular transport in the SOL region in the higher $\hat{n}$ cases. There is also a small region near the endplates in the source region where the transport is larger in the electromagnetic cases. The black dotted line indicates the boundary of the source region.}
    \label{fig:flux-diff-xz}
\end{figure}

To better understand differences in perpendicular particle transport between the electromagnetic and electrostatic cases, in \cref{fig:flux-diff-xz} we compute the difference between the electromagnetic and electrostatic $\Gamma_{\perp e}$ in the $x-z$ plane, averaged over $y$ and time, normalized to the electromagnetic flux $\Gamma_{\perp, EM}$. Here, regions where the perpendicular particle flux is larger in the electrostatic case than in the corresponding electromagnetic case are indicated in blue ($\Gamma_{\perp ,ES}>\Gamma_{\perp, EM}$), while red regions indicate the opposite ($\Gamma_{\perp, ES}<\Gamma_{\perp, EM}$). Near the midplane ($z=0$) the transport is roughly the same between each corresponding electrostatic and electromagnetic case, consistent with the results of \cref{fig:power-scan-pflux}. Off-midplane there is some reduction in transport in the SOL region in the high $\hat{n}$ cases.

\begin{figure}[t!]
    \includegraphics[width=\textwidth]{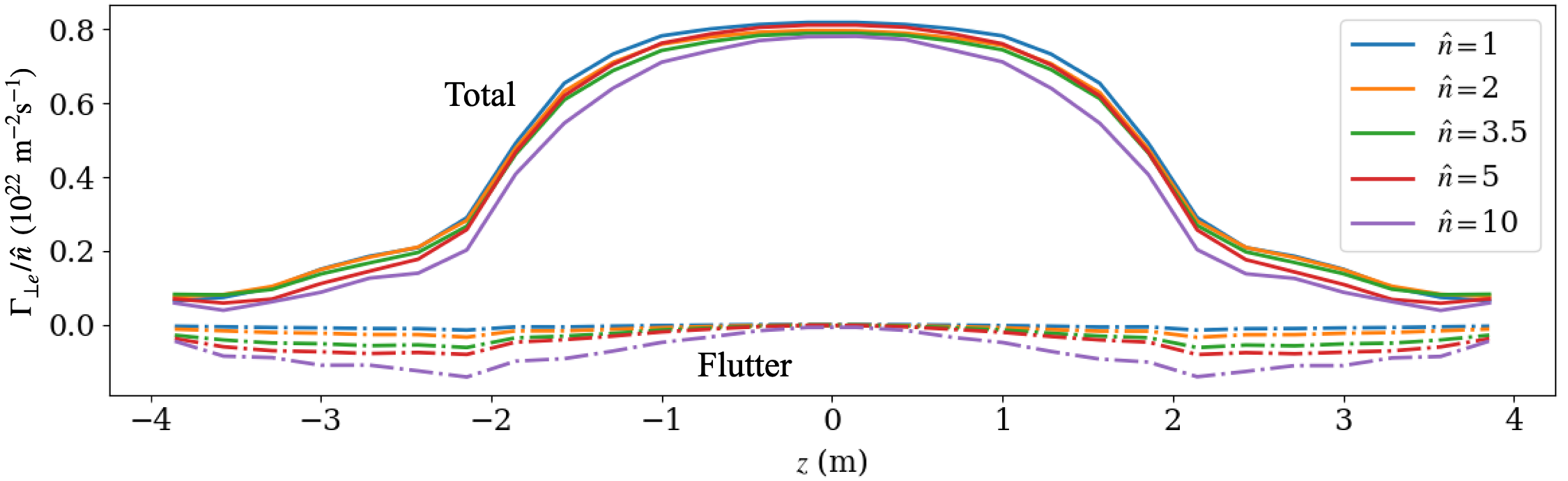}
    \caption[Radial electron particle fluxes for the electromagnetic cases in the power scan, as a function of distance along the field line.]{Radial electron particle fluxes for the electromagnetic cases normalized to $\hat{n}$, evaluated at $x=1.32$ m (just outside the source region) and plotted as a function of the distance along the field line, $z$. Magnetic flutter transport (dash-dotted) becomes stronger relative to the total transport as $\hat{n}$ increases, resulting in a slight reduction in cross-field transport off-midplane at higher $\hat{n}$.}
    \label{fig:power-scan-pflux-z-profiles}
    \includegraphics[width=\textwidth]{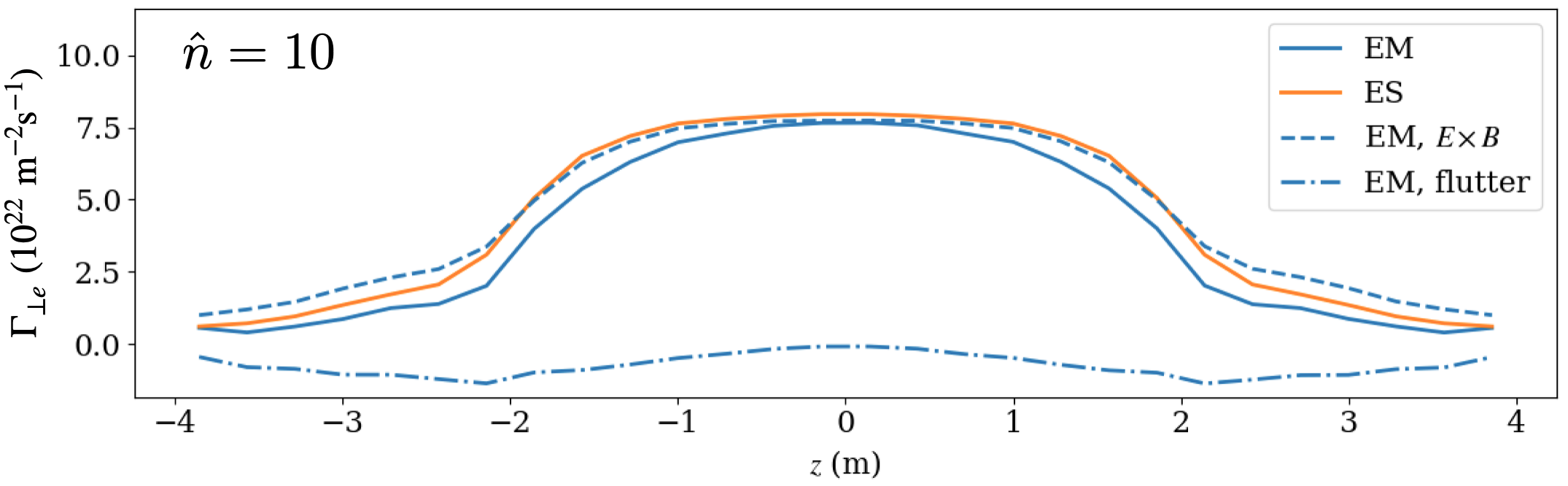}
    \caption[Electrostatic-electromagnetic comparison of radial electron particle fluxes as a function of distance along the field line.]{Electrostatic-electromagnetic comparison of radial electron particle fluxes for the $\hat{n}=10$ case, evaluated at $x=1.32$ m (just outside the source region) and plotted as a function of the distance along the field line, $z$. In the electromagnetic case (EM), we have broken the total transport into $E\times B$ (dashed) and flutter (dot-dashed) components. Radially-inward magnetic flutter transport peaking near $z=\pm 2$ reduces the net radially-outward transport off-midplane.}
    \label{fig:em-es-pflux-n10}
\end{figure}

To investigate this further, in \cref{fig:power-scan-pflux-z-profiles} we show the radial particle fluxes as a function of the distance along the field line, $z$, evaluated just outside the source region at $x=1.32$ m and normalized to $\hat{n}$. As $\hat{n}$ increases, the particle flux falls off more quickly along the field line. This is despite the fact that the $E\times B$ fluxes remain near the levels seen in the electrostatic cases, as shown in an electrostatic-electromagnetic comparison of the $\hat{n}=10$ case in \cref{fig:em-es-pflux-n10}. The differences can be attributed to magnetic flutter transport (dotted lines) along the perturbed field lines becoming stronger (relative to the total radial transport) with increasing $\hat{n}$; here, negative values indicate radially inward transport. 

\begin{figure}[t]
    \includegraphics[width=1\textwidth]{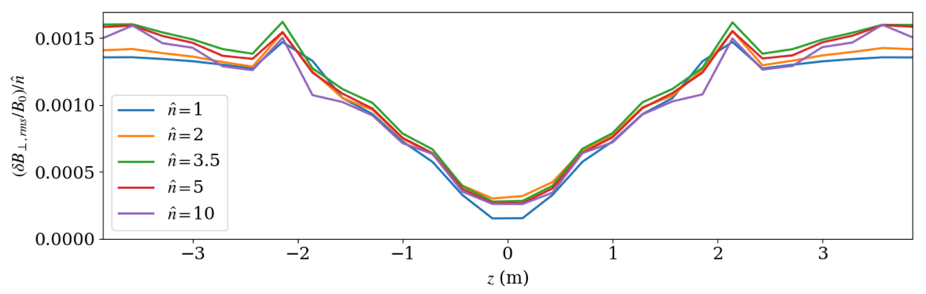}
    \caption[RMS fluctuation amplitude of magnetic perturbations as a function of distance along the field line.]{RMS fluctuation amplitude of magnetic perturbations, $\delta B_\perp/B_0$, normalized to $\hat{n}$, evaluated at $x=1.32$ m (just outside the source region) and plotted as a function of the distance along the field line, $z$. The fluctuation amplitude scales well with $\hat{n}$.}
    \label{fig:power-scan-deltaB-z}
    \includegraphics[width=1\textwidth]{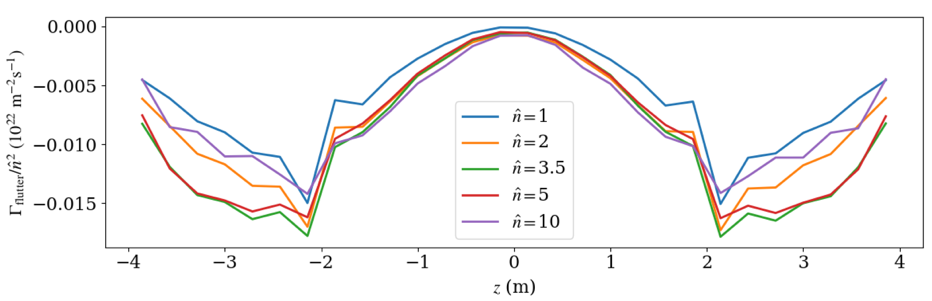}
    \caption[Particle flux due to magnetic flutter for the power scan.]{Particle flux due to magnetic flutter, evaluated at $x=1.32$ m (just outside the source region) and plotted as a function of the distance along the field line, $z$. Here we have normalized to $\hat{n}^2$, indicating that the flutter transport roughly scales with $\hat{n}^2$.}
    \label{fig:power-scan-flutter}
\end{figure}
Radial magnetic flutter transport is the result of parallel motion along radially perturbed field lines, and is given by
\begin{gather}
    \Gamma_\mathrm{flutter} =  \langle \widetilde{n_e u_{\parallel e}} b_r \rangle
\end{gather}
In our system, as a blob is transported radially outwards by the $E\times B$ drift, it can drag the field lines with it at higher $\beta$. However, the footpoints of the field lines are relatively fixed due to line-tying, so the field lines bow out radially at the midplane, as can be seen in \cref{fig:bstream-x}$b$. As particles travel from the midplane to the end plates along these bowed field lines, they are moving radially inward. This flutter transport cancels out some of the radially outward $E\times B$ transport at the midplane, resulting in a net reduction of radial transport off-midplane. To better understand the scaling of the flutter transport, we compute the RMS amplitude of the magnetic fluctuations along the field line at $x=1.32$ m in \cref{fig:power-scan-deltaB-z}. We see that the fluctuations scale well with $\hat{n}\sim \beta$, where we have normalized to $\hat{n}$ in the plot. Since the flutter transport is roughly proportional to $n_e \delta B_\perp$, and both $n_e$ and $\delta B_\perp$ scale linearly with $\hat{n}$, this means we might expect that the flutter transport scales roughly with $\hat{n}^2$. In \cref{fig:power-scan-flutter} we see that when we normalize the flutter profiles along the field line from \cref{fig:power-scan-pflux-z-profiles} by $\hat{n}^2$ instead of $\hat{n}$, the flutter transport is indeed roughly scaling with $\hat{n}^2$.

\subsubsection{Parallel particle transport: particle fluxes to the endplates}
\begin{figure}[t]
    \centering
    \includegraphics[width=1\textwidth]{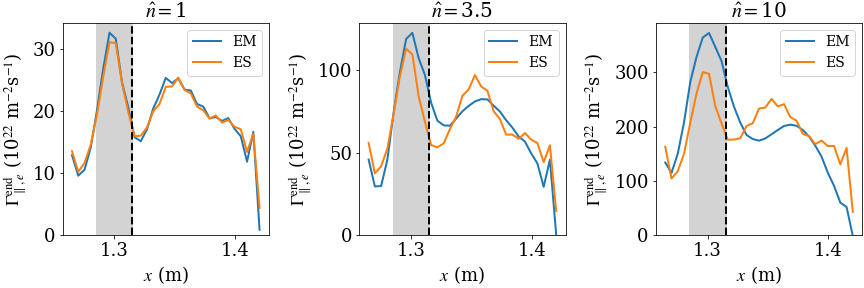}
    \caption[Parallel electron particle fluxes to the lower endplate.]{Parallel electron particle fluxes to the lower endplate, $\Gamma_{\parallel,e}^\mathrm{end}$, for $\hat{n}=\{1,3.5,10\}$. As $\hat{n}$ increases in the electromagnetic cases, reduced radial transport upstream due to magnetic flutter results in $\sim 10\%$ higher peak particle fluxes than in the corresponding electrostatic cases, peaked near $x=1.3$ m (the source peak).}
    \label{fig:power-scan-end-pfluxes}
\end{figure}

We examine the parallel electron  
particle fluxes to the lower endplate at $z=-L_z/2$ in \cref{fig:power-scan-end-pfluxes} (with the fluxes to the upper endplate nearly identical). This is defined as
\begin{gather}
    \Gamma^\mathrm{end}_{\parallel, e} = \Big\langle \int {\mathcal{J}f_{e\,h}}\dot{\vec{R}}_h\cdot \uv{b}\, \dx{x}\,\dx{y}\, \dx{^3\vec{v}} \Big|_{z=-L_z/2} \Big\rangle.
\end{gather}
Note that this counts only the net flux of high energy electrons that can overcome the sheath potential. When we integrate the resulting profiles in $x$, we obtain an integrated lower particle flux that is approximately half the integrated particle source rate (with the other half due to the upper particle flux), indicating that we have a steady state with the sources balanced by parallel end losses (recalling that there are no perpendicular losses to the radial boundaries here). 

As $\hat{n}$ increases in the electromagnetic cases, reduced radial transport upstream due to magnetic flutter results in $\sim 10\%$ higher peak particle fluxes than in the corresponding electrostatic cases, peaked near $x=1.3$ m (the source peak). There is virtually no change in the profiles in the electrostatic cases other than scaling with $\hat{n}$. 
In each case there is also a second, smaller peak in the SOL region, with this peak slightly lower in the electromagnetic cases than in the electrostatic ones. This second peak is likely due to end losses from blobs that escape the source region and propagate some finite distance into the SOL region. That the electromagnetic fluxes are higher in the source region and slightly lower in the remainder of the domain is consistent with less upstream cross-field transport in the electromagnetic cases.
Note that while the width of the flux profiles in the source region is certainly influenced by the width of the source, the shape of the source is identical for all cases. This means that the (relative) differences in the widths and heights of the profiles are physical. Nonetheless, since the absolute peak values and widths are sensitive to the source parameters, a comparison to experimental divertor fluxes is out of the scope of this work; this would likely require the inclusion of closed-field-line regions, since most of the sourcing of particles (and heat) is on closed field lines in tokamaks.

\subsection{Heat fluxes to the endplates}
A critical issue for future tokamak experiments and reactors is the heat exhaust problem, with large heat loads posing a risk to the survivability of the device walls. Thus it is important to develop high-fidelity modeling capability to be able to predict the heat loads and heat-flux widths on the divertor plates. While our present simulations do not have the realistic X-point geometry (including both closed- and open-field-line regions) or neutral particle dynamics required to produce experimentally-relevant heat flux predictions, we can still examine the heat flux profiles that result from our simulations. We can compute the total (ion plus electron) heat flux to the lower endplate at $z=-L_z/2$ via
\begin{equation}
    Q^\mathrm{end}_{\parallel} = \sum_s \Big\langle \int H_{s\,h} \mathcal{J}f_{s\,h}\dot{\vec{R}}_h\cdot \uv{b}\, \dx{x}\,\dx{y}\, \dx{^3\vec{v}} \Big|_{z=-L_z/2} \Big\rangle \label{end-hflux}.
\end{equation}
Here, we include the potential energy via the Hamiltonian to account for slowing of electrons as they climb the potential drop from the sheath entrance to the grounded wall. We plot the radial profiles of this quantity for the $\hat{n}=\{1,3.5,10\}$ electromagnetic and electrostatic cases in \cref{fig:power-scan-end-hfluxes}. Like in the previous section, the peak flux increases in the electromagnetic cases relative to the electrostatic cases as $\hat{n}$ increases, with a $\sim 20\%$ higher peak in the $\hat{n}=10$ case. Again, this is consistent with upstream cross-field transport being reduced by electromagnetic effects.

\begin{figure}[t]
    \centering
    \includegraphics[width=\textwidth]{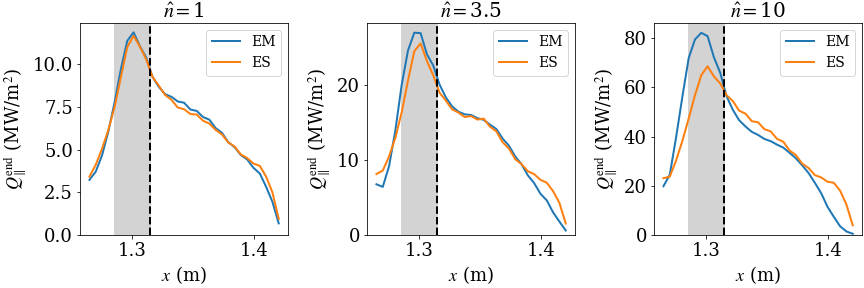}
    \caption[Heat fluxes to the lower endplate.]{Total (ion plus electron) heat fluxes to the lower endplate, $Q^\mathrm{end}_{\parallel}$, for $\hat{n}=\{1,3.5,10\}$. As $\hat{n}$ increases in the electromagnetic cases the peak heat flux increases relative to the electrostatic cases, with a $\sim 20\%$ higher peak in the $\hat{n}=10$ case.}
    \label{fig:power-scan-end-hfluxes}
\end{figure}

\subsection{Fluctuation statistics}

Experimental measurements have shown that the SOL is characterized by large, intermittent fluctuations. We compare fluctuation statistics between the electromagnetic and electrostatic $\hat{n}=10$ cases in \cref{fig:n10-stats}. Statistics of the electron density are shown on the top row, the {middle row} shows statistics of electrostatic potential fluctuations{ and the bottom row shows statistics of the radial electron particle flux.} All statistics are averaged over $y$ and $z$ near the midplane. The root-mean-square (RMS) density fluctuation level $n_{rms}/\bar{n}$ is at least 20\% throughout the domain in both cases, consistent with the large fluctuations observed in experiments. Despite the fact that the electromagnetic and electrostatic cases show the same level of particle transport at the midplane, the RMS  density fluctuations  are slightly larger in the electromagnetic case. Meanwhile the RMS relative potential fluctuations are slightly smaller in the electromagnetic case. Since intermittency is a key feature of SOL transport observed in experiments, we also measure the skewness and excess kurtosis of the fluctuations. Positive values of these higher-order statistics generally indicate more intermittency. Both cases show comparable levels of skewness and excess kurtosis of the density fluctuations. The potential fluctuations seem to be more intermittent in the electrostatic case, with higher skewness and kurtosis in much of the domain. On the bottom row, we see that the electromagnetic case has some larger particle flux fluctuations approaching the far edge of the domain. In both the electromagnetic and electrostatic cases the particle flux is also intermittent, perhaps slightly more so in the electrostatic case, as indicated by positive skewness and kurtosis in much of the domain.

\begin{figure}[t]
    \includegraphics[width=1\textwidth]{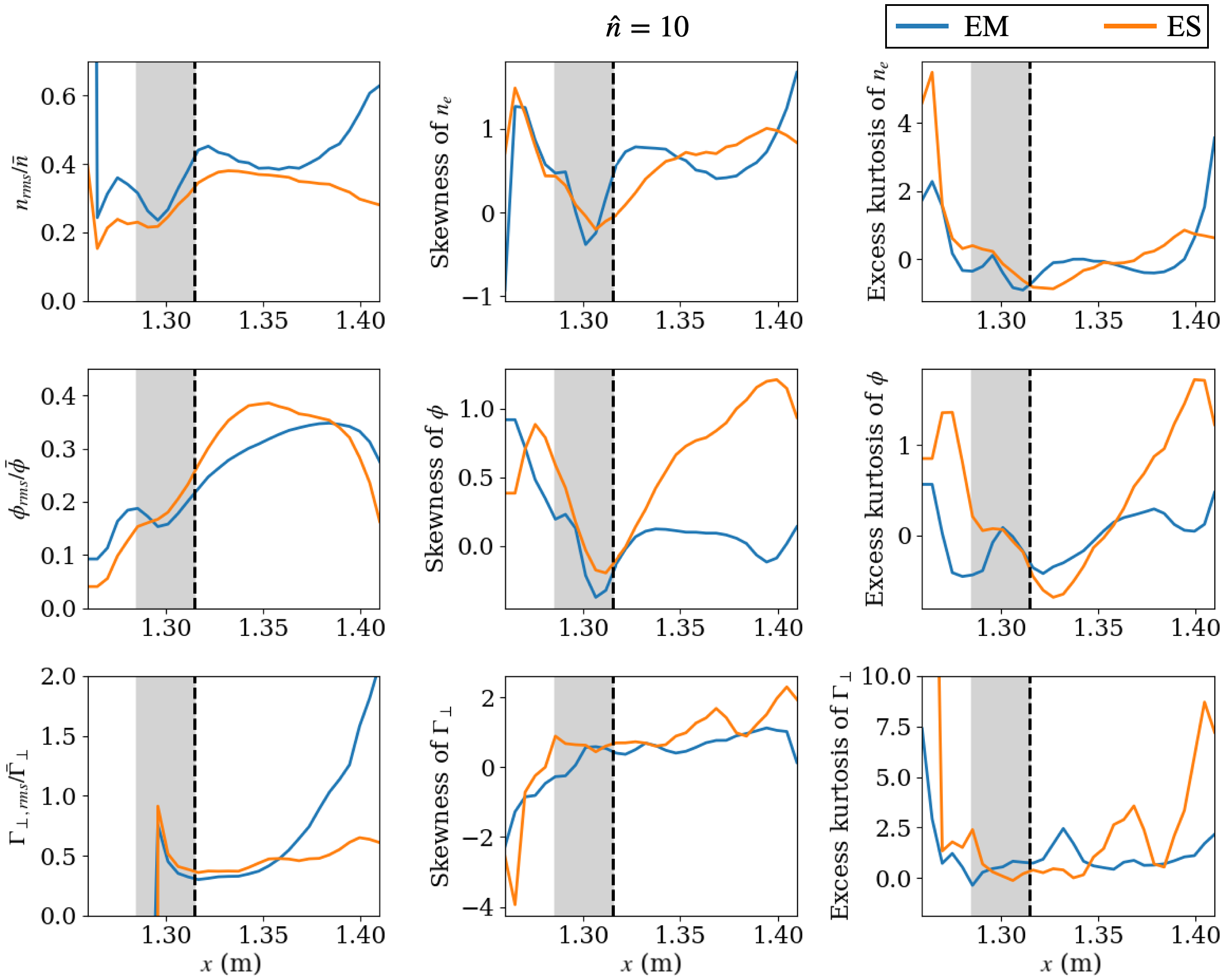}
    \caption{Electrostatic-electromagnetic comparison of fluctuation statistics of the electron density, the electrostatic potential, and the electron particle flux for the $\hat{n}=10$ case.}
    \label{fig:n10-stats}
\end{figure}

\begin{figure}[t]
    \centering
    \includegraphics[width=\textwidth]{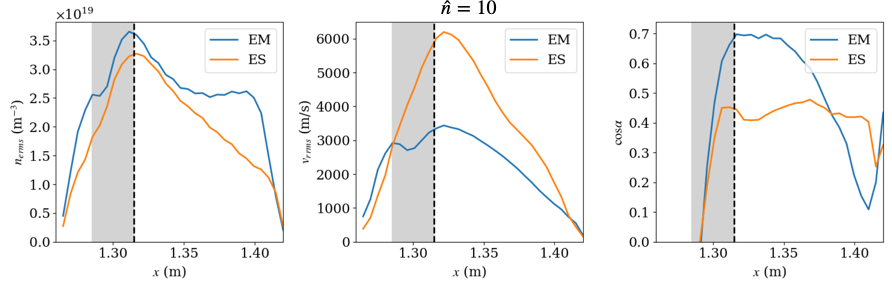}
    \caption{Components of the radial $E\times B$ particle flux, $\Gamma_{n,r} = n_{e,\text{rms}} v_{r,\text{rms}} \cos\alpha$, where $\cos \alpha$ measures the phase between the density and $E\times B$ fluctuations.}
    \label{fig:n10-exb-comps}
\end{figure}

It is perhaps somewhat counter-intuitive that even though the density fluctuations are slightly larger in the electromagnetic case, the resulting transport is the same. The radial $E\times B$ particle flux can also be written as $\Gamma_{\perp,E\times B} = n_{e,\text{rms}} v_{r,\text{rms}} \cos\alpha$, where $\cos\alpha \equiv \langle \tilde{n}_e \tilde{v}_r\rangle / (n_{e,\text{rms}} v_{r,\text{rms}})$ accounts for the phase between density and $E\times B$ velocity fluctuations. \cref{fig:n10-exb-comps} shows these three components of $\Gamma_{\perp,E\times B}$ for the $\hat{n}=10$ case. Despite the electromagnetic case having slightly larger density fluctuations and better correlation between the density and $E\times B$ fluctuations in most of the domain, the resulting particle flux is identical in both cases, this is offset by reduced $E\times B$ fluctuation amplitude in the electromagnetic case.

\section{Summary of results}

In this chapter we presented the first electromagnetic gyrokinetic simulations on open field lines. We showed that large magnetic fluctuations on the order $\delta B_\perp/B \sim 1\%$ can be handled in a stable and efficient manner. This is critical for enabling the study of electromagnetic effects in the edge and SOL, which are expected to be important for phenomena such as ELMs and the pedestal.

In \cref{sec:emgk-res} we showed a preliminary set of simulations, one electromagnetic and one electrostatic, and examined qualitative differences in the dynamics. In the electromagnetic case, we traced the perturbed magnetic field lines and found that they can be bent and stretched significantly by the plasma motion at high $\beta$. We found that blobs spin in the electrostatic case due to adiabatic electron dynamics. In the electromagnetic case the electron response is non-adiabatic and the blobs propagate ballistically radially outwards, suggesting electrical disconnection from the sheath. The dynamics observed here could be relevant for high $\beta$ blobs and ELMs, which involve high $\beta$ filament-like structures that carry significant uni-directional current. 

In \cref{sec:power-scans}, we performed a study of the effects of increasing $\beta$ on the SOL dynamics. At higher $\beta$, the influence of electromagnetic effects became stronger, resulting in steepening of pressure gradients near the source region and flattening of gradients in the remainder of the domain. The interplay between steepening pressure gradients in the source region and increased $E\times B$ shear just outside the source region could be relevant for pedestal formation and the L-H transition. We also observed a transition from interchange-like modes with $k_\parallel\sim 0$ to ballooning-like modes with finite $k_\parallel$ as pressure gradients $(\alpha^\mathrm{SMT})$ increased above the ballooning stability threshold in the source region. While cross-field perpendicular transport at the midplane was unaffected by increasing $\beta$, the transport was reduced off-midplane by magnetic flutter in the higher $\beta$ cases due to line bending. This resulted in the parallel particle and heat fluxes to the endplates being more peaked in the electromagnetic cases. 

One might note that at the nominal experimental source power $(\hat{n}=1)$, we observed electromagnetic effects to be mostly unimportant, and that we needed to scale up the source power ($\sim \beta$) by a factor of 3-10 to see electromagnetic effects impact the dynamics. While this is true in the simple setup that we have considered here, in a real experiment there are other effects that could make electromagnetic dynamics important at the experimental $\beta$ levels. These include steeper pressure gradients, stronger magnetic fields, longer connection lengths, and magnetic shear, all of which could push the system into a more electromagnetic regime at experimental $\beta$ levels. 

\begin{subappendices}
 
\section[Note on some results from Mandell \emph{et al.} (2020)]{Note on some results from \citet{mandell2020}}

In \citet{mandell2020}, we presented $\hat{n}=10$ results that showed a reduction in radial transport in the electromagnetic case compared to the electrostatic case  (see Fig. 10 of \citet{mandell2020}). After further analysis, we believe that this was a consequence of placing the source region too close to the inner-radial boundary of the simulation. This resulted in fast parallel losses in the cells at the boundary because the Dirichlet condition $\Phi=0$ on the walls meant that there could be no sheath potential to confine particles at the domain edge. In the electromagnetic cases, the issue was exacerbated by radially inward magnetic flutter transport near the boundary, resulting in even more losses from the boundary cells and consequently less perpendicular particle transport. This can be seen in \cref{fig:jpp-sims-flux}. After extending the domain 2 cm radially inward and redoing the simulations, we saw much less difference in particle transport levels between the electromagnetic and electrostatic cases, consistent with the results in the $\hat{n}=10$ cases in \cref{sec:power-scans}. 

\begin{figure}[t!]
    \centering
    \includegraphics[width=\textwidth]{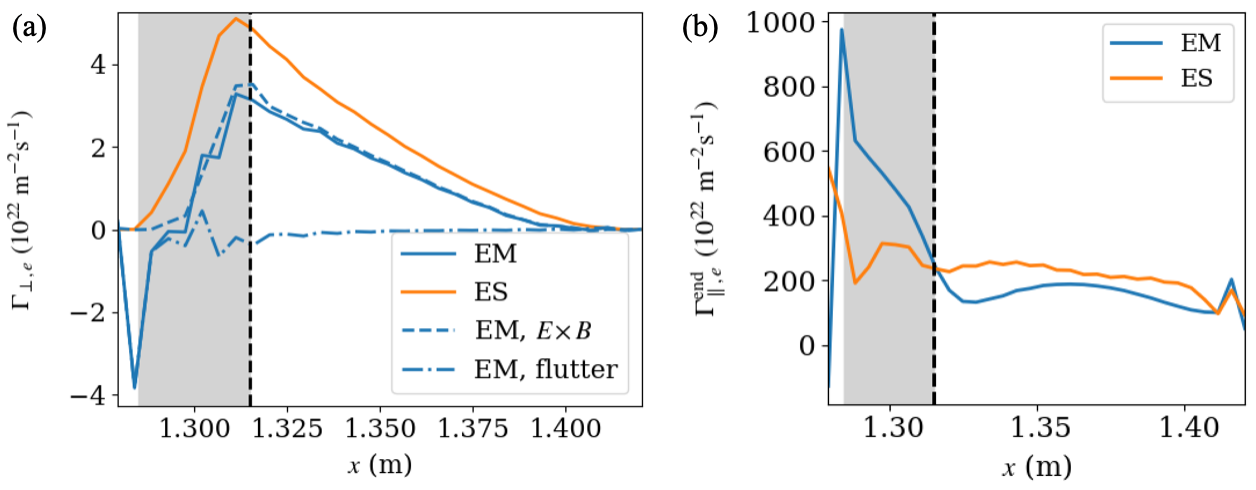}
    \caption[Perpendicular and parallel particle fluxes from Mandell \emph{et al.} (2020).]{Midplane radial electron particle flux $(a)$ and parallel electron particle flux to the endplate $(b)$ for electromagnetic and electrostatic $\hat{n}=10$ cases from \citet{mandell2020} (see Fig. 10 in that work). In these cases, the source region (gray) was too close to the radial boundary, resulting in fast parallel losses in the boundary cells. Parallel losses were even larger in the electromagnetic case due to radially inward magnetic flutter transport near the boundary. This resulted in less perpendicular particle transport at the midplane in the electromagnetic case. We have redone the simulations with the domain extended 2 cm radially inward and produced results consistent with the results in \cref{sec:power-scans}.}
    \label{fig:jpp-sims-flux}
\end{figure}

\end{subappendices}

\chapter{Generalizing the magnetic geometry: towards a more realistic tokamak scrape-off layer} \label{ch:geometry}

In this chapter we move towards more realistic tokamak SOL geometry by adopting a generalized field-aligned non-orthogonal coordinate system. 
Choosing field-aligned coordinates allows one to exploit the elongated nature of the turbulence, which is generally characterized by long wavelengths parallel to the background field and short perpendicular wavelengths ($k_\parallel \ll k_\perp$). 
In the local approach employed by several core gyrokinetic codes, the resulting domain is a thin flux tube extended along the field line. In the global approach, the domain remains field aligned, but extends radially to cover some or all of the minor radius of the device. In both approaches, the field-aligned coordinate can be coarse since $k_\parallel \ll k_\perp$. Thus when the (generalized) poloidal angle is chosen as the field-aligned coordinate, the resulting grid resolution is fine in the radial and toroidal (or binormal) directions to resolve short perpendicular wavelengths and coarse in the poloidal direction. 
In the toroidal direction, axisymmetry allows one to assume statistical periodicity so that only a fraction (wedge) of the full toroidal direction needs to be resolved, provided the toroidal domain extent is many turbulent correlation lengths wide. This approach has been used successfully by many local and global gyrokinetic codes, with the resulting computational savings comprising one of the main advantages of the field-aligned approach.\footnote{Alternatively, the toroidal angle can be used as the coarse field-aligned coordinate. However, in this case the full poloidal angle must be resolved with a fine grid (unlike the toroidal angle, statistical periodicity cannot be used in the poloidal angle to reduce the domain length). This is not as computationally efficient as using the poloidal angle as the field-aligned coordinate.} 

While a field-aligned coordinate system can be used both in the core and in the SOL, these coordinates are singular on the separatrix for diverted geometries due to the presence of the X-point \citep{stegmeir2016}. To deal with this issue, recent interest has focused on the flux-coordinate independent (FCI) approach, which abandons field- and flux-aligned coordinates in the poloidal plane but retains a field-line-following discretization of the parallel gradient operator \citep{hariri2013,hariri2014,stegmeir2016}. This allows a coarse toroidal grid to be used, but still requires fine perpendicular resolution covering the entire poloidal plane. Another recent approach uses a flux-aligned poloidal grid with controlled dealignment near the X-point \citep{McCorquodale2015,dorf2016continuum,dorf2020}. After breaking the toroidal direction into several blocks (wedges), a local field-aligned coordinate system is used in each block. Interpolation (similar to what is done in the FCI approach) is required to compute the parallel derivatives between blocks. 

Addressing the issue of the X-point in the \gke code is left as important future work. All of the approaches detailed above have the disadvantage that a fine grid is required on the entire poloidal plane. To avoid this, one might imagine a cross-separatrix simulation domain composed of a global field-aligned region in the core, a thin non-field-aligned region near the separatrix (perhaps using some version of FCI in this small region), and another field-aligned flux-tube-like region in the SOL. This way, we could keep the advantages of flux tubes for most of the domain, limiting the region needing to be resolved with a fine poloidal grid to the small area near the separatrix. Interpolation between these regions with different coordinate systems and different poloidal resolution would be required and may present challenges. 

This chapter takes a step towards these full geometry capabilities. In the first section we express the gyrokinetic system in general field-aligned coordinates. We then focus on how to formulate field-aligned coordinate systems for use in flux-tube-like domains in the SOL in \cref{sec:heli-geo,sec:solovev,sec:circ-geo}. 

\section{Gyrokinetics in a field-aligned coordinate system}

Here we will express the gyrokinetic system in a field-aligned coordinate system. The resulting equations contain various metric-related quantities since the coordinates are non-orthogonal. Many of these metric quantities were dropped in the simple helical geometry used in \cref{ch:nstx-results}.

\subsection{Preliminaries: general non-orthogonal curvilinear coordinates}
\label{sec:geovec}

Suppose we have a coordinate system in 3D space parametrized by coordinates $(x,y,z)$. If the coordinate system is orthogonal, 
a vector can be uniquely decomposed in terms of the coordinate basis vectors as
    $\vec{v} = v_x \uv{x} + v_y \uv{y} + v_z \uv{z}$,
where $v_x$ is the component of the vector in the direction of the $\uv{x}$ basis vector, and similarly for $y$ and $z$. However, if the coordinate system is non-orthogonal, there are \emph{two} natural sets of basis vectors, leading to the \emph{covariant} and \emph{contravariant} representation of a vector:
\begin{align}
    \vec{v} &= v_x \nabla x + v_y \nabla y + v_z \nabla z \qquad \text{(covariant)} \\
    \vec{v} &= v^x \vec{e}_x + v^y \vec{e}_y + v^z \vec{e}_z \qquad\quad\,  \text{(contravariant)}
\end{align}
In the covariant representation, the (contravariant) basis vectors are the gradient vectors, defined by
\begin{equation}
    \qquad\qquad \vec{e}^\alpha = \nabla \alpha, \qquad \alpha=(x,y,z).
\end{equation}
In the contravariant representation, the (covariant) basis vectors are the tangent vectors, defined by
\begin{equation}
    \qquad\qquad \vec{e}_\alpha = \pderiv{\vec{R}}{\alpha},\qquad \alpha=(x,y,z),
\end{equation}
where $\vec{R}$ is the position vector. Note that we will usually opt to use $\nabla\alpha$ in place of $\vec{e}^\alpha$ to denote the gradient basis vectors, but we will continue to use $\vec{e}_\alpha$ for the tangent basis vectors. 

These basis vectors are neither orthogonal nor unit vectors, which leads to the co- and contravariant \emph{metric coefficient} tensors, defined by
\begin{align}
    g_{\alpha\beta} &= \vec{e}_\alpha \cdot\vec{e}_\beta \qquad\ \ \,  \text{(covariant)} \\
    g^{\alpha\beta} &= \nabla\alpha \cdot\nabla\beta \qquad \text{(contravariant)}
\end{align}
These two tensors are inverses of each other, so that $(g_{\alpha \beta}) = (g^{\alpha\beta})^{-1}$. 
The two sets of basis vectors also obey the relationship
\begin{equation}
    \nabla\alpha \cdot \vec{e}_\beta = \delta^\alpha_\beta,
\end{equation}
where $\delta^\alpha_\beta$ is the Kronecker delta. It follows that the tangent basis vectors can be expressed in terms of the gradient basis vectors as
\begin{equation}
    \vec{e}_x = J\left( \nabla y\times\nabla z\right), \qquad \vec{e}_y = J\left(\nabla z\times\nabla x\right), \qquad \vec{e}_z = J\left(\nabla x\times\nabla y\right),
\end{equation}
where 
\begin{equation}
    J = \left[\left( \nabla y\times\nabla z\right)\cdot\nabla x\right]^{-1} = \left[\left( \nabla z\times\nabla x\right)\cdot\nabla y\right]^{-1} =\left[\left( \nabla x\times\nabla y\right)\cdot\nabla z\right]^{-1}
\end{equation}
is the Jacobian of the coordinate system written in terms of the gradient basis vectors.
Similarly, we can express the gradient basis vectors as
\begin{equation}
    \nabla x = \frac{1}{J}\left( \vec{e}_y\times\vec{e}_z\right), \qquad \nabla y = \frac{1}{J}\left(\vec{e}_z\times\vec{e}_x\right), \qquad \nabla z = \frac{1}{J}\left(\vec{e}_x\times\vec{e}_y\right),
\end{equation}
and we can also write the Jacobian in terms of the tangent basis vectors as
\begin{equation}
    J = \left( \vec{e}_y\times\vec{e}_z\right)\cdot\vec{e}_x = \left( \vec{e}_z\times\vec{e}_x\right)\cdot\vec{e}_y =\left( \vec{e}_x\times\vec{e}_y\right)\cdot\vec{e}_z.
\end{equation}
The Jacobian can also be written (up to a sign) via the determinants of the metric tensors,
\begin{equation}
    J = \det(g_{\alpha\beta})^{1/2} =\det(g^{\alpha\beta})^{-1/2}.
\end{equation}
Finally, note that the co- and contravariant components of a vector can be obtained from $v_z = \vec{v}\cdot \vec{e}_z$ and $v^z = \vec{v} \cdot \nabla z$, and similarly for $x$ and $y$.

We will also make use of the following vector calculus identities for the gradient, divergence, and curl:
\begin{gather}
    \nabla f = \pderiv{f}{x}\nabla x + \pderiv{f}{y} \nabla y + \pderiv{f}{z}\nabla z, \\
    \nabla \cdot \vec{F} = \frac{1}{J}\left[\pderiv{}{x}\left(J F^x\right) + \pderiv{}{y}\left(J F^y\right) + \pderiv{}{z}\left(J F^z\right) \right], \\
    \nabla \times \vec{F} = \frac{1}{J}\left[\left(\pderiv{F_z}{y}-\pderiv{F_y}{z}\right)\vec{e}_x+ \left(\pderiv{F_x}{z}-\pderiv{F_z}{x}\right)\vec{e}_y+\left(\pderiv{F_y}{x}-\pderiv{F_x}{y}\right)\vec{e}_z \right].
\end{gather}
Volume integrals can be expressed as
\begin{gather}
    \int f\, \dx{^3\vec{R}} = \int Jf\,\dx{x}\, \dx{y}\,\dx{z}, 
\end{gather}
and a surface integral over a constant $x$ surface is given by
\begin{gather}
    \int \vec{F}\cdot\dx{\vec{s}}_x = \int J \vec{F}\cdot\nabla x\, \dx{y}\,\dx{z} = \int J F^x\, \dx{y}\,\dx{z},
\end{gather}
and similarly for surface integrals over constant $y$ and $z$ surfaces.

For more details about non-orthogonal coordinate systems, see \citet{dhaeseleer1991}.

\subsection{Representation of the background field}
In order to take advantage of the fact that turbulent structures are much more elongated along the field line than perpendicular to it ($k_\parallel \ll k_\perp$), we adopt a field-aligned coordinate system, which we will denote by $(x,y,z)$. To do this, we write the background magnetic field in Clebsch-like form as
\begin{equation}
    \vec{B} = \mathcal{C}(x) \nabla x \times \nabla y.
\end{equation}
Here, $x$ and $y$ are coordinates perpendicular to the background field, with $x$ usually a radial-like coordinate, and $y$ a field-line-labeling coordinate. Importantly, $x$ and $y$ are constant on field lines since $\vec{B}\cdot\nabla x =\vec{B}\cdot\nabla y =0$. For now, $\mathcal{C}$ is an arbitrary function of $x$ (it cannot depend on $z$ because $\vec{B}$ must be divergence free, and it cannot depend on $y$ because we will assume axisymmetry). Since by construction the background field is perpendicular to the gradient basis vectors $\nabla x$ and $\nabla y$, the background field can then be written in contravariant form as
\begin{equation}
    \vec{B} = (\vec{B}\cdot\nabla z)\vec{e}_z = \frac{\mathcal{C}}{J}\vec{e}_z.
\end{equation}
Thus the magnetic field is in the direction of the tangent vector in the $z$ direction, $\vec{e}_z$, so $z$ is a field-aligned coordinate as desired. The Jacobian of the $(x,y,z)$ coordinate system is
\begin{equation}
    J = \left[(\nabla x\times \nabla y)\cdot \nabla z\right]^{-1}.
\end{equation}
Noting that the magnitude of the background field is given by
\begin{equation}
    B = \sqrt{\vec{B}\cdot\vec{B}} = \frac{\mathcal{C}}{J}\sqrt{g_{zz}},
\end{equation}
with $g_{zz} = \vec{e}_z \cdot\vec{e}_z = J^2 B^2/\mathcal{C}^2$, we can also write the background field as
\begin{equation}
    \vec{B} = \frac{B}{\sqrt{g_{zz}}}\vec{e}_z.
\end{equation}
Finally, we will assume that $\vec{B}$ and all other geometric quantities are axisymmetric, so that they have no $y$ dependence, \emph{i.e.} $\pderivInline{B}{y}=0$.


The definition of the coordinates $(x,y,z)$ that satisfy the above relations is relatively flexible, depending on desired properties of the coordinates. In \cref{sec:heli-geo,sec:solovev} we give specific definitions of the coordinates in different geometrical configurations.

\subsection{Gyrokinetic Poisson bracket in field-aligned coordinates}

Recall from \cref{gkpb} that the gyrokinetic Poisson bracket is defined as
\begin{equation}
     \{F,G\} = \frac{\vec{B}^*}{m B_\parallel^*} \cdot \left(\nabla F \frac{\partial G}{\partial v_\parallel} - \frac{\partial F}{\partial v_\parallel}\nabla G\right) - \frac{ \uv{b}}{q B_\parallel^*}\times \nabla F \cdot \nabla G,
\end{equation}
with $\vec{B}^* = \vec{B} + (mv_\parallel/q)\nabla\times\uv{b} +\nabla \times(A_\parallel \uv{b})$ and $B_\parallel^* = \uv{b}\cdot\vec{B}^*$. 
Using the identities from \cref{sec:geovec}, we can write $\vec{B}^*$ in contravariant form as
\begin{align}
    &\vec{B}^* = \frac{\mathcal{C}}{J}\vec{e}_z + \frac{mv_\parallel}{q} \frac{1}{J}\left[ -\pderiv{b_y}{z}\vec{e}_x + \left(\pderiv{b_x}{z}-\pderiv{b_z}{x}\right)\vec{e}_y + \pderiv{b_y}{x}\vec{e}_z\right] \notag \\
    &\  + \frac{1}{J}\left[\left( \pderiv{(A_\parallel b_z)}{y}-\pderiv{(A_\parallel b_y)}{z} \right)\vec{e}_x + \left( \pderiv{(A_\parallel b_x)}{z}-\pderiv{(A_\parallel b_z)}{x}\right)\vec{e}_y + \left(  \pderiv{(A_\parallel b_y)}{x}-\pderiv{(A_\parallel b_x)}{y}\right)\vec{e}_z   \right], 
\end{align}
so that the contravariant components of $\vec{B}^*$ are
\begin{align}
    B^{*x}&= \frac{1}{J}\left[-\frac{mv_\parallel}{q}\pderiv{b_y}{z}+\left( \pderiv{(A_\parallel b_z)}{y}-\pderiv{(A_\parallel b_y)}{z} \right)\right], \\
    B^{*y}&=\frac{1}{J}\left[\frac{mv_\parallel}{q}\left(\pderiv{b_x}{z}-\pderiv{b_z}{x}\right)+\left( \pderiv{(A_\parallel b_x)}{z}-\pderiv{(A_\parallel b_z)}{x}\right)\right],  \\
    B^{*z}&=\frac{1}{J}\left[\mathcal{C} + \frac{m v_\parallel}{q}\pderiv{b_y}{x} +\left(  \pderiv{(A_\parallel b_y)}{x}-\pderiv{(A_\parallel b_x)}{y}\right) \right].
\end{align}
Here, the covariant components of the unit vector $\uv{b}=\vec{B}/B$ are given by
\begin{equation}
    \qquad\qquad\qquad b_\alpha = \uv{b}\cdot\vec{e}_\alpha = \frac{g_{\alpha z}}{\sqrt{g_{zz}}},\qquad\qquad \alpha=(x,y,z).
\end{equation}
Then we can compute the bracket by using 
\begin{equation}
    \vec{B}^*\cdot\nabla F = B^{*x}\pderiv{F}{x} + B^{*y}\pderiv{F}{y} + B^{*z}\pderiv{F}{z}
\end{equation}
and
\begin{align}
    \uv{b}\times\nabla F\cdot\nabla G &= \frac{1}{J}\left(b_y\pderiv{F}{z}-b_z \pderiv{F}{y}\right)\pderiv{G}{x} +\frac{1}{J} \left(b_z\pderiv{F}{x}-b_x\pderiv{F}{z}\right)\pderiv{G}{y} \notag \\
    &\qquad +\frac{1}{J}\left(b_x\pderiv{F}{y}-b_y\pderiv{F}{x}\right)\pderiv{G}{z}.
\end{align}
Finally, the phase-space Jacobian $B_\parallel^*$ is given by
\begin{equation}
    B_\parallel^* = \uv{b}\cdot\vec{B}^*= B +\left(\frac{mv_\parallel}{q}+A_\parallel\right)\frac{1}{J}\left[-b_x\pderiv{b_y}{z} - b_y\left(\pderiv{b_z}{x}-\pderiv{b_x}{z}\right)+b_z\pderiv{b_y}{x}\right] \approx B.
\end{equation}

\subsection{Equations of motion in field-aligned coordinates}
Recall from \cref{eomR0,eomV0} that the gyrokinetic equations of motion are given by
\begin{gather}
    \dot{\vec{R}} = \{\vec{R},H\} = \frac{\vec{B^*}}{B_\parallel^*}v_\parallel + \frac{\uv{b}}{q B_\parallel^*}\times\left(\mu\nabla B + q \nabla \Phi\right), \\
    \dot{v}_\parallel = \{v_\parallel,H\}-\frac{q}{m}\pderiv{A_\parallel}{t} = -\frac{\vec{B^*}}{m B_\parallel^*}{\cdot}\left(\mu\nabla B + q \nabla \Phi\right)-\frac{q}{m}\pderiv{A_\parallel}{t}.
\end{gather}
We can write the velocity $\dot{\vec{R}}$ in contravariant form as  $\dot{\vec{R}}=\dot{x}\,\vec{e}_x + \dot{y}\,\vec{e}_y + \dot{z}\,\vec{e}_z$, with components
\begin{align}
    \dot{x} &= \{x,H\} = \dot{\vec{R}}\cdot\nabla x= \frac{B^{*x}}{ B_\parallel^*}v_\parallel+ \frac{\uv{b}}{q B_\parallel^*}\times\left(\mu\nabla B + q \nabla \Phi\right)\cdot \nabla x \notag \\
    &= \frac{1}{J B_\parallel^*}\left[-\frac{m v_\parallel^2}{qB} \pderiv{(B\, b_y)}{z} + \frac{mv_\parallel^2+\mu B}{qB}b_y\pderiv{B}{z}-b_z\pderiv{\Phi}{y}+b_y\pderiv{\Phi}{z} + v_\parallel \left( \pderiv{(A_\parallel b_z)}{y}-\pderiv{(A_\parallel b_y)}{z} \right) \right]\\
    \dot{y} &= \{y,H\} = \dot{\vec{R}}\cdot\nabla y= \frac{B^{*y}}{ B_\parallel^*}v_\parallel+ \frac{\uv{b}}{q B_\parallel^*}\times\left(\mu\nabla B + q \nabla \Phi\right)\cdot \nabla y \notag \\
    &= \frac{1}{J B_\parallel^*}\left[ -\frac{m v_\parallel^2}{qB}\left(\pderiv{(B\, b_z)}{x}- \pderiv{(B\, b_x)}{z}\right) + \frac{mv_\parallel^2+\mu B}{qB}\left(b_z \pderiv{B}{x}-b_x\pderiv{B}{z}\right) + b_z\pderiv{\Phi}{x} - b_x\pderiv{\Phi}{z} \right. \notag \\
    &\qquad\qquad\quad \left.+ v_\parallel \left( \pderiv{(A_\parallel b_x)}{z}-\pderiv{(A_\parallel b_z)}{x}\right) \right] \\
    \dot{z} &= \{z,H\} = \dot{\vec{R}}\cdot\nabla z= \frac{B^{*z}}{ B_\parallel^*}v_\parallel+ \frac{\uv{b}}{q B_\parallel^*}\times\left(\mu\nabla B + q \nabla \Phi\right)\cdot \nabla z \notag \\
    &= \frac{1}{JB_\parallel^*}\left[\mathcal{C}v_\parallel + \frac{m v_\parallel^2}{qB} \pderiv{(B\,b_y)}{x} - \frac{mv_\parallel^2+\mu B}{qB}b_y\pderiv{B}{x} + b_x\pderiv{\Phi}{y} - b_y\pderiv{\Phi}{x} + v_\parallel \left(  \pderiv{(A_\parallel b_y)}{x}-\pderiv{(A_\parallel b_x)}{y}\right)\right] \label{eomgeoz}
\end{align}
The parallel acceleration is given by
\begin{align}
    \dot{v}_\parallel &= -\frac{1}{J B_\parallel^*}\left[-\frac{mv_\parallel}{q}\pderiv{b_y}{z}+\left( \pderiv{(A_\parallel b_z)}{y}-\pderiv{(A_\parallel b_y)}{z} \right)\right]\left(\frac{\mu}{m}\pderiv{B}{x} + \frac{q}{m}\pderiv{\Phi}{x}\right) \notag \\
    &-\frac{1}{JB_\parallel^*}\left[\frac{mv_\parallel}{q}\left(\pderiv{b_x}{z}-\pderiv{b_z}{x}\right)+\left( \pderiv{(A_\parallel b_x)}{z}-\pderiv{(A_\parallel b_z)}{x}\right)\right] \frac{q}{m}\pderiv{\Phi}{y} \notag  \\
    &-\frac{1}{JB_\parallel^*}\left[\mathcal{C} + \frac{m v_\parallel}{q}\pderiv{b_y}{x} +\left(  \pderiv{(A_\parallel b_y)}{x}-\pderiv{(A_\parallel b_x)}{y}\right) \right]\left(\frac{\mu}{m}\pderiv{B}{z} + \frac{q}{m}\pderiv{\Phi}{z}\right) - \frac{q}{m}\pderiv{A_\parallel}{t}.
\end{align}

Here we have not neglected any terms due to smallness of parallel derivatives compared to perpendicular derivatives. While this is a common approximation made in local gyrokinetic codes, we note that dropping such terms can break Liouville's theorem, \cref{liouvillethm}, so that the gyrokinetic equation can no longer be written in conservative form (phase-space volume is no longer conserved exactly). In particular, Liouville's theorem in this case requires
\begin{equation}
    \nabla \cdot (\uv{b}\times\nabla H) = (\nabla \times \uv{b}) \cdot \nabla H
\end{equation}
so that corresponding terms cancel exactly. If parallel derivatives were dropped, this could become
\begin{equation}
    \nabla \cdot (\uv{b}\times\nabla H) \neq (\nabla \times \uv{b})_\perp \cdot \nabla H,
\end{equation}
which breaks Liouville's theorem. Since our energy-conserving discontinuous Galerkin scheme relies on the conservative form of the gyrokinetic equation, it is important for Liouville's theorem to be preserved.

\subsection{Field equations in field-aligned coordinates}
For the gyrokinetic Poisson equation, \cref{poisson1}, we must calculate
\begin{align}
    \nabla \cdot (\epsilon_\perp \nabla_\perp \Phi) &= \frac{1}{J}\left[\pderiv{}{x}\left(J \epsilon_\perp \nabla_\perp \Phi\cdot\nabla x\right)+\pderiv{}{y}\left(J \epsilon_\perp \nabla_\perp \Phi\cdot\nabla y\right)+\pderiv{}{z}\left(J \epsilon_\perp \nabla_\perp \Phi\cdot\nabla z\right)\right] \notag \\
    &\approx \frac{1}{J}\left[\pderiv{}{x}\left(J\epsilon_\perp\left(\pderiv{\Phi}{x}g^{xx}+\pderiv{\Phi}{y}g^{xy}\right)\right) +\pderiv{}{y}\left(J\epsilon_\perp\left(\pderiv{\Phi}{x}g^{xy}+\pderiv{\Phi}{y}g^{yy}\right)\right)\right].
\end{align}
Here, unlike above, we \emph{do} neglect the $\pderivInline{}{z}$ terms compared to the perpendicular derivative terms, so that the required Poisson solve remains two-dimensional. For  energetic consistency, a similar treatment would need to be made in the corresponding electrostatic field energy term (or in the second-order $E\times B$ energy term if it is kept in the Hamiltonian). Similarly, in Amp\`ere's law, \cref{ampere1}, we have
\begin{equation}
    \nabla_\perp^2 A_\parallel  =\frac{1}{J}\left[\pderiv{}{x}\left(J\left(\pderiv{A_\parallel}{x}g^{xx}+\pderiv{A_\parallel}{y}g^{xy}\right)\right) +\pderiv{}{y}\left(J\left(\pderiv{A_\parallel}{x}g^{xy}+\pderiv{A_\parallel}{y}g^{yy}\right)\right)\right].
\end{equation}

\subsection{Summary of geometric quantities}
The geometry quantities of interest, which appear in either the equations of motion or the field equations, are 
\begin{gather}
\texttt{bmag} = B\\
\texttt{cmag} = {\mathcal{C}} =  \frac{JB}{\sqrt{g_{zz}}} \\
\texttt{b\_x} = b_x = \frac{g_{xz}}{\sqrt{g_{zz}}} \\
\texttt{b\_y} = b_y = \frac{g_{yz}}{\sqrt{g_{zz}}}  \\
\texttt{b\_z} = b_z = \sqrt{g_{zz}}  \\
\texttt{gxx} = g^{xx}\\
\texttt{gxy} = g^{xy}\\
\texttt{gyy} = g^{yy} \\
\texttt{jacobPhase} = B_\parallel^* \approx B\\
\texttt{jacobGeo} = J = \sqrt{\det g_{ij}}.
\end{gather}
Here we have also included the variable names for these quantities in \gke.

\section{Helical SOL configuration including magnetic shear} \label{sec:heli-geo}
Thus far, our treatment of field-aligned geometry has been completely general, except for the assumption of axisymmetry. Now we will examine how a particular magnetic field geometry affects the choice of field-aligned coordinates and the resulting metric quantities that appear in the equations.

The helical field in an SMT is given in cylindrical  $(R,\varphi,Z)$ coordinates (where $\varphi$ is \textit{counter-clockwise} viewed from above) by
\begin{equation}
    \vec{B} = B_\varphi \uvg{\varphi} + B_v \uv{Z} = \frac{B_0 R_0}{R} \uvg{\varphi} + B_v \uv{Z}.
\end{equation}
Note we can also write this as
\begin{equation}
    \vec{B} = \nabla\Psi\times\nabla\varphi + B_0 R_0 \nabla \varphi,
\end{equation}
with $\Psi = R^2 B_v/2$ the vertical magnetic flux function (analogous to the poloidal flux in a tokamak). The field line pitch varies with radius, and can be expressed via \citep{perez2006}
\begin{equation}
    q(R) = \frac{H B_\varphi}{2\pi R B_v} = \frac{B_0 R_0 H}{2\pi R^2 B_v},
\end{equation}
where $q$ is analogous to the safety factor in a tokamak, and $H$ is the vertical height between the bottom and top end-plates where the field lines terminate. Similarly, we can define the magnetic shear as
\begin{equation}
    \hat{s} = \frac{R}{q}\deriv{q}{R}.
\end{equation}
Note that if we allow the vertical field to have some simple radial dependence via $B_v=B_v(R) = B_{v0}(R/x_0)^n$, the shear is
\begin{equation}
    \hat{s} = -2 - \frac{R}{B_v}\deriv{ B_v}{R} = -2 -n.
\end{equation}
When $B_v=$ const (as is the case for standard SMTs like the Helimak), we have $\hat{s}=-2$. The connection length is given by $L_c = H B/B_v$, which in general varies with radius. In \cref{fig:nstx-heli-Lc}$(a)$ we plot the connection length $L_c$ as a function of radius for several values of $\hat{s}$ with NSTX-like parameters. 

Note that here all quantities, including the field line pitch and the magnetic shear, are still constant \emph{along} the field lines. This is in contrast to a real tokamak SOL, where the pitch and shear can vary significantly along the field lines, especially near the X-point due to flux expansion. The field line pitch at the midplane is also nearly constant with radius in a real tokamak SOL, while here we have the pitch varying with radius. Nonetheless, this simple geometry will still allow us to study some of the effects of magnetic shear. Comparing to the actual connection length in the NSTX experiment shown in \cref{fig:nstx-heli-Lc}$(b)$ (with the figure adapted from Fig. 2 in \citet{boedo2014}), we see that the variation in $L_c$ with radius as $|\hat{s}|$ increases is approaching the variation of the connection length in the experiment. However, in the experiment the connection length varies even more than in the $\hat{s}=-10$ case over a shorter radial length, changing by a factor of almost four over a few centimeters. 

\begin{figure}[t]
    \centering
    \includegraphics[width=\textwidth]{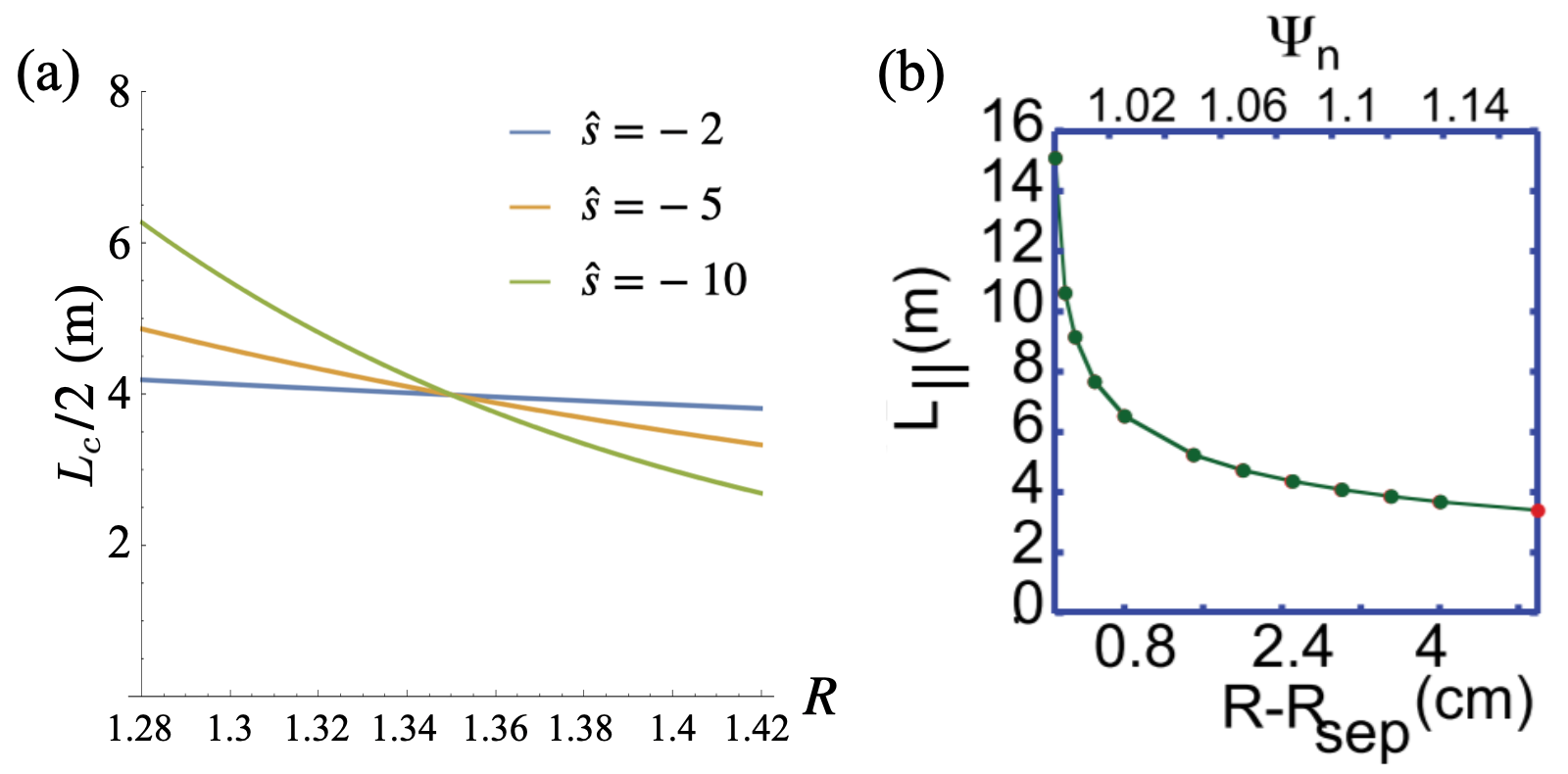}
    \caption[Connection length, $L_c=H B/B_v$, as a function of radius for several values of magnetic shear, $\hat{s}$, using NSTX-like parameters.]{$(a)$ Connection length, $L_c/2=H B/B_v/2$, as a function of radius for several values of magnetic shear, $\hat{s}$, using NSTX-like parameters: $R_0=0.85$ m, $x_0=R_0+a=1.35$ m, $B_0=0.5$ T, $H=2.4$ m. Here we choose $B_{v0}$ so that $L_c(x_0)=8$ m, which results in $B_{v0}\approx 0.1$ T. $(b)$ Connection length from the NSTX experiment, $L_\parallel$, defined as parallel length from midplane to lower divertor plate. Figure adapted from \citet{boedo2014}.}
    \label{fig:nstx-heli-Lc}
\end{figure}

Now we need field-line-following coordinates $(x,y,z)$, such that
\begin{equation}
    \vec{B} = \mathcal{C} \nabla x \times \nabla y.
\end{equation}
This can be achieved by choosing the coordinates to be\footnote{Alternatively, we could have chosen the $z$ coordinate to measure distance along the field line, with $z = Z/\sin\vartheta = Z B/B_v$, as we show in Appendix \ref{app:alt-heli-geo}. In this approach, the $z$ domain extent is given by the connection length; however, the connection length can vary with radius, so the $z$ domain extent would also need to vary with radius. Choosing the vertical height $Z$ for the $z$ coordinate does not have this issue (assuming the vertical height between the top and bottom end plates does not change with radius), so this is the approach that we take in the bulk of this chapter.}
\begin{gather}
    x = R, \qquad z = Z, \qquad y = x_0\left(\varphi - \frac{2\pi qZ}{H}\right)=x_0\left(\varphi - \frac{B_\varphi Z}{B_v R} \right). \label{heli-coord}
\end{gather}
The resulting gradient basis vectors are
\begin{gather}
    \nabla x = \uv{R}, \qquad
    \nabla y = -\frac{2\pi x_0 z}{H}\deriv{q}{x}\uv{R} + \frac{x_0}{x}\uvg{\varphi} - \frac{2\pi x_0 q}{H} \uv{Z} 
    , \qquad \nabla z = \uv{Z}, \label{gradvec}
\end{gather}
so that
\begin{equation}
    \nabla x \times \nabla y  = \frac{x_0}{x} \left(\frac{2\pi qx}{H}\uvg{\varphi} + \uv{Z}\right) =  \frac{x_0}{x} \left(\frac{B_\varphi}{B_v} \uvg{\varphi} + \uv{Z}\right). 
\end{equation}
Now taking $\mathcal{C} = B_v x/x_0$, we obtain the correct form of the field,
\begin{equation}
    \vec{B} = \frac{B_v x}{x_0} \nabla x \times \nabla y = B_\varphi\uvg{\varphi} + B_v\uv{Z}.
\end{equation}

\begin{figure}
    \centering
    \includegraphics[width=\textwidth]{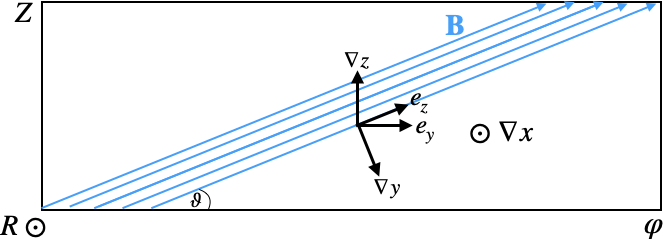}
    \caption[Diagram of field-aligned coordinate basis vectors in the $(\varphi,Z)$ plane for a helical magnetic geometry.]{Diagram of field-aligned coordinate basis vectors in the $(\varphi,Z)$ plane for a helical magnetic geometry. Note that vector magnitudes are not drawn to scale.}
    \label{fig:heli-geo}
\end{figure}

We can also calculate the tangent basis vectors
\begin{equation}
    \vec{e}_\alpha = \pderiv{\vec{X}}{\alpha} = \pderiv{R}{\alpha}\uv{R} + R\pderiv{\varphi}{\alpha}\uvg{\varphi} + \pderiv{Z}{\alpha}\uv{Z}, \qquad \alpha = (x,y,z),
\end{equation}
which gives
\begin{gather}
    \vec{e}_x = \uv{R} + \frac{2\pi x z}{H}\deriv{q}{x}\uvg{\varphi} = \uv{R} +\frac{B_\varphi z}{B_v x}\hat{s} \uvg{\varphi}, \qquad \vec{e}_y = \frac{x}{x_0}\uvg{\varphi},\qquad \vec{e}_z = \frac{2\pi q x}{H} \uvg{\varphi}+\uv{Z}=\frac{B_\varphi}{B_v}\uvg{\varphi}+\uv{Z}. \label{tanvec}
\end{gather}
A diagram of the field-aligned basis vectors in the $(\varphi,Z)$ plane is shown in \cref{fig:heli-geo}.

Finally, the mapping from the field-aligned coordinates $(x,y,z)$ to physical Cartesian coordinates $(X,Y,Z)$ is given by
\begin{gather}
    X = R \cos \varphi = x \cos\left(\frac{y}{x_0}+\frac{2\pi qz}{H}\right) = x\cos \left(\frac{y}{x_0}+\frac{B_\varphi z}{B_v x}  \right)  \label{c2pX}\\
     Y = R \sin \varphi = x \sin\left(\frac{y}{x_0}+\frac{2\pi qz}{H}\right) = x\sin \left(\frac{y}{x_0}+\frac{B_\varphi z}{B_v x}  \right)   \label{c2pY}\\
     Z = z. \label{c2pZ}
\end{gather} 
The geometry quantities of interest are 
\begin{gather}
\texttt{bmag} = B= B_v\sqrt{1+\frac{B_\varphi^2}{B_v^2}}\\
\texttt{cmag} = {\mathcal{C}} =  \frac{JB}{\sqrt{g_{zz}}}=\frac{B_v x}{x_0} \\
\texttt{b\_x} = b_x = \frac{g_{xz}}{\sqrt{g_{zz}}} =\frac{\hat{s} z B_0^2 R_0^2}{B_v B x^3}\\
\texttt{b\_y} = b_y = \frac{g_{yz}}{\sqrt{g_{zz}}}=\frac{B_0 R_0}{B x_0}  \\
\texttt{b\_z} = b_z = \sqrt{g_{zz}} = \frac{B}{B_v} \\
\texttt{gxx} = g^{xx} = 1 \\
\texttt{gxy} = g^{xy} = - \frac{\hat{s} z B_0 R_0 x_0}{B_v x^3}\\
\texttt{gyy} = g^{yy} = \frac{B_0^2 R_0^2 x_0^2}{B_v^2 x^6}\left(x^2 + \hat{s}^2 z^2\right) + \frac{x_0^2}{x^2} \\
\texttt{jacobPhase} = B_\parallel^* \approx B\\
\texttt{jacobGeo} = J = \sqrt{\det g_{ij}} = \frac{x}{x_0}.
\end{gather}
Note that in \gke, none of these expressions for the metric quantities are explicitly implemented; instead the mapping from computational to physical coordinates, \cref{c2pX,c2pY,c2pZ}, is supplied as an input and then the metric quantities are computed via automatic differentiation operations. This enables the flexibility to use more complicated mappings where analytical expressions for the metric quantities may not be available (see \emph{e.g.} \cref{sec:solovev}).

We can now compute how various terms in the equations of motion are affected by the geometry. For example, the components of the magnetic (curvature and $\nabla B$) drift are
\begin{gather}
    \vec{v}_d\cdot\nabla x = 0 \label{vdx}\\
    \vec{v}_d\cdot\nabla y = \frac{B}{J B_v}\left[- \frac{mv_\parallel^2+\mu B}{q B}\left(\frac{1}{x}+\frac{B_v^2}{B^2}(1+\hat{s})\right)+\frac{mv_\parallel^2}{qB} \frac{B_v^2}{B^2 x}(2+\hat{s})   \right] \label{vdy} \\
    \vec{v}_d\cdot\nabla z = \frac{B_\varphi}{B}\frac{m v_\parallel^2+\mu B}{qB}\left(\frac{1}{x} + \frac{B_v^2}{B^2}(1+\hat{s})\right) \label{vdz}.
\end{gather}
Comparing these terms to \cref{simplevd} that we used for the simplified helical geometry in \cref{ch:nstx-results}, we see that the first term in $\vec{v}_d\cdot\nabla y$ in \cref{vdy} is the same except for a factor of $B/(JB_v)$. We also have some new terms in $\vec{v}_d\cdot\nabla y$, which are small corrections when $B_v \ll B_\varphi\sim B$ as assumed in \cref{ch:nstx-results}. We also have a finite $\vec{v}_d\cdot\nabla z$, unlike in the simplified geometry treatment, though this term is small compared to $\vec{v}_d\cdot\nabla y$ when $B_v\ll B_\varphi\sim B$.

\subsection{Simulation results: dependence on magnetic shear in helical configuration}

In this section we present preliminary electrostatic gyrokinetic simulations in the helical configuration with magnetic shear described above. We perform a scan of the magnetic shear parameter, taking $\hat{s}=\{-2,-5,-10\}$, with the geometry becoming more sheared as $|\hat{s}|$ increases. We use NSTX-like geometry parameters: $R_0=0.85$ m, $x_0=R_0+a=1.35$ m, $B_0=0.5$ T, $H=2.4$ m, and we choose $B_{v0}=B_v(x_0)$ so that $L_c(x_0)=8$ m, resulting in $B_{v0}\approx 0.1$ T. The resulting connection length as a function of radius is shown in \cref{fig:nstx-heli-Lc}. The domain extents in the radial, binormal, and parallel directions, respectively, are $1.26 \leq x \leq 1.42$ m ($L_x \approx 56\rho_{\mathrm{s}0}$), $-0.485 \leq y \leq 0.485$ m ($L_y \approx 100\rho_{\mathrm{s}0} H/L_c(x_0)$), and $-H/2 \leq z \leq H/2$. All other parameters are the same as used in the base case ($\hat{n}=1$) from \cref{sec:power-scans}, including the source power $P_\mathrm{src} = P_\mathrm{SOL} L_y/(2\pi R_c)= 0.62$ MW and the source profile with a Gaussian peak at $x=1.3$ m.

\begin{figure}[t]
    \centering
    \includegraphics[width=\textwidth]{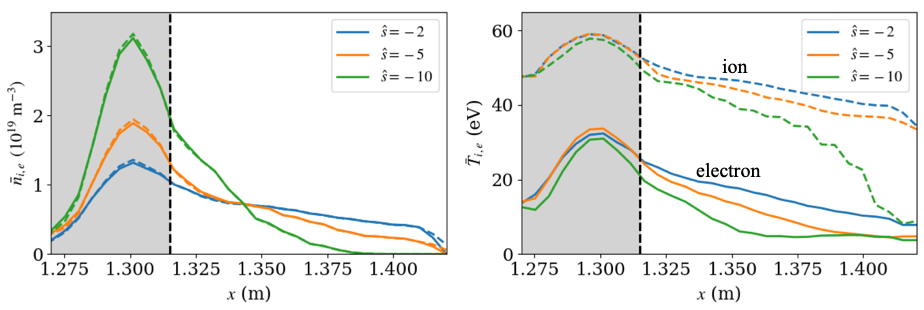}
    \caption[Midplane density and  temperature profiles for several values of magnetic shear, $\hat{s}$.]{Time-averaged midplane density and  temperature profiles for several values of magnetic shear, $\hat{s}$. The profiles steepen as $|\hat{s}|$ increases and the geometry becomes more sheared.}
    \label{fig:shat-profiles}
    \vspace{.5cm}
    \centering
    \includegraphics[width=\textwidth]{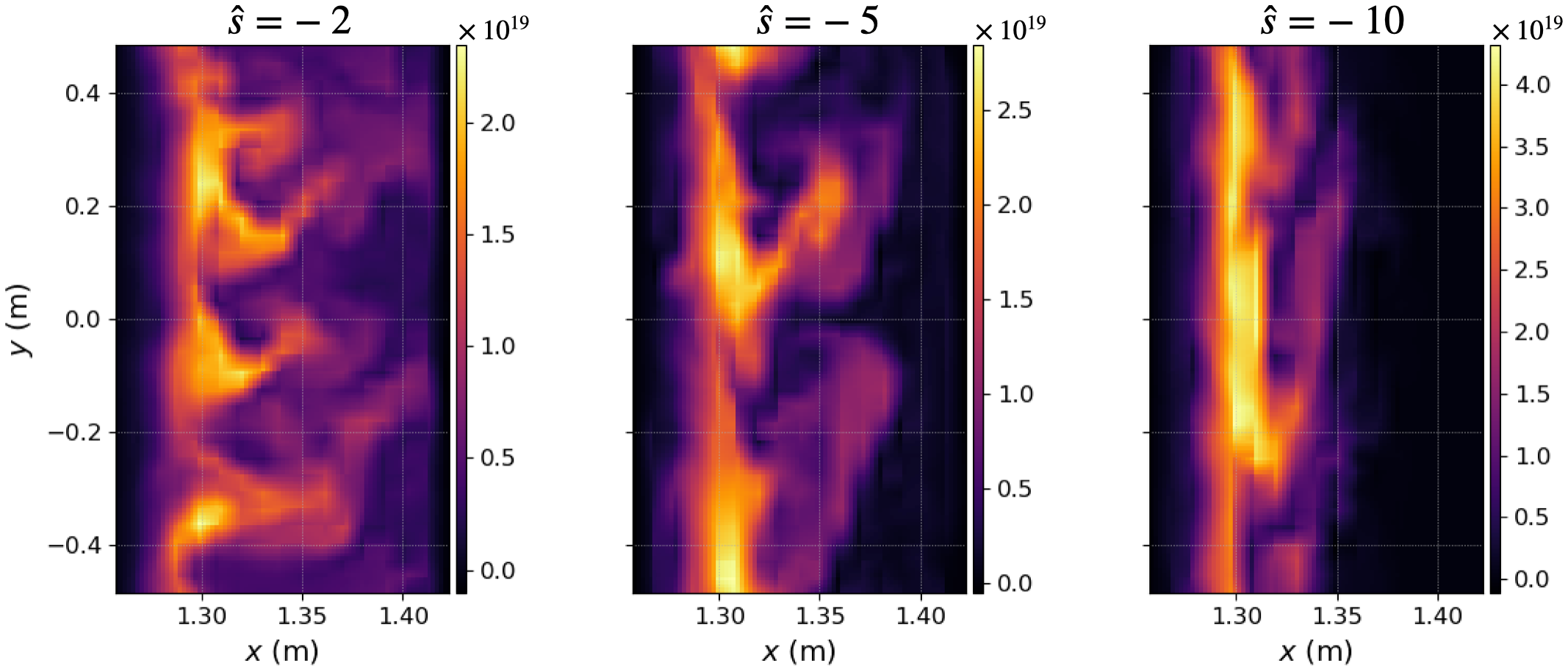}
    \caption[Snapshots of electron density at the midplane for $\hat{s}=-2,-,5,-10$.]{Snapshots of electron density at the midplane for $\hat{s}=-2,-,5,-10$.}
    \label{fig:shat-snapshots}
\end{figure}

In \cref{fig:shat-profiles} we show time-averaged density and temperature radial profiles for each case. The profiles for the $\hat{s}=-2$ case are similar to the profiles from the base case in \cref{sec:power-scans}, which used a simplified geometry neglecting magnetic shear and assumed a constant connection length. This is somewhat expected since \cref{fig:nstx-heli-Lc} shows that the connection length in the $\hat{s}=-2$ case varies little over the domain. As we move to more sheared geometries, the density profiles steepen, with the peak midplane density more than doubling between the $\hat{s}=-2$ and $\hat{s}=-10$ cases.  

Snapshots of the electron density at the midplane for each case are shown in \cref{fig:shat-snapshots}. While the $\hat{s}=-2$ case looks similar to the cases from \cref{ch:nstx-results} with blobs moving radially outwards, the $\hat{s}=-10$ case shows little evidence of radial transport. We measure the radial $E\times B$ particle flux near the midplane in \cref{fig:shat-exb-flux} and confirm that indeed particle transport is reduced as $|\hat{s}|$ increases. Recalling the blob dynamics discussion from \cref{sec:blob-dynamics}, magnetic shear can short-circuit the blob polarization by allowing currents to close through the thin sheared part of the blob, resulting in slower blobs. This seems to be consistent with the picture here, with blob transport getting weaker in more sheared cases.

\begin{figure}[t]
    \centering
    \includegraphics[width=.6\textwidth]{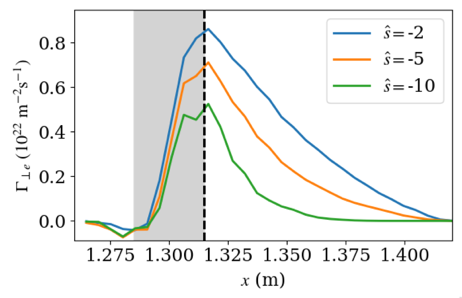}
    \caption[Radial electron $E\times B$ particle flux near the midplane for several values of $\hat{s}$.]{Time-averaged radial electron $E\times B$ particle flux near the midplane. The cross-field transport decreases as the geometry becomes more sheared at larger $|\hat{s}|$, resulting in the steeper profiles seen in \cref{fig:shat-profiles}.}
    \label{fig:shat-exb-flux}
    \vspace{.5cm}
    \centering
    \includegraphics[width=\textwidth]{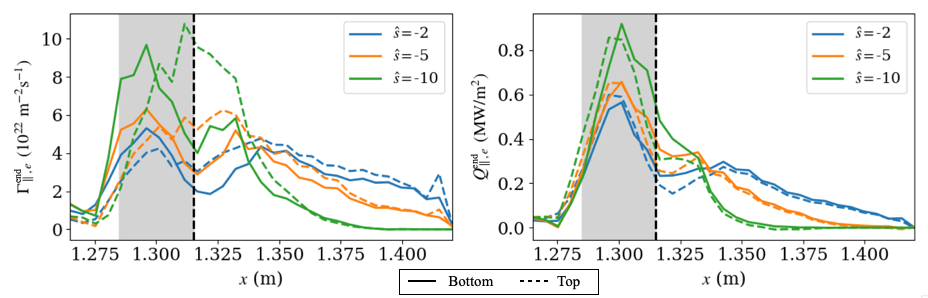}
    \caption[Electron particle and heat fluxes to the end plates for several values of $\hat{s}$.]{Time-averaged electron particle and heat fluxes to the bottom (solid) and top (dashed) end plates. The peak fluxes increase as $|\hat{s}|$ increases, consistent with less cross-field transport upstream.}
    \label{fig:shat-end-flux}
\end{figure}

\begin{figure}[t!]
    \centering
    \includegraphics[width=.8\textwidth]{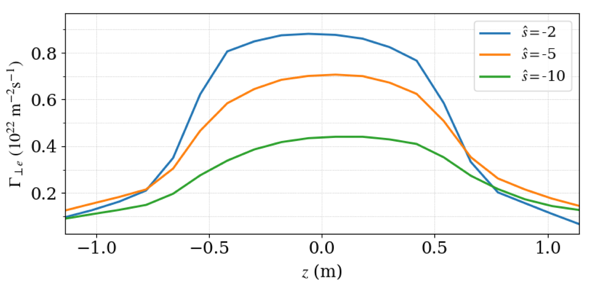}
    \caption[$\hat{s}$ scan: Electron radial $E\times B$ particle flux profiles along the field line.]{Time-averaged electron radial $E\times B$ particle flux profiles along the field line, taken just outside the source region at $x\sim 1.32$ m. There is a slight asymmetry in the profiles, with slightly more radial transport at $z>0$ than $z<0$ for the $\hat{s}=-5$ and $\hat{s}=-10$ cases. This is consistent with the differences in the particle fluxes to the bottom and top end plates shown in \cref{fig:shat-end-flux}.}
    \label{fig:shat-exb-flux-z}
    \vspace{.5cm}
    \centering
    \includegraphics[width=.8\textwidth]{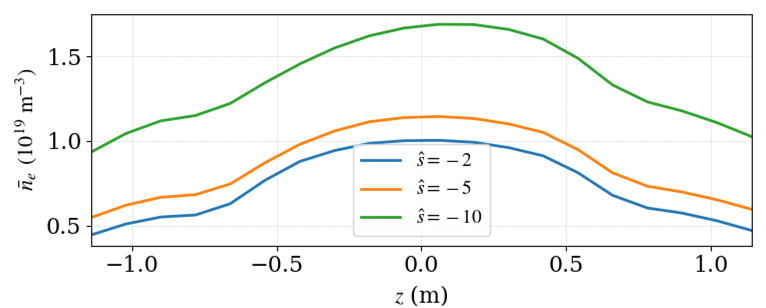}
    \caption[$\hat{s}$ scan: Electron density profiles along the field line.]{Time-averaged electron density profiles along the field line, taken just outside the source region at $x\sim 1.32$ m. There is a slight asymmetric shift in the profiles, with slightly higher density near the top of the domain ($z=H/2=1.4$ m) especially for the more sheared cases.}
    \label{fig:shat-ne-z}
\end{figure}

As a result of weaker cross-field transport, the peak particle and heat fluxes to the end plate increase in the more sheared cases, as shown in \cref{fig:shat-end-flux}. Note that we have separately shown the fluxes to the top and bottom end plates. There are some differences in the particle flux profiles between the ends, especially for the $\hat{s}=-10$ case, where there is a noticeable shift in the peak to higher $x$ between the bottom and top. When we examine how the radial particle flux (taken just outside the source region near $x=1.32$ m) varies along the field line in \cref{fig:shat-exb-flux-z}, we see asymmetry as well. There slightly more radial transport at $z>0$ than $z<0$ for the $\hat{s}=-5$ and $\hat{s}=-10$ cases. This is consistent with the radial shift in the end plate particle fluxes to higher $x$ from bottom to top.

The average electron density also shows asymmetry along the field line, as shown in \cref{fig:shat-ne-z}. Here we have again evaluated the profiles just outside the source region near $x=1.32$ m. There is a slight shift in the profiles to higher $z$, with the shift larger in the more sheared cases. One possible reason for the asymmetry is the presence of a vertical component of the $E\times B$ drift, 
\begin{equation}
    \vec{v}_E\cdot\nabla z = \frac{1}{J B_\parallel^*}\left(b_x \pderiv{\Phi}{y} - b_y \pderiv{\Phi}{x} \right) = \frac{x_0}{x B} \left(\frac{\hat{s} z B_0^2 R_0^2}{B_v B x^3}\pderiv{\Phi}{y} - \frac{B_0 R_0}{B x_0}\pderiv{\Phi}{x}\right).
\end{equation}
This term was not present in the earlier simplified geometry simulations from \cref{sec:emgk-res}. It has been suggested that this term is responsible for asymmetry between top and bottom profiles in the Helimak \citep{bernard2020}.

\section{Solov'ev model analytical equilibria in the SOL} \label{sec:solovev}

The Grad-Shafranov equation in cylindrical $(R,Z,\phi)$ coordinates,
\begin{equation}
-\mu_0 R J_\phi = -\nabla^* \Psi  = - R^2 \nabla \cdot \frac{1}{R^2}\nabla \Psi = \mu_0 R^2 p'(\Psi) + I I'(\Psi),
\end{equation}
relates the equilibrium, defined by the poloidal flux function $\Psi$, to the pressure and current profiles, $p(\Psi)$ and $I(\Psi)$ respectively. 
Given $\Psi(R,Z)$, one particular choice for the field-aligned coordinates are $(x,y,z) = (\Psi,-\alpha,\theta)$, where $x=\Psi$ is the poloidal flux, $z=\theta$ is a generalized poloidal angle, and $y=-\alpha$ is a field-line-labeling coordinate defined so that the Clebsch representation of the magnetic field is given by
\begin{equation}
    \vec{B} = \nabla \alpha \times \nabla \Psi = \nabla x \times \nabla y,
\end{equation}
with $\mathcal{C}=1$ for this choice of coordinates.
Note that for an axisymmetric system, the background magnetic field can also be expressed as
\begin{equation}
    \vec{B} = I(\Psi)\nabla\phi + \nabla\Psi\times\nabla\phi, \label{axiB}
\end{equation}
where $\phi$ is the toroidal angle and $I(\Psi) = R B_\phi$. Thus in practice, the functions $\Psi(R,Z)$ and $B_\phi(R,Z)$ are sufficient to determine the magnetic geometry. In this section we will take an analytical Solov'ev solution of the Grad-Shafranov equation for $\Psi(R,Z)$, and show how to compute the remaining $\theta$ and $\alpha$ coordinates. In principle, the procedure outlined could be used for an arbitrary $\Psi(R,Z)$ profile, including one from a numerical equilibrium file generated by \emph{e.g.} EFIT \citep{lao1985,lao1990}.

\begin{figure}[t!]
    \centering
    \includegraphics[width=.6\textwidth]{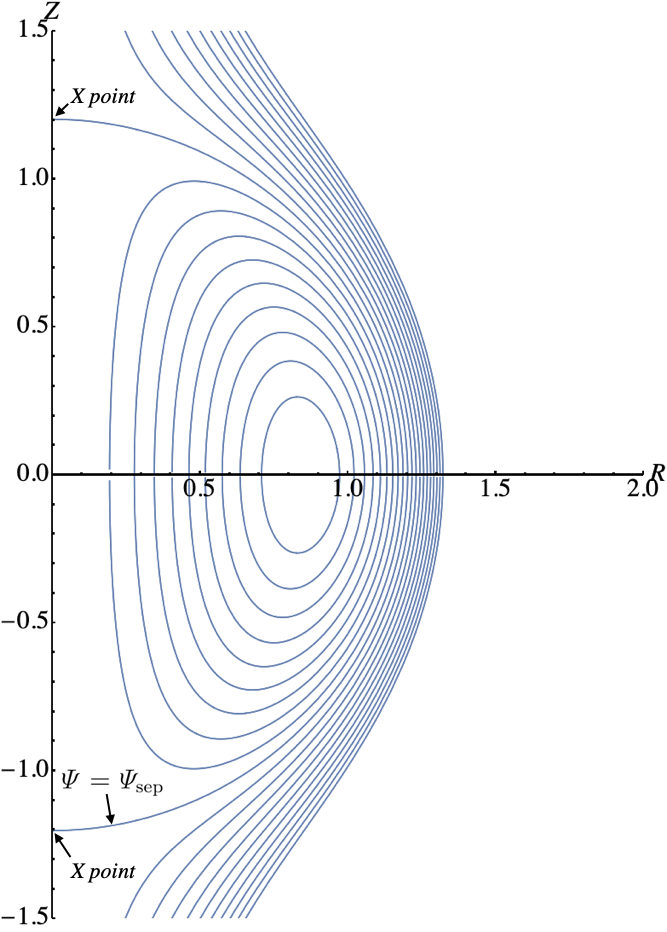}
    \caption[Flux surfaces for an analytical Solov'ev equilibrium with NSTX-like parameters.]{Flux surfaces (surfaces of constant $\Psi$) for the analytical Solov'ev equilibrium given by \cref{solovev}, with NSTX-like parameters: $R_0=0.85$, $B_0=0.55$, $\kappa_0=2$, and $\bar{q}=2$. For the open field lines, we will take the end-plates to be at $Z=\pm Z_\text{end}=\pm 1.5$.}
    \label{fig:solovev-nstx-psi-full}
\end{figure}

Taking a vacuum field (which is a good approximation in the SOL) so that $I=$ const and approximating $p\propto \Psi$, we obtain an analytical Solov'ev solution \citep{chance1978,jardin2010},
\begin{equation}
\Psi(R,Z) = \frac{B_0}{2R_0^2\kappa_0 \bar{q}}\left(R^2 Z^2 +\frac{\kappa_0^2}{4}(R^2-R_0^2)^2\right), \label{solovev}
\end{equation}
where $R_0$ is the major radius of the magnetic axis, $B_0$ is the toroidal field strength at the magnetic axis, $\kappa_0$ is the ellipticity at the  axis, and $\bar{q}$ is the safety factor at the axis. The resulting flux surfaces for NSTX-like parameters are shown in \cref{fig:solovev-nstx-psi-full}. The flux surfaces are up-down symmetric, and there are closed and open surfaces, separated by a separatrix given by the surface $\Psi=\Psi_\text{sep} = B_0 \kappa_0 R_0^2/(8 \bar{q})$. In this (unphysical) equilibrium, there are X-points where the separatrix intersects the $R=0$ axis. However, we will focus only on the open flux surfaces (sufficiently far away from the X-points), that is, those with $\Psi > \Psi_\text{sep}$. We will assume that the field lines on these open surfaces terminate on end-plates at $Z= \pm Z_\text{end}$. For a given surface with $\Psi>\Psi_\text{sep}$, we can then parametrize the surface with
\begin{equation}
    R(\Psi,Z) = \sqrt{R_0^2 + \frac{2 Z^2}{\kappa_0^2} + \frac{2}{\kappa_0^2}\sqrt{\frac{2 \kappa_0^3  \bar{q} R_0^2}{B_0}\Psi - \kappa_0^2 R_0^2 Z^2 + Z^4}}. \label{psi-param}
\end{equation}

We will now calculate the generalized poloidal angle, $\theta$. For this, first consider a magnetic surface coordinate system $(\Psi, \theta, \phi)$. The line element in these coordinates is given by
\begin{equation}
    \dx{\vec{\ell}} = \pderiv{\vec{R}}{\Psi}\dx{\Psi} + \pderiv{\vec{R}}{\theta}\dx{\theta} +\pderiv{\vec{R}}{\phi}\dx{\phi}. 
\end{equation}
On a magnetic surface we have $\dx{\Psi}=0$ and in a poloidal plane we have $\dx{\phi}=0$, so the line element on the surface in the poloidal plane is simply
\begin{equation}
    \dx{\vec{\ell}}_p = \pderiv{\vec{R}}{\theta}\dx{\theta} = {J}\nabla\phi\times\nabla\Psi\, \dx{\theta},
\end{equation}
with magnitude
\begin{equation}
    \dx{\ell_p} = |\dx{\vec{\ell}_p}|=|{J}\nabla\phi\times\nabla\Psi| \dx{\theta} = \frac{|{J}\nabla \Psi|}{R}\dx{\theta}.
\end{equation}
Now we can find $\theta$ by integrating
\begin{equation}
    \dx{\theta} = \frac{R}{|{J}\nabla\Psi|}\dx{\ell_p}
\end{equation}
along contours of $\Psi$,
\begin{equation}
    \theta =  \int_{\theta_0}^{\theta}\dx{\theta}'= \int_{\theta_0}^{\theta} \frac{R}{|{J}\nabla\Psi|}\dx{\ell_p}. \label{thetadef0}
\end{equation}
Using the flux surface parametrization from \cref{psi-param}, the differential length along contours of $\Psi$ is given by
\begin{equation}
    \dx{\ell_p} = \sqrt{1+\left(\pderiv{R(\Psi,Z)}{Z}\right)^2}\dx{Z}. 
\end{equation}

Instead of prescribing the form of the generalized poloidal angle and subsequently computing the Jacobian ${J}=(\nabla\Psi\times\nabla\theta\cdot\nabla\phi)^{-1}$, we can instead prescribe the Jacobian to give desired properties of the generalized poloidal angle \citep{jardin2010}. Here, we will choose 
\begin{equation}
    {J}=s(\Psi)\frac{R}{|\nabla\Psi|}, \label{jacob-phi}
\end{equation}
which gives an equal-arc-length poloidal angle in $(-\pi,\pi]$, with
\begin{equation}
    s(\Psi) = \frac{1}{2\pi}\oint \dx{\ell_p} = \frac{1}{\pi} \int_{-Z_\text{end}}^{Z_\text{end}} \sqrt{1+\left(\pderiv{R(\Psi,Z')}{Z'}\right)^2}\dx{Z'}
\end{equation}
a normalization factor. This means that on a particular flux surface, the arc length of each $\Delta \theta$ segment will be equal (see \cref{fig:solovev-nstx-psi-sol}). Inserting this Jacobian definition into \cref{thetadef0}, the poloidal angle is then given by
\begin{equation}
    \theta(R,Z) = \frac{1}{s(\Psi(R,Z))}\int_{-Z_\text{end}}^{Z} \sqrt{1+\left(\pderiv{R(\Psi,Z')}{Z'}\right)^2}\dx{Z}'.
\end{equation}
These integrals have no closed form in general and must be evaluated numerically.

Now we will define the third coordinate, $\alpha$, such that $\vec{B} = \nabla \alpha \times \nabla \Psi$. To do this, we take $\alpha$ to be of the form \citep{kruskal1958}
\begin{equation}
    \alpha = \phi - q(\Psi)\theta -\nu(\Psi,\theta,\phi),
\end{equation}
where $\nu$ is a to-be-determined function that is periodic in $\theta$ and $\phi$, and $q(\Psi)$ is the global safety factor, defined as the poloidal average of the local safety factor $\hat{q}(\Psi,\theta)$,
\begin{equation}
    q(\Psi) = \frac{1}{2\pi}\int_0^{2\pi} \hat{q}(\Psi,\theta)\, \dx{\theta},
\end{equation}
with
\begin{equation}
    \hat{q}(\Psi,\theta) = \frac{\vec{B}\cdot\nabla\phi}{\vec{B}\cdot\nabla\theta} = -\mathcal{J} \vec{B}\cdot\nabla\phi = -I(\Psi)\frac{\mathcal{J}}{R^2} = \frac{-I(\Psi)s(\Psi)}{R |\nabla\Psi|}.
\end{equation}
It is convenient to define a new toroidal angle $\zeta = \phi - \nu$, so that we have\footnote{Alternatively, we could have defined $\alpha$ in terms of the physical toroidal angle $\phi$, \emph{i.e.} $\zeta=\phi$, by modifying the poloidal coordinate to be $\theta' = \theta + \nu/q$. In either case, the result is straight field lines with slope $q$, be it in the $(\theta,\zeta)$ plane or the $(\theta',\phi)$ plane.}
\begin{equation}
    \vec{B}= \nabla \alpha \times \nabla \Psi = \nabla\Psi \times \nabla (q\theta-\zeta) .
\end{equation}
Here we can see that the field lines are straight lines with slope $q$ in the $(\theta,\zeta)$ plane, given by $\alpha = \zeta - q(\Psi)\theta = \text{const}$. 

\begin{figure}[t!]
    \centering
    \includegraphics[width=.6\textwidth]{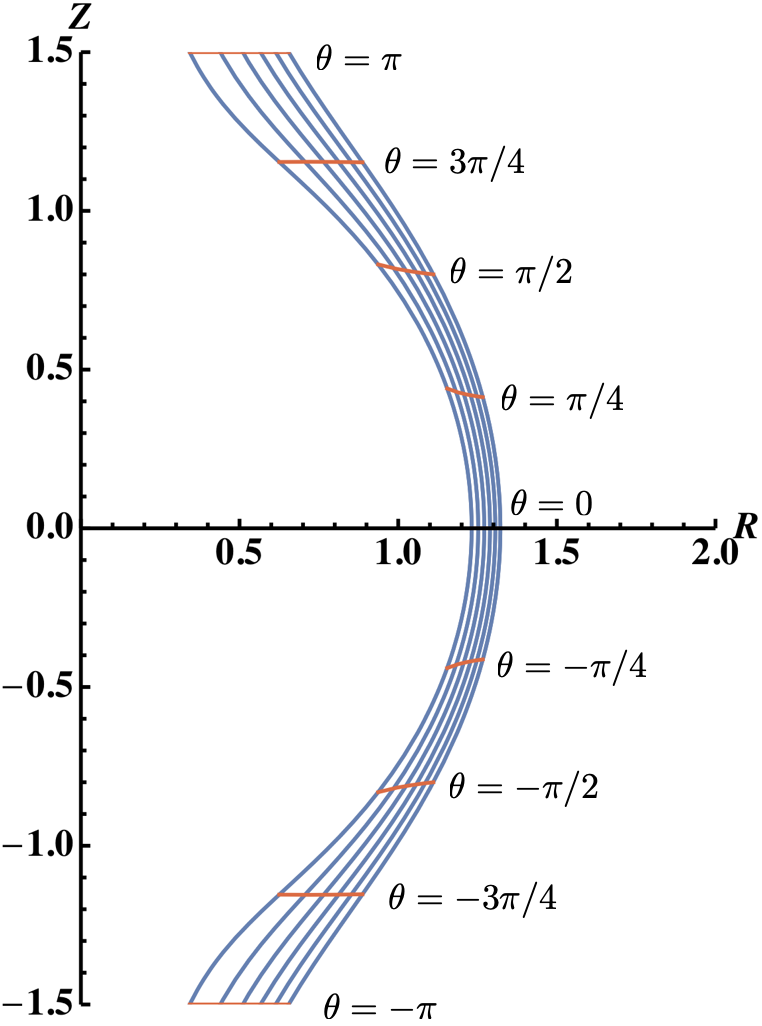}
    \caption[SOL flux surfaces for an analytical Solov'ev equilibrium with NSTX-like parameters.]{Flux surfaces with $\Psi_N = \Psi/\Psi_\text{sep}=(1.2,1.4,1.6,1.8,2.0)$ are shown in blue. Lines of constant $\theta$ are shown in red. Here, $\theta$ is defined to be an equal-arc-length poloidal angle. This means that on a particular flux surface, the arc length of each $\Delta \theta=\pi/4$ segment is equal.}
    \label{fig:solovev-nstx-psi-sol}
\end{figure}

To compute $\alpha$, we first note that
\begin{equation}
    \nabla \alpha = \pderiv{\alpha}{\theta}\nabla \theta + \pderiv{\alpha}{\Psi}\nabla\Psi +\pderiv{\alpha}{\phi}\nabla\phi,
\end{equation}
which gives 
\begin{equation}
    \vec{B} =\nabla\Psi\times  \nabla\alpha = \pderiv{\alpha}{\theta}\nabla\Psi\times \nabla\theta + \pderiv{\alpha}{\phi}\nabla\Psi\times\nabla\phi.
\end{equation}
Now notice
\begin{equation}
    \vec{B}\cdot\nabla\phi = \pderiv{\alpha}{\theta}\nabla\Psi\times\nabla\theta\cdot\nabla\phi = \frac{1}{{J}}\pderiv{\alpha}{\theta},
\end{equation}
and from \cref{axiB}, we also have 
\begin{equation}
    \vec{B}\cdot\nabla\phi = I(\Psi)|\nabla \phi|^2 = \frac{I(\Psi)}{R^2}.
\end{equation}
Thus we can integrate along $\theta$ (at constant $\Psi$) to find $\alpha$,
\begin{equation}
    \alpha = C(\Psi,\phi)+\int_0^\theta \pderiv{\alpha}{\theta'} \dx{\theta'} = \phi-I(\Psi)\int_0^\theta \frac{{J}}{R^2}\dx{\theta'} = \phi - R B_\phi \int \frac{1}{|\nabla \Psi|R}\dx{\ell_p},
\end{equation}
where we have taken the constant of integration to be $C(\Psi,\phi)=\alpha(\theta=0,\Psi,\phi)=\phi$. As we did for $\theta$, this integral can be computed by integrating along the contours of $\Psi$ parametrized by $Z$,
\begin{equation}
    \alpha(R,Z,\phi) = \phi - R B_\phi \int_{0}^{Z} \frac{1}{|\nabla \Psi|R(\Psi,Z')}\sqrt{1+\left(\pderiv{R(\Psi,Z')}{Z'}\right)^2}\dx{Z'}.
\end{equation}

\begin{figure}[t!]
    \centering
    \includegraphics[width=.8\textwidth]{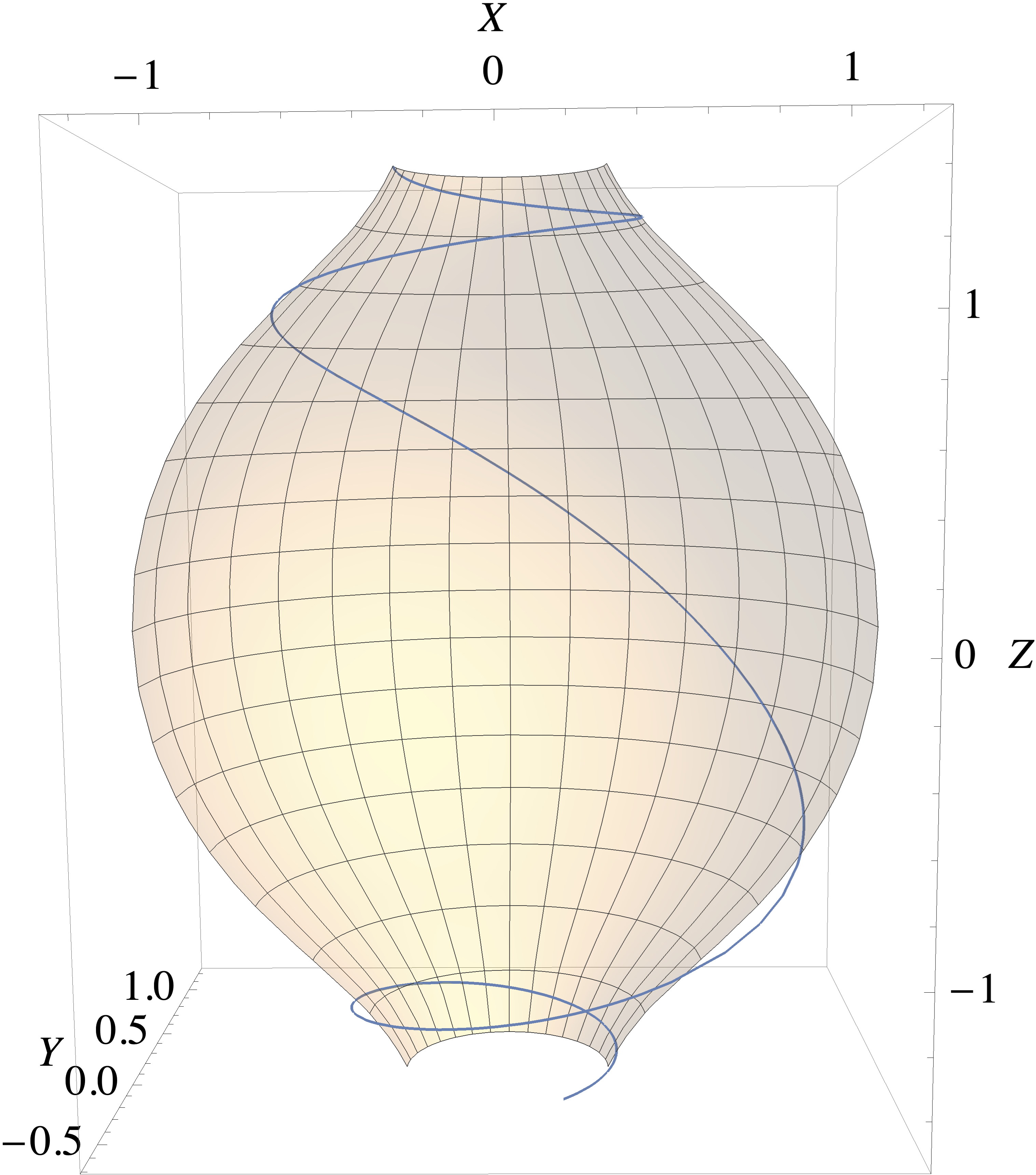}
    \caption[3D rendering of flux surface with field-aligned coordinate tracing.]{Selecting a line of constant $\alpha$ and constant $\Psi$ traces a field line (blue) from the bottom end-plate (at $\theta=-\pi$ and $Z=-Z_\text{end}=-1.5$) to the top end-plate (at $\theta=\pi$ and $Z=Z_\text{end}=1.5$). A rendering of part of the constant $\Psi_N=1.2$ surface is shown in the background.}
    \label{fig:solovev-nstx-line-trace}
\end{figure}

We now have expressions for $\Psi(\vec{R})$, $\alpha(\vec{R})$, and $\theta(\vec{R})$. We can thus define a field-aligned coordinate system with $(x,y,z) = (\Psi,-\alpha,\theta)$. (The difference of sign between $y$ and $\alpha$ is a matter of convention, and we follow \cite{beer1995field} here). The expressions for $\alpha$ and $\theta$ involve integrals that must be evaluated numerically in most cases. In \cref{fig:solovev-nstx-psi-sol}, we show lines of constant $\theta$ (and $\alpha$) in the poloidal plane for the open-field-line region of the NSTX-like equilibrium shown in \cref{fig:solovev-nstx-psi-full}, demonstrating the equal-arc-length poloidal angle. In \cref{fig:solovev-nstx-line-trace}, we show that a line of constant $\alpha$ and constant $\Psi$ traces a field line from the bottom end-plate to the top end-plate. Further, note that the connection length can be computed from
\begin{equation}
    L_c = \oint \sqrt{g_{zz}} \dx{\ell_p} = \int_{-Z_\text{end}}^{Z_\text{end}} \sqrt{g_{zz}(R,Z')}\sqrt{1+\left(\pderiv{R(\Psi,Z')}{Z'}\right)^2}\dx{Z}'.
\end{equation}
\cref{fig:solovev-Lc} shows the connection length as a function of the poloidal flux $\Psi$ for the NSTX-like equilibrium. 

\begin{figure}[t!]
    \centering
    \includegraphics[width=.7\textwidth]{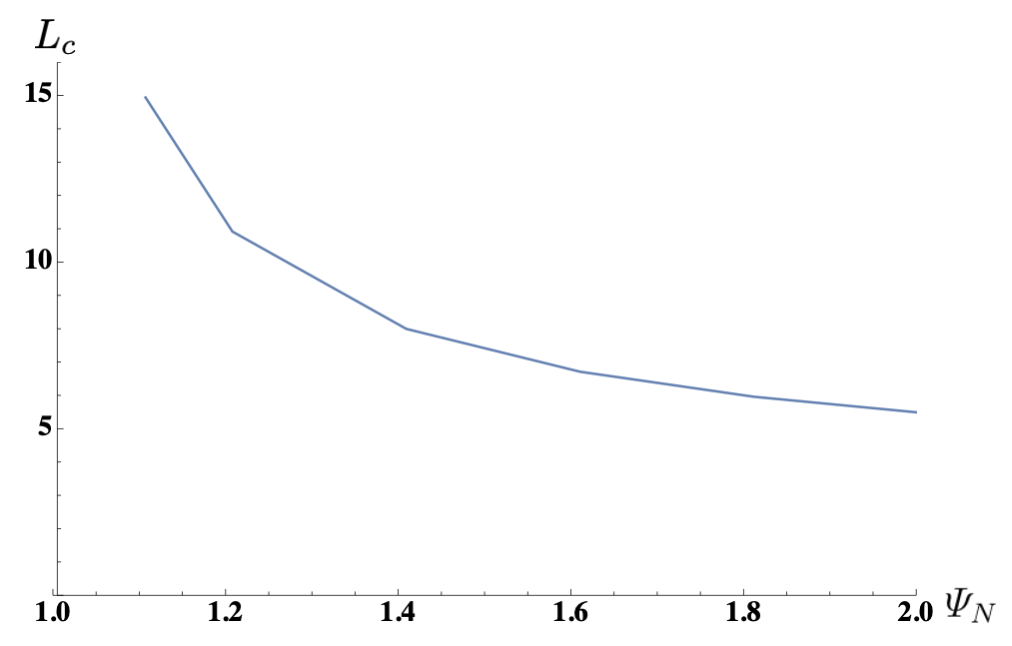}
    \caption{Connection length, $L_c$, as a function of normalized poloidal flux $\Psi_N=\Psi/\Psi_\text{sep}$ for the NSTX-like equilibrium.}
    \label{fig:solovev-Lc}
\end{figure}

We also need derivatives of the coordinates to compute metric quantities, which can also be computed numerically via automatic differentiation or finite differencing; alternatively, integral expressions for the derivatives can also be derived via the Leibniz integral rule. In \cref{fig:solovev-allGeo}, we show some of the resulting geometric quantities. There is significant flux expansion and magnetic shear for flux surfaces near the separatrix and X-points, which causes some of the metric quantities to diverge near the top-left and bottom-left corners of the $(\Psi,\theta)$ domain as $\Psi$ approaches $\Psi_\text{sep}$. Finally, note that the Jacobian of the $(\Psi,\alpha,\theta)$ coordinate system, 
\begin{equation}
    J = (\nabla \Psi \times \nabla \alpha \cdot \nabla \theta)^{-1} = s(\Psi)\frac{R}{|\nabla \Psi|}, \label{jacob-alpha}
\end{equation}
is equivalent to the Jacobian defined in \cref{jacob-phi} for the $(\Psi,\theta,\phi)$ magnetic surface coordinate system.

\begin{figure}[t!]
    \centering
    \includegraphics[width=\textwidth]{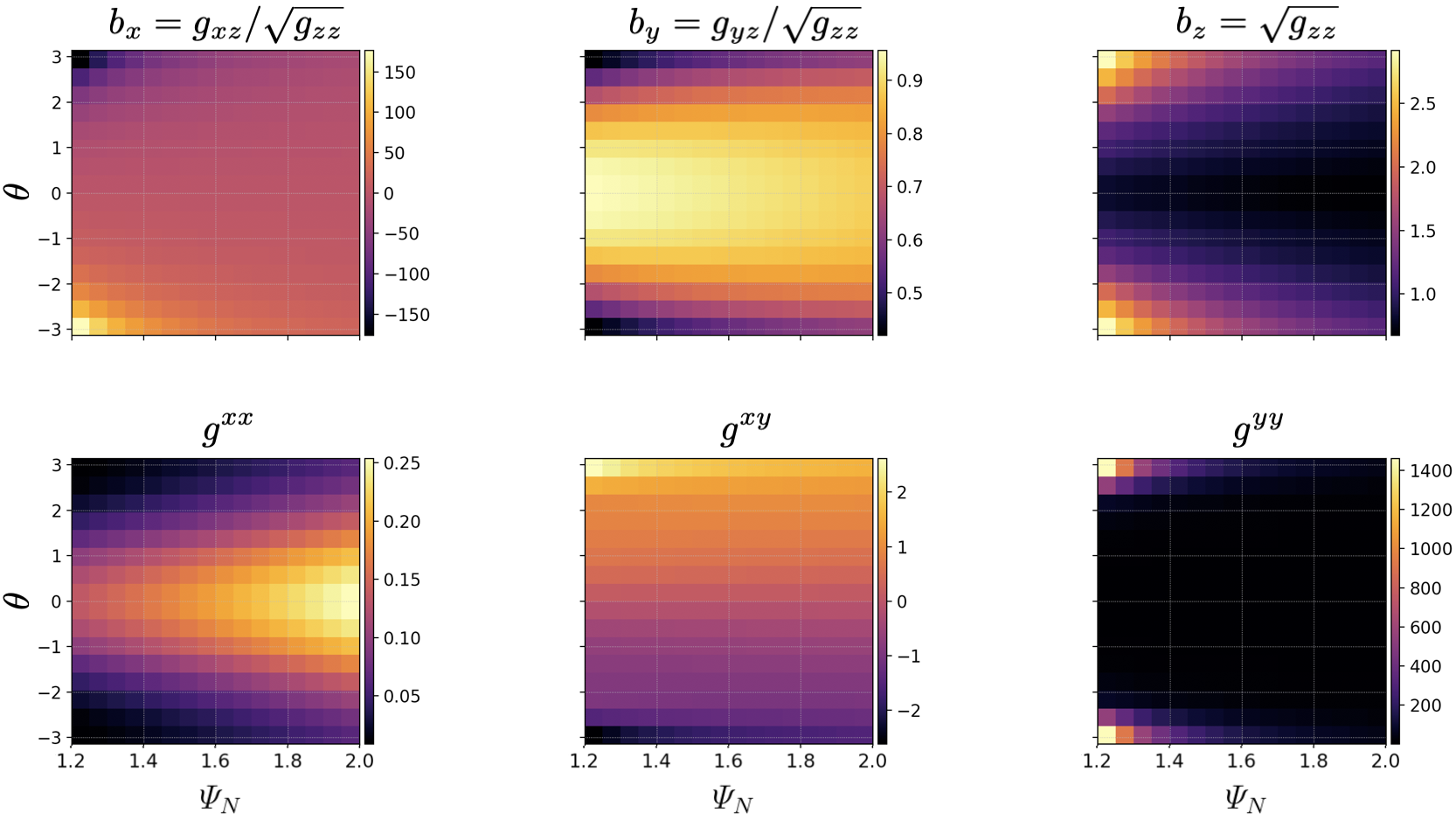}
    \caption[Geometric quantities resulting from the field-aligned coordinate system for the NSTX-like SOL equilibrium.]{Geometric quantities resulting from the field-aligned coordinate system as derived in this section for the NSTX-like equilibrium shown in \cref{fig:solovev-nstx-psi-full}.}
    \label{fig:solovev-allGeo}
\end{figure}

Linear and nonlinear simulations using the shaped SOL geometry presented in this section are left to future work. An important question is how close the simulation domain can approach the X-point before the significant magnetic shear and metric divergence near the X-point become numerically untenable.

\section{Analytical concentric circular equilibria} \label{sec:circ-geo}

In the tokamak core, many early results and inter-code benchmarks have used an \emph{ad hoc} analytical equilibrium model with circular concentric flux surfaces. This is commonly given by the popular $s-\alpha$ model with no Shafranov shift ($\alpha=0$).  This type of geometry has also been used to model circular SOL plasmas with a limiter at the inboard midplane, modeling an inner-wall-limited plasma \citep{ricci2013,halpern2013,francisquez2017global,zhu2017global}. Here we use a slightly modified circular equilibrium model that is more consistent than the $s-\alpha$ model in the large aspect ratio approximation \citep{lapillonne2009}.

As in the previous section, we start from a Solov'ev solution of the form of \cref{solovev}. We use a toroidal coordinate system $(r,\theta,\phi)$, where $(r,\theta)$ are the minor radius and poloidal angle coordinates in the $(R,Z)$ plane such that $R= R_0 + r\cos\theta$ and $Z=r\sin\theta$, and $\phi$ is the toroidal angle. Taking the large aspect ratio limit $R_0/a\gg 1$ and $\kappa_0=1$ for circular flux surfaces, we obtain
\begin{equation}
    \Psi = \frac{B_0}{2 \bar{q}}r^2. \label{psicirc}
\end{equation}
The resulting magnetic field is
\begin{equation}
\vec{B} = \nabla\phi\times\nabla\Psi + R B_\phi \nabla \phi = \frac{R_0 B_0}{R} \left[\vec{e}_\phi + \frac{r}{R_0 \bar{q}} \vec{e}_\theta\right].
\end{equation}
While \cref{psicirc} implies $\bar{q}=\text{const}$, we will generalize the equilibrium to allow $\bar{q}=\bar{q}(r)$, which is related to the true safety factor via \citep{lapillonne2009}
\begin{equation}
    q(r) = \frac{1}{2\pi}\int_0^{2\pi} \frac{{\bf B}\cdot\nabla\phi}{{\bf B}\cdot\nabla \theta}\ \dx{\theta} = \frac{\bar{q}}{2\pi}\int_0^{2\pi}\frac{\dx{\theta}}{1+\epsilon \cos \theta} = \frac{\bar{q}}{\sqrt{1-\epsilon^2}},
\end{equation}
where $\epsilon=r/R_0$ is the inverse aspect ratio. We instead define the poloidal flux in terms of its radial derivative, $\dx{\Psi}/\dx{r}= rB_0/\bar{q}$, giving
\begin{equation}
    \Psi = \int_0^r \frac{r' B_0}{\bar{q}(r')}\, \dx{r'}
\end{equation}
instead of \cref{psicirc}.

Here we will choose a straight-field-line poloidal angle $\chi$ defined such that field lines are straight in the $(\chi,\phi)$ plane with slope $q$. To do this we take $({\vec{B}\cdot\nabla \phi })/({ \vec{B}\cdot \nabla \chi })= q$, which leads to $\dx{\chi}/\dx{\theta} = (\vec{B}\cdot\nabla\phi)/(q\vec{B}\cdot \nabla \theta)$. 
Integrating over $\theta$ then gives
\begin{equation}
\chi(r,\theta) = \frac{1}{q}\int_0^\theta \frac{{\bf B}\cdot\nabla\phi}{{\bf B}\cdot\nabla\theta'}\ \dx{\theta'} = \frac{\bar{q}}{q}\int_0^\theta \frac{\dx{\theta'}}{1+\epsilon\cos\theta'} = 2 \arctan\left[\sqrt{\frac{1-\epsilon}{1+\epsilon}}\tan\left(\frac{\theta}{2}\right)\right].
\end{equation}

Now we can define the field-aligned coordinate system $(x,y,z)$ as
\begin{equation}
    x = r - x_0, \qquad y = \frac{r_0}{q_0}\left(q\chi - \phi\right) - y_0, \qquad z = \chi,
\end{equation}
where $r_0$ is the minor radius of a flux surface of interest, and $q_0=q(r_0)$. 

\begin{figure}
    \centering
    \includegraphics[width=.7\textwidth]{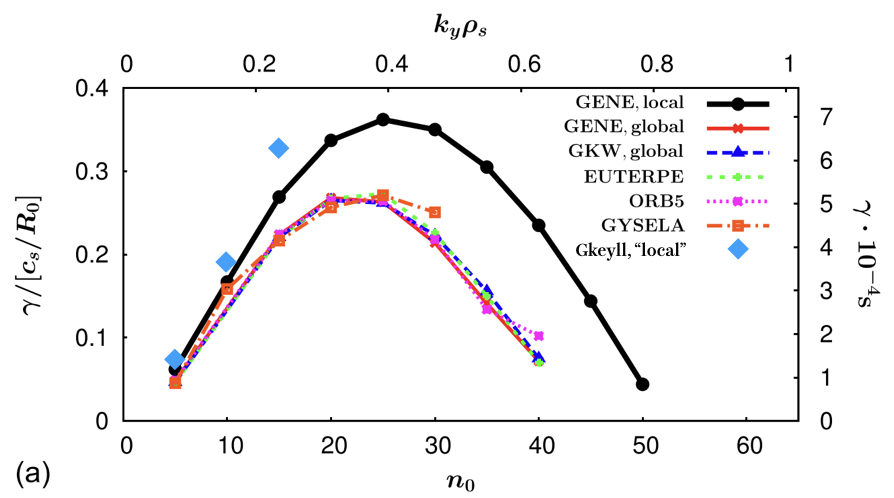}
    \caption[Linear growth rate results for the Cyclone base case, with \gke results compared to several other gyrokinetic codes.]{Linear growth rate results for the Cyclone base case, with \gke results (blue diamonds) compared to several other gyrokinetic codes. Note that since \gke does not include gyroaveraging we only expect accuracy in the small $k_y\rho_\mathrm{s}$ limit. Adapted from Figure 3a of \citep{gorler2016}.}
    \label{fig:cyclone-linear}
\end{figure}

\subsection{Cyclone base case linear benchmark}
Here we perform the now-standard Cyclone base case linear benchmark with \gke using the circular equilibrium described in the previous section. These calculations use the electrostatic approximation with adiabatic electrons, as in the original benchmark \citep{dimits2000comparisons}. Like the KBM calculations in \cref{sec:kbm}, these simulations use a single narrow cell in the radial direction, making them effectively ``local'', unlike a ``global'' calculation that covers some portion of the tokamak minor radius and accounts for the radial variation of background quantities.\footnote{While \gke could in principle be capable of performing global calculations, one must be careful with the initial conditions to avoid an equilibrium-scale $n=0$ mode that results from the neoclassical terms. This can obscure or alter the growth rate of the $n\neq 0$ mode of primary interest. The so-called ``canonical'' Maxwellian, which is formulated in terms of the canonical momentum so that it is an equilibrium of the full-$f$ gyrokinetic equation including the neoclassical terms, is used in some full-$f$ gyrokinetic codes to avoid exciting the $n=0$ mode \citep{angelino2006}. The implementation of a canonical Maxwellian initial condition is currently in progress, which will enable global instability calculations.} We use an extended domain along the field line with $-3\pi \leq \chi \leq 3\pi$ and Dirichlet boundary conditions $f(\chi = \pm 3\pi) = F_0$ so that no fluctuations are allowed at the domain ends.

\begin{figure}
    \centering
    \includegraphics[width=.6\textwidth]{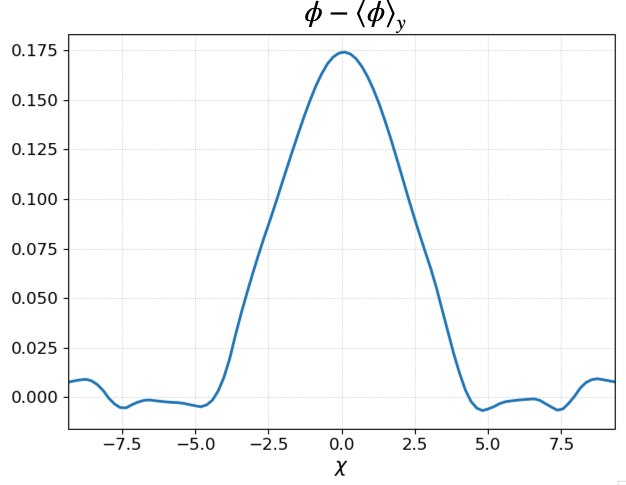}
    \caption{Linear eigenmode for $n=10$ case, showing ballooning mode structure along the field line.}
    \label{fig:cyclone-phi}
\end{figure}

We compare our results to the inter-code benchmark of \citet{gorler2016}; relevant parameters for the simulation setup are given in Tables I and II of that reference. In \cref{fig:cyclone-linear} we have reproduced Figure 3a of \citet{gorler2016} with the addition of some \gke results. We see good agreement for $n=5$, where $n = k_y r_0/q_0$ is the toroidal mode number. Since \gke does not include gyroaveraging, we only expect accuracy in the small $k_y \rho_\mathrm{s}$ limit. Consequently, we overpredict the growth rate as we move to higher mode numbers. We also show the linear eigenmode for the $n =10$ case in \cref{fig:cyclone-phi}, which has the characteristic peak at the outboard midplane $(\chi=0)$ of a ballooning mode. Here we have removed the $n=0$ component of the potential by subtracting its $y$-average.

Additional work will extend these benchmarks to the two-species electrostatic cases and the electromagnetic cases that are also examined in \citet{gorler2016}. We will also include global effects in future work.


\begin{subappendices}
\section{Alternative coordinate mappings for helical geometry} \label{app:alt-heli-geo}
One may note that the helical coordinate mapping we used in this chapter, \cref{heli-coord}, does not match the mapping proposed in \citet{shi2019} or \cref{ch:nstx-results}, given by
\begin{equation}
    R = x,\qquad\varphi = \frac{y \sin\vartheta+ z\cos\vartheta}{R_c} ,\qquad Z=z\sin\vartheta,
\end{equation}
where $\sin\vartheta=B_v/B=H/L_c$, with $\vartheta$ the field line pitch angle (as shown in \cref{fig:heli-geo}) so that
\begin{equation}
    x = R,\qquad z=\frac{Z}{\sin\vartheta}=\frac{L_c}{H} Z,\qquad y =\frac{R_c}{\sin\vartheta}\left(\varphi - \frac{Z}{R_c}\cot\vartheta\right) = R_c\frac{L_c}{H}\left(\varphi - \frac{Z}{R_c} \frac{B_\varphi}{B_v}\right).
\end{equation}
After computing the resulting gradient basis vectors, we have
\begin{equation}
    \nabla x\times\nabla y = \frac{L_c}{H}\left({\frac{B_\varphi}{B_v} \uvg{\varphi}+\frac{R_c}{x}\uv{Z}}\right).
\end{equation}
Taking $\mathcal{C}=H B_v/L_c$, the resulting background magnetic field in Clebsch form is
\begin{equation}
    \vec{B} = \mathcal{C}\nabla x \times \nabla y = B_\varphi \uvg{\varphi} + \frac{R_c}{x}B_v \uv{Z}.
\end{equation}
We see that this only gives the correct pitch of the magnetic field at $x=R_c$, so \emph{this is not a field-aligned coordinate system}.

We can fix this issue and make the coordinates field-aligned by changing the $\varphi$ mapping to 
\begin{equation}
    \varphi = \frac{y \sin\vartheta}{R_c} + \frac{z \cos\vartheta}{x} = \frac{y}{R_c}\frac{H}{L_c} + \frac{z}{x}\sqrt{1-\frac{H^2}{L_c^2}},
\end{equation}
so that 
\begin{equation}
    y = \frac{R_c}{ \sin\vartheta}\left(\varphi - \frac{Z}{R}\cot\vartheta\right) =  R_c\frac{L_c}{H}\left(\varphi - \frac{B_\varphi Z}{B_v R}\right),
\end{equation}
which is the same as the definition of $y$ from Eq. (\ref{heli-coord}) except for a factor of $1/\sin\vartheta$ (with $x_0=R_c$ here). 
This gives
\begin{equation}
    \nabla x\times \nabla y = \frac{R_c}{x}\left(\cos\vartheta \hat{\boldsymbol{\varphi}} + \sin\vartheta \uv{Z}\right) = \frac{R_c}{x}\frac{L_c}{H}\left(\frac{B_\varphi}{B_v}\uvg{\varphi} +  \uv{Z}\right).
\end{equation}
Taking $\mathcal{C}=B x/R_c$, we have
\begin{equation}
   {\bf B}= \mathcal{C}\nabla x\times\nabla y =B\left(\cos\vartheta \uvg{\varphi} + \sin\vartheta \uv{Z}\right) =B_\varphi \uvg{\varphi}+B_z\uv{Z},
\end{equation}
which is now the correct form of the magnetic field.

This mapping uses the distance along the field line as the field-aligned $z$ coordinate. For SMT configurations, where the connection length can vary radially, this choice could mean that the $z$ simulation domain extents must also vary radially. We could instead normalize the $z$ coordinate to the connection length, resulting in an equal-arc-length-like parallel coordinate, but this would result in the parallel coordinate becoming proportional to the vertical height $Z$. Thus we have used the vertical height $Z$ as the $z$ coordinate in \cref{sec:heli-geo}. 

\section{Self-consistently reproducing ``simplified'' helical geometry} \label{app:simple-heli-geo}

Here we would like to effectively reproduce the ``simplified'' helical geometry that we used to produce the results in \cref{sec:emgk-res}. As we will see, the actual geometry that matches the approximations we made in \cref{sec:simple-geo} is not helical, but purely toroidal; it essentially consists of rings of toroidal field stacked vertically on top of each other, as shown in \cref{fig:simple-heli}.
The mapping from cylindrical coordinates $(R, \varphi, Z)$ to field-aligned coordinates $(x,y,z)$ that gives this simplified geometry is
\begin{equation}
    x = R, \qquad y = Z,  \qquad z = R_c \varphi. \label{eq:simple-geo-map}
\end{equation}
The tangent unit vectors are 
\begin{equation}
    \vec{e}_x = \uv{R}, \qquad \vec{e}_y = \uv{Z}, \qquad \vec{e}_z = \frac{x}{R_c} \uvg{\varphi},
\end{equation}
and the gradient unit vectors are
\begin{equation}
    \nabla x = \uv{R}, \qquad \nabla y = \uv{Z},\qquad \nabla z = \frac{R_c}{x}\uvg{\varphi}.
\end{equation}
\begin{figure}[t]
    \centering
    \includegraphics[width=.5\textwidth]{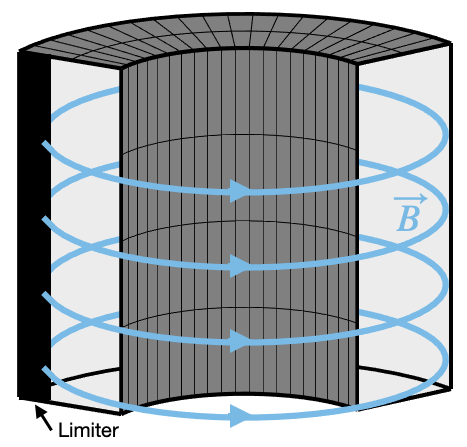}
    \caption[Geometry with purely toroidal magnetic field.]{Geometry with purely toroidal magnetic field. This is the geometry that matches the metric-related approximations we made for the results in \cref{sec:emgk-res}. In those simulations, sheath boundary conditions applied in the $z$ direction are equivalent to a limiter that extends vertically at some toroidal location (although the ``connection length'' along the field lines used in the simulations is not necessarily the circumference of the toroidal rings). Periodic boundary conditions were applied in the $y$ direction, which is the vertical direction here.}
    \label{fig:simple-heli}
\end{figure}
The magnetic field is then given by
\begin{equation}
    \vec{B} = \mathcal{C} \nabla x \times \nabla y = \frac{J B}{\sqrt{g_{zz}}} \uv{R} \times \uv{Z} = B \uvg{\varphi}
\end{equation}
with $B =B_\text{axis}(R_0/x)$, so that the field is indeed purely toroidal.  

Note that here, the coordinate along the field line, $z$, is defined to be proportional to the toroidal angle, $\varphi$. Whereas elsewhere in this chapter we used the poloidal angle as the field-aligned coordinate, this is not possible here where the field lines have no pitch. In fact, this is precisely the issue with using a field-aligned poloidal coordinate at the X-point, where the poloidal magnetic field vanishes and the field is purely toroidal. 

The remaining geometry quantities of interest are 
\begin{gather}
\texttt{cmag} = {\mathcal{C}} =  \frac{JB}{\sqrt{g_{zz}}}=B \\
\texttt{b\_x} = b_x = \frac{g_{xz}}{\sqrt{g_{zz}}} =0\\
\texttt{b\_y} = b_y = \frac{g_{yz}}{\sqrt{g_{zz}}}=0  \\
\texttt{b\_z} = b_z = \sqrt{g_{zz}} = \frac{x}{R_c} \\
\texttt{gxx} = g^{xx} = 1 \\
\texttt{gxy} = g^{xy} = 0\\
\texttt{gyy} = g^{yy} = 1 \\
\texttt{jacobPhase} = B_\parallel^* \approx B\\
\texttt{jacobGeo} = J = \sqrt{\det g_{ij}} = \frac{x}{R_c}.
\end{gather}
We can see that many of the metric quantities are trivial or vanish, just as we approximated in \cref{sec:simple-geo}. The magnetic (curvature and $\nabla B$) drifts are then
\begin{gather}
    \vec{v}_d \cdot \nabla x = 0 \\
    \vec{v}_d \cdot \nabla y 
    = -\frac{m v_\parallel^2 + \mu B}{qB} \frac{1}{x} \\
    \vec{v}_d \cdot \nabla z = 0,
\end{gather}
which matches \cref{simplevd}.

\end{subappendices}

\chapter{Positivity-preserving discontinuous Galerkin algorithm for hyperbolic conservation laws without \emph{post-hoc} diffusion}\label{ch:positivity}

Physically, the distribution function of particles is a non-negative scalar function, \emph{i.e.} $f(\vec{x},\vec{v},t) \geq 0$ throughout the phase space. However, there is no guarantee that a numerical scheme will preserve this property. The discontinuous Galerkin schemes that we described in \cref{ch:dg} are no exception, as they do not even ensure positivity of the cell average of the distribution function. In some cases small regions of negative $f$ do not impact the physics, but in other cases negative regions can lead to numerical instability. Thus we must develop a method to prevent negative regions in order for our simulations to be accurate and robust.

There is extensive literature on constructing positivity-preserving (or more generally, bound-preserving) DG schemes. For example, a widely used and now standard positivity-limiting scheme presented by \citet{zhang2011} works by limiting the amount of flux leaving a cell surface so that the cell average is not allowed to become negative. However, this limiter procedure by itself can produce unphysical steep slopes and higher moments. For this reason, a \emph{post-hoc} sub-cell diffusion step is applied whereby the slopes and higher moments in each cell are adjusted so that the solution remains positive at some control points. This leads to a robust positivity-preserving algorithm for many conservation laws, like the Euler equations and the ideal MHD equations. Generalizations of the Zhang-Shu scheme have also been made \citep{johnson2012}.

However, the Zhang-Shu (and related) algorithms cannot be used for evolution of kinetic equations in which the conservation properties are indirect (\emph{i.e.} when there is not a direct equation for the evolution of the energy). The reason for this is that \emph{post-hoc} sub-cell diffusion will change the energy and break energy conservation. One could try to readjust the energy after the diffusive step to maintain energy conservation, as done in \cite{shi-thesis}, but often such adjustments are not possible without disturbing the underlying physics.

In this chapter we develop a novel positivity-preserving DG scheme without \emph{post-hoc} diffusion. After showing a preliminary example of the positivity issue, we first define what we mean by positivity in the context of the discontinuous Galerkin representation of the cell. Next we construct the positivity-preserving scheme and show that conservation properties are maintained for Hamiltonian systems. Finally we show numerical results in several dimensionalities and equation systems.

\section{The positivity problem: 1D advection example}\label{pos:1dadvect}

Consider an one-dimensional advection equation,
\begin{equation}
    \pderiv{f}{t} + v \pderiv{f}{x} = 0.
\end{equation}
Taking $v$ to be constant in this simple example, we know the solution is
\begin{equation}
    f(x,t) = f(x-v t, 0).
\end{equation}
That is, the solution keeps its initial shape as it moves through the domain with constant velocity $v$. If the domain is periodic with length $L$, the pulse will return to its initial position at time $t=L/v$. 

\begin{figure}[t!]
    \centering
    \includegraphics[width=\textwidth]{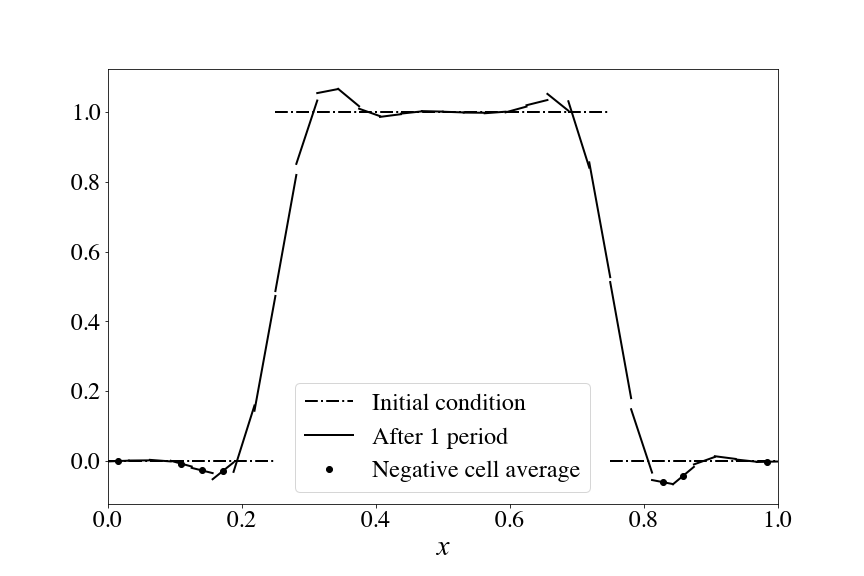}
    \caption{Advection of a square pulse for one period on a 1D periodic domain, overlaid with the initial condition.}
    \label{fig:no-pos-adv}
\end{figure}

We can easily discretize this system with a discontinuous Galerkin scheme. In \cref{fig:no-pos-adv}, we show the results of a piecewise-linear scheme with 32 cells on a one-dimensional periodic domain with length $L=1$. The initial condition is a square pulse centered at $x=0.5$ with width $w=0.5$, shown dash-dotted. After 1 period, the solution (solid lines) returns to its initial position, although the shape of the pulse has been distorted by numerical artifacts. Notably, we see 
unphysical negative overshoot regions at the bottom of the pulse, as well as positive overshoots at the top of the pulse. Cells with a negative cell average are marked with points showing the cell center. While in some applications small negative overshoots may be tolerable, often these unphysical negative regions can cause severe problems. Our goal in this chapter will be to devise a positivity-preserving scheme that eliminates these negative regions. 

\section{Defining positivity, in the weak sense} \label{sec:pos-definition}

The first challenge is to define what is meant by positivity in the context of the discontinuous Galerkin representation of the solution. In each cell, the solution is given by an expansion on some basis set. The goal of this section will be to develop a method to constrain the expansion coefficients to maintain positivity of the solution, in some sense. Let's first consider the simplest case: a piecewise-linear ($p=1$) representation in one dimension. Taking an orthonormal basis set $\psi = \{1/\sqrt{2}, \sqrt{3/2}x\}$ in a cell $x\in [-1,1]$, we have
\begin{equation}
    f_h = \sum_k f_k \psi_k = \frac{1}{\sqrt{2}} f_0 + \frac{\sqrt{3}}{\sqrt{2}}x f_1. 
\end{equation}
How can we constrain the coefficients $f_0$ and $f_1$ to ensure that the solution is positive? To start, we should at least ensure that the cell average is positive, so that $f_0\geq 0$. Should the solution be required to be positive on the whole cell domain $x=[-1,1]$, or can the solution be negative on some portion of the domain?

One possible way to answer these questions is to define positivity as \emph{weak equality to a positive-definite function} \citep{hakim2020a}. For example, we could consider a non-polynomial positive-definite exponential solution given by
\begin{equation}
    g_h = g_0 \exp(g_1 x).
\end{equation}
Weak equality of $f_h$ and $g_h$ in the $L_2$ sense, which we denote as $f_h \doteq g_h$, then requires that the projections of the two representations onto the basis be equivalent:
\begin{align}
    \int_{-1}^1 \frac{1}{\sqrt{2}} f_h\, \dx{x} = \int_{-1}^1 \frac{1}{\sqrt{2}} g_h\, \dx{x} \quad &\Rightarrow \quad f_0 = \frac{\sqrt{2} g_0 \sinh g_1}{g_1} \\
    \int_{-1}^1 \frac{\sqrt{3}}{\sqrt{2}} x f_h\, \dx{x} = \int_{-1}^1 \frac{\sqrt{3}}{\sqrt{2}} x g_h\, \dx{x} \quad &\Rightarrow \quad f_1 = \frac{\sqrt{6} g_0}{g_1^2}\left(g_1 \cosh g_1 -\sinh g_1\right).
\end{align}
Note that 
\begin{equation}
    \frac{f_1}{\sqrt{3} f_0} = \coth g_1 - \frac{1}{g_1} = L(g_1),
\end{equation}
where $L(g_1)$ is the well-known Langevin function in statistical mechanics. Notably, this function is bounded at $|L(g_1)|\leq1$ for all $g_1$. This means that in order for weak-equivalence of the solutions $f_h$ and $g_h$ to be possible, the coefficients of $f_h$ must satisfy
\begin{equation}
    \frac{|f_1|}{\sqrt{3}f_0}\leq 1. \label{posconstraintf1}
\end{equation}
Together with the constraint $f_0\geq0$, we now have positivity constraints for both coefficients of the piecewise-linear representation of the solution. 

We can now express $g_h(x)$ in terms of $f_0$ and $\bar{x}\equiv f_1/(\sqrt{3}f_0)$ via
\begin{equation}
    g_h(x) = \frac{f_0 g_1}{\sqrt{2}\sinh g_1}e^{g_1 x}, \label{gexp}
\end{equation}
where $g_1 = g_1(\bar{x}) = L^{-1}(\bar{x})$, and $L^{-1}$ is the inverse Langevin function.\footnote{Although the inverse of the Langevin function does not have a closed form, a number of Pad\'e approximations for the inverse have been developed, including \citep{cohen1991} 
\begin{equation}
    L^{-1}(x) \approx \frac{x(3-x^2)}{1-x^2}.\label{cohen}
\end{equation}
} \cref{fig:pos-diagram} shows an example of the weak-equivalent linear and exponential solutions with $f_0=0.4$ and $f_1=0.8$.

Note that the constraint from \cref{posconstraintf1} does not force the linear solution to be positive everywhere in the cell. In this case, the linear solution is negative for $x<-0.5$ but the exponential solution is still realizable. In fact, one can show that the constraint on $f_1$ is equivalent to requiring $f_h(x=\pm1/3)\geq 0$, so that as long as the linear solution remains positive at ``positivity control nodes'' $x=\pm 1/3$, the solution will be positive in the weak sense. The control nodes are also plotted in \cref{fig:pos-diagram}, and they are indeed both positive. 

\begin{figure}
    \centering
    \includegraphics[width=.7\textwidth]{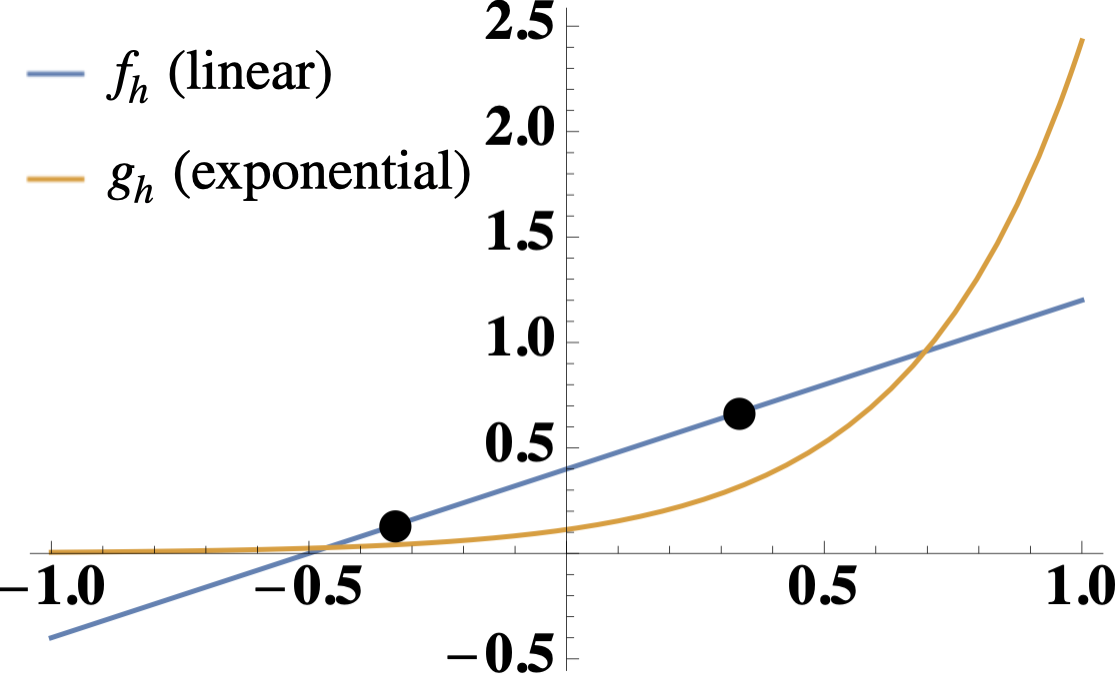}
    \caption[Weak-equivalent linear and exponential solutions.]{Weak-equivalent linear and exponential solutions, with $f_0=0.4$ and $f_1=0.8$. The exponential solution $g_h$ is given by \cref{gexp}. The ``positivity control nodes'' $f_h(x=\pm 1/3)$ are also shown.}
    \label{fig:pos-diagram}
\end{figure}

\subsection{Generalization to higher dimensionality}

It is not immediately clear how to tractably extend the procedure of the previous section to higher dimensionality. For example, to extend to a two-dimensional case, one might consider taking the exponential representation to be of the form $g_h = \exp(g_0 + g_1 x + g_2 y + g_3 x y)$. However, 2D integrals of this function involve error functions, making it difficult to evaluate positivity constraints based on weak equality. 

We will discuss a more rigorous procedure for generalizing the definition of the weak-equivalent positive-definite solution to higher dimensionality (and higher polynomial order) in Appendix \ref{sec:pos-gen}. For now, however, let's continue with the idea of ``positivity control nodes''. In one dimension, we saw above that if the piecewise-linear solution is positive at the control nodes at $x=\pm 1/3$, then the solution can be made weak-equivalent to an exponential solution. A sensible extension of this idea to higher dimension is to take tensor products of the control nodes, so that for example in 2D, we have control nodes at $(-1/3,-1/3),\ (-1/3,1/3),\ (1/3,-1/3)$, and $(1/3,1/3)$. Thus in the following section, we will consider the solution to be positive (in the weak sense) if the $N$-dimensional piecewise-linear solution is non-negative at all of the $2^N$ control nodes. 

\section{Constructing a positivity-preserving scheme without \emph{post-hoc} diffusion}

Now that we have a definition of positivity
, we next focus on how to construct a discontinuous Galerkin scheme that preserves positivity. In our scheme, we would like to avoid \emph{post-hoc}  sub-cell diffusion (rescaling slopes or higher moments of the solution in a cell if they become too extreme and violate positivity constraints after taking a timestep), which can break conservation laws involving higher-order moments (such as energy conservation in Hamiltonian systems like gyrokinetics). 

To begin, we will once again consider a generic hyperbolic conservation law of the form of \cref{pde2},
\begin{equation}
    \pderiv{f}{t} + \nabla\cdot \vec{F} = 0, \label{hyper-pos}
\end{equation}
with $\vec{F}(f)$ some arbitrary nonlinear flux. Recall from \cref{DGweakgen} that the DG discretization of this equation is given by multiplying by a test function $\psi$ and integrating by parts over a cell $\mathcal{K}_i$:
\begin{equation}
    \int_{\mathcal{K}_i}  \psi \pderiv{f_h}{t}\dx{\vec{x}}- \int_{\mathcal{K}_i} \vec{F}_h\cdot\nabla \psi\, \dx{\vec{x}}  + \oint_{\partial \mathcal{K}_i}\psi^- \vec{\hat{F}}\cdot\dx{\vec{s}} = 0. \label{dgweakpos}
\end{equation}

Let us first focus on the one-dimensional piecewise-linear ($p=1$) case. Mapping each cell $\mathcal{K}_i$ to the interval  $x\in[-1,1]$ using the transformation $x' = 2(x-x_i)/\Delta x$, with $x_i$ the cell center and $\Delta x$ the cell width, this gives (after dropping primes for simplicity)
\begin{equation}
   \frac{\Delta x}{2} \int_{-1}^1 \psi \pderiv{f_h}{t}\dx{x} - \int_{-1}^1 F_h \pderiv{\psi}{x}\dx{x}+ \psi(1)\hat{F}(1) - \psi(-1) \hat{F}(-1) =0.
\end{equation}
In our standard discretization scheme, we would substitute the one-dimensional $p=1$ orthonormal modal basis functions $\psi_j=\{1/\sqrt{2}, \sqrt{3/2}x\}$ for the test functions, which results in
\begin{gather}
    \pderiv{f_0}{t} = -\frac{\sqrt{2}}{\Delta x}\left[\hat{F}(1)-\hat{F}(-1)\right] \label{posf0dt} \\
    \pderiv{f_1}{t} = -\frac{\sqrt{6}}{\Delta x}\left[\hat{F}(1)+\hat{F}(-1)\right]+\frac{2\sqrt{3}}{\Delta x}F_0, \label{posf1dt}
\end{gather}
where we have also expanded $f_h$ and $F_h$ on the orthonormal modal basis.
In terms of these modal coefficients, we learned above that positivity requires $f_0\geq0$ and $|f_1|\leq\sqrt{3}f_0$. This is equivalent to ensuring that control node values at $x=\pm 1/3$ remain non-negative, $f_h(x=\pm1/3)\geq 0$.

We would like to find a way to limit the surface and volume terms to ensure that these constraints are not violated as the solution evolves. Existing positivity-preserving schemes attempt to limit the boundary fluxes so that the cell-average $f_0$ stays positive (the volume term for the cell-average always vanishes, so the cell-average is only affected by surface terms). However, it is not immediately clear how to account for an additional constraint $|f_1|\geq\sqrt{3}f_0$, since the cell-slope $f_1$ is also affected by the same fluxes; for this reason, existing schemes often rescale the cell-slope $f_1$ \emph{post-hoc}, which effectively gives sub-cell diffusion that can break higher-order conservation laws. 

It is more convenient to instead consider the evolution of the control nodes, $f_\pm \equiv f_h(x=\pm1/3)=f_0/\sqrt{2}\pm f_1/\sqrt{6}$. Taking the appropriate linear combinations of \cref{posf0dt,posf1dt}, we have
\begin{align}
   \pderiv{f_+}{t} &= -\frac{2}{\Delta x}\hat{F}(1) + \frac{\sqrt{2}}{\Delta x}F_0 \label{g+}\\
   \pderiv{f_-}{t} &= \frac{2}{\Delta x}\hat{F}(-1) - \frac{\sqrt{2}}{\Delta x}F_0. \label{g-}
\end{align}
Unlike in \cref{posf0dt,posf1dt}, each of the control nodes is only affected by a flux from one side of the cell: the left node $f_-$ is affected by the flux on the left boundary $\hat{F}(-1)$, and the right node $f_+$ is affected by the flux on the right boundary $\hat{F}(1)$. 

Neglecting the volume terms for now, this means that we can separately limit the left flux to maintain $f_-\geq 0$ and limit the right flux to maintain $f_+\geq 0$. However, note that the neighboring cells are also affected by these fluxes. To account for this, let us instead examine the evolution of two cells, $i$ and $i+1$, due to a flux $\hat{F}^{i+1/2}$ at their interface:
\begin{align}
   \pderiv{f_+^i}{t} &= -\frac{2}{\Delta x}\hat{F}^{i+1/2} + \frac{\sqrt{2}}{\Delta x}F_0^i \\
   \pderiv{f_-^{i+1}}{t} &= \frac{2}{\Delta x}\hat{F}^{i+1/2} - \frac{\sqrt{2}}{\Delta x}F_0^{i+1}.
\end{align}
We see that the flux $\hat{F}^{i+1/2}$ is simply exchanging information between $f_+^i$ and $f_-^{i+1}$, while neither $f_-^{i}$ nor $f_+^{i+1}$ is affected by this flux. This also means that only one of $f_+^i$ or $f_-^{i+1}$ is decreased by the flux, with the other increasing by the same amount. Upon adopting a forward Euler timestepping scheme (which can be built into a higher-order Runge-Kutta scheme), and dropping the volume terms for now, it is easy to see how to limit the flux $\hat{F}^{i+1/2}$ so that neither $f_+^i$ nor $f_-^{i+1}$ can become negative after a single timestep. This gives
\begin{align}
    f_+^i -\frac{2\Delta t}{\Delta x}\hat{F}^{i+1/2}&\geq0 \\
   f_-^{i+1} + \frac{2\Delta t}{\Delta x}\hat{F}^{i+1/2}&\geq0.
\end{align}
The limit on $\hat{F}^{i+1/2}$ to ensure that the flux does not make either $f_+^i$ or $f_-^{i+1}$ negative in a single step is then
\begin{align}
   -f_-^{i+1}\frac{\Delta x}{2\Delta t} \leq \hat{F}^{i+1/2} \leq f_+^{i}\frac{\Delta x}{2\Delta t}. \label{fluxlim1}
\end{align}
This is illustrated in the diagram in \cref{fig:pos-demo2}.

\begin{figure}[t]
    \centering
    \includegraphics[width=\textwidth]{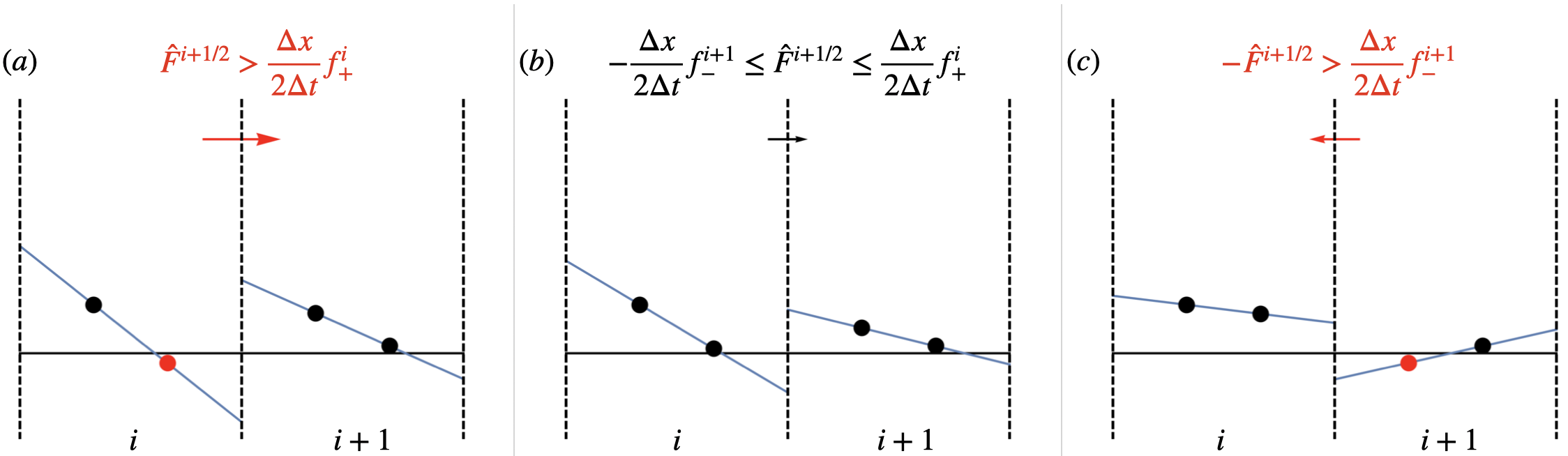}
    \caption[Illustration of limiter scheme for boundary fluxes.]{Illustration of limiter scheme for boundary fluxes based on \cref{fluxlim1}. The flux limit on the flux between cells $i$ and $i+1$, $\hat{F}^{i+1/2}$, is violated in the left $(a)$ and right $(c)$ panels, leading to negative control nodes (red points). In the left panel $(a)$, $\hat{F}^{i+1/2}$ is too large to the right and control node $f_+^i$ becomes negative (red). In the right panel $(c)$, $\hat{F}^{i+1/2}$ is too large to the left, leading to control node $f_-^{i+1}$ becoming negative (red). In the middle panel $(b)$, the flux is within the limits and the control nodes all remain positive (black points). Also note that the control points $f_-^i$ and $f_+^{i+1}$ remain fixed in all panels because they are unaffected by the flux (these points are affected by other fluxes, but not the one depicted here, $\hat{F}^{i+1/2}$).}
    \label{fig:pos-demo2}
\end{figure}

Once we have limited all fluxes to ensure that the surface terms cannot make $f_\pm$ negative in any cell in the domain, we can limit the volume terms. Considering \cref{g+,g-}, we can see that the volume terms exchange information between $f_-$ and $f_+$ within each cell. Thus one way to limit the volume terms is to simply scale all the volume terms in each cell by a common factor $0\leq\theta^i\leq1$ to ensure that neither $f_-$ or $f_+$ is made negative by the volume terms.

For cell $i$, the final forward-Euler update can be expressed as
\begin{align}
    f^i_+(t_n+\Delta t) &= f^i_+(t_n) - \frac{2\Delta t}{\Delta x}\hat{F}^{i+1/2} + \frac{\sqrt{2}\Delta t}{\Delta x}\theta^i F_0^i \\
    f^i_-(t_n+\Delta t) &= f^i_-(t_n) + \frac{2\Delta t}{\Delta x}\hat{F}^{i-1/2} - \frac{\sqrt{2}\Delta t}{\Delta x}\theta^i F_0^i,
\end{align}
with limits on the fluxes given by \cref{fluxlim1} and the volume scaling factor given by
\begin{equation}
\theta^i = \min\left(1,  \frac{f_+^i - \frac{2\Delta t}{\Delta x}\hat{F}^{i+1/2}}{\frac{-\sqrt{2}\Delta t}{\Delta x}F_0^i}, \frac{f_-^i + \frac{2\Delta t}{\Delta x}\hat{F}^{i-1/2}}{\frac{\sqrt{2}\Delta t}{\Delta x}F_0^i}\right).
\end{equation}
Here we have prioritized the surface terms over the volume terms in that we limit the surface terms first. This can, for example, allow a maximal flux out the left boundary to lower $f_-$ to zero. Then if the volume terms wanted to decrease $f_-$ further, the volume terms would be essentially turned off in this cell. In principle, one could instead prioritize the volume terms, allowing the maximum flow within the cell and then possibly turning off boundary fluxes. A comparison of these two approaches is left to future work.  

\subsection{Exponential surface extrapolation} \label{sec:exp}

While the scheme described above will rigorously preserve positivity of the solution, we can make an additional improvement involving how the boundary fluxes are computed. In a standard DG scheme with upwinded fluxes, the flux between cells $i$ and $i+1$ would be computed as
\begin{equation}
    \hat{F}^{i+1/2} =\begin{cases}
                        u^{i+1/2} f_h^i(x^{i+1/2})\qquad &u^{i+1/2}>0 \\
                        u^{i+1/2} f_h^{i+1}(x^{i+1/2})\qquad &u^{i+1/2}<0
                     \end{cases}
\end{equation}
with $f_h^i(x^{i+1/2})$ and $f_h^{i+1}(x^{i+1/2})$ computed using the piecewise-linear representation of the solution, and $u^{i+1/2}$ the advection velocity at the cell interface. After mapping to a unit cell on $x\in[-1,1]$, the boundary values would be given by
\begin{gather}
    f_h^i(1) = \frac{1}{\sqrt{2}}f_0+\frac{\sqrt{3}}{\sqrt{2}}f_1 \label{fRlin}\\
    f_h^{i+1}(-1) = \frac{1}{\sqrt{2}}f_0-\frac{\sqrt{3}}{\sqrt{2}}f_1.
\end{gather}

Let us consider an extreme case: a cell where the flux from the left boundary is zero, with advection velocity $u>0$ a constant. In this case, the modal coefficients of $f_h$ in the cell are given by (from \cref{posf0dt,posf1dt})
\begin{gather}
    \pderiv{f_0}{t} = -\frac{u\sqrt{2}}{\Delta x}f_h^i(1) \\
    \pderiv{f_1}{t} = -\frac{u\sqrt{6}}{\Delta x}f_h^i(1)+\frac{2u\sqrt{3}}{\Delta x}f_0.
\end{gather}
From above, we know that we need $|f_1|/(\sqrt{3}f_0)<1$ for the solution to remain positive and realizable. Thus, let us compute the evolution of $\bar{x}\equiv f_1/(\sqrt{3}f_0)$:
\begin{equation}
    \pderiv{\bar{x}}{t} = \frac{1}{\sqrt{3}f_0}\pderiv{f_1}{t} - \frac{\bar{x}}{f_0}\pderiv{f_0}{t}= \frac{u\sqrt{2}}{\Delta x}\left[2 - \sqrt{2}(1-\bar{x})\frac{f_h^i(1)}{f_0}\right].\label{evx}
\end{equation}
If we use the standard linear extrapolation for $f_h^i(1)$ from \cref{fRlin}, this gives
\begin{align}
    \pderiv{\bar{x}}{t} = \frac{u}{\Delta x}\left[2 - (1-\bar{x})(1+3\bar{x})\right]&=\frac{u}{\Delta x}\left[1-2\bar{x}+3\bar{x}^2\right] \label{xbardt} \\&> 0 \quad \text{for all $\bar{x}$}.\notag
\end{align}
This means that without any limiters, $\bar{x}$ always grows without bound in this extreme case, which would violate the realizability limit $\bar{x}<1$ in a finite time. Note also that any reduction or limit on the boundary value $f^i_h(1)$ only makes the issue worse, so that $\bar{x}$ increases more quickly and becomes unphysical sooner. In this case, the volume term is steepening the slope in the cell faster than the boundary flux can flatten it. In practice, the volume term limiter in our scheme would eventually prevent $|x|>1$. Nonetheless, perhaps it would help to \textit{enhance} the extrapolated boundary flux, in a way that $\pderivInline{\bar{x}}{t}\rightarrow 0$ as $\bar{x}\rightarrow 1$. This way, perhaps we wouldn't need to limit the volume terms as often or as much. 

\begin{figure}
    \centering
    \includegraphics[width=.7\textwidth]{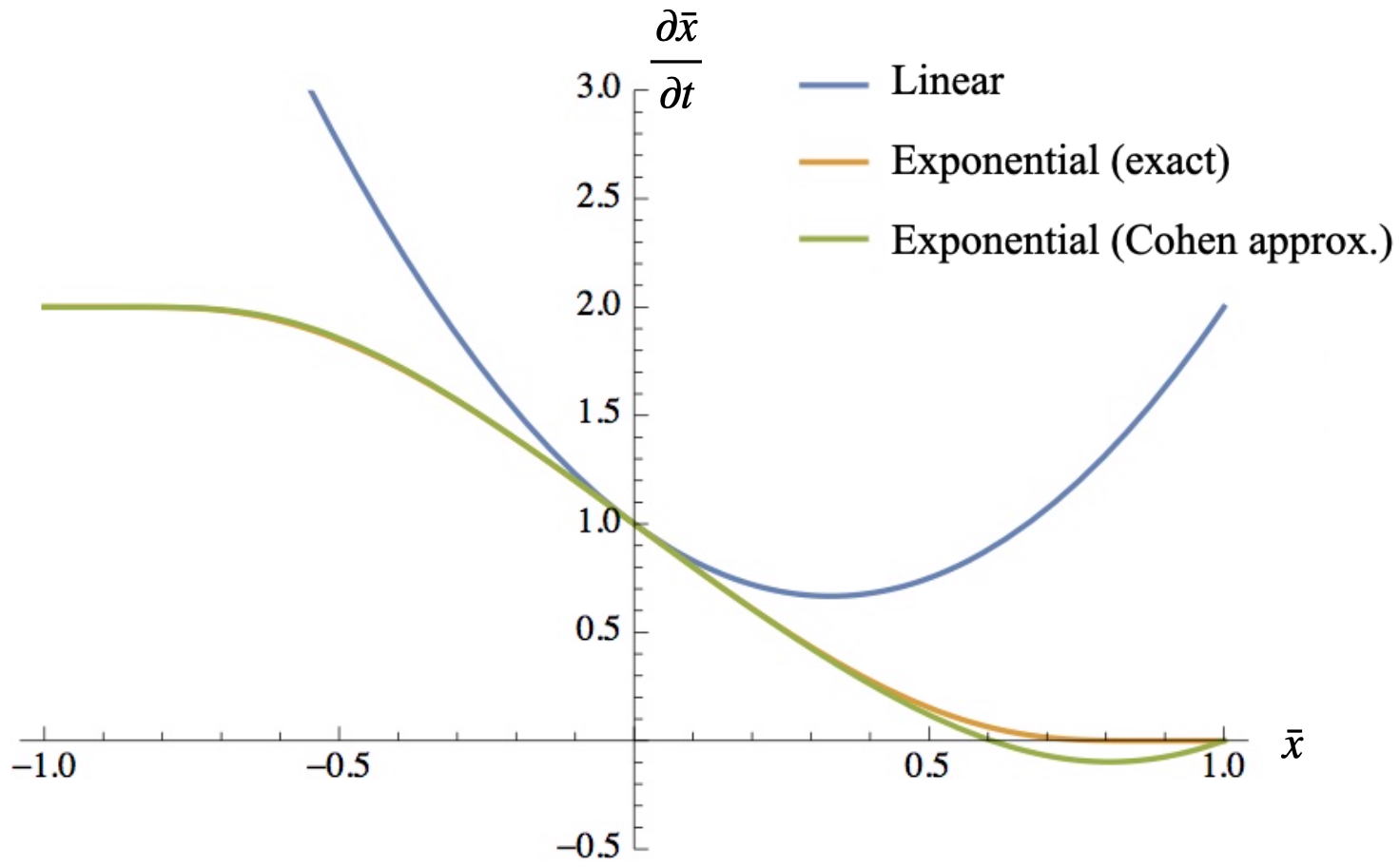}
    \caption[Change in slope as a function of slope for various flux extrapolations.]{$\pderivInline{\bar{x}}{t}$ vs. $\bar{x}$, given by \cref{xbardt}, for different choices of how to evaluate the surface value $f_h^i(1)$. A linear extrapolation, \cref{fRlin}, gives $\partial\bar{x}/\partial t>0$ for all $\bar{x}$, which drives the solution to be unrealizable in finite time (without limiters). Using an exponential extrapolation instead, given by \cref{gRexp}, gives $\partial\bar{x}/\partial t\rightarrow 0$ as $\bar{x}\rightarrow 1$, which means that $\bar{x}$ should stop increasing before becoming unphysical.}
    \label{fig:extrap}
\end{figure}

One way to \emph{enhance} the boundary value is to make use of the exponential reconstruction given by \cref{gexp}. Extrapolating $g_h$ to the right edge of the cell at $x=1$, we have
\begin{equation}
    g_h(1) = \frac{f_0 g_1}{\sqrt{2}\sinh g_1}e^{g_1}. \label{gRexp}
\end{equation}
In \cref{fig:extrap}
we plot ${\partial\bar{x}}/{\partial t}$ for three cases: linear extrapolation, \cref{fRlin}; exact exponential extrapolation, \cref{gRexp} with $g_1(\bar{x})=L^{-1}(\bar{x})$; and approximate exponential extrapolation, \cref{gRexp} with $g_1(\bar{x}) \approx \bar{x}(3-\bar{x}^2)/(1-\bar{x}^2)$, \emph{i.e.} using the Cohen approximation, \cref{cohen}, for the inverse Langevin function. We see that the both the exact and the approximate exponential extrapolation give $\partial\bar{x}/\partial t\rightarrow 0$ as $\bar{x}\rightarrow 1$, which means that $\bar{x}$ should stop increasing before becoming unphysical. The approximate exponential extrapolation gives a region where $\partial \bar{x}/\partial t<0$, which may in fact make the algorithm more robust because in this case the equilibrium value is $\bar{x}\approx0.6$. Meanwhile, the linear extrapolation gives $\pderivInline{\bar{x}}{t}>0$ for all $\bar{x}$ as we showed above.


\subsection{Extension to higher dimensionality with $p=1$}

To extend the scheme to higher dimensionality, we will again track the evolution of control nodes, which are given by tensor products of the 1D control nodes. For example, in 2D, we have four control nodes: $f_{--}=f_h(-1/3,-1/3)$, $f_{-+}=f_h(-1/3,1/3)$, $f_{+-}=f_h(1/3,-1/3)$, and $f_{++}=f_h(1/3,1/3)$. We will illustrate the scheme for the two-dimensional case, with extension to higher dimensions relatively straightforward.

In 2D, the DG weak form from \cref{dgweakpos} mapped to a cell in $x\in[-1,1],\ y\in[-1,1]$ is given by
\begin{align}
   &\int_{-1}^1 \dx{x} \int_{-1}^1 \dx{y}\, \psi \pderiv{f_h}{t}- \int_{-1}^1 \dx{x}\int_{-1}^1 \dx{y} \left[\frac{2}{\Delta x}F_{x\, h} \pderiv{\psi}{x} + \frac{2}{\Delta y}F_{y\, h} \pderiv{\psi}{y}\right] \notag \\
   &\quad+ \frac{2}{\Delta x}\int_{-1}^{1} \dx{y}
   \left[\psi(1,y)\hat{F}_x(1,y) - \psi(-1,y) \hat{F}_x(-1,y)\right] \\
    &\quad+ \frac{2}{\Delta y}\int_{-1}^{1} \dx{x} 
   \left[\psi(x,1)\hat{F}_y(x,1) - \psi(x,-1) \hat{F}_y(x,-1)\right]  =0.
\end{align}
We can then compute the evolution of the four control nodes as
\begin{align}
    \pderiv{f_{--}}{t} &= - \frac{1}{\Delta x}\left(F_{x 0} - \frac{1}{\sqrt{3}}F_{x 2}\right) - \frac{1}{\Delta y}\left(F_{y 0} -\frac{1}{\sqrt{3}}F_{y 1}\right) 
    +\frac{2}{\Delta x}\hat{F}_x(-1,-\frac{1}{3}) +\frac{2}{\Delta y}\hat{F}_y(-\frac{1}{3},-1) \\
    \pderiv{f_{-+}}{t} &= - \frac{1}{\Delta x}\left(F_{x 0} + \frac{1}{\sqrt{3}}F_{x 2}\right) + \frac{1}{\Delta y}\left(F_{y 0} -\frac{1}{\sqrt{3}}F_{y 1}\right)
    +\frac{2}{\Delta x}\hat{F}_x(-1,\frac{1}{3}) - \frac{2}{\Delta y}\hat{F}_y(-\frac{1}{3},1)\\
    \pderiv{f_{+-}}{t} &=  \frac{1}{\Delta x}\left(F_{x 0} - \frac{1}{\sqrt{3}}F_{x 2}\right) - \frac{1}{\Delta y}\left(F_{y 0} +\frac{1}{\sqrt{3}}F_{y 1}\right)
    -\frac{2}{\Delta x}\hat{F}_x(1,-\frac{1}{3}) + \frac{2}{\Delta y}\hat{F}_y(\frac{1}{3},-1) \\
    \pderiv{f_{++}}{t} &=  \frac{1}{\Delta x}\left(F_{x 0} + \frac{1}{\sqrt{3}}F_{x 2}\right) + \frac{1}{\Delta y}\left(F_{y 0} +\frac{1}{\sqrt{3}}F_{y 1}\right) 
    -\frac{2}{\Delta x}\hat{F}_x(1,\frac{1}{3}) - \frac{2}{\Delta y}\hat{F}_y(\frac{1}{3},1).
\end{align}
\begin{figure}[t]
    \centering
    \includegraphics[width=.6\textwidth]{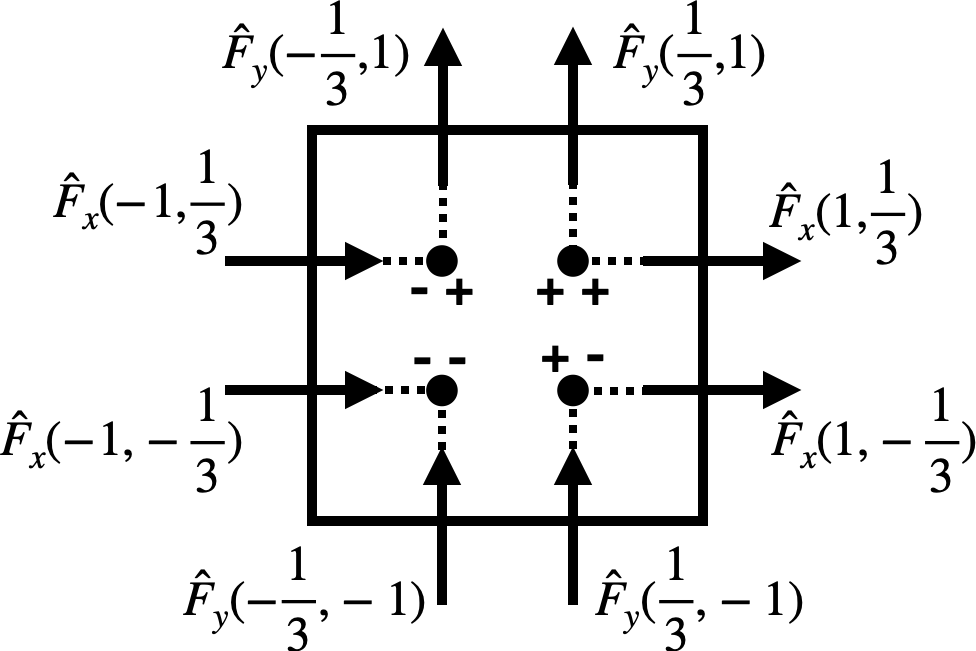}
    \caption[Diagram of fluxes affecting each interior control node in 2D.]{Diagram of fluxes affecting each interior control node in 2D. Each interior control node is affected by only the fluxes at the nearest surface control nodes (a dotted line between an interior node and a flux indicates that the flux affects that node).}
    \label{fig:pos-2d}
\end{figure}

Notably, each interior control node is affected only by the fluxes at the nearest surface control nodes, as shown in the diagram in \cref{fig:pos-2d}. Similar to above, we can limit each flux 
Notably, each interior control node is affected only by the fluxes at the nearest surface control nodes, as shown in the diagram in \cref{fig:pos-2d}. Similar to above, we can limit each flux so that the affected control nodes cannot become negative on a single forward-Euler timestep. Focusing on the $f_{++}$ control node, if the fluxes $\hat{F}_x(1,1/3)$ and $\hat{F}_y(1/3,1)$ are both directed out of the cell (as depicted in \cref{fig:pos-2d}), we need to make sure that the combined flux does not exceed the limit given by
\begin{equation}
    \frac{2\Delta t}{\Delta x}\hat{F}_x(1,\frac{1}{3}) + \frac{2\Delta t}{\Delta y}\hat{F}_y(\frac{1}{3},1) \leq f_{++}.
\end{equation}
We can separately limit the $x$ and $y$ fluxes by apportioning a fraction of $f_{++}$ that is allowed to removed in each direction, which we will denote as $\eta_x$ and $\eta_y$, such that $\eta_x + \eta_y = 1$. Now we can limit the fluxes as
\begin{align}
    \hat{F}_x(1, \frac{1}{3}) &\leq \eta_x f_{++}\frac{\Delta x}{2\Delta t} \\
    \hat{F}_y(\frac{1}{3},1) &\leq \eta_y f_{++}\frac{\Delta y}{2\Delta t}.
\end{align}
The definition of the $\eta_d$ need not be exact; one possible choice is to use the ratio given by the contribution to the CFL rate from each direction, $r_d$, divided by the total CFL rate $r$, so that $\eta_d = r_d/r$. 

As above, each flux should be limited by the control nodes on each side of the boundary. Thus the full limit on $\hat{F}_x(x^{i+1/2},1/3)$ at the boundary between cells $i$ and $i+1$ is given by
\begin{align}
   -\eta_x^{i+1} f_{-+}^{i+1}\frac{\Delta x}{2\Delta t} \leq \hat{F}_x(x^{i+1/2},\frac{1}{3}) \leq \eta_x^i f_{++}^{i}\frac{\Delta x}{2\Delta t}, \label{fluxlim2d}
\end{align}
where note the flux fraction $\eta_x$ is computed locally in each cell. Similar limiter expressions can be given for each flux depicted in \cref{fig:pos-2d}.

Further, the fluxes can be computed with exponential extrapolation as in \cref{sec:exp}. We can compute the exponential extrapolation in one direction at a time, avoiding the need for a multi-dimensional exponential expression. For example, to compute the exponential extrapolation for $\hat{F}_x(1, 1/3)$, we find the exponential $g(x)$ that is weak-equivalent to $f(x,1/3)$, and then evaluate the exponential expression at the surface. 

Once all the surface terms have been limited to ensure that no control point can become negative, the volume terms can again be limited by scaling all volume terms by a common factor $0 \leq \theta^i \leq 1$ in each cell $i$. Writing the forward-Euler update of each control node $c$ in cell $i$ generically as
\begin{equation}
    f_c^i(t_n + \Delta t) = f_c^i(t_n) + \Delta t S_c^i + \theta^i \Delta t V_c^i,
\end{equation}
with $S_c^i$ and $V_c^i$ the surface and volume terms, respectively, the volume scaling factor is given by
\begin{equation}
    \theta^i = \min_{c\, |\, V_c^i<0} \left(1, \frac{f_c^i + \Delta t S_c^i}{-\Delta t V_c^i}\right).
\end{equation}


\section{Conservation properties for Hamiltonian systems}
Although the positivity-preserving scheme presented above could be used to solve any kind of hyperbolic conservation law of the form of \cref{hyper-pos}, the primary targets of our scheme are Hamiltonian systems like gyrokinetics. In \cref{sec:dgenergy} we showed a DG scheme that conserves energy in Hamiltonian systems. Now let us apply the positivity-preserving limiters and consider how the conservation properties are modified, if at all.

Starting from \cref{eq:dis-weak-form}, the positivity-preserving evolution of the distribution function $f$ is given by
\begin{equation}
    \int_{\mathcal{K}_i} \psi \pderiv{(\mathcal{J}f_h)}{t}\dx{\vec{Z}}- \int_{\mathcal{K}_i} \theta_i \mathcal{J}f_h \dot{\vec{Z}}_h\cdot\pderiv{\psi}{\vec{Z}}\, \dx{\vec{Z}}  + \oint_{\partial \mathcal{K}_i}\psi^- \Lambda\left[\widehat{\jac f_h \dot{\vec{Z}}_h}\right]\cdot\dx{\vec{s}} = 0, \label{eq:dis-weak-form-lim}
\end{equation}
where $\theta_i$ represents the volume term scaling factor in cell $i$, and the notation $\Lambda[\hat{F}]$ represents limiters applied to surface fluxes. To check energy conservation, we first insert the discrete Hamiltonian $H_h$ for the test function $\psi$ and sum over cells, giving 
\begin{equation}
    \sum_i\int_{\mathcal{K}_i} H_h \pderiv{(\mathcal{J}f_h)}{t}\dx{\vec{Z}}= \sum_i \int_{\mathcal{K}_i} \theta_i \mathcal{J}f_h \dot{\vec{Z}}_h\cdot\pderiv{H_h}{\vec{Z}}\, \dx{\vec{Z}}  - \sum_i \oint_{\partial \mathcal{K}_i}H_h^- \Lambda\left[\widehat{\jac f_h \dot{\vec{Z}}_h}\right]\cdot\dx{\vec{s}} = 0.
\end{equation}
Even with the scaling factor $\theta_i$, the volume term vanishes as in the standard case because $\dot{\vec{Z}}_h\cdot \pderivInline{H_h}{\vec{Z}} = \{H_h, H_h\} = 0$. The surface term also vanishes just as in the standard case; the fluxes still exactly cancel at cell boundaries, even with the flux limiters. For  Hamiltonian systems written in canonical form, the remainder of the energy conservation proof from \cref{sec:dgenergy} is unchanged. However, additional complexities arise in our scheme for the symplectic formulation of the electromagnetic gyrokinetic system, which requires the inclusion of limiters in some of the field equations. Extension of the positivity-preserving scheme to EMGK is left to future work, and we discuss the difficulties briefly in Appendix \ref{emgk-pos}.

\section{Results} \label{sec:pos-results}
In this section we implement the positivity-preserving scheme in \gke and present some numerical results. We first study passive advection and then we turn to Hamiltonian systems: the incompressible Euler equations and electrostatic gyrokinetics.

\subsection{1D advection}
\begin{figure}[t!]
    \centering
    \includegraphics[width=.8\textwidth]{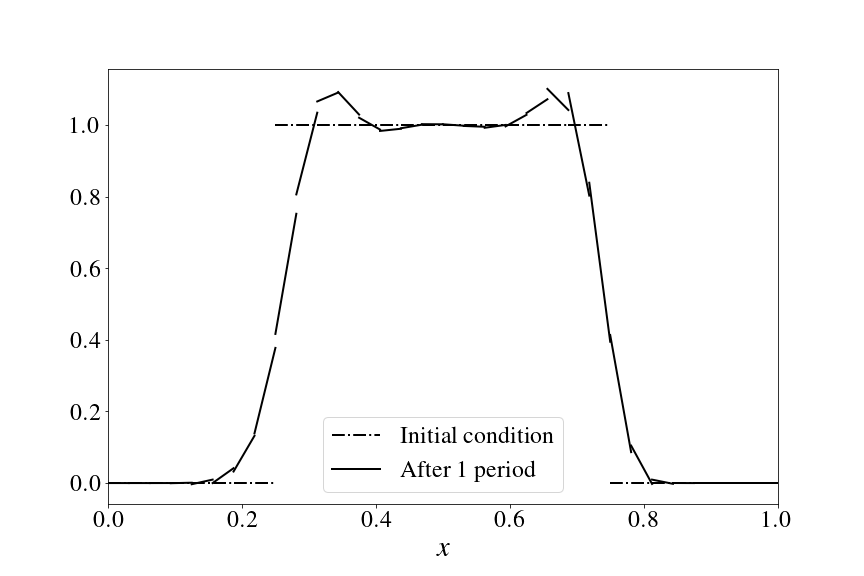}
    \caption[Positivity-preserving advection of a square pulse for one period on a 1D periodic domain.]{Positivity-preserving advection of a square pulse for one period on a 1D periodic domain with 32 cells, overlaid with the initial condition.}
    \label{fig:pos-adv-1d}
\end{figure}

Let us first return to the one-dimensional advection example from \cref{pos:1dadvect}. Again taking a square pulse on a periodic domain, \cref{fig:pos-adv-1d} shows the results of the new scheme. Compared to \cref{fig:no-pos-adv}, we now see no negative overshoots and no negative cell averages. In fact, not only do the cell averages remain positive, but the control nodes in each cell also remain positive, which ensures that slopes do not become unphysically large. Note however that points on cell boundaries can be negative, so long as the control nodes are positive, as can be seen at $x=5/32$. 

\subsection{2D advection}

\begin{figure}[t!]
    \centering
    \includegraphics[width=\textwidth]{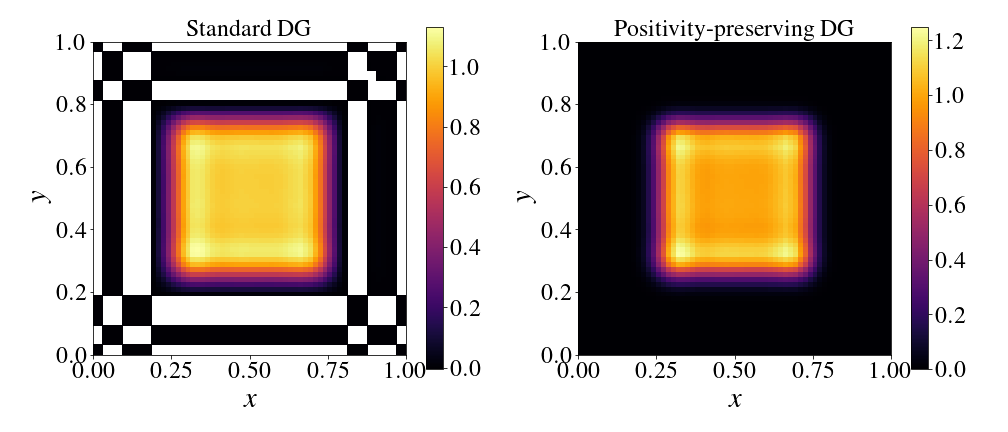}
    \caption[Positivity-preserving advection of a 2D square pulse for one period on a periodic domain.]{Two-dimensional square pulse after one period of diagonal advection on a periodic domain with $32\times32$ cells using standard DG (left) and positivity-preserving DG (right). In the standard case, cells with negative cell average are masked in white.
    }
    \label{fig:pos-adv-2d-diag}
\end{figure}

As a first two-dimensional test, we again consider uniform constant advection of a square pulse, given initially by
\begin{align}
   f(x,y,0) = \begin{cases}
   &1 \qquad |x-x_c| < 1/4\ \mathrm{and}\ |y-y_c| < 1/4 \\
    &0 \qquad \mathrm{otherwise}
    \end{cases}
\end{align}
with $x_c=y_c = 1/2$. We advect the solution diagonally, with velocity components $v_x=v_y=1$, through a periodic domain with $L_x = L_y = 1$. \cref{fig:pos-adv-2d-diag} shows a comparison of the results from the standard DG scheme and the positivity-preserving scheme. Both cases use piecewise-linear ($p=1$) basis functions with 32 cells in each direction. In the standard case, cells with negative cell average are masked in white. The positivity-preserving scheme successfully eliminates these negative regions.

\subsection{2D vortex waltz}

\begin{figure}[t!]
    \centering
    \includegraphics[width=\textwidth]{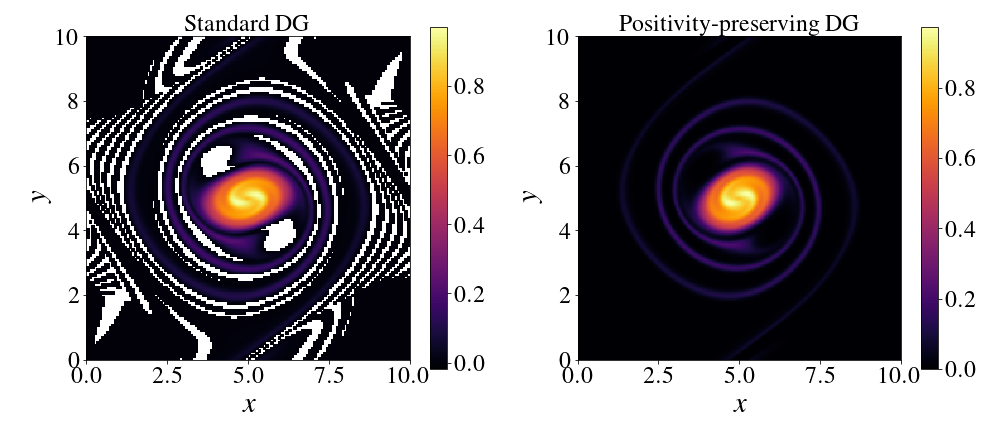}
    \caption[Vorticity for ``vortex waltz'' problem with positivity preservation.]{Vorticity at $t=100$ for ``vortex waltz'' problem on a periodic domain with $128\times128$ cells using standard DG (left) and positivity-preserving DG (right). In the standard case, cells with negative cell-average vorticity are masked in white.
    }
    \label{fig:vortex-waltz}
\end{figure}

A more stringent test of our positivity-preserving scheme and its conservation properties is given by the incompressible Euler system. As we saw in \cref{sec:incomp-euler}, this is a Hamiltonian system, with a conserved energy given by
\begin{equation}
    \mathcal{E} = \int |\nabla_\perp \psi|^2\dx{\vec{Z}}. \label{vortex-energy}
\end{equation}

In the ``vortex waltz'' problem \citep{nielsen1996}, we initialize two Gaussian vortices which merge as
they orbit around each other. The domain is doubly periodic of dimension $10\times 10$ length units. The initial vorticity given by
\begin{align}
  \varpi(x,y,0) = e^{-r_1^2/0.8} + e^{-r_2^2/0.8},
\end{align}
where
$r_i^2 = (x-x_i)^2 + (y-y_i)^2$ with $(x_1,y_1) = (3.5,5.0)$ and
$(x_2,y_2) = (6.5,5.0)$ the initial locations of the peaks. We discretize the system with piecewise-linear basis functions $(p=1)$ on a grid with $128 \times 128$ cells. We show a comparison of the vorticity at $t=100$ from the standard DG scheme and the positivity-preserving scheme in \cref{fig:vortex-waltz}, again masking cells in white that have negative cell-average vorticity.

\begin{figure}[t!]
    \centering
    \includegraphics[width=.7\textwidth]{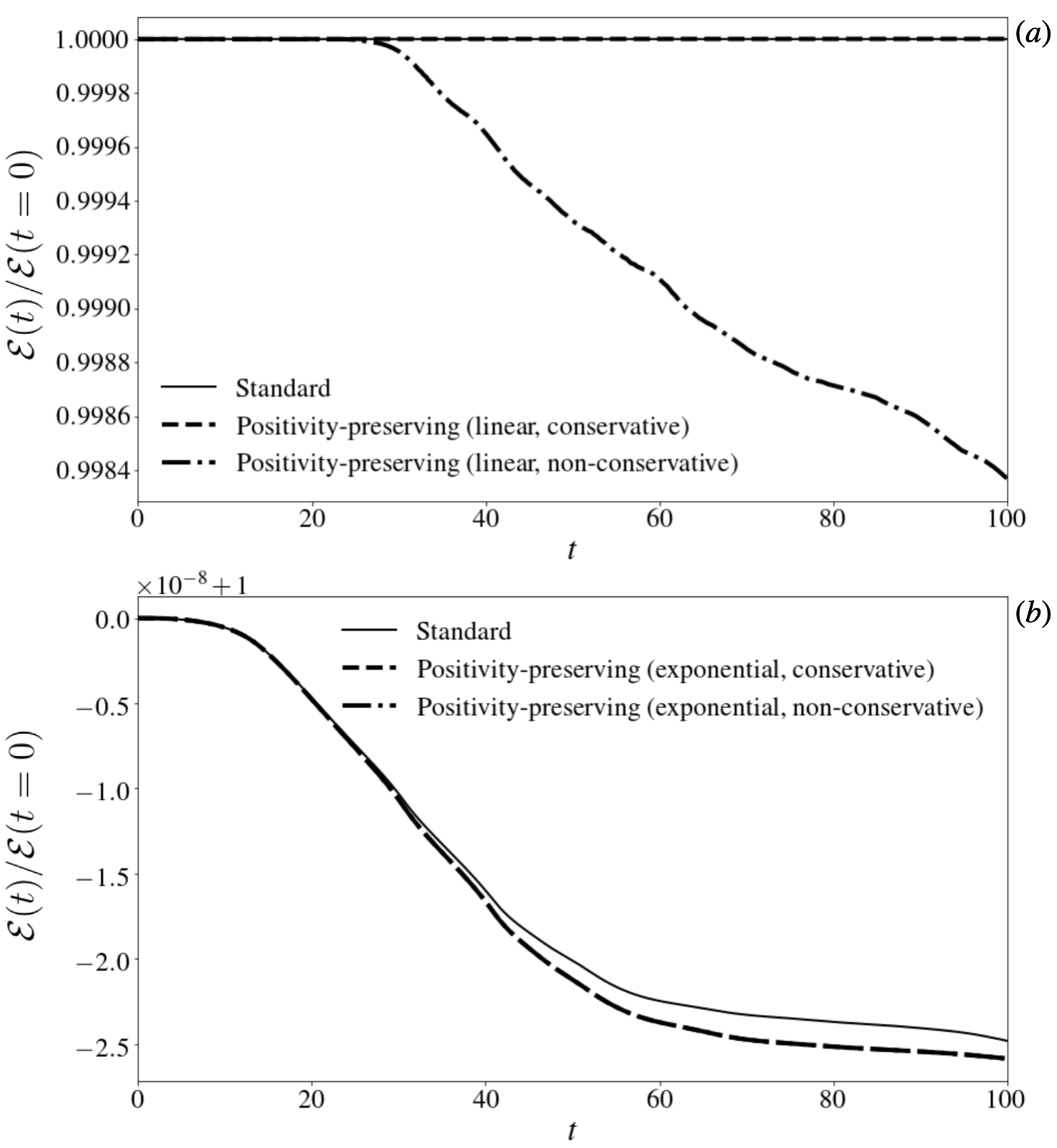}
    \caption[Energy conservation in vortex waltz test.]{Time traces of the total energy in the system $\mathcal{E}(t)$ normalized to the initial energy for three cases: standard DG, the full positivity-preserving DG scheme, and a non-conservative version of the positivity-preserving scheme which uses \emph{post-hoc} diffusion instead of volume term limiters. In $(a)$, a standard linear extrapolation is used to evaluate the surface fluxes, while in $(b)$ the exponential extrapolation from \cref{sec:exp} is used. Note the difference in scale between $(a)$ and $(b)$; the `standard' trace (solid) is the same in both plots. The negligible difference between the conservative and non-conservative traces in $(b)$ shows that using the exponential extrapolation reduces the need for volume limiters.}
    \label{fig:vortex-waltz-energy}
\end{figure}

To verify that the positivity-preserving scheme has not broken energy conservation, we show in \cref{fig:vortex-waltz-energy} time traces of the total energy, given by \cref{vortex-energy}, for three cases: standard DG, the full positivity-preserving scheme, and a non-conservative positivity scheme. In the non-conservative scheme, the surface term limiters are still applied as in the conservative scheme, but we do not apply the volume term limiters. This would keep cell averages positive but could allow unphysical slopes to develop, so we add \emph{post-hoc} rescaling of the slopes at the end of the timestep to maintain realizability, which breaks energy conservation. Indeed, the plot shows that the standard and conservative positivity-preserving schemes conserve the energy well, while the non-conservative scheme has energy errors.

\subsection{5D electrostatic gyrokinetics}

Our most challenging test of the positivity algorithm is its application to the 5D electrostatic gyrokinetic system. 
Without implementing the positivity algorithm, the standard DG discretization of the electrostatic gyrokinetic system results in regions of negative distribution function, leading to regions of negative density and negative temperature. This can lead to unphysical behavior in the collision operator\footnote{To improve robustness, we have altered the implementation of the collision operator in the standard version so that collisions are effectively turned off in cells with negative temperature, which avoids unphysical anti-diffusion in these cells. This improves robustness but does not completely eliminate positivity-related issues in the simulations.} and the sheath boundary conditions, resulting in numerical instabilities.

\begin{figure}[t]
    \centering
    \includegraphics[width=1.0\textwidth]{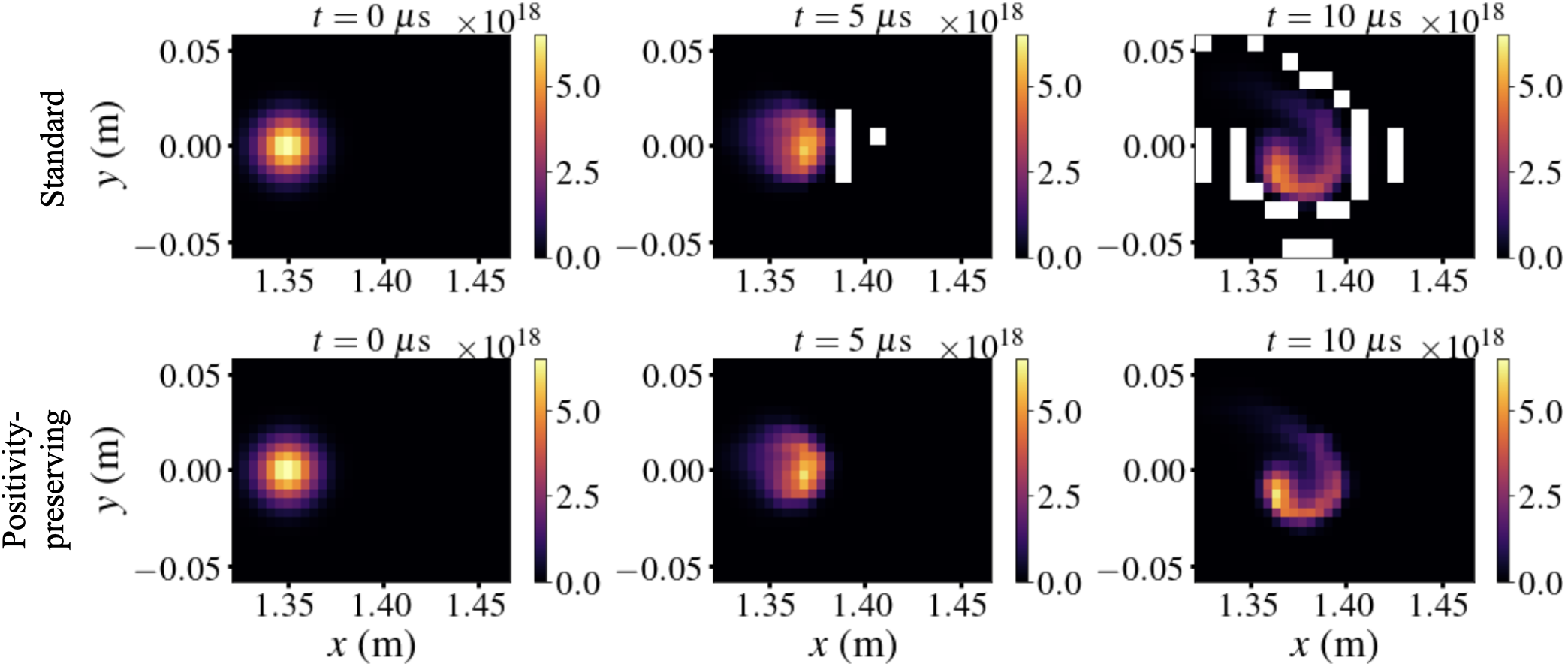}
    \caption[Positivity-preserving 5D collisionless electrostatic gyrokinetic seeded blob simulations.]{Timeseries of electron density at the midplane ($z=0$) for 5D collisionless electrostatic gyrokinetic seeded blob simulations with the standard DG scheme (top row) and the positivity-preserving scheme (bottom row). In the standard case, cells with negative cell-average density are masked in white.}
    \label{fig:nstx-blob-pos}
\end{figure}

As a first test of the positivity algorithm in the electrostatic gyrokinetic system, we perform a collisionless seeded blob test. We initialize a Gaussian blob in helical NSTX-like geometry with $L_z = 100$ m and sheath boundary conditions at $z=\pm L_z/2$. As the blob polarizes it begins to advect radially outwards and also spin due to the Boltzmann spinning effect \citep{angus2012}. With the standard DG discretization, this results in cells with negative cell-average density, as shown in the top row of \cref{fig:nstx-blob-pos}. With the positivity-preserving algorithm, these negative cells are eliminated. \cref{fig:nstx-blob-pos-energy} shows that energy conservation is not altered by the positivity algorithm, with energy still conserved in the system to $\sim \mathcal{O}(10^{-5})$.

\begin{figure}[h!]
    \centering
    \includegraphics[width=.5\textwidth]{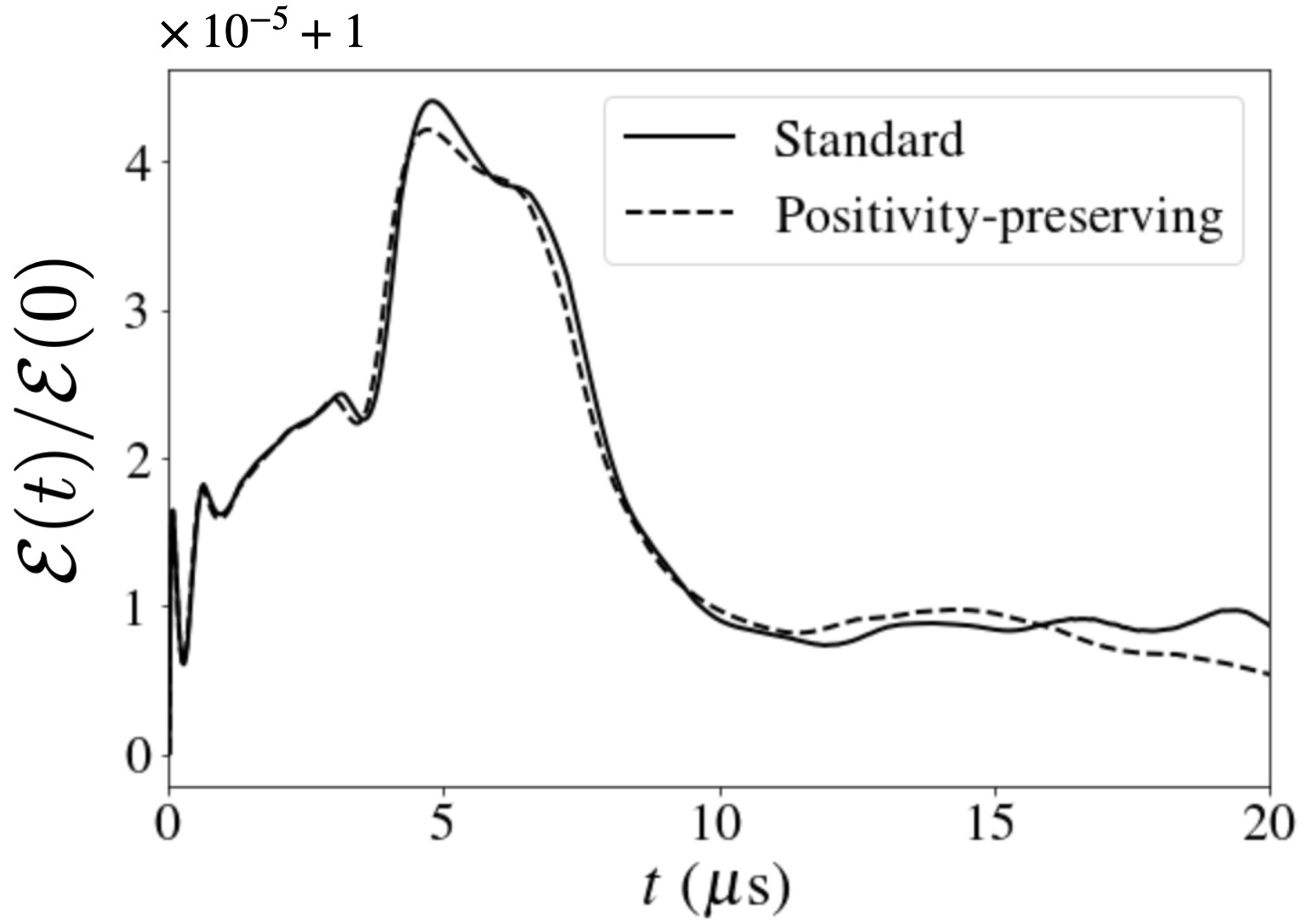}
    \caption[Energy conservation in the electrostatic seeded blob simulations.]{Energy conservation in the electrostatic seeded blob simulations with the standard algorithm and the positivity-preserving algorithm. The positivity algorithm does not adversely affect energy conservation, as desired.}
    \label{fig:nstx-blob-pos-energy}
\end{figure}


%
%

An energy-conserving positivity-preserving DG algorithm for the Dougherty collision operator has also been formulated and will be presented in future work. This will enable full electrostatic  simulations like the ones presented in \cref{sec:emgk-res} that maintain positivity of the distribution function. Implementing the positivity-preserving algorithm in the electromagnetic gyrokinetic system is somewhat more challenging, as we detail in the Appendix \ref{emgk-pos}.

\section{Summary}
In this chapter we have developed a discontinuous Galerkin scheme for maintaining positivity of the distribution function. The scheme has been carefully constructed to avoid \emph{post-hoc} diffusion so that conservation properties are preserved for Hamiltonian systems. The results in \cref{sec:pos-results} show that the scheme is successful in maintaining positivity for passive advection, incompressible Euler, and collisionless electrostatic gyrokinetic systems. Extension to include collisions and electromagnetic effects to the gyrokinetic system is left as future work.

While the simulations in the bulk of this thesis were able to run somewhat robustly with the standard DG algorithm (without any assurances of positivity of the distribution function), there were a number of simulations attempted as part of this thesis that failed due to positivity issues. For example, simulations failed when we tried to use a collision frequency that varied in space and time based on local plasma parameters, because negative local values of density and temperature resulted in an ill-defined collision frequency. We also expect the positivity problem to only get worse as we move to more realistic (and complex) simulation setups and geometries. Thus the work of this chapter is an important and necessary step towards robust, high-fidelity simulations.

\begin{subappendices}

\section{Difficulties in extending the positivity scheme to electromagnetic gyrokinetics} \label{emgk-pos}

Extending the positivity scheme to the electromagnetic gyrokinetics algorithm presented in \cref{sec:gkscheme} is challenging, in part because one does not have all the information needed to compute limiters when the limiters themselves are needed. To illustrate this, first imagine that we (magically) already know what all the limiters will be at the beginning of the timestep, such that no terms (surface or volume) can lead to a negative control node. Neglecting collisions, the DG discretization of the gyrokinetic equation might look something like
\begin{align}
    \int_{\mathcal{K}_i}&\psi \pderiv{(\mathcal{J}f_h)}{t}\dx{^3\vec{R}}\,\dx{^3\vec{v}}  
    \notag \\ &\quad 
    - \int_{\mathcal{K}_i} \theta_i^H \mathcal{J}f_h \dot{\vec{R}}_h\cdot\nabla \psi \,\dx{^3\vec{R}}\,\dx{^3\vec{v}} 
    - \int_{\mathcal{K}_i} \mathcal{J}f_h \left(\theta_i^H \dot{v}^H_{\parallel h}-\theta_i^A \frac{q}{m}\pderiv{A_{\parallel h}}{t}\right) \pderiv{\psi}{v_\parallel}\dx{^3\vec{R}}\,\dx{^3\vec{v}}
    \notag \\ &\quad
    + \oint_{\partial \mathcal{K}_i}\psi^- \Lambda\left[ \widehat{\mathcal{J}f_h}\dot{\vec{R}}_h\right]\cdot \dx{\vec{s}}_R\, \dx{^3\vec{v}}
    + \oint_{\partial \mathcal{K}_i}  \psi^-  \Lambda\left[\widehat{\mathcal{J}f_h} \left(\dot{v}^H_{\parallel h}-\frac{q}{m}\pderiv{A_{\parallel h}}{t}\right)\right]\dx{^3\vec{R}}\,\dx{s}_v 
    \notag \\ &\quad
    = 0. \label{DGgk-pos}
\end{align}
Here, $\theta_i^H$ and $\theta_i^A$ represent volume term scaling factors in cell $i$ for the Poisson bracket and inductive volume terms, respectively, and the notation $\Lambda[\hat{F}]$ represents limiters applied to surface fluxes.

We would like to ensure that this limiter scheme preserves energy conservation. While the volume term limiters $\theta^H$ on the Poisson bracket terms do not affect energy conservation, we have an additional volume term outside the bracket involving $\pderivInline{A_{\parallel h}}{t}$. We also have modifications to the surface terms. In order to maintain energy conservation, we must account for these limiters in the field equations. To see this, take $\psi= H_{s\,h}$ to compute
\begin{align}
    \sum_{s,i}&\int_{\mathcal{K}_i} H_{s\,h}\pderiv{(\mathcal{J}f_{s\,h})}{t}\,\dx{^3\vec{R}}\,\dx{^3\vec{v}}
     = \sum_{s,i}\int_{\mathcal{K}_i}\mathcal{J}f_{s\,h}\, \theta_i^H \left(\dot{\vec{R}}_h\vec{\cdot} \nabla H_{s\,h}+\dot{v}^H_{\parallel h}\pderiv{H_{s\,h}}{v_\parallel}\right)\dx{^3\vec{R}}\,\dx{^3\vec{v}} \notag\\ 
     &- \sum_{s,i}\int_{\mathcal{K}_i} \mathcal{J}f_{s\,h}\,\theta^A_i \frac{q_s}{m_s}\pderiv{A_{\parallel h}}{t} \pderiv{H_{s\,h}}{v_\parallel}\,\dx{^3\vec{R}}\,\dx{^3\vec{v}} 
     -  \sum_{s,i}\oint_{\partial \mathcal{K}_i}H_{s\,h}^- \Lambda\left[\widehat{\mathcal{J}f_h}\dot{\vec{R}}_h\right]\cdot \dx{\vec{s}}_R\, \dx{^3\vec{v}}
     \notag \\ &\quad 
     - \sum_{s,i} \oint_{\partial \mathcal{K}_i}  H_{s\,h}^-\Lambda\left[ \widehat{\mathcal{J}f_h} \left(\dot{v}^H_{\parallel h}-\frac{q}{m}\pderiv{A_{\parallel h}}{t}\right)\right]\dx{^3\vec{R}}\,\dx{s}_v. 
\end{align}
As noted above, despite the inclusion of the volume limiter $\theta^H$, the first volume term still vanishes exactly because $\dot{\vec{R}}_h\vec{\cdot} \nabla H_{h}+\dot{v}^H_{\parallel h}\pderivInline{H_{h}}{v_\parallel}=\{H_{h},H_{h}\}_h=0$. The surface terms also cancel exactly at cell interfaces, so we are left with
\begin{align}
    \sum_{s,i}\int_{\mathcal{K}_i}  H_{s\,h}\pderiv{(\mathcal{J}f_{s\,h})}{t}\,\dx{^3\vec{R}}\,\dx{^3\vec{v}}
    &= 
    -\sum_{s,i}\int_{\mathcal{K}_i} \mathcal{J}f_{s\,h}\,\theta^A_i \frac{q_s}{m_s}\pderiv{A_{\parallel h}}{t} \pderiv{H_{s\,h}}{v_\parallel} \,\dx{^3\vec{R}}\,\dx{^3\vec{v}} \notag \\
    &= 
    - \int_{\mathcal{T}^R}  \pderiv{A_{\parallel h}}{t} \tilde{J}_{\parallel h}\,\,\dx{^3\vec{R}}, \label{Hdfdt-lim}
\end{align}
where we will define a \emph{limited} parallel current, denoted by $\tilde{J}_{\parallel h}$, as
\begin{equation}
\tilde{J}_{\parallel h} = \sum_{s,j}\frac{q_s}{m_s} \int_{\mathcal{K}^v_j} \theta^A_j\pderiv{H_{s\,h}}{v_\parallel} \mathcal{J}f_{s\,h}\,\dx{^3\vec{v}}.
\end{equation}
To regain energetic consistency, this limited $\tilde{J}_{\parallel h}$ must be used in Amp\`ere's law, which becomes
\begin{equation}
\int_{\mathcal{K}^R_i} \nabla_\perp A_{\parallel h} \vec{\cdot} \nabla_\perp \varphi^{(i)}\,\dx{^3\vec{R}} - \oint_{\partial \mathcal{K}^R_i} \varphi^{(i)} \nabla_\perp A_{\parallel h}\cdot\dx{\vec{s}_R}  = \mu_0 \int_{\mathcal{K}^R_i} \varphi^{(i)}\ \tilde{J}_{\parallel h}\,\dx{^3\vec{R}}. \label{ampere-lim}
\end{equation}
Now we can insert $\varphi^{(i)} = (1/\mu_0)\pderivInline{A_{\parallel h}}{t}$ to compute
\begin{align}
    \pderiv{\mathcal{E}_{B\,h}}{t} &= \sum_{i} \int_{\mathcal{K}^R_i} \frac{1}{\mu_0}\nabla_\perp A_{\parallel h} \vec{\cdot} \nabla_\perp \pderiv{A_{\parallel h}}{t} \,\dx{^3\vec{R}} = \int_{\mathcal{T}^R} \pderiv{A_{\parallel h}}{t} \tilde{J}_{\parallel h}\,\dx{^3\vec{R}},
\end{align}
which now cancels the term leftover from \cref{Hdfdt-lim}. 
Now to derive the self-consistent Ohm's law, we take the time derivative of \cref{ampere-lim}, giving
\begin{align}
    &\int_{\mathcal{K}^R_i} \nabla_\perp \pderiv{A_{\parallel h}}{t} \vec{\cdot} \nabla_\perp \varphi^{(i)}\,\dx{^3\vec{R}} 
    - \oint_{\partial \mathcal{K}^R_i} \varphi^{(i)}\nabla_\perp \pderiv{A_{\parallel h}}{t} \cdot \dx{\vec{s}_R} 
    \notag \\ &\quad
    + \mu_0\sum_s q_s \int_{\mathcal{K}_i^R} \varphi^{(i)} \left[\sum_{j} \oint_{\partial\mathcal{K}^v_j}  \theta^A_j \bar{v}_\parallel^-\Lambda\left[\widehat{\mathcal{J} f_{s\,h}}\left({\dot{v}^H_{\parallel h}} -\frac{q_s}{m_s}\pderiv{A_{\parallel h}}{t}\right) \right]\dx{s_v}\right]\dx{^3\vec{R}} 
    \notag\\ &
    = \mu_0\sum_s q_s \int_{\mathcal{K}^R_i} \varphi^{(i)} \Bigg[\sum_j\int_{\mathcal{K}^v_j} \theta^A_j\bar{v}_\parallel \pderiv{(\mathcal{J}f_{s\,h})}{t}^\star \dx{^3\vec{v}} \Bigg]\dx{^3\vec{R}}, \label{ohm-lim}
\end{align}
where where we have assumed $p=1$, and
\begin{align}
    &\int_{\mathcal{K}_i}\psi \pderiv{(\mathcal{J}f_h)}{t}^\star\dx{^3\vec{R}}\,\dx{^3\vec{v}} = 
    \notag \\ &\quad
    \int_{\mathcal{K}_i} \theta^H_i \mathcal{J}f_h \left(\dot{\vec{R}}_h\cdot\nabla \psi +\dot{v}^H_{\parallel h} \pderiv{\psi}{v_\parallel} \right) \dx{^3\vec{R}}\,\dx{^3\vec{v}} 
    - \oint_{\partial \mathcal{K}_i}\psi^- \Lambda\left[\widehat{\mathcal{J}f_h}\dot{\vec{R}}_h\right]\cdot \dx{\vec{s}}_R\, \dx{^3\vec{v}}.
\end{align}
This limiter-modified Ohm's law presents a number of challenges. First, the surface limiters on the second line in \cref{ohm-lim} make the problem of solving for $\pderivInline{A_{\parallel h}}{t}$ a nonlinear one. These limiters act in phase-space, not real-space, which introduces additional degrees of freedom. And these are issues even when all the limiters are known at the time of the solve. In practice, there is the additional complication that all the limiters (the surface limiters and the volume limiters $\theta^A$) themselves depend on $\pderivInline{A_{\parallel h}}{t}$. Thus we essentially need to know $\pderivInline{A_{\parallel h}}{t}$ to evaluate limiters in Ohm's law in order to solve for $\pderivInline{A_{\parallel h}}{t}$. Inevitably, one will require an iteration scheme to solve this circular problem, although it is difficult to say whether such a scheme would converge quickly, if at all. 

\section{Generalization of positivity constraints to higher polynomial order and dimensionality} \label{sec:pos-gen}

In this section we consider a rigorous  procedure for tractably evaluating the positivity constraints in higher dimensionality and higher polynomial order. A key result of \cref{sec:pos-definition} was that in 1D, the most-extreme realizable solution has $f_1 = \pm \sqrt{3}f_0$. The corresponding exponential solution has $g_1 \rightarrow \pm \infty$, so that the exponential approaches a delta function at the cell boundary.

Another way to obtain this result is to project a delta function evaluated at the cell boundary (or more precisely, just inside the boundary) onto the piecewise-linear modal basis,
\begin{gather}
    \delta_0 = \int_{-1}^1 \frac{1}{\sqrt{2}}\delta(x-\pm 1)\, \dx{x} = \frac{1}{\sqrt{2}} \\
    \delta_1 = \int_{-1}^1 \frac{\sqrt{3}}{\sqrt{2}}x\delta(x-\pm 1)\, \dx{x} = \pm\frac{\sqrt{3}}{\sqrt{2}},
\end{gather}
so that the delta function on the boundary has a weak-equivalent piecewise-linear representation given by 
\begin{equation}
    \delta(x-\pm 1) \doteq  \frac{1}{\sqrt{2}}\delta_0 + \frac{\sqrt{3}}{\sqrt{2}}x \delta_1 = \frac{1}{2} \pm \frac{3}{2} x \equiv \delta^\pm(x).
\end{equation}
Indeed, $\delta_1 = \pm \sqrt{3} \delta_0$, so we have recovered the earlier result. Also note that the functions $\delta^\pm(x)$ have zeros at the positivity control nodes $x=\pm 1/3$. 

Now consider that we could use the functions $\delta^\pm(x)$ as basis functions and expand the solution as
\begin{equation}
    f_h = f_+ \delta^+ + f_- \delta^- = f_+\left(\frac{1}{2} + \frac{3}{2}x\right) + f_-\left(\frac{1}{2} - \frac{3}{2}x\right).
\end{equation}
This is effectively a nodal basis, with nodal values $f_\pm = f_h(x= \pm 1/3)$. Note that we can also obtain the coefficients $f_\pm$ via projection:
\begin{equation}
    f_\pm = \int_{-1}^1 \left(\frac{1}{2}\pm\frac{1}{2}x\right) f_h\,\dx{x} \equiv \int_{-1}^1 \delta_\pm(x) f_h\, \dx{x}.
\end{equation}
Thus we will denote the functions $\delta^\pm(x)$ as the positivity \emph{expansion} basis functions, and the functions $\delta_\pm(x)$ as the positivity \emph{projection} basis functions. Finally, as before, the positivity constraint for the $f_\pm$ coefficients is $f_\pm \geq 0$. 

Note that we have now slightly modified our positivity definition. Instead of using an exponential basis as the non-polynomial positive-definite basis set with which we require weak-equality, we will now use delta functions. Thus the positive-definite representation of the solution is
\begin{equation}
    g_h = g_+ \delta(x-1) + g_- \delta(x+1).
\end{equation}
Requiring this non-polynomial function to be positive on the entire cell domain gives the constraint $g_\pm\geq 0$.
Enforcing weak-equality with the piecewise-linear solution $f_h = f_+ \delta^+ + f_- \delta^-$ now simply gives $f_\pm = g_\pm$, since by construction the basis functions $\delta^\pm$ are weak-equivalent to the delta functions $\delta(x-\pm1)$. This once again gives that the positivity constraints on the coefficients of $f_h$ are $f_\pm\geq0$. Thus despite the change in positive-definite basis functions from exponentials to delta functions, the result is the same, which suggests some degree of equivalence between the two choices.

Now consider the one-dimensional piecewise-quadratic case. The orthonormal modal basis set on the cell $x\in[-1,1]$ is
\begin{equation}
    \psi = \left\{\frac{1}{\sqrt{2}},\ \frac{\sqrt{3}}{\sqrt{2}}x,\ \frac{3\sqrt{5}}{2\sqrt{2}}\left(x^2-\frac{1}{3}\right)\right\}.
\end{equation}
Once again, we can project delta functions just inside the cell boundaries, $\delta(x-\pm1)$, onto the basis, giving
\begin{gather}
    \delta^\pm = \frac{1}{2}\pm\frac{3}{2}x+\frac{15}{4}\left(x^2-\frac{1}{3}\right) \doteq \delta(x-\pm 1).
\end{gather}
We will again use these functions as expansion basis functions.
Given the extra degree of freedom in the piecewise-quadratic case, we need an additional basis function. We can obtain the final basis function by taking a delta function at the cell center, $\delta(x-0)$; we choose $x=0$ so that the basis is symmetric about the cell center. Projecting onto the piecewise-quadratic basis gives
\begin{equation}
    \delta^\circ = \frac{1}{2}-\frac{15}{8}\left(x^2-\frac{1}{3}\right) \doteq \delta(x-0).
\end{equation}
Thus, the piecewise-quadratic positivity expansion basis is given by $\{\delta^+,\delta^\circ,\delta^-\}$, allowing us to expand the solution as
\begin{equation}
    f_h = f_+\delta^+ + f_\circ \delta^\circ + f_- \delta^-.
\end{equation}
Unlike in the piecewise-linear case, the coefficients $\{f_+, f_\circ, f_-\}$ do not coincide with nodal values of $f_h$. Instead, they must be obtained by using projection basis functions, which can be shown to be
\begin{equation}
    \delta_+ = \frac{1}{2}x(x+1), \qquad \delta_\circ = 1-x^2, \qquad \delta_- = \frac{1}{2}x(x-1), \label{proj2}
\end{equation}
so that
\begin{align}
    f_{+} = \int_{-1}^1 \delta_{+} f_h\, \dx{x}, \qquad
    f_{\circ} = \int_{-1}^1 \delta_\circ f_h\, \dx{x}, \qquad
    f_{-} = \int_{-1}^1 \delta_{-} f_h\, \dx{x}.
\end{align}
Once again, the requirements $g_h = g_+ \delta(x-1) + g_\circ \delta(x) + g_- \delta(x+1) \geq 0$ (for all $x\in[-1,1]$) and $f_h\doteq g_h$ give the positivity constraints that the coefficients of $f_h = f_+ \delta^+ + f_\circ \delta^\circ + f_- \delta^-$ must be non-negative: $f_\pm,f_\circ \geq 0$. 

We have now successfully generalized the procedure for evaluating positivity constraints to $p=2$. We can further generalize to arbitrarily high order by making the following two observations. First, consider that we obtained the positivity expansion basis functions above by projecting delta functions centered at $x=-1,1$ for $p=1$ and $x=-1,0,1$ for $p=2$. We can recognize that these sets of points are Gauss-Lobatto nodes\footnote{The Gauss-Lobatto nodes always include the cell endpoints $x=\pm 1$ in the node set. Variants include the Legendre-Gauss-Lobatto nodes (commonly referred to as just the Gauss-Lobatto nodes), where the nodes of order $p$ are given by roots of the polynomial $P'_p(x)$, where $P(x)$ is a Legendre polynomial; and the Chebyshev-Gauss-Lobatto nodes, where the nodes are located at $x_j = \cos(\pi j/p)$ for $j=0,...,p$. For $p\leq 2$, these variants give identical nodes.} for $p=1$ and $p=2$, respectively. Second, note that the positivity projection functions are the Lagrange basis functions for the same set of Gauss-Lobatto nodes. For arbitrary $p$, the Lagrange basis functions for nodes $x_j$ are given by
\begin{equation}
    \ell_j(x) = \prod_{\begin{smallmatrix}0\le m\le p\\ m\neq j\end{smallmatrix}} \frac{x-x_m}{x_j-x_m}.
\end{equation}

Thus we now have a general procedure for generating positivity expansion and projection basis functions in one-dimension for arbitrary polynomial order. The steps are
\begin{enumerate}
    \item The positivity \emph{expansion} basis functions can be found by projecting delta functions centered at Gauss-Lobatto nodes $x_j$ onto the orthonormal modal basis $\vec{\psi}(x)$, 
    \begin{gather}
        \ell^j(x) = \vec{\psi}(x) \cdot \int_{-1}^1  \vec{\psi}(x)\  \delta(x-x_j)\,\dx{x}= \vec{\psi}(x)\cdot\vec{\psi}(x_j).
    \end{gather}
    We can then use this basis to expand $f_h$ as
    \begin{gather}
        f_h = \sum_{j=0}^p g_{j} \ell^j(x),
    \end{gather}
    where we will now use $g_j$ to denote the coefficients of the solution expanded on the positivity basis (to distinguish from modal coefficients $f_j$).
    \item The positivity \emph{projection} basis functions are the Lagrange basis functions for the Gauss-Lobatto nodes:
    \begin{equation}
        \ell_j(x) = \prod_{\begin{smallmatrix}0\le m\le p\\ m\neq j\end{smallmatrix}} \frac{x-x_m}{x_j-x_m}.
    \end{equation}
    Note that in general, the Lagrange basis functions for a particular set of nodes $x_j$ can be derived by computing the matrix
    \begin{equation}
        M_{jk} = \int_{-1}^1 \psi_j \ell^k\,\dx{x} = \psi_j(x_k),
    \end{equation}
    where the second equality assumes that the $\psi_j$ are orthonormal. Then the Lagrange basis functions are given by
    \begin{equation}
        \ell_j = \sum_k(M^{-1})_{jk} \psi_k.
    \end{equation}
    These projection basis functions can be used to find the coefficients
    \begin{equation}
        g_j = \int_{-1}^1 \ell_j(x) f_h\,\dx{x}, \label{pos-coeff}
    \end{equation}
    since 
    \begin{equation}
        \int_{-1}^1 \ell_i \ell^j\,\dx{x} = \delta_{ij}. \label{orthog}
    \end{equation}
    The $g_j$ are effectively the projection of $f_h$ onto a Gauss-Lobatto nodal (Lagrange) basis. Note however that this is not the same as directly evaluating $f_h$ at the Gauss-Lobatto nodes.
    \item The piecewise-polynomial solution $f_h$ is positive \emph{in the weak sense} if all coefficients $g_j$ from \cref{pos-coeff} are non-negative, so that $g_j\geq 0$ for all $j$. 
\end{enumerate}
The above procedure can then be further generalized to higher dimension by taking (possibly sparse) tensor products of the one-dimensional positivity expansion and projection basis functions. 
\end{subappendices}

\chapter{Summary and future work}\label{ch:conclusion}
\section{Summary}

The main advance of this thesis was the development of the first capabilities for simulating electromagnetic gyrokinetic turbulence on open magnetic field lines. This is an important step towards comprehensive electromagnetic gyrokinetic simulations of the coupled edge/SOL system. In the past, including electromagnetic effects in gyrokinetic codes has been challenging, as there are delicate issues such as the Amp\`ere cancellation problem that must be handled properly. In our continuum full-$f$ approach, we build on the successes of continuum $\delta f$ gyrokinetic codes in the core which have mostly avoided the cancellation problem. The inclusion of electromagnetic effects in gyrokinetic simulations that can handle the unique challenges of the boundary plasma (large fluctuations, open and closed field line regions, \emph{etc.}) is critical to the understanding of phenomena such as edge-localized modes and the pedestal, for which electromagnetic dynamics are expected to play a key role.

In \cref{ch:emgk} we gave a first-principles derivation of the electromagnetic gyrokinetic system, in the limit of interest for our present work. This derivation used phase-space-Lagrangian Lie perturbation methods to systematically derive a self-consistent, energy-conserving, and global gyrokinetic system, including electromagnetic perturbations. We used the weak-flow ordering, which simultaneously allows large perturbations $q\Phi/T\sim1$ at long wavelengths ($k_\perp \rho\sim\epsilon_V\ll1$) and small perturbations $q\Phi/T\sim\epsilon_V$ at short wavelengths ($k_\perp \rho\sim 1$), along with perturbations at intermediate scales. We also used the symplectic ($v_\parallel$) formulation of electromagnetic gyrokinetics, which results in the explicit presence of the inductive electric field in the gyrokinetic equation. After deriving the general formalism including finite-Larmor-radius (FLR) corrections, we consistently reduced the system to the long-wavelength limit by neglecting first- and second-order terms in the single particle Lagrangian to obtain the guiding-center Lagrangian, which contains no gyroaverages. Variational derivation of the field equations resulted in a self-consistent, energy-conserving system for electromagnetic gyrokinetics in the long-wavelength limit. We take this limit for simplicity of implementation in the \gke code, with extension of the implementation to include FLR terms left as future work. We summarized the system implemented in \gke in \cref{sec:gk-summary}.

We went to great lengths to ensure that the underlying system is self-consistent and conservative, so we also needed a robust numerical method with a discretization scheme that preserves these properties. This was the topic of \cref{ch:dg}. We have employed the discontinuous Galerkin method, a high-order numerical method that combines attractive features of finite-element and finite-volume methods. After discussing a discontinuous Galerkin scheme for general Hamiltonian systems that preserves energy by design (in the continuous-time limit), we applied the scheme to the electromagnetic gyrokinetic system. The scheme was then implemented in the gyrokinetic module of the \gke plasma simulation framework.
Linear benchmarks were shown to verify the implementation. The success of these benchmarks, especially for cases with high $\beta$ and small $k_\perp \rho$, indicated that the Amp\`ere cancellation problem is avoided. We confirmed this by deriving a semi-discrete Alfv\'en wave dispersion relation. As a result, we can handle electromagnetic fluctuations in a stable, robust, and efficient manner.

The success of the scheme led to the first published simulations of electromagnetic gyrokinetic turbulence on open field lines, detailed in \cref{ch:nstx-results}. As a rough model of the scrape-off layer in the National Spherical Torus Experiment (NSTX) experiment at PPPL, we took a simple helical configuration (like a simple magnetized torus, or SMT) with field lines wrapping helically around the torus and terminating on conducting plates at the top and bottom. This model system contains many of the necessary ingredients for SOL dynamics, including bad curvature and Debye sheath effects, which are handled via conducting-sheath boundary conditions.
Initial results showed that when electromagnetic effects are included, high $\beta$ blobs can bend and stretch the magnetic field lines as they move radially outwards in the SOL. Qualitative comparisons to a corresponding electrostatic simulation showed differences in blob dynamics, with non-adiabatic electron dynamics playing a key role in the electromagnetic case due to slowing of the parallel response. We then performed a study of the effects of increasing $\beta$ on the SOL dynamics. At higher $\beta$, the influence of electromagnetic effects became stronger, resulting in steepening of pressure gradients near the source region and flattening of gradients in the remainder of the domain. We observed a transition from interchange-like modes with $k_\parallel\sim 0$ to ballooning-like modes with finite $k_\parallel$ as pressure gradients $(\alpha^\mathrm{SMT})$ increased above the ballooning stability threshold in the source region. Radially inward magnetic flutter particle transport off midplane, resulting from parallel motion of electron along radially-bowed-out field lines, was observed to increase roughly as $\beta^2$. Meanwhile the $E\times B$ component of the radial particle transport only scaled linearly with $\beta$. This led to slightly reduced radial transport in the high $\beta$ electromagnetic cases, resulting in slightly higher peak particle and heat loads on the end plates compared to corresponding electrostatic cases. These results could have important implications for the transport of high $\beta$ blobs and ELM filaments. Further, the electromagnetic mechanism resulting in the steepening of gradients in the source region could have implications for pedestal formation and thus deserves more thorough study. Crucially, our electromagnetic simulations were not significantly more expensive than  corresponding electrostatic simulations, which should allow the routine inclusion of electromagnetic effects in future results. 

We worked on advancing to more realistic SOL geometry in \cref{ch:geometry}. We adopted a generalized field-aligned non-orthogonal coordinate system, and expressed the gyrokinetic system in these coordinates. While these coordinates break down at the separatrix in diverted geometries due to a singularity at the X-point, field-aligned coordinates could still be used for efficient discretization on either side of the separatrix, stitched to a non-aligned domain in the near vicinity of the separatrix. We then focused on how to formulate field-aligned coordinate systems for use in flux-tube-like domains in the SOL. We started with a helical configuration with magnetic shear, which generalizes the simple geometry used in \cref{ch:nstx-results}. Preliminary electrostatic simulations in this configuration showed that transport is reduced in more sheared configurations. We then formulated field-aligned coordinate systems based on an analytical Solov'ev model SOL equilibrium and an analytical concentric circular equilibrium. The latter is a common geometry used in the core region, especially for inter-code benchmarking. We presented results from a preliminary electrostatic ITG benchmark based on the Cyclone base case in circular core geometry, which compared well with results from other codes in the long-wavelength limit where our system is valid.

In \cref{ch:positivity} we tackled the problem of positivity in our discontinuous Galerkin scheme. Simulations can suffer from accuracy and robustness issues because the standard DG scheme does not guarantee that the distribution function will remain positive (even in the cell-average). We developed a novel scheme for both defining and preserving positivity in the DG discretization. Importantly, the scheme was designed without \emph{post-hoc} diffusion that is used in many existing positivity-preserving algorithms. This allows the scheme to preserve energy conservation while maintaining positivity, even in Hamiltonian systems like gyrokinetics where energy conservation relies on higher-order moments. We then implemented the scheme in \gke and performed a variety of numerical tests for advection, the 2D incompressible Euler system, and the 5D electrostatic gyrokinetic system. The success of the scheme in maintaining positivity and preserving energy conservation, even in 5D, is a significant advance. 

\section{Future work}

While the work of this thesis has advanced the modeling capabilities of the \gke code, there are a number of areas that remain in order to produce realistic results for direct comparison with existing experiments or prediction of future ones. The following list focuses on enhancements requiring further code and algorithm development, many of which are already in progress.

\begin{itemize}
    \item {\bf Closed-field-line boundary conditions:} The field-aligned geometry formulation presented in \cref{ch:geometry} can also be used for closed-field-line regions. What remains is the implementation of a boundary condition for closed-field-line regions. This work is currently underway, led by Mana Francisquez, using the twist-and-shift approach \citep{beer1995field}. This requires careful interpolation of mis-aligned sheared grids at the ends of the domain along the field line. Once ready, this will allow simulations in a limiter configuration containing both open and closed field line regions. This configuration has been used by several fluid and gyrofluid codes \citep{ribeiro2008,halpern2016, francisquez2017global}, and can be used to study SOL flows, the edge radial electric field, and resulting edge toroidal rotation. These are all relevant to pedestal formation and the L-H transition. \citet{parra2008, parra2010} have stressed the importance of third-order terms in the Hamiltonian that are required to accurately calculate toroidal rotation. While these subtleties must be investigated in detail, Parra \& Catto's results are for the low-flow, up-down symmetric, gyro-Bohm regime. In the edge, the gyro-Bohm scaling breaks down because eddy sizes are not much smaller than radial gradient scale lengths, and so including the third-order Hamiltonian terms may not be required for studying rotation mechanisms in the edge.
    
    \item {\bf Diverted geometry with X-point:} As we have discussed, the X-point is a significant challenge because field-aligned coordinates are singular on the separatrix. This has led to a number of new approaches, such as the flux-coordinate-independent (FCI) approach \citep{hariri2013,hariri2014,stegmeir2016}, which move away from the field-aligned approach. Implementation of FCI or a related approach near the X-point could allow simulation of diverted geometries. Ideally, one could still use conventional field-aligned domains in the core and in the SOL, and only use a non-aligned domain in the immediate vicinity of the separatrix. Stitching these domains together will require sophisticated interpolation and mapping schemes, especially if conservation laws are to be preserved.
    
    \item {\bf Neutral modeling:} Neutral interactions play a significant role in plasma-material interactions that dictate much of the SOL dynamics and evolution. As such, modeling neutrals is critical to producing experimentally-relevant results and predictions. Neutral modeling work is underway in \gke, led by Tess Bernard, leveraging the existing 6D Vlasov kinetic module to produce a kinetic Boltzmann neutral model. The main interaction mechanisms of electron-impact ionization, charge exchange, and radiative recombination are modeled.
    
    \item {\bf Gyroaveraging and higher order terms:} While we derived a long-wavelength limit of the gyrokinetic system, it is important to generalize to shorter wavelengths $k_\perp \rho \sim 1$ within the weak-flow ordering. This involves gyroaveraging operations in the gyrokinetic equation and the field equations. Gyroaveraging is relatively simple in the Fourier spectral representation of many core gyrokinetic codes, where simply multiplying by the Bessel function $J_0(k_\perp v_\perp/\Omega )$ gives gyroaveraging; however, in real space implementations, gyroaveraging requires integral operations that sample around the gyro-orbit. The finite-element implementation of \citet{maurer2020} is likely a good starting point for a gyroaveraging implementation in \gke. Additionally, the second-order $E\times B$ energy term in the Hamiltonian should be included, so that a time-evolving density can be used in the polarization term in the Poisson equation instead of the linearized polarization used in this work. This will be important for cases like pedestal formation where there is significant evolution of the density profile. With these additions, we could use the system given in Case 1 from \cref{variational}.
    
    \item {\bf More realistic/efficient collision operators:}
    We have used a model Dougherty collision operator in this work, and we have taken a constant-in-space and constant-in-time collision frequency. A time- and spatially-varying collision frequency has been implemented in \gke, but it suffers from robustness issues likely related to positivity. We have also used an artificially-reduced  collision frequency in this work to avoid severe timestep restrictions; this issue could be alleviated with an implicit or super-timestepping implementation of the collision operator. Further, a more realistic collision operator beyond the simple Dougherty model should be implemented. Preliminary work on a full nonlinear Fokker-Planck collision model in Rosenbluth potential form in \gke has been led by Petr Cagas.
    
    \item {\bf Porting \gke to GPUs:} Today, many of the world's fastest supercomputers all derive a majority of their computing power from graphics processors (GPUs). The rise of GPUs in scientific computing over the past decade has been driven in large part due to their supreme performance for machine learning applications. To fully leverage the power of these machines, the algorithms in \gke must be efficiently implemented on GPUs. Work has begun to port the compute-intensive \gke solver kernels to a CUDA implementation for use on NVIDIA GPUs, with significant progress made by myself, Ammar Hakim, Jimmy Juno, Mana Francisquez, and others on the \gke team as part of GPU hackathons hosted by Princeton.
    
    \item {\bf Extensions of the positivity algorithm, including collisions, electromagnetic gyrokinetics, and higher polynomial order:} 
    The positivity algorithms detailed in \cref{ch:positivity} are a significant step towards improving robustness of \gke simulations. Currently, these algorithms are only implemented for the electrostatic gyrokinetic system. An implementation including collisions has also been made by Mana Francisquez, but at the time of writing there is some issue in the implementation with energy conservation. Further extension to the electromagnetic gyrokinetic system will require additional work, due to the issues discussed in Appendix \ref{emgk-pos}. The algorithm is also formulated in a general way so that in principle it could be generalized to higher polynomial order. Finally, the algorithm could also be implemented into the Vlasov-Maxwell module in \gke.
    
\end{itemize}

\noindent
Along with the further development detailed above, there are a number of interesting and important physics problems that can leverage the electromagnetic gyrokinetic capabilities developed in this thesis. An immediate goal will be to investigate the importance of electromagnetic effects on SOL dynamics in realistic tokamak geometries at experimental parameters. Once the additional capability to simultaneously model open- and closed-field-lines has been developed, we will be able to study the dynamics of the coupled pedestal/SOL system. This is of critical importance for the development of a fusion pilot plant, and a major theme of the recent FESAC Long Range Planning report: ``a sustained burning plasma at high power density is required simultaneously with a solution to the power exhaust challenge: mitigating the extreme heat fluxes to materials surrounding the plasma'' \citep{carter2020}. 

Electromagnetic effects are expected to play a significant role in the pedestal region, in part due to large pressure gradients that push the plasma close to the ideal-MHD stability threshold \citep{snyder2011}. Further, while electrostatic turbulence is often suppressed in the pedestal region by $E\times B$ shear and other effects, the transport can remain above neoclassical levels due to the presence of electromagnetic instabilities such as microtearing modes \citep{hatch2016}. Since turbulence suppression in the pedestal region plays a key role in pedestal formation and sustenance in H-mode, understanding the impact of electromagnetic effects on pedestal transport is of critical importance to the success of current and future fusion devices such as ITER. Edge-localized modes (ELMs) can also play a key role in limiting the pedestal pressure gradient, and ELMs are strongly electromagnetic. Thus self-consistent study of pedestal dynamics requires modeling the coupled pedestal/SOL system, but previous efforts have relied on the electrostatic approximation to neglect electromagnetic perturbations \citep{idomura2009, abiteboul2013,churchill2017,ku2018fast}. By leveraging and extending the unique capabilities developed in this thesis to model full-$f$ electromagnetic gyrokinetic turbulence in the boundary region, we will be able to model the evolution of the coupled pedestal/SOL system in the presence of electromagnetic microturbulence. This will enable exciting and impactful research that will be valuable for understanding current experiments and ensuring the success of future fusion reactors.

\singlespacing

\cleardoublepage
\ifdefined\phantomsection
  \phantomsection  
\else
\fi
\addcontentsline{toc}{chapter}{Bibliography}
\bibliography{library}

\begin{thebibliography}{239}
\expandafter\ifx\csname natexlab\endcsname\relax\def\natexlab#1{#1}\fi
\def\au#1{#1} \def\ed#1{#1} \def\yr#1{#1}\def\at#1{#1}\def\jt#1{\textit{#1}}
  \def\bt#1{#1}\def\bvol#1{\textbf{#1}} \def\vol#1{#1} \def\pg#1{#1}
  \def\publ#1{#1}\def\arxiv#1{#1}\def\org#1{#1}\def\st#1{\textit{#1}}

\bibitem[Abel {\em et~al.\/}(2013)Abel, Plunk, Wang, Barnes, Cowley, Dorland \&
  Schekochihin]{abel2013}
{\sc \au{Abel, I.~G.}, \au{Plunk, G.~G.}, \au{Wang, E.}, \au{Barnes, M.},
  \au{Cowley, S.~C.}, \au{Dorland, W.} \& \au{Schekochihin, A.~A.}} \yr{2013}
  \at{{Multiscale gyrokinetics for rotating tokamak plasmas: Fluctuations,
  transport and energy flows}}.  \jt{Reports Prog. Phys.}  \bvol{76}~(11).

\bibitem[Abiteboul {\em et~al.\/}(2013)Abiteboul, Ghendrih, Grandgirard,
  Cartier-Michaud, Dif-Pradalier, Garbet, Latu, Passeron, Sarazin, Strugarek,
  Thomine \& Zarzoso]{abiteboul2013}
{\sc \au{Abiteboul, J.}, \au{Ghendrih, P.}, \au{Grandgirard, V.},
  \au{Cartier-Michaud, T.}, \au{Dif-Pradalier, G.}, \au{Garbet, X.}, \au{Latu,
  G.}, \au{Passeron, C.}, \au{Sarazin, Y.}, \au{Strugarek, A.}, \au{Thomine,
  O.} \& \au{Zarzoso, D.}} \yr{2013}  \at{{Turbulent momentum transport in core
  tokamak plasmas and penetration of scrape-off layer flows}}.  \jt{Plasma
  Phys. Control. Fusion}  \bvol{55}~(7),  \pg{74001--74012}.

\bibitem[Angelino {\em et~al.\/}(2006)Angelino, Bottino, Hatzky, Jolliet,
  Sauter, Tran \& Villard]{angelino2006}
{\sc \au{Angelino, P.}, \au{Bottino, A.}, \au{Hatzky, R.}, \au{Jolliet, S.},
  \au{Sauter, O.}, \au{Tran, T.~M.} \& \au{Villard, L.}} \yr{2006}  \at{{On the
  definition of a kinetic equilibrium in global gyrokinetic simulations}}.
  \jt{Phys. Plasmas}  \bvol{13}~(5),  \pg{969}.

\bibitem[Angus {\em et~al.\/}(2012)Angus, Krasheninnikov \& Umansky]{angus2012}
{\sc \au{Angus, J.~R.}, \au{Krasheninnikov, S.~I.} \& \au{Umansky, M.~V.}}
  \yr{2012}  \at{{Effects of parallel electron dynamics on plasma blob
  transport}}.  \jt{Phys. Plasmas}  \bvol{19}~(8),  \pg{82312}.

\bibitem[Antonsen \& Lane(1980)]{antonsen1980}
{\sc \au{Antonsen, T.~M.} \& \au{Lane, B.}} \yr{1980}  \at{{Kinetic equations
  for low frequency instabilities in inhomogeneous plasmas}}.  \jt{Phys.
  Fluids}  \bvol{23}~(6),  \pg{1205--1214}.

\bibitem[Arnold \& Awanou(2011)]{arnold2011serendipity}
{\sc \au{Arnold, D.~N.} \& \au{Awanou, G.}} \yr{2011}  \at{{The serendipity
  family of finite elements}}.  \jt{Found. Comput. Math.}  \bvol{11}~(3),
  \pg{337--344}.

\bibitem[Artun \& Tang(1994)]{artun1994}
{\sc \au{Artun, M.} \& \au{Tang, W.~M.}} \yr{1994}  \at{{Nonlinear
  electromagnetic gyrokinetic equations for rotating axisymmetric plasmas}}.
  \jt{Phys. Plasmas}  \bvol{1}~(8),  \pg{2682--2692}.

\bibitem[Bao {\em et~al.\/}(2018)Bao, Lin \& Lu]{bao2018conservative}
{\sc \au{Bao, J.}, \au{Lin, Z.} \& \au{Lu, Z.~X.}} \yr{2018}  \at{{A
  conservative scheme for electromagnetic simulation of magnetized plasmas with
  kinetic electrons}}.  \jt{Phys. Plasmas}  \bvol{25}~(2),  \pg{22515}.

\bibitem[Batishcheva {\em et~al.\/}(1996)Batishcheva, Batishchev, Shoucri,
  Krasheninnikov, Catto, Shkarofsky \& Sigmar]{batishcheva1996}
{\sc \au{Batishcheva, A.~A.}, \au{Batishchev, O.~V.}, \au{Shoucri, M.~M.},
  \au{Krasheninnikov, S.~I.}, \au{Catto, P.~J.}, \au{Shkarofsky, I.~P.} \&
  \au{Sigmar, D.~J.}} \yr{1996}  \at{{A kinetic model of transient effects in
  tokamak edge plasmas}}.  \jt{Phys. Plasmas}  \bvol{3}~(5),  \pg{1634--1639}.

\bibitem[Beer {\em et~al.\/}(1995)Beer, Cowley \& Hammett]{beer1995field}
{\sc \au{Beer, M.~A.}, \au{Cowley, S.~C.} \& \au{Hammett, G.~W.}} \yr{1995}
  \at{{Field--aligned coordinates for nonlinear simulations of tokamak
  turbulence}}.  \jt{Phys. Plasmas}  \bvol{2}~(7),  \pg{2687--2700}.

\bibitem[Beer \& Hammett(1996)]{beer1996}
{\sc \au{Beer, M.~A.} \& \au{Hammett, G.~W.}} \yr{1996}  \at{{Toroidal
  gyrofluid equations for simulations of tokamak turbulence}}.  \jt{Phys.
  Plasmas}  \bvol{3}~(11),  \pg{4046--4064}.

\bibitem[Belli \& Candy(2010)]{belli2010}
{\sc \au{Belli, E.~A.} \& \au{Candy, J.}} \yr{2010}  \at{{Fully electromagnetic
  gyrokinetic eigenmode analysis of high-beta shaped plasmas}}.  \jt{Phys.
  Plasmas}  \bvol{17}~(11),  \pg{112314}.

\bibitem[Bernard {\em et~al.\/}(2019)Bernard, Shi, Gentle, Hakim, Hammett,
  Stoltzfus-Dueck \& Taylor]{bernard2019}
{\sc \au{Bernard, T.~N.}, \au{Shi, E.~L.}, \au{Gentle, K.~W.}, \au{Hakim, A.},
  \au{Hammett, G.~W.}, \au{Stoltzfus-Dueck, T.} \& \au{Taylor, E.~I.}}
  \yr{2019}  \at{{Gyrokinetic continuum simulations of plasma turbulence in the
  Texas Helimak}}.  \jt{Phys. Plasmas}  \bvol{26}~(4).

\bibitem[Bernard {\em et~al.\/}(2020)Bernard, Stoltzfus-Dueck, Gentle, Hakim,
  Hammett \& Shi]{bernard2020}
{\sc \au{Bernard, T.~N.}, \au{Stoltzfus-Dueck, T.}, \au{Gentle, K.~W.},
  \au{Hakim, A.}, \au{Hammett, G.~W.} \& \au{Shi, E.~L.}} \yr{2020}
  \at{{Investigating shear flow through continuum gyrokinetic simulations of
  limiter biasing in the Texas Helimak}}.  \jt{Phys. Plasmas}  \bvol{27}~(6),
  \pg{62304}.

\bibitem[Boedo {\em et~al.\/}(2014)Boedo, Myra, Zweben, Maingi, Maqueda,
  Soukhanovskii, Ahn, Canik, Crocker, D'Ippolito, Bell, Kugel, Leblanc,
  Roquemore \& Rudakov]{boedo2014}
{\sc \au{Boedo, J.~A.}, \au{Myra, J.~R.}, \au{Zweben, S.}, \au{Maingi, R.},
  \au{Maqueda, R.~J.}, \au{Soukhanovskii, V.~A.}, \au{Ahn, J.~W.}, \au{Canik,
  J.}, \au{Crocker, N.}, \au{D'Ippolito, D.~A.}, \au{Bell, R.}, \au{Kugel, H.},
  \au{Leblanc, B.}, \au{Roquemore, L.~A.} \& \au{Rudakov, D.~L.}} \yr{2014}
  \at{{Edge transport studies in the edge and scrape-off layer of the National
  Spherical Torus Experiment with Langmuir probes}}.  \jt{Phys. Plasmas}
  \bvol{21}~(4),  \pg{42309}.

\bibitem[Bohm(1949)]{bohm1949}
{\sc \au{Bohm, D.}} \yr{1949}  \at{{Minimium ionic kinetic energy for stable
  sheath}}.  \bt{In {\em Charact. Electr. Discharges Magn. Fields\/} (ed.
  \ed{A.~Guthrie \& K.R. Wakerling})}, chap.~3.  \publ{MeGraw-Hill Book
  Company}.

\bibitem[Bourdelle {\em et~al.\/}(2003)Bourdelle, Dorland, Garbet, Hammett,
  Kotschenreuther, Rewoldt \& Synakowski]{bourdelle2003}
{\sc \au{Bourdelle, C.}, \au{Dorland, W.}, \au{Garbet, X.}, \au{Hammett,
  G.~W.}, \au{Kotschenreuther, M.}, \au{Rewoldt, G.} \& \au{Synakowski, E.~J.}}
  \yr{2003}  \at{{Stabilizing impact of high gradient of $\beta$ on
  microturbulence}}.  \jt{Phys. Plasmas}  \bvol{10}~(7),  \pg{2881--2887}.

\bibitem[Braginskii(1965)]{braginskii1965}
{\sc \au{Braginskii, S.~I.}} \yr{1965}  \at{{Transport processes in a plasma;
  in Reviews of Plasma Physics, M.A. Leontovich (ed.)}}.  \jt{Rev. Plasma
  Phys.}  \bvol{1},  \pg{205}.

\bibitem[Brizard(1995)]{brizard1995}
{\sc \au{Brizard, A.~J.}} \yr{1995}  \at{{Nonlinear gyrokinetic Vlasov equation
  for toroidally rotating axisymmetric tokamaks}}.  \jt{Phys. Plasmas}
  \bvol{2}~(2),  \pg{459--471}.

\bibitem[Brizard \& Hahm(2007)]{brizard2007foundations}
{\sc \au{Brizard, A.~J.} \& \au{Hahm, T.~S.}} \yr{2007}  \at{{Foundations of
  nonlinear gyrokinetic theory}}.  \jt{Rev. Mod. Phys.}  \bvol{79}~(2),
  \pg{421--468}.

\bibitem[Brower {\em et~al.\/}(1987)Brower, Peebles, Kim, Luhmann, Tang \&
  Phillips]{brower1987}
{\sc \au{Brower, D.~L.}, \au{Peebles, W.~A.}, \au{Kim, S.~K.}, \au{Luhmann,
  N.~C.}, \au{Tang, W.~M.} \& \au{Phillips, P.~E.}} \yr{1987}  \at{{Observation
  of a high-density ion mode in tokamak microturbulence}}.  \jt{Phys. Rev.
  Lett.}  \bvol{59}~(1),  \pg{48--51}.

\bibitem[Burrell {\em et~al.\/}(1994)Burrell, Doyle, Gohil, Groebner, Kim, {La
  Haye}, Lao, Moyer, Osborne, Peebles, Rettig, Rhodes \& Thomas]{burrell1994}
{\sc \au{Burrell, K.~H.}, \au{Doyle, E.~J.}, \au{Gohil, P.}, \au{Groebner,
  R.~J.}, \au{Kim, J.}, \au{{La Haye}, R.~J.}, \au{Lao, L.~L.}, \au{Moyer,
  R.~A.}, \au{Osborne, T.~H.}, \au{Peebles, W.~A.}, \au{Rettig, C.~L.},
  \au{Rhodes, T.~H.} \& \au{Thomas, D.~M.}} \yr{1994}  \at{{Role of the radial
  electric field in the transition from L (low) mode to H (high) mode to VH
  (very high) mode in the DIII-D tokamak}}.  \jt{Phys. Plasmas}  \bvol{1}~(5),
  \pg{1536--1544}.

\bibitem[Cagas {\em et~al.\/}(2017)Cagas, Hakim, Juno \&
  Srinivasan]{cagas2017continuum}
{\sc \au{Cagas, P.}, \au{Hakim, A.}, \au{Juno, J.} \& \au{Srinivasan, B.}}
  \yr{2017}  \at{{Continuum kinetic and multi--fluid simulations of classical
  sheaths}}.  \jt{Phys. Plasmas}  \bvol{24}~(2),  \pg{22118}.

\bibitem[Candy {\em et~al.\/}(2016)Candy, Belli \& Bravenec]{candy2016}
{\sc \au{Candy, J.}, \au{Belli, E.~A.} \& \au{Bravenec, R.~V.}} \yr{2016}
  \at{{A high-accuracy Eulerian gyrokinetic solver for collisional plasmas}}.
  \jt{J. Comput. Phys.}  \bvol{324},  \pg{73--93}.

\bibitem[Candy \& Waltz(2003)]{candy2003}
{\sc \au{Candy, J.} \& \au{Waltz, R.~E.}} \yr{2003}  \at{{An Eulerian
  gyrokinetic-Maxwell solver}}.  \jt{J. Comput. Phys.}  \bvol{186}~(2),
  \pg{545--581}.

\bibitem[Candy \& Waltz(2006)]{candy2006}
{\sc \au{Candy, J.} \& \au{Waltz, R.~E.}} \yr{2006}  \at{{Velocity-space
  resolution, entropy production, and upwind dissipation in Eulerian
  gyrokinetic simulations}}.  \jt{Phys. Plasmas}  \bvol{13}~(3),  \pg{32310}.

\bibitem[Carralero {\em et~al.\/}(2015)Carralero, Manz, Aho-Mantila,
  Birkenmeier, Brix, Groth, M{\"{u}}ller, Stroth, Vianello \&
  Wolfrum]{Carralero2015}
{\sc \au{Carralero, D.}, \au{Manz, P.}, \au{Aho-Mantila, L.}, \au{Birkenmeier,
  G.}, \au{Brix, M.}, \au{Groth, M.}, \au{M{\"{u}}ller, H.~W.}, \au{Stroth,
  U.}, \au{Vianello, N.} \& \au{Wolfrum, E.}} \yr{2015}  \at{{Experimental
  Validation of a Filament Transport Model in Turbulent Magnetized Plasmas}}.
  \jt{Phys. Rev. Lett.}  \bvol{115}~(21).

\bibitem[Carter {\em et~al.\/}(2020)Carter, Baalrud, Betti, Ellis, Foster,
  Geddes, Gleason, Holland, Humrickhouse, Kessel, Lasa, Ma, Mangi, Schaffner,
  Schmitz, Shumlak, Snead, Solomon, Trask, Waelbroeck, White \&
  Rej]{carter2020}
{\sc \au{Carter, T.}, \au{Baalrud, S.}, \au{Betti, R.}, \au{Ellis, T.},
  \au{Foster, J.}, \au{Geddes, C.}, \au{Gleason, A.}, \au{Holland, C.},
  \au{Humrickhouse, P.}, \au{Kessel, C.}, \au{Lasa, A.}, \au{Ma, T.},
  \au{Mangi, R.}, \au{Schaffner, D.}, \au{Schmitz, O.}, \au{Shumlak, U.},
  \au{Snead, L.}, \au{Solomon, W.}, \au{Trask, E.}, \au{Waelbroeck, F.},
  \au{White, A.} \& \au{Rej, D.}} \yr{2020}  \bt{{Powering the Future: Fusion
  and Plasmas}}. {\em Tech. Rep.\/}.  \org{FESAC}.

\bibitem[Cary(1981)]{cary1981}
{\sc \au{Cary, J.~R.}} \yr{1981}  \at{{Lie transform perturbation theory for
  Hamiltonian systems}}.  \jt{Phys. Rep.}  \bvol{79}~(2),  \pg{129--159}.

\bibitem[Cary \& Brizard(2009)]{cary2009}
{\sc \au{Cary, J.~R.} \& \au{Brizard, A.~J.}} \yr{2009}  \at{{Hamiltonian
  theory of guiding-center motion}}.  \jt{Rev. Mod. Phys.}  \bvol{81}~(2),
  \pg{693--738}.

\bibitem[Cary \& Littlejohn(1983)]{cary1983}
{\sc \au{Cary, J.~R.} \& \au{Littlejohn, R.~G.}} \yr{1983}  \at{{Noncanonical
  Hamiltonian mechanics and its application to magnetic field line flow}}.
  \jt{Ann. Phys. (N. Y).}  \bvol{151}~(1),  \pg{1--34}.

\bibitem[Catto(1978)]{catto1978}
{\sc \au{Catto, P.~J.}} \yr{1978}  \at{{Linearized gyro-kinetics}}.  \jt{Plasma
  Phys.}  \bvol{20},  \pg{719--722}.

\bibitem[Chance {\em et~al.\/}(1978)Chance, Greene, Grimm, Johnson, Manickam,
  Kerner, Berger, Bernard, Gruber \& Troyon]{chance1978}
{\sc \au{Chance, M.~S.}, \au{Greene, J.~M.}, \au{Grimm, R.~C.}, \au{Johnson,
  J.~L.}, \au{Manickam, J.}, \au{Kerner, W.}, \au{Berger, D.}, \au{Bernard,
  L.~C.}, \au{Gruber, R.} \& \au{Troyon, F.}} \yr{1978}  \at{{Comparative
  numerical studies of ideal magnetohydrodynamic instabilities}}.  \jt{J.
  Comput. Phys.}  \bvol{28}~(1),  \pg{1--13}.

\bibitem[Chang {\em et~al.\/}(2020)Chang, Ku, Hager, Churchill, Hughes,
  K{\"{o}}chl \& Loarte]{chang2020}
{\sc \au{Chang, C.-S.}, \au{Ku, S.-H.}, \au{Hager, R.}, \au{Churchill, R.~M.},
  \au{Hughes, J.}, \au{K{\"{o}}chl, F.} \& \au{Loarte, A.}} \yr{2020}
  \at{{Constructing a new predictive scaling formula for ITER's divertor
  heat-load width informed by a simulation-anchored machine learning}}.
  \jt{arxiv.org}  \pg{pp. 1--23}.

\bibitem[Chang {\em et~al.\/}(2017)Chang, Ku, Loarte, Parail, K{\"{o}}chl,
  Romanelli, Maingi, Ahn, Gray, Hughes, LaBombard, Leonard, Makowski \&
  Terry]{chang2017}
{\sc \au{Chang, C.-S.}, \au{Ku, S.-H.}, \au{Loarte, A.}, \au{Parail, V.},
  \au{K{\"{o}}chl, F.}, \au{Romanelli, M.}, \au{Maingi, R.}, \au{Ahn, J.~W.},
  \au{Gray, T.}, \au{Hughes, J.}, \au{LaBombard, B.}, \au{Leonard, T.},
  \au{Makowski, M.} \& \au{Terry, J.}} \yr{2017}  \at{{Gyrokinetic projection
  of the divertor heat-flux width from present tokamaks to ITER}}.  \jt{Nucl.
  Fusion}  \bvol{57}~(11).

\bibitem[Chen {\em et~al.\/}(2015)Chen, Chacon \& Chac{\'{o}}n]{chen2015multi}
{\sc \au{Chen, G.}, \au{Chacon, L.} \& \au{Chac{\'{o}}n, L.}} \yr{2015}  \at{{A
  multi-dimensional, energy- and charge-conserving, nonlinearly implicit,
  electromagnetic {Vlasov--Darwin} particle-in-cell algorithm}}.  \jt{Comp.
  Phys. Comm.}  \bvol{197},  \pg{73--87}.

\bibitem[Chen \& Parker(2001)]{chen2001gyrokinetic}
{\sc \au{Chen, Y.} \& \au{Parker, S.}} \yr{2001}  \at{{Gyrokinetic turbulence
  simulations with kinetic electrons}}.  \jt{Phys. Plasmas}  \bvol{8}~(5),
  \pg{2095--2100}.

\bibitem[Chen \& Parker(2003)]{chen2003deltaf}
{\sc \au{Chen, Y.} \& \au{Parker, S.~E.}} \yr{2003}  \at{{A $\delta$f particle
  method for gyrokinetic simulations with kinetic electrons and electromagnetic
  perturbations}}.  \jt{J. Comput. Phys.}  \bvol{189}~(2),  \pg{463--475}.

\bibitem[Chen \& Parker(2007)]{chen2007}
{\sc \au{Chen, Y.} \& \au{Parker, S.~E.}} \yr{2007}  \at{{Electromagnetic
  gyrokinetic $\delta$f particle-in-cell turbulence simulation with realistic
  equilibrium profiles and geometry}}.  \jt{J. Comput. Phys.}  \bvol{220}~(2),
  \pg{839--855}.

\bibitem[Chodura(1982)]{chodura1982}
{\sc \au{Chodura, R.}} \yr{1982}  \at{{Plasma-wall transition in an oblique
  magnetic field}}.  \jt{Phys. Fluids}  \bvol{25}~(9),  \pg{1628--1633}.

\bibitem[Christiansen \& Zabusky(1973)]{christiansen1973}
{\sc \au{Christiansen, J.~P.} \& \au{Zabusky, N.~J.}} \yr{1973}
  \at{{Instability, coalescence and fission of finite-area vortex structures}}.
   \jt{J. Fluid Mech.}  \bvol{61}~(2),  \pg{219--243}.

\bibitem[Churchill {\em et~al.\/}(2017)Churchill, Chang, Ku \&
  Dominski]{churchill2017}
{\sc \au{Churchill, R.~M.}, \au{Chang, C.~S.}, \au{Ku, S.} \& \au{Dominski,
  J.}} \yr{2017}  \at{{Pedestal and edge electrostatic turbulence
  characteristics from an XGC1 gyrokinetic simulation}}.  \jt{Plasma Phys.
  Control. Fusion}  \bvol{59}~(10),  \pg{105014}.

\bibitem[Cockburn \& Shu(1998)]{Cockburn1998}
{\sc \au{Cockburn, B.} \& \au{Shu, C.~W.}} \yr{1998}  \at{{The Runge-Kutta
  Discontinuous Galerkin Method for Conservation Laws V: Multidimensional
  Systems}}.  \jt{J. Comput. Phys.}  \bvol{141}~(2),  \pg{199--224}.

\bibitem[Cockburn \& Shu(2001)]{Cockburn2001}
{\sc \au{Cockburn, B.} \& \au{Shu, C.~W.}} \yr{2001}  \at{{Runge-Kutta
  Discontinuous Galerkin methods for convection-dominated problems}}.  \jt{J.
  Sci. Comput.}  \bvol{16}~(3),  \pg{173--261}.

\bibitem[Cohen(1991)]{cohen1991}
{\sc \au{Cohen, A.}} \yr{1991}  \at{{A Pad{\'{e}} approximant to the inverse
  Langevin function}}.  \jt{Rheol. Acta}  \bvol{30}~(3),  \pg{270--273}.

\bibitem[Cohen \& Xu(2008)]{cohen2008}
{\sc \au{Cohen, R.~H.} \& \au{Xu, X.~Q.}} \yr{2008}  \at{{Progress in kinetic
  simulation of edge plasmas}}.  \jt{Contrib. to Plasma Phys.}
  \bvol{48}~(1-3),  \pg{212--223}.

\bibitem[Connor {\em et~al.\/}(1978)Connor, Hastie \& Taylor]{connor1978}
{\sc \au{Connor, J.~W.}, \au{Hastie, R.~J.} \& \au{Taylor, J.~B.}} \yr{1978}
  \at{{Shear, Periodicity, and Plasma Ballooning Modes}}.  \jt{Phys. Rev.
  Lett.}  \bvol{40}~(6),  \pg{396}.

\bibitem[Coppi(1977)]{coppi1977}
{\sc \au{Coppi, B.}} \yr{1977}  \at{{Topology of ballooning modes}}.  \jt{Phys.
  Rev. Lett.}  \bvol{39}~(15),  \pg{939--942}.

\bibitem[Cowley(1985)]{cowley1985}
{\sc \au{Cowley, S.~C.}} \yr{1985}  \at{{Some Aspects of Anomalous Transport in
  Tokamaks: Stochastic Magnetic Fields, Tearing Modes and Nonlinear Ballooning
  Instabilities}}. Ph.D. thesis, Princeton University.

\bibitem[Cowley(2016)]{cowley2016}
{\sc \au{Cowley, S.~C.}} \yr{2016}  \at{{The quest for fusion power}}.
  \jt{Nat. Phys.}  \bvol{12},  \pg{384}.

\bibitem[Cowley \& Artun(1997)]{cowley1997}
{\sc \au{Cowley, S.~C.} \& \au{Artun, M.}} \yr{1997}  \at{{Explosive
  instabilities and detonation in magnetohydrodynamics}}.  \jt{Phys. Rep.}
  \bvol{283}~(1-4),  \pg{185--211}.

\bibitem[Cummings(1994)]{cummings1994gyrokinetic}
{\sc \au{Cummings, J.~C.}} \yr{1994}  \at{{Gyrokinetic Simulation of
  Finite--Beta and Self--Sheared--Flow Effects on Pressure--Gradient
  Instabilities}}. Ph.D. thesis, Princeton University.

\bibitem[Dannert(2005)]{dannert-thesis}
{\sc \au{Dannert, T.}} \yr{2005}  \at{{Gyrokinetische Simulation von
  Plasmaturbulenz mit gefangenen Teilchen und elektromagnetischen Effekten}}.
  Ph.D. thesis, Technischen Universit{\"{a}}t M{\"{u}}nchen.

\bibitem[Denton \& Kotschenreuther(1995)]{denton1995}
{\sc \au{Denton, R.~E.} \& \au{Kotschenreuther, M.}} \yr{1995}  \at{{$\delta$f
  Algorithm}}.  \jt{J. Comput. Phys.}  \bvol{19},  \pg{283--294}.

\bibitem[D'haeseleer {\em et~al.\/}(1991)D'haeseleer, Hitchon, Callen \&
  Shohet]{dhaeseleer1991}
{\sc \au{D'haeseleer, W.~D.}, \au{Hitchon, W. N.~G.}, \au{Callen, J.~D.} \&
  \au{Shohet, J.~L.}} \yr{1991} {\em {Flux Coordinates and Magnetic Field
  Structure}\/}.  \publ{Springer-Verlag}.

\bibitem[Dimits(2010)]{dimits2010}
{\sc \au{Dimits, A.~M.}} \yr{2010}  \at{{Gyrokinetic equations in an extended
  ordering}}.  \jt{Phys. Plasmas}  \bvol{17}~(5),  \pg{055901}.

\bibitem[Dimits(2012)]{dimits2012}
{\sc \au{Dimits, A.~M.}} \yr{2012}  \at{{Gyrokinetic equations for
  strong-gradient regions}}.  \jt{Phys. Plasmas}  \bvol{19}~(2),  \pg{55901}.

\bibitem[Dimits {\em et~al.\/}(2000)Dimits, Bateman, Beer, Cohen, Dorland,
  Hammett, Kim, Kinsey, Kotschenreuther, Kritz, Lao, Mandrekas, Nevins, Parker,
  Redd, Shumaker, Sydora \& Weiland]{dimits2000comparisons}
{\sc \au{Dimits, A.~M.}, \au{Bateman, G.}, \au{Beer, M.~A.}, \au{Cohen, B.~I.},
  \au{Dorland, W.}, \au{Hammett, G.~W.}, \au{Kim, C.}, \au{Kinsey, J.~E.},
  \au{Kotschenreuther, M.}, \au{Kritz, A.~H.}, \au{Lao, L.~L.}, \au{Mandrekas,
  J.}, \au{Nevins, W.~M.}, \au{Parker, S.~E.}, \au{Redd, A.~J.}, \au{Shumaker,
  D.~E.}, \au{Sydora, R.} \& \au{Weiland, J.}} \yr{2000}  \at{{Comparisons and
  physics basis of tokamak transport models and turbulence simulations}}.
  \jt{Phys. Plasmas}  \bvol{7}~(3),  \pg{969--983}.

\bibitem[Dimits \& Lee(1993)]{dimits1993}
{\sc \au{Dimits, A.~M.} \& \au{Lee, W.~W.}} \yr{1993}  \at{{Partially
  linearized algorithms in gyrokinetic particle simulation}}.  \jt{J. Comput.
  Phys.}  \bvol{107}~(2),  \pg{309--323}.

\bibitem[Dimits {\em et~al.\/}(1992)Dimits, Lodestro \& Dubin]{dimits1992}
{\sc \au{Dimits, A.~M.}, \au{Lodestro, L.~L.} \& \au{Dubin, D.~H.}} \yr{1992}
  \at{{Gyroaveraged equations for both the gyrokinetic and drift-kinetic
  regimes}}.  \jt{Phys. Fluids B}  \bvol{4}~(1),  \pg{274--277}.

\bibitem[Dimits {\em et~al.\/}(1996)Dimits, Williams, Byers \&
  Cohen]{dimits1996}
{\sc \au{Dimits, A.~M.}, \au{Williams, T.~J.}, \au{Byers, J.~A.} \& \au{Cohen,
  B.~I.}} \yr{1996}  \at{{Scalings of ion-temperature-gradient-driven anomalous
  transport in tokamaks}}.  \jt{Phys. Rev. Lett.}  \bvol{77}~(1),  \pg{71--74}.

\bibitem[D'Ippolito {\em et~al.\/}(2011)D'Ippolito, Myra \&
  Zweben]{dippolito2011}
{\sc \au{D'Ippolito, D.~A.}, \au{Myra, J.~R.} \& \au{Zweben, S.~J.}} \yr{2011}
  \at{{Convective transport by intermittent blob-filaments: Comparison of
  theory and experiment}}.  \jt{Phys. Plasmas}  \bvol{18}~(6),  \pg{60501}.

\bibitem[Dorf \& Dorr(2020)]{dorf2020}
{\sc \au{Dorf, M.} \& \au{Dorr, M.}} \yr{2020}  \at{{Progress with the 5D
  full-F continuum gyrokinetic code COGENT}}.  \jt{Contrib. to Plasma Phys.}
  \bvol{60}~(5-6),  \pg{e201900113}.

\bibitem[Dorf {\em et~al.\/}(2016)Dorf, Dorr, Hittinger, Cohen \&
  Rognlien]{dorf2016continuum}
{\sc \au{Dorf, M.~A.}, \au{Dorr, M.~R.}, \au{Hittinger, J.~A.}, \au{Cohen,
  R.~H.} \& \au{Rognlien, T.~D.}} \yr{2016}  \at{{Continuum kinetic modeling of
  the tokamak plasma edge}}.  \jt{Phys. Plasmas}  \bvol{23}~(5),  \pg{56102}.

\bibitem[Dorland \& Hammett(1993)]{dorland1993}
{\sc \au{Dorland, W.} \& \au{Hammett, G.~W.}} \yr{1993}  \at{{Gyrofluid
  turbulence models with kinetic effects}}.  \jt{Phys. Fluids B}  \bvol{5}~(3),
   \pg{812--835}.

\bibitem[Dorland {\em et~al.\/}(2000)Dorland, Jenko, Kotschenreuther \&
  Rogers]{dorland2000electron}
{\sc \au{Dorland, W.}, \au{Jenko, F.}, \au{Kotschenreuther, M.} \& \au{Rogers,
  B.~N.}} \yr{2000}  \at{{Electron Temperature Gradient Turbulence}}.
  \jt{Phys. Rev. Lett.}  \bvol{85}~(26),  \pg{5579--5582}.

\bibitem[Dougherty(1964)]{Dougherty1964}
{\sc \au{Dougherty, J.~P.}} \yr{1964}  \at{{Model Fokker-Planck Equation for a
  Plasma and Its Solution}}.  \jt{Phys. Fluids}  \bvol{7}~(11),  \pg{1788}.

\bibitem[Doyle {\em et~al.\/}(2007)Doyle, Houlberg, Kamada, Mukhovatov,
  Osborne, Polevoi, Bateman, Connor, Cordey, Fujita, Garbet, Hahm, Horton,
  Hubbard, Imbeaux, Jenko, Kinsey, Kishimoto, Li, Luce, Martin, Ossipenko,
  Parail, Peeters, Rhodes, Rice, Roach, Rozhansky, Ryter, Saibene, Sartori,
  Sips, Snipes, Sugihara, Synakowski, Takenaga, Takizuka, Thomsen, Wade \&
  Wilson]{doyle2007}
{\sc \au{Doyle, E.~J.}, \au{Houlberg, W.~A.}, \au{Kamada, Y.}, \au{Mukhovatov,
  V.}, \au{Osborne, T.~H.}, \au{Polevoi, A.}, \au{Bateman, G.}, \au{Connor,
  J.~W.}, \au{Cordey, J.~G.}, \au{Fujita, T.}, \au{Garbet, X.}, \au{Hahm,
  T.~S.}, \au{Horton, L.~D.}, \au{Hubbard, A.~E.}, \au{Imbeaux, F.}, \au{Jenko,
  F.}, \au{Kinsey, J.~E.}, \au{Kishimoto, Y.}, \au{Li, J.}, \au{Luce, T.~C.},
  \au{Martin, Y.}, \au{Ossipenko, M.}, \au{Parail, V.}, \au{Peeters, A.},
  \au{Rhodes, T.~L.}, \au{Rice, J.~E.}, \au{Roach, C.~M.}, \au{Rozhansky, V.},
  \au{Ryter, F.}, \au{Saibene, G.}, \au{Sartori, R.}, \au{Sips, A.~C.},
  \au{Snipes, J.~A.}, \au{Sugihara, M.}, \au{Synakowski, E.~J.}, \au{Takenaga,
  H.}, \au{Takizuka, T.}, \au{Thomsen, K.}, \au{Wade, M.~R.} \& \au{Wilson,
  H.~R.}} \yr{2007}  \at{{Chapter 2: Plasma confinement and transport}}.
  \jt{Nucl. Fusion}  \bvol{47}~(6),  \pg{18--127}.

\bibitem[Dubin {\em et~al.\/}(1983)Dubin, Krommes, Oberman \& Lee]{dubin1983}
{\sc \au{Dubin, D.~H.}, \au{Krommes, J.~A.}, \au{Oberman, C.} \& \au{Lee,
  W.~W.}} \yr{1983}  \at{{Nonlinear gyrokinetic equations}}.  \jt{Phys. Fluids}
   \bvol{26}~(12),  \pg{3524--3535}.

\bibitem[Durran(2010)]{durran2010numerical}
{\sc \au{Durran, D.~R.}} \yr{2010} {\em {Numerical methods for fluid dynamics:
  With applications to geophysics}\/},  \st{Texts in Applied Mathematics},
  \vol{vol.~32}.  \publ{Springer Science \& Business Media}.

\bibitem[Eddington(1920)]{eddington1920}
{\sc \au{Eddington, A.~S.}} \yr{1920}  \at{{The internal constitution of the
  stars}}.  \jt{Nature}  \bvol{106}~(2653),  \pg{14--20}.

\bibitem[Eich {\em et~al.\/}(2013)Eich, Leonard, Pitts, Fundamenski, Goldston,
  Gray, Herrmann, Kirk, Kallenbach, Kardaun, Kukushkin, Labombard, Maingi,
  Makowski, Scarabosio, Sieglin, Terry \& Thornton]{eich2013}
{\sc \au{Eich, T.}, \au{Leonard, A.~W.}, \au{Pitts, R.~A.}, \au{Fundamenski,
  W.}, \au{Goldston, R.~J.}, \au{Gray, T.~K.}, \au{Herrmann, A.}, \au{Kirk,
  A.}, \au{Kallenbach, A.}, \au{Kardaun, O.}, \au{Kukushkin, A.~S.},
  \au{Labombard, B.}, \au{Maingi, R.}, \au{Makowski, M.~A.}, \au{Scarabosio,
  A.}, \au{Sieglin, B.}, \au{Terry, J.} \& \au{Thornton, A.}} \yr{2013}
  \at{{Scaling of the tokamak near the scrape-off layer H-mode power width and
  implications for ITER}}.  \jt{Nucl. Fusion}  \bvol{53}~(9).

\bibitem[Fasoli {\em et~al.\/}(2006)Fasoli, Labit, McGrath, M{\"{u}}ller,
  Plyushchev, Podest{\`{a}} \& Poli]{fasoli2006}
{\sc \au{Fasoli, A.}, \au{Labit, B.}, \au{McGrath, M.}, \au{M{\"{u}}ller,
  S.~H.}, \au{Plyushchev, G.}, \au{Podest{\`{a}}, M.} \& \au{Poli, F.~M.}}
  \yr{2006}  \at{{Electrostatic turbulence and transport in a simple magnetized
  plasma}}.  \jt{Phys. Plasmas}  \bvol{13}~(5),  \pg{055902}.

\bibitem[Francisquez {\em et~al.\/}(2020)Francisquez, Bernard, Mandell, Hammett
  \& Hakim]{francisquez2020}
{\sc \au{Francisquez, M.}, \au{Bernard, T.~N.}, \au{Mandell, N.~R.},
  \au{Hammett, G.~W.} \& \au{Hakim, A.}} \yr{2020}  \at{{Conservative
  discontinuous Galerkin scheme of a gyro-averaged Dougherty collision
  operator}}.  \jt{Nucl. Fusion}  \bvol{60},  \pg{096021}.

\bibitem[Francisquez {\em et~al.\/}(2017)Francisquez, Zhu \&
  Rogers]{francisquez2017global}
{\sc \au{Francisquez, M.}, \au{Zhu, B.} \& \au{Rogers, B.~N.}} \yr{2017}
  \at{{Global 3D Braginskii simulations of the tokamak edge region of IWL
  discharges}}.  \jt{Nucl. Fusion}  \bvol{57}~(11),  \pg{116049}.

\bibitem[Frei {\em et~al.\/}(2020)Frei, Jorge \& Ricci]{frei2020}
{\sc \au{Frei, B.~J.}, \au{Jorge, R.} \& \au{Ricci, P.}} \yr{2020}  \at{{A
  gyrokinetic model for the plasma periphery of tokamak devices}}.  \jt{J.
  Plasma Phys.}  \bvol{86}.

\bibitem[Freidberg(2014)]{freidberg2014}
{\sc \au{Freidberg, J.}} \yr{2014} {\em {Ideal MHD}\/}.

\bibitem[Fried \& Conte(1961)]{fried1961plasma}
{\sc \au{Fried, B.~D.} \& \au{Conte, S.~D.}} \yr{1961} {\em {The plasma
  dispersion function: the Hilbert transform of the Gaussian}\/}.
  \publ{Academic Press}.

\bibitem[Frieman \& Chen(1982)]{frieman1982}
{\sc \au{Frieman, E.~A.} \& \au{Chen, L.}} \yr{1982}  \at{{Nonlinear
  gyrokinetic equations for low-frequency electromagnetic waves in general
  plasma equilibria}}.  \jt{Phys. Fluids}  \bvol{25}~(3),  \pg{502--508}.

\bibitem[Garbet {\em et~al.\/}(2010)Garbet, Idomura, Villard \&
  Watanabe]{garbet2010}
{\sc \au{Garbet, X.}, \au{Idomura, Y.}, \au{Villard, L.} \& \au{Watanabe,
  T.~H.}} \yr{2010}  \at{{Gyrokinetic simulations of turbulent transport}}.
  \jt{Nucl. Fusion}  \bvol{50}~(4).

\bibitem[Gentle \& He(2008)]{gentle2008}
{\sc \au{Gentle, K.~W.} \& \au{He, H.}} \yr{2008}  \at{{Texas helimak}}.
  \jt{Plasma Sci. Technol.}  \bvol{10}~(3),  \pg{284--289}.

\bibitem[Geraldini {\em et~al.\/}(2017)Geraldini, Parra \&
  Militello]{geraldini2017}
{\sc \au{Geraldini, A.}, \au{Parra, F.~I.} \& \au{Militello, F.}} \yr{2017}
  \at{{Gyrokinetic treatment of a grazing angle magnetic presheath}}.
  \jt{Plasma Phys. Control. Fusion}  \bvol{59}~(2),  \pg{025015}.

\bibitem[Gohil {\em et~al.\/}(1994)Gohil, Burrell, Doyle, Groebner, Kim \&
  Seraydarian]{gohil1994}
{\sc \au{Gohil, P.}, \au{Burrell, K.~H.}, \au{Doyle, E.~J.}, \au{Groebner,
  R.~J.}, \au{Kim, J.} \& \au{Seraydarian, R.~P.}} \yr{1994}  \at{{The
  phenomenology of the L-H transition in the DIII-D tokamak}}.  \jt{Nucl.
  Fusion}  \bvol{34}~(8),  \pg{1057}.

\bibitem[Goldston \& Brown(2020)]{goldston2020}
{\sc \au{Goldston, R.} \& \au{Brown, A.}} \yr{2020} {Generalization of the
  Heuristic Drift SOL Model for Finite Collisionality, and Effect on Flow
  Shearing Rate vs. Interchange Growth Rate}.  \bt{In {\em Bull. Am. Phys.
  Soc.\/}}.  \publ{American Physical Society}.

\bibitem[G{\"{o}}rler(2009)]{gorler-thesis}
{\sc \au{G{\"{o}}rler, T.}} \yr{2009}  \at{{Multiscale effects in plasma
  microturbulence}}. Ph.D. thesis, Universit{\"{a}}t Ulm.

\bibitem[G{\"{o}}rler {\em et~al.\/}(2016)G{\"{o}}rler, Tronko, Hornsby,
  Bottino, Kleiber, Norscini, Grandgirard, Jenko \&
  Sonnendr{\"{u}}cker]{gorler2016}
{\sc \au{G{\"{o}}rler, T.}, \au{Tronko, N.}, \au{Hornsby, W.~A.}, \au{Bottino,
  A.}, \au{Kleiber, R.}, \au{Norscini, C.}, \au{Grandgirard, V.}, \au{Jenko,
  F.} \& \au{Sonnendr{\"{u}}cker, E.}} \yr{2016}  \at{{Intercode comparison of
  gyrokinetic global electromagnetic modes}}.  \jt{Phys. Plasmas}
  \bvol{23}~(7),  \pg{1904}.

\bibitem[Gottlieb {\em et~al.\/}(2001)Gottlieb, Shu \&
  Tadmor]{gottlieb2001strong}
{\sc \au{Gottlieb, S.}, \au{Shu, C.-W.} \& \au{Tadmor, E.}} \yr{2001}
  \at{{Strong stability--preserving high--order time discretization methods}}.
  \jt{SIAM Rev.}  \bvol{43}~(1),  \pg{89--112}.

\bibitem[Hager {\em et~al.\/}(2017)Hager, Lang, Chang, Ku, Chen, Parker \&
  Adams]{hager2017verification}
{\sc \au{Hager, R.}, \au{Lang, J.}, \au{Chang, C.-S.}, \au{Ku, S.-H.},
  \au{Chen, Y.}, \au{Parker, S.~E.} \& \au{Adams, M.~F.}} \yr{2017}
  \at{{Verification of long wavelength electromagnetic modes with a
  gyrokinetic--fluid hybrid model in the {XGC} code}}.  \jt{Phys. Plasmas}
  \bvol{24}~(5),  \pg{54508}.

\bibitem[Hahm(1996)]{hahm1996}
{\sc \au{Hahm, T.~S.}} \yr{1996}  \at{{Nonlinear gyrokinetic equations for
  turbulence in core transport barriers}}.  \jt{Phys. Plasmas}  \bvol{3}~(12),
  \pg{4658--4664}.

\bibitem[Hahm {\em et~al.\/}(1988)Hahm, Lee \& Brizard]{hahm1988}
{\sc \au{Hahm, T.~S.}, \au{Lee, W.~W.} \& \au{Brizard, A.}} \yr{1988}
  \at{{Nonlinear gyrokinetic theory for finite-beta plasmas}}.  \jt{Phys.
  Fluids}  \bvol{31}~(7),  \pg{1940}.

\bibitem[Hahm {\em et~al.\/}(2009)Hahm, Wang \& Madsen]{hahm2009}
{\sc \au{Hahm, T.~S.}, \au{Wang, L.} \& \au{Madsen, J.}} \yr{2009}  \at{{Fully
  electromagnetic nonlinear gyrokinetic equations for tokamak edge
  turbulence}}.  \jt{Phys. Plasmas}  \bvol{16}~(2),  \pg{22305}.

\bibitem[Hakim {\em et~al.\/}(2019)Hakim, Hammett, Shi \& Mandell]{hakim2019}
{\sc \au{Hakim, A.}, \au{Hammett, G.~W.}, \au{Shi, E.~L.} \& \au{Mandell,
  N.~R.}} \yr{2019} {Discontinuous galerkin schemes for a class of hamiltonian
  evolution equations with applications to plasma fluid and kinetic problems},
  \arxiv{arXiv: 1908.01814}.

\bibitem[Hakim \& Juno(2020)]{hakim2020}
{\sc \au{Hakim, A.} \& \au{Juno, J.}} \yr{2020}  \at{{Alias-Free, Matrix-Free
  and Quadrature-Free Discontinuous Galerkin Algorithms for (Plasma) Kinetic
  Equations}}.  \jt{2020 SC20 Int. Conf. High Perform. Comput. Networking,
  Storage Anal.}  \pg{pp. 1026--1040}.

\bibitem[Hakim {\em et~al.\/}(2020)Hakim, Mandell, Bernard, Francisquez,
  Hammett \& Shi]{hakim2020a}
{\sc \au{Hakim, A.~H.}, \au{Mandell, N.~R.}, \au{Bernard, T.~N.},
  \au{Francisquez, M.}, \au{Hammett, G.~W.} \& \au{Shi, E.~L.}} \yr{2020}
  \at{{Continuum electromagnetic gyrokinetic simulations of turbulence in the
  tokamak scrape-off layer and laboratory devices}}.  \jt{Phys. Plasmas}
  \bvol{27}~(4),  \pg{042304}.

\bibitem[Halpern {\em et~al.\/}(2013)Halpern, Jolliet, Loizu, Mosetto \&
  Ricci]{halpern2013}
{\sc \au{Halpern, F.~D.}, \au{Jolliet, S.}, \au{Loizu, J.}, \au{Mosetto, A.} \&
  \au{Ricci, P.}} \yr{2013}  \at{{Ideal ballooning modes in the tokamak
  scrape-off layer}}.  \jt{Phys. Plasmas}  \bvol{20}~(5),  \pg{52306}.

\bibitem[Halpern \& Ricci(2017)]{halpern2016}
{\sc \au{Halpern, F.~D.} \& \au{Ricci, P.}} \yr{2017}  \at{{Velocity shear,
  turbulent saturation, and steep plasma gradients in the scrape-off layer of
  inner-wall limited tokamaks}}.  \jt{Nucl. Fusion}  \bvol{57}~(3).

\bibitem[Halpern {\em et~al.\/}(2016)Halpern, Ricci, Jolliet, Loizu, Morales,
  Mosetto, Musil, Riva, Tran \& Wersal]{halpern2016a}
{\sc \au{Halpern, F.~D.}, \au{Ricci, P.}, \au{Jolliet, S.}, \au{Loizu, J.},
  \au{Morales, J.}, \au{Mosetto, A.}, \au{Musil, F.}, \au{Riva, F.}, \au{Tran,
  T.~M.} \& \au{Wersal, C.}} \yr{2016}  \at{{The GBS code for tokamak
  scrape-off layer simulations}}.  \jt{J. Comput. Phys.}  \bvol{315},
  \pg{388--408}.

\bibitem[Hammett(2016)]{hammettprivate2016}
{\sc \au{Hammett, G.~W.}} \yr{2016} {Private communication}.

\bibitem[Hammett \& Perkins(1990)]{hammett1990}
{\sc \au{Hammett, G.~W.} \& \au{Perkins, F.~W.}} \yr{1990}  \at{{Fluid moment
  models for Landau damping with application to the ion-temperature-gradient
  instability}}.  \jt{Phys. Rev. Lett.}  \bvol{64}~(25),  \pg{3019--3022}.

\bibitem[Hariri {\em et~al.\/}(2014)Hariri, Hill, Ottaviani \&
  Sarazin]{hariri2014}
{\sc \au{Hariri, F.}, \au{Hill, P.}, \au{Ottaviani, M.} \& \au{Sarazin, Y.}}
  \yr{2014}  \at{{The flux-coordinate independent approach applied to X-point
  geometries}}.  \jt{Phys. Plasmas}  \bvol{21}~(8).

\bibitem[Hariri \& Ottaviani(2013)]{hariri2013}
{\sc \au{Hariri, F.} \& \au{Ottaviani, M.}} \yr{2013}  \at{{A flux-coordinate
  independent field-aligned approach to plasma turbulence simulations}}.
  \jt{Comput. Phys. Commun.}  \bvol{184}~(11),  \pg{2419--2429}.

\bibitem[Hatch {\em et~al.\/}(2016)Hatch, Kotschenreuther, Mahajan, Valanju,
  Jenko, Told, G{\"{o}}rler \& Saarelma]{hatch2016}
{\sc \au{Hatch, D.~R.}, \au{Kotschenreuther, M.}, \au{Mahajan, S.},
  \au{Valanju, P.}, \au{Jenko, F.}, \au{Told, D.}, \au{G{\"{o}}rler, T.} \&
  \au{Saarelma, S.}} \yr{2016}  \at{{Microtearing turbulence limiting the
  JET-ILW pedestal}}.  \jt{Nucl. Fusion}  \bvol{56}~(10).

\bibitem[Hatzky {\em et~al.\/}(2007)Hatzky, K{\"{o}}nies \&
  Mishchenko]{hatzky2007}
{\sc \au{Hatzky, R.}, \au{K{\"{o}}nies, A.} \& \au{Mishchenko, A.}} \yr{2007}
  \at{{Electromagnetic gyrokinetic PIC simulation with an adjustable control
  variates method}}.  \jt{J. Comput. Phys.}  \bvol{225}~(1),  \pg{568--590}.

\bibitem[Helander \& Sigmar(2002)]{helandersigmar}
{\sc \au{Helander, P.} \& \au{Sigmar, D.}} \yr{2002} {\em {Collisional
  Transport in Magnetized Plasmas}\/}.  \publ{Cambridge University Press}.

\bibitem[Held {\em et~al.\/}(2016)Held, Wiesenberger, Madsen \&
  Kendl]{held2016}
{\sc \au{Held, M.}, \au{Wiesenberger, M.}, \au{Madsen, J.} \& \au{Kendl, A.}}
  \yr{2016}  \at{{The influence of temperature dynamics and dynamic finite ion
  Larmor radius effects on seeded high amplitude plasma blobs}}.  \jt{Nucl.
  Fusion}  \bvol{56}~(12),  \pg{126005}.

\bibitem[Hesthaven \& Warburton(2007)]{hesthaven2007}
{\sc \au{Hesthaven, J.~S.} \& \au{Warburton, T.}} \yr{2007} {\em {Nodal
  discontinuous Galerkin methods: algorithms, analysis, and applications}\/}.
  \publ{Springer Science \& Business Media}.

\bibitem[Hinton {\em et~al.\/}(2003)Hinton, Rosenbluth \& Waltz]{hinton2003}
{\sc \au{Hinton, F.~L.}, \au{Rosenbluth, M.~N.} \& \au{Waltz, R.~E.}} \yr{2003}
   \at{{Reduced equations for electromagnetic turbulence in tokamaks}}.
  \jt{Phys. Plasmas}  \bvol{10}~(1),  \pg{168--178}.

\bibitem[Hoare {\em et~al.\/}(2019)Hoare, Militello, Omotani, Riva, Newton,
  Nicholas, Ryan \& Walkden]{hoare2019}
{\sc \au{Hoare, D.}, \au{Militello, F.}, \au{Omotani, J.~T.}, \au{Riva, F.},
  \au{Newton, S.}, \au{Nicholas, T.}, \au{Ryan, D.} \& \au{Walkden, N.~R.}}
  \yr{2019}  \at{{Dynamics of scrape-off layer filaments in high $\beta$
  plasmas}}.  \jt{Plasma Phys. Control. Fusion}  \bvol{61}~(10).

\bibitem[Idomura {\em et~al.\/}(2008)Idomura, Ida, Kano, Aiba \&
  Tokuda]{idomura2008}
{\sc \au{Idomura, Y.}, \au{Ida, M.}, \au{Kano, T.}, \au{Aiba, N.} \&
  \au{Tokuda, S.}} \yr{2008}  \at{{Conservative global gyrokinetic toroidal
  full-f five-dimensional Vlasov simulation}}.  \jt{Comput. Phys. Commun.}
  \bvol{179}~(6),  \pg{391--403}.

\bibitem[Idomura {\em et~al.\/}(2003)Idomura, Tokuda \& Kishimoto]{idomura2003}
{\sc \au{Idomura, Y.}, \au{Tokuda, S.} \& \au{Kishimoto, Y.}} \yr{2003}
  \at{{Global gyrokinetic simulation of ion temperature gradient driven
  turbulence in plasmas using a canonical Maxwellian distribution}}.  \jt{Nucl.
  Fusion}  \bvol{43}~(4),  \pg{234--243}.

\bibitem[Idomura {\em et~al.\/}(2009)Idomura, Urano, Aiba \&
  Tokuda]{idomura2009}
{\sc \au{Idomura, Y.}, \au{Urano, H.}, \au{Aiba, N.} \& \au{Tokuda, S.}}
  \yr{2009}  \at{{Study of ion turbulent transport and profile formations using
  global gyrokinetic full-f Vlasov simulation}}.  \jt{Nucl. Fusion}
  \bvol{49}~(6).

\bibitem[Jardin(2010)]{jardin2010}
{\sc \au{Jardin, S.}} \yr{2010} {\em {Computational methods in plasma
  physics}\/}.  \publ{CRC Press}.

\bibitem[Jenko(2000)]{jenko2000}
{\sc \au{Jenko, F.}} \yr{2000}  \at{{Massively parallel Vlasov simulation of
  electromagnetic drift-wave turbulence}}.  \jt{Comput. Phys. Commun.}
  \bvol{125}~(1),  \pg{196--209}.

\bibitem[Jenko \& Dorland(2001)]{jenko2001nonlinear}
{\sc \au{Jenko, F.} \& \au{Dorland, W.}} \yr{2001}  \at{{Nonlinear
  electromagnetic gyrokinetic simulations of tokamak plasmas}}.  \jt{Plasma
  Phys. Control. Fusion}  \bvol{43}~(12A),  \pg{A141}.

\bibitem[Johnson \& Rossmanith(2012)]{johnson2012}
{\sc \au{Johnson, E.~A.} \& \au{Rossmanith, J.~A.}} \yr{2012}  \at{{Outflow
  Positivity Limiting for Hyperbolic Conservation Laws. Part I: Framework and
  Recipe}} .

\bibitem[Joiner {\em et~al.\/}(2010)Joiner, Hirose \& Dorland]{joiner2010}
{\sc \au{Joiner, N.}, \au{Hirose, A.} \& \au{Dorland, W.}} \yr{2010}
  \at{{Parallel magnetic field perturbations in gyrokinetic simulations}}.
  \jt{Phys. Plasmas}  \bvol{17}~(7),  \pg{72104}.

\bibitem[Jolliet {\em et~al.\/}(2007)Jolliet, Bottino, Angelino, Hatzky, Tran,
  Mcmillan, Sauter, Appert, Idomura \& Villard]{jolliet2007}
{\sc \au{Jolliet, S.}, \au{Bottino, A.}, \au{Angelino, P.}, \au{Hatzky, R.},
  \au{Tran, T.~M.}, \au{Mcmillan, B.~F.}, \au{Sauter, O.}, \au{Appert, K.},
  \au{Idomura, Y.} \& \au{Villard, L.}} \yr{2007}  \at{{A global collisionless
  PIC code in magnetic coordinates}}.  \jt{Comput. Phys. Commun.}
  \bvol{177}~(5),  \pg{409--425}.

\bibitem[Jorge {\em et~al.\/}(2017)Jorge, Ricci \& Loureiro]{jorge2017}
{\sc \au{Jorge, R.}, \au{Ricci, P.} \& \au{Loureiro, N.~F.}} \yr{2017}  \at{{A
  drift-kinetic analytical model for scrape-off layer plasma dynamics at
  arbitrary collisionality}}.  \jt{J. Plasma Phys.}  \bvol{83}~(6).

\bibitem[Jost {\em et~al.\/}(2001)Jost, Tran, Cooper, Villard \&
  Appert]{jost2001}
{\sc \au{Jost, G.}, \au{Tran, T.~M.}, \au{Cooper, W.~A.}, \au{Villard, L.} \&
  \au{Appert, K.}} \yr{2001}  \at{{Global linear gyrokinetic simulations in
  quasi-symmetric configurations}}.  \jt{Phys. Plasmas}  \bvol{8}~(7),
  \pg{3321--3333}.

\bibitem[Juno(2020)]{juno-thesis}
{\sc \au{Juno, J.}} \yr{2020}  \at{{A Deep Dive into the Distribution Function:
  Understanding Phase Space Dynamics with Continuum Vlasov-Maxwell
  Simulations}}. Ph.D. thesis, University of Maryland, College Park.

\bibitem[Juno {\em et~al.\/}(2018)Juno, Hakim, TenBarge, Shi \&
  Dorland]{juno2018}
{\sc \au{Juno, J.}, \au{Hakim, A.}, \au{TenBarge, J.}, \au{Shi, E.} \&
  \au{Dorland, W.}} \yr{2018}  \at{{Discontinuous Galerkin algorithms for fully
  kinetic plasmas}}.  \jt{J. Comput. Phys.}  \bvol{353},  \pg{110--147}.

\bibitem[Kim {\em et~al.\/}(1993)Kim, Horton \& Dong]{kim1993electromagnetic}
{\sc \au{Kim, J.~Y.}, \au{Horton, W.} \& \au{Dong, J.~Q.}} \yr{1993}
  \at{{Electromagnetic effect on the toroidal ion temperature gradient mode}}.
  \jt{Phys. Fluids B}  \bvol{5}~(11),  \pg{4030--4039}.

\bibitem[Kinsey {\em et~al.\/}(2011)Kinsey, Staebler, Candy, Waltz \&
  Budny]{kinsey2011}
{\sc \au{Kinsey, J.~E.}, \au{Staebler, G.~M.}, \au{Candy, J.}, \au{Waltz,
  R.~E.} \& \au{Budny, R.~V.}} \yr{2011}  \at{{ITER predictions using the GYRO
  verified and experimentally validated trapped gyro-Landau fluid transport
  model}}.  \jt{Nucl. Fusion}  \bvol{51}~(8),  \pg{083001}.

\bibitem[Kirk {\em et~al.\/}(2006)Kirk, {Ben Ayed}, Counsell, Dudson, Eich,
  Herrmann, Koch, Martin, Meakins, Saarelma, Scannell, Tallents, Walsh \&
  Wilson]{kirk2006}
{\sc \au{Kirk, A.}, \au{{Ben Ayed}, N.}, \au{Counsell, G.}, \au{Dudson, B.},
  \au{Eich, T.}, \au{Herrmann, A.}, \au{Koch, B.}, \au{Martin, R.},
  \au{Meakins, A.}, \au{Saarelma, S.}, \au{Scannell, R.}, \au{Tallents, S.},
  \au{Walsh, M.} \& \au{Wilson, H.~R.}} \yr{2006}  \at{{Filament structures at
  the plasma edge on MAST}}.  \jt{Plasma Phys. Control. Fusion}  \bvol{48}~(12
  B),  \pg{433}.

\bibitem[Kirk {\em et~al.\/}(2005)Kirk, Wilson, Akers, Conway, Counsell,
  Cowley, Dowling, Dudson, Field, Lott, Lloyd, Martin, Meyer, Price, Taylor \&
  Walsh]{kirk2005}
{\sc \au{Kirk, A.}, \au{Wilson, H.~R.}, \au{Akers, R.}, \au{Conway, N.~J.},
  \au{Counsell, G.~F.}, \au{Cowley, S.~C.}, \au{Dowling, J.}, \au{Dudson, B.},
  \au{Field, A.}, \au{Lott, F.}, \au{Lloyd, B.}, \au{Martin, R.}, \au{Meyer,
  H.}, \au{Price, M.}, \au{Taylor, D.} \& \au{Walsh, M.}} \yr{2005}
  \at{{Structure of ELMs in MAST and the implications for energy deposition}}.
  \jt{Plasma Phys. Control. Fusion}  \bvol{47}~(2),  \pg{315--333}.

\bibitem[Ko{\v{c}}an {\em et~al.\/}(2011)Ko{\v{c}}an, Gunn,
  Carpentier-Chouchana, Herrmann, Kirk, Komm, M{\"{u}}ller, Pascal, Pitts,
  Rohde \& Tamain]{kocan2011}
{\sc \au{Ko{\v{c}}an, M.}, \au{Gunn, J.~P.}, \au{Carpentier-Chouchana, S.},
  \au{Herrmann, A.}, \au{Kirk, A.}, \au{Komm, M.}, \au{M{\"{u}}ller, H.~W.},
  \au{Pascal, J.~Y.}, \au{Pitts, R.~A.}, \au{Rohde, V.} \& \au{Tamain, P.}}
  \yr{2011}  \at{{Measurements of ion energies in the tokamak plasma
  boundary}}.  \jt{J. Nucl. Mater.}  \bvol{415}~(1 SUPPL),  \pg{S1133--S1138}.

\bibitem[Korpilo {\em et~al.\/}(2016)Korpilo, Gurchenko, Gusakov, Heikkinen,
  Janhunen, Kiviniemi, Leerink, Niskala \& Perevalov]{korpilo2016gyrokinetic}
{\sc \au{Korpilo, T.}, \au{Gurchenko, A.~D.}, \au{Gusakov, E.~Z.},
  \au{Heikkinen, J.~A.}, \au{Janhunen, S.~J.}, \au{Kiviniemi, T.~P.},
  \au{Leerink, S.}, \au{Niskala, P.} \& \au{Perevalov, A.~A.}} \yr{2016}
  \at{{Gyrokinetic full--torus simulations of ohmic tokamak plasmas in circular
  limiter configuration}}.  \jt{Comp. Phys. Comm.}  \bvol{203},  \pg{128--137}.

\bibitem[Kotschenreuther {\em et~al.\/}(1995)Kotschenreuther, Rewoldt \&
  Tang]{kotschenreuther1995comparison}
{\sc \au{Kotschenreuther, M.}, \au{Rewoldt, G.} \& \au{Tang, W.~M.}} \yr{1995}
  \at{{Comparison of initial value and eigenvalue codes for kinetic toroidal
  plasma instabilities}}.  \jt{Comp. Phys. Comm.}  \bvol{88}~(2-3),
  \pg{128--140}.

\bibitem[Krasheninnikov(2001)]{krasheninnikov2001}
{\sc \au{Krasheninnikov, S.~I.}} \yr{2001}  \at{{On scrape off layer plasma
  transport}}.  \jt{Phys. Lett. Sect. A Gen. At. Solid State Phys.}
  \bvol{283}~(5-6),  \pg{368--370}.

\bibitem[Krasheninnikov {\em et~al.\/}(2008)Krasheninnikov, D'Ippolito \&
  Myra]{krasheninnikov2008}
{\sc \au{Krasheninnikov, S.~I.}, \au{D'Ippolito, D.~A.} \& \au{Myra, J.~R.}}
  \yr{2008}  \at{{Recent theoretical progress in understanding coherent
  structures in edge and SOL turbulence}}.  \jt{J. Plasma Phys.}
  \bvol{74}~(5),  \pg{679--717}.

\bibitem[Krommes(2007)]{krommes2007}
{\sc \au{Krommes, J.~A.}} \yr{2007}  \at{{Nonequilibrium gyrokinetic
  fluctuation theory and sampling noise in gyrokinetic particle-in-cell
  simulations}}.  \jt{Phys. Plasmas}  \bvol{14}~(9),  \pg{90501}.

\bibitem[Kruskal \& Kulsrud(1958)]{kruskal1958}
{\sc \au{Kruskal, M.~D.} \& \au{Kulsrud, R.~M.}} \yr{1958}  \at{{Equilibrium of
  a magnetically confined plasma in a toroid}}.  \jt{Phys. Fluids}
  \bvol{1}~(4),  \pg{265--274}.

\bibitem[Ku {\em et~al.\/}(2009)Ku, Chang \& Diamond]{ku2009full}
{\sc \au{Ku, S.-H.}, \au{Chang, C.-S.} \& \au{Diamond, P.~H.}} \yr{2009}
  \at{{Full--f gyrokinetic particle simulation of centrally heated global ITG
  turbulence from magnetic axis to edge pedestal top in a realistic tokamak
  geometry}}.  \jt{Nucl. Fusion}  \bvol{49}~(11),  \pg{115021}.

\bibitem[Ku {\em et~al.\/}(2018{\natexlab{{\em a\/}}})Ku, Chang, Hager,
  Churchill, Tynan, Cziegler, Greenwald, Hughes, Parker, Adams, D'Azevedo \&
  Worley]{ku2018fast}
{\sc \au{Ku, S.-H.}, \au{Chang, C.-S.}, \au{Hager, R.}, \au{Churchill, R.~M.},
  \au{Tynan, G.~R.}, \au{Cziegler, I.}, \au{Greenwald, M.}, \au{Hughes, J.},
  \au{Parker, S.~E.}, \au{Adams, M.~F.}, \au{D'Azevedo, E.} \& \au{Worley, P.}}
  \yr{2018{\natexlab{{\em a\/}}}}  \at{{A fast low--to--high confinement mode
  bifurcation dynamics in the boundary-plasma gyrokinetic code {XGC1}}}.
  \jt{Phys. Plasmas}  \bvol{25}~(5),  \pg{56107}.

\bibitem[Ku {\em et~al.\/}(2016)Ku, Hager, Chang, Kwon \& Parker]{ku2016new}
{\sc \au{Ku, S.-H.}, \au{Hager, R.}, \au{Chang, C.-S.}, \au{Kwon, J.~M.} \&
  \au{Parker, S.~E.}} \yr{2016}  \at{{A new hybrid--{Lagrangian} numerical
  scheme for gyrokinetic simulation of tokamak edge plasma}}.  \jt{J. Comput.
  Phys.}  \bvol{315},  \pg{467--475}.

\bibitem[Ku {\em et~al.\/}(2018{\natexlab{{\em b\/}}})Ku, Sturdevant, Hager,
  Chang, Chacon \& Chen]{ku2018fully}
{\sc \au{Ku, S.-H.}, \au{Sturdevant, B.}, \au{Hager, R.}, \au{Chang, C.-S.},
  \au{Chacon, L.} \& \au{Chen, G.}} \yr{2018{\natexlab{{\em b\/}}}} {Fully
  implicit particle--in--cell simulation of gyrokinetic electromagnetic modes
  in XGC1 without the cancellation issue}.  \bt{In {\em Bull. Am. Phys.
  Soc.\/}}.

\bibitem[Kukushkin {\em et~al.\/}(2011)Kukushkin, Pacher, Kotov, Pacher \&
  Reiter]{kukushkin2011}
{\sc \au{Kukushkin, A.~S.}, \au{Pacher, H.~D.}, \au{Kotov, V.}, \au{Pacher,
  G.~W.} \& \au{Reiter, D.}} \yr{2011}  \at{{Finalizing the ITER divertor
  design: The key role of SOLPS modeling}}.  \jt{Fusion Eng. Des.}
  \bvol{86}~(12),  \pg{2865--2873}.

\bibitem[Kukushkin {\em et~al.\/}(2013)Kukushkin, Pacher, Pacher, Kotov, Pitts
  \& Reiter]{kukushkin2013}
{\sc \au{Kukushkin, A.~S.}, \au{Pacher, H.~D.}, \au{Pacher, G.~W.}, \au{Kotov,
  V.}, \au{Pitts, R.~A.} \& \au{Reiter, D.}} \yr{2013}  \at{{Consequences of a
  reduction of the upstream power SOL width in ITER}}.  \jt{J. Nucl. Mater.}
  \bvol{438}~(SUPPL).

\bibitem[Kunkel \& Guillory(1966)]{kunkel1966interchange}
{\sc \au{Kunkel, W.~B.} \& \au{Guillory, J.~U.}} \yr{1966} {Interchange
  stabilization by incomplete line-tying}.  \bt{In {\em Phenom. Ioniz. Gases,
  Vol. II, VII Int. Conf.\/}},  \pg{p. 702}.

\bibitem[LaBombard {\em et~al.\/}(2005)LaBombard, Hughes, Mossessian,
  Greenwald, Lipschultz \& Terry]{labombard2005}
{\sc \au{LaBombard, B.}, \au{Hughes, J.~W.}, \au{Mossessian, D.},
  \au{Greenwald, M.}, \au{Lipschultz, B.} \& \au{Terry, J.~L.}} \yr{2005}
  \at{{Evidence for electromagnetic fluid drift turbulence controlling the edge
  plasma state in the Alcator C-Mod tokamak}}.  \jt{Nucl. Fusion}
  \bvol{45}~(12),  \pg{1658--1675}.

\bibitem[Labombard {\em et~al.\/}(2008)Labombard, Hughes, Smick, Graf, Marr,
  McDermott, Reinke, Greenwald, Lipschultz, Terry, Whyte \&
  Zweben]{labombard2008}
{\sc \au{Labombard, B.}, \au{Hughes, J.~W.}, \au{Smick, N.}, \au{Graf, A.},
  \au{Marr, K.}, \au{McDermott, R.}, \au{Reinke, M.}, \au{Greenwald, M.},
  \au{Lipschultz, B.}, \au{Terry, J.~L.}, \au{Whyte, D.~G.} \& \au{Zweben,
  S.~J.}} \yr{2008}  \at{{Critical gradients and plasma flows in the edge
  plasma of Alcator C-Mod}}.  \jt{Phys. Plasmas}  \bvol{15}~(5),  \pg{56106}.

\bibitem[Lang {\em et~al.\/}(2013)Lang, Loarte, Saibene, Baylor, Becoulet,
  Cavinato, Clement-Lorenzo, Daly, Evans, Fenstermacher, Gribov, Horton, Lowry,
  Martin, Neubauer, Oyama, Schaffer, Stork, Suttrop, Thomas, Tran, Wilson,
  Kavin \& Schmitz]{lang2013}
{\sc \au{Lang, P.~T.}, \au{Loarte, A.}, \au{Saibene, G.}, \au{Baylor, L.~R.},
  \au{Becoulet, M.}, \au{Cavinato, M.}, \au{Clement-Lorenzo, S.}, \au{Daly,
  E.}, \au{Evans, T.~E.}, \au{Fenstermacher, M.~E.}, \au{Gribov, Y.},
  \au{Horton, L.~D.}, \au{Lowry, C.}, \au{Martin, Y.}, \au{Neubauer, O.},
  \au{Oyama, N.}, \au{Schaffer, M.~J.}, \au{Stork, D.}, \au{Suttrop, W.},
  \au{Thomas, P.}, \au{Tran, M.}, \au{Wilson, H.~R.}, \au{Kavin, A.} \&
  \au{Schmitz, O.}} \yr{2013}  \at{{ELM control strategies and tools: Status
  and potential for ITER}}.  \jt{Nucl. Fusion}  \bvol{53}~(4),  \pg{043004}.

\bibitem[Lanti {\em et~al.\/}(2019)Lanti, Ohana, Tronko, Hayward-Schneider,
  Bottino, McMillan, Mishchenko, Scheinberg, Biancalani, Angelino \&
  Brunner]{lanti2019orb5}
{\sc \au{Lanti, E.}, \au{Ohana, N.}, \au{Tronko, N.}, \au{Hayward-Schneider,
  T.}, \au{Bottino, A.}, \au{McMillan, B.~F.}, \au{Mishchenko, A.},
  \au{Scheinberg, A.}, \au{Biancalani, A.}, \au{Angelino, P.} \& \au{Brunner,
  S.}} \yr{2019}  \at{{ORB5: a global electromagnetic gyrokinetic code using
  the PIC approach in toroidal geometry}}.  \jt{Comput. Phys. Commun.}  \pg{p.
  107072}.

\bibitem[Lao {\em et~al.\/}(1990)Lao, Ferron, Groebner, Howl, {St JOHN}, Strait
  \& Taylor]{lao1990}
{\sc \au{Lao, L.}, \au{Ferron, J.}, \au{Groebner, R.}, \au{Howl, W.}, \au{{St
  JOHN}, H.}, \au{Strait, E.} \& \au{Taylor, T.}} \yr{1990}  \at{{Equilibrium
  analysis of current profiles in tokamaks}}.  \jt{Nucl. Fusion}  \bvol{30},
  \pg{1035}.

\bibitem[Lao {\em et~al.\/}(1985)Lao, John, Stambaugh, Kellman \&
  Pfeiffer]{lao1985}
{\sc \au{Lao, L.~L.}, \au{John, H.~S.}, \au{Stambaugh, R.~D.}, \au{Kellman,
  A.~G.} \& \au{Pfeiffer, W.}} \yr{1985}  \at{{Reconstruction of current
  profile parameters and plasma shapes in tokamaks}}.  \jt{Nucl. Fusion}
  \bvol{25}~(11),  \pg{1611--1622}.

\bibitem[Lapillonne(2010)]{lapillonne-thesis}
{\sc \au{Lapillonne, X.}} \yr{2010}  \at{{Local and global Eulerian gyrokinetic
  simulations of microturbulence in realistic geometry with applications to the
  TCV Tokamak}}. Ph.D. thesis, {\'{E}}cole Polytechnique F{\'{e}}d{\'{e}}rale
  de Lausanne.

\bibitem[Lapillonne {\em et~al.\/}(2009)Lapillonne, Brunner, Dannert, Jolliet,
  Marinoni, Villard, G{\"{o}}rler, Jenko \& Merz]{lapillonne2009}
{\sc \au{Lapillonne, X.}, \au{Brunner, S.}, \au{Dannert, T.}, \au{Jolliet, S.},
  \au{Marinoni, A.}, \au{Villard, L.}, \au{G{\"{o}}rler, T.}, \au{Jenko, F.} \&
  \au{Merz, F.}} \yr{2009}  \at{{Clarifications to the limitations of the
  s-$\alpha$ equilibrium model for gyrokinetic computations of turbulence}}.
  \jt{Phys. Plasmas}  \bvol{16}~(3),  \pg{032308}.

\bibitem[Leddy {\em et~al.\/}(2017)Leddy, Dudson, Romanelli, Shanahan \&
  Walkden]{leddy2017}
{\sc \au{Leddy, J.}, \au{Dudson, B.}, \au{Romanelli, M.}, \au{Shanahan, B.} \&
  \au{Walkden, N.}} \yr{2017}  \at{{A novel flexible field-aligned coordinate
  system for tokamak edge plasma simulation}}.  \jt{Comput. Phys. Commun.}
  \bvol{212},  \pg{59--68}.

\bibitem[Lee {\em et~al.\/}(2015{\natexlab{{\em a\/}}})Lee, Angus, Umansky \&
  Krasheninnikov]{lee2015a}
{\sc \au{Lee, W.}, \au{Angus, J.~R.}, \au{Umansky, M.~V.} \&
  \au{Krasheninnikov, S.~I.}} \yr{2015{\natexlab{{\em a\/}}}}
  \at{{Electromagnetic effects on plasma blob-filament transport}}.  \jt{J.
  Nucl. Mater.}  \bvol{463},  \pg{765--768}.

\bibitem[Lee {\em et~al.\/}(2015{\natexlab{{\em b\/}}})Lee, Umansky, Angus \&
  Krasheninnikov]{lee2015}
{\sc \au{Lee, W.}, \au{Umansky, M.~V.}, \au{Angus, J.~R.} \&
  \au{Krasheninnikov, S.~I.}} \yr{2015{\natexlab{{\em b\/}}}}
  \at{{Electromagnetic effects on dynamics of high-beta filamentary
  structures}}.  \jt{Phys. Plasmas}  \bvol{22}~(1),  \pg{12505}.

\bibitem[Lee(1983)]{lee1983}
{\sc \au{Lee, W.~W.}} \yr{1983}  \at{{Gyrokinetic approach in particle
  simulation}}.  \jt{Phys. Fluids}  \bvol{26}~(2),  \pg{556--562}.

\bibitem[Lee(1987)]{lee1987gyrokinetic}
{\sc \au{Lee, W.~W.}} \yr{1987}  \at{{Gyrokinetic particle simulation model}}.
  \jt{J. Comput. Phys.}  \bvol{72}~(1),  \pg{243--269}.

\bibitem[Lenard \& Bernstein(1958)]{Lenard1958}
{\sc \au{Lenard, A.} \& \au{Bernstein, I.~B.}} \yr{1958}  \at{{Plasma
  oscillations with diffusion in velocity space}}.  \jt{Phys. Rev.}
  \bvol{112}~(5),  \pg{1456--1459}.

\bibitem[LeVeque(2002)]{leveque2002}
{\sc \au{LeVeque, R.~J.}} \yr{2002} {\em {Finite Volume Methods for Hyperbolic
  Problems}\/}.  \publ{Cambridge University Press}.

\bibitem[Lin {\em et~al.\/}(2000)Lin, Hahm, Lee, Tang \&
  White]{lin2000gyrokinetic}
{\sc \au{Lin, Z.}, \au{Hahm, T.~S.}, \au{Lee, W.~W.}, \au{Tang, W.~M.} \&
  \au{White, R.~B.}} \yr{2000}  \at{{Gyrokinetic simulations in general
  geometry and applications to collisional damping of zonal flows}}.  \jt{Phys.
  Plasmas}  \bvol{7}~(5),  \pg{1857--1862}.

\bibitem[Littlejohn(1982)]{littlejohn1982}
{\sc \au{Littlejohn, R.~G.}} \yr{1982}  \at{{Hamiltonian perturbation theory in
  noncanonical coordinates}}.  \jt{J. Math. Phys.}  \bvol{23}~(5),
  \pg{742--747}.

\bibitem[Littlejohn(1983)]{littlejohn1983}
{\sc \au{Littlejohn, R.~G.}} \yr{1983}  \at{{Variational Principles of Guiding
  Centre Motion}}.  \jt{J. Plasma Phys.}  \bvol{29}~(1),  \pg{111--125}.

\bibitem[Liu \& Shu(2000)]{liu2000high}
{\sc \au{Liu, J.-G. J.~G.} \& \au{Shu, C. W. C.-W.}} \yr{2000}  \at{{A
  high--order discontinuous Galerkin method for 2D incompressible flows}}.
  \jt{J. Comput. Phys.}  \bvol{160}~(2),  \pg{577--596}.

\bibitem[Loarte {\em et~al.\/}(2007)Loarte, Lipschultz, Kukushkin, Matthews,
  Stangeby, Asakura, Counsell, Federici, Kallenbach, Krieger, Mahdavi,
  Philipps, Reiter, Roth, Strachan, Whyte, Doerner, Eich, Fundamenski,
  Herrmann, Fenstermacher, Ghendrih, Groth, Kirschner, Konoshima, Labombard,
  Lang, Leonard, Monier-Garbet, Neu, Pacher, Pegourie, Pitts, Takamura, Terry
  \& Tsitrone]{loarte2007}
{\sc \au{Loarte, A.}, \au{Lipschultz, B.}, \au{Kukushkin, A.~S.}, \au{Matthews,
  G.~F.}, \au{Stangeby, P.~C.}, \au{Asakura, N.}, \au{Counsell, G.~F.},
  \au{Federici, G.}, \au{Kallenbach, A.}, \au{Krieger, K.}, \au{Mahdavi, A.},
  \au{Philipps, V.}, \au{Reiter, D.}, \au{Roth, J.}, \au{Strachan, J.},
  \au{Whyte, D.}, \au{Doerner, R.}, \au{Eich, T.}, \au{Fundamenski, W.},
  \au{Herrmann, A.}, \au{Fenstermacher, M.}, \au{Ghendrih, P.}, \au{Groth, M.},
  \au{Kirschner, A.}, \au{Konoshima, S.}, \au{Labombard, B.}, \au{Lang, P.},
  \au{Leonard, A.~W.}, \au{Monier-Garbet, P.}, \au{Neu, R.}, \au{Pacher, H.},
  \au{Pegourie, B.}, \au{Pitts, R.~A.}, \au{Takamura, S.}, \au{Terry, J.} \&
  \au{Tsitrone, E.}} \yr{2007}  \at{{Chapter 4: Power and particle control}}.
  \jt{Nucl. Fusion}  \bvol{47}~(6),  \pg{203--263}.

\bibitem[Madsen(2013)]{madsen2013}
{\sc \au{Madsen, J.}} \yr{2013}  \at{{Full-F gyrofluid model}}.  \jt{Phys.
  Plasmas}  \bvol{20}~(7),  \pg{072301}.

\bibitem[Mandell {\em et~al.\/}(2018)Mandell, Dorland \&
  Landreman]{mandell2018}
{\sc \au{Mandell, N.~R.}, \au{Dorland, W.} \& \au{Landreman, M.}} \yr{2018}
  \at{{Laguerre-Hermite pseudo-spectral velocity formulation of gyrokinetics}}.
   \jt{J. Plasma Phys.}  \bvol{84}~(1).

\bibitem[Mandell {\em et~al.\/}(2020)Mandell, Hakim, Hammett \&
  Francisquez]{mandell2020}
{\sc \au{Mandell, N.~R.}, \au{Hakim, A.}, \au{Hammett, G.~W.} \&
  \au{Francisquez, M.}} \yr{2020}  \at{{Electromagnetic full-f gyrokinetics in
  the tokamak edge with discontinuous Galerkin methods}}.  \jt{J. Plasma Phys.}
   \bvol{86}.

\bibitem[Maurer {\em et~al.\/}(2020)Maurer, {Ba{\~{n}}{\'{o}}n Navarro},
  Dannert, Restelli, Hindenlang, G{\"{o}}rler, Told, Jarema, Merlo \&
  Jenko]{maurer2020}
{\sc \au{Maurer, M.}, \au{{Ba{\~{n}}{\'{o}}n Navarro}, A.}, \au{Dannert, T.},
  \au{Restelli, M.}, \au{Hindenlang, F.}, \au{G{\"{o}}rler, T.}, \au{Told, D.},
  \au{Jarema, D.}, \au{Merlo, G.} \& \au{Jenko, F.}} \yr{2020}  \at{{GENE-3D: A
  global gyrokinetic turbulence code for stellarators}}.  \jt{J. Comput. Phys.}
   \bvol{420},  \pg{109694}.

\bibitem[McCorquodale {\em et~al.\/}(2015)McCorquodale, Dorr, Hittinger \&
  Colella]{McCorquodale2015}
{\sc \au{McCorquodale, P.}, \au{Dorr, M.~R.}, \au{Hittinger, J.~A.} \&
  \au{Colella, P.}} \yr{2015}  \at{{High-order finite-volume methods for
  hyperbolic conservation laws on mapped multiblock grids}}.  \jt{J. Comput.
  Phys.}  \bvol{288},  \pg{181--195}.

\bibitem[McMillan \& Sharma(2016)]{mcmillan2016}
{\sc \au{McMillan, B.~F.} \& \au{Sharma, A.}} \yr{2016}  \at{{A very general
  electromagnetic gyrokinetic formalism}}.  \jt{Phys. Plasmas}  \bvol{23}~(9),
  \pg{92504}.

\bibitem[Migliucci \& Naulin(2010)]{migliucci2010}
{\sc \au{Migliucci, P.} \& \au{Naulin, V.}} \yr{2010}  \at{{Magnetic signature
  of current carrying edge localized modes filaments on the Joint European
  Torus tokamak}}.  \jt{Phys. Plasmas}  \bvol{17}~(7).

\bibitem[Mishchenko {\em et~al.\/}(2004)Mishchenko, Hatzky \&
  K{\"{o}}nies]{mishchenko2004}
{\sc \au{Mishchenko, A.}, \au{Hatzky, R.} \& \au{K{\"{o}}nies, A.}} \yr{2004}
  \at{{Conventional $\delta$f-particle simulations of electromagnetic
  perturbations with finite elements}}.  \jt{Phys. Plasmas}  \bvol{11}~(12),
  \pg{5480--5486}.

\bibitem[Mishchenko {\em et~al.\/}(2014)Mishchenko, K{\"{o}}nies, Kleiber \&
  Cole]{mishchenko2014pullback}
{\sc \au{Mishchenko, A.}, \au{K{\"{o}}nies, A.}, \au{Kleiber, R.} \& \au{Cole,
  M.}} \yr{2014}  \at{{Pullback transformation in gyrokinetic electromagnetic
  simulations}}.  \jt{Phys. Plasmas}  \bvol{21}~(9),  \pg{92110}.

\bibitem[Myra(2007)]{myra2007}
{\sc \au{Myra, J.~R.}} \yr{2007}  \at{{Current carrying blob filaments and
  edge-localized-mode dynamics}}.  \jt{Phys. Plasmas}  \bvol{14}~(10),
  \pg{102314}.

\bibitem[Myra \& D'Ippolito(2005)]{myra2005}
{\sc \au{Myra, J.~R.} \& \au{D'Ippolito, D.~A.}} \yr{2005}  \at{{Edge
  instability regimes with applications to blob transport and the quasicoherent
  mode}}.  \jt{Phys. Plasmas}  \bvol{12}~(9),  \pg{1--10}.

\bibitem[Myra {\em et~al.\/}(1997)Myra, D'Ippolito \& Goedbloed]{myra1997}
{\sc \au{Myra, J.~R.}, \au{D'Ippolito, D.~A.} \& \au{Goedbloed, J.~P.}}
  \yr{1997}  \at{{Generalized ballooning and sheath instabilities in the
  scrape-off layer of divertor tokamaks}}.  \jt{Phys. Plasmas}  \bvol{4}~(5),
  \pg{1330--1341}.

\bibitem[Myra {\em et~al.\/}(2006)Myra, Russell \& D'Ippolito]{myra2006}
{\sc \au{Myra, J.~R.}, \au{Russell, D.~A.} \& \au{D'Ippolito, D.~A.}} \yr{2006}
   \at{{Collisionality and magnetic geometry effects on tokamak edge turbulent
  transport. I. A two-region model with application to blobs}}.  \jt{Phys.
  Plasmas}  \bvol{13}~(11),  \pg{112502}.

\bibitem[Naulin(2007)]{naulin2007}
{\sc \au{Naulin, V.}} \yr{2007}  \at{{Turbulent transport and the plasma
  edge}}.  \jt{J. Nucl. Mater.}  \bvol{363-365}~(1-3),  \pg{24--31}.

\bibitem[Nevins {\em et~al.\/}(2005)Nevins, Hammett, Dimits, Dorland \&
  Shumaker]{nevins2005}
{\sc \au{Nevins, W.~M.}, \au{Hammett, G.~W.}, \au{Dimits, A.~M.}, \au{Dorland,
  W.} \& \au{Shumaker, D.~E.}} \yr{2005}  \at{{Discrete particle noise in
  particle-in-cell simulations of plasma microturbulence}}.  \jt{Phys. Plasmas}
   \bvol{12}~(12),  \pg{1--16}.

\bibitem[Nielsen {\em et~al.\/}(1996)Nielsen, He, Rasmussen \&
  Bohr]{nielsen1996}
{\sc \au{Nielsen, A.~H.}, \au{He, X.}, \au{Rasmussen, J.~J.} \& \au{Bohr, T.}}
  \yr{1996}  \at{{Vortex merging and spectral cascade in two-dimensional
  flows}}.  \jt{Phys. Fluids}  \bvol{8}~(9),  \pg{2263--2265}.

\bibitem[Olver(1982)]{olver1982}
{\sc \au{Olver, P.~J.}} \yr{1982}  \at{{A nonlinear Hamiltonian structure for
  the Euler equations}}.  \jt{J. Math. Anal. Appl.}  \bvol{89}~(1),
  \pg{233--250}.

\bibitem[Pan {\em et~al.\/}(2018)Pan, Told, Shi, Hammett \& Jenko]{pan2018}
{\sc \au{Pan, Q.}, \au{Told, D.}, \au{Shi, E.~L.}, \au{Hammett, G.~W.} \&
  \au{Jenko, F.}} \yr{2018}  \at{{Full-f version of GENE for turbulence in
  open-field-line systems}}.  \jt{Phys. Plasmas}  \bvol{25}~(6).

\bibitem[Parker \& Lee(1993)]{parker1993fully}
{\sc \au{Parker, S.~E.} \& \au{Lee, W.~W.}} \yr{1993}  \at{{A fully nonlinear
  characteristic method for gyrokinetic simulation}}.  \jt{Phys. Fluids B}
  \bvol{5}~(1),  \pg{77--86}.

\bibitem[Parker {\em et~al.\/}(1993{\natexlab{{\em a\/}}})Parker, Lee \&
  Santoro]{parker1993gyrokinetic}
{\sc \au{Parker, S.~E.}, \au{Lee, W.~W.} \& \au{Santoro, R.~A.}}
  \yr{1993{\natexlab{{\em a\/}}}}  \at{{Gyrokinetic simulation of ion
  temperature gradient driven turbulence in 3D toroidal geometry}}.  \jt{Phys.
  Rev. Lett.}  \bvol{71}~(13),  \pg{2042--2045}.

\bibitem[Parker {\em et~al.\/}(1993{\natexlab{{\em b\/}}})Parker, Procassini,
  Birdsall \& Cohen]{parker1993suitable}
{\sc \au{Parker, S.~E.}, \au{Procassini, R.~J.}, \au{Birdsall, C.~K.} \&
  \au{Cohen, B.~I.}} \yr{1993{\natexlab{{\em b\/}}}}  \at{{A suitable boundary
  condition for bounded plasma simulation without sheath resolution}}.  \jt{J.
  Comput. Phys.}  \bvol{104}~(1),  \pg{41--49}.

\bibitem[Parra \& Calvo(2011)]{parra2011}
{\sc \au{Parra, F.~I.} \& \au{Calvo, I.}} \yr{2011}  \at{{Phase-space
  Lagrangian derivation of electrostatic gyrokinetics in general geometry}}.
  \jt{Plasma Phys. Control. Fusion}  \bvol{53}~(4),  \pg{045001}.

\bibitem[Parra \& Catto(2008)]{parra2008}
{\sc \au{Parra, F.~I.} \& \au{Catto, P.~J.}} \yr{2008}  \at{{Limitations of
  gyrokinetics on transport time scales}}.  \jt{Plasma Phys. Control. Fusion}
  \bvol{50}~(6),  \pg{065014}.

\bibitem[Parra \& Catto(2010)]{parra2010}
{\sc \au{Parra, F.~I.} \& \au{Catto, P.~J.}} \yr{2010}  \at{{Transport of
  momentum in full f gyrokinetics}}.  \jt{Phys. Plasmas}  \bvol{17}~(5),
  \pg{466}.

\bibitem[Peeters {\em et~al.\/}(2009)Peeters, Camenen, Casson, Hornsby, Snodin,
  Strintzi \& Szepesi]{peeters2009nonlinear}
{\sc \au{Peeters, A.~G.}, \au{Camenen, Y.}, \au{Casson, F.~J.}, \au{Hornsby,
  W.~A.}, \au{Snodin, A.~P.}, \au{Strintzi, D.} \& \au{Szepesi, G.}} \yr{2009}
  \at{{The nonlinear gyro--kinetic flux tube code GKW}}.  \jt{Comp. Phys.
  Comm.}  \bvol{180}~(12),  \pg{2650--2672}.

\bibitem[Perez {\em et~al.\/}(2006)Perez, Horton, Gentle, Rowan, Lee \&
  Dahlburg]{perez2006}
{\sc \au{Perez, J.~C.}, \au{Horton, W.}, \au{Gentle, K.}, \au{Rowan, W.~L.},
  \au{Lee, K.} \& \au{Dahlburg, R.~B.}} \yr{2006}  \at{{Drift wave instability
  in the Helimak experiment}}.  \jt{Phys. Plasmas}  \bvol{13}~(3),  \pg{32101}.

\bibitem[Pitts {\em et~al.\/}(2009)Pitts, Kukushkin, Loarte, Martin, Merola,
  Kessel, Komarov \& Shimada]{pitts2009}
{\sc \au{Pitts, R.~A.}, \au{Kukushkin, A.}, \au{Loarte, A.}, \au{Martin, A.},
  \au{Merola, M.}, \au{Kessel, C.~E.}, \au{Komarov, V.} \& \au{Shimada, M.}}
  \yr{2009}  \at{{Status and physics basis of the ITER divertor}}.  \jt{Phys.
  Scr. T}  \bvol{T138},  \pg{10}.

\bibitem[Pueschel(2009)]{pueschel-thesis}
{\sc \au{Pueschel, M.~J.}} \yr{2009}  \at{{Electromagnetic Effects in
  Gyrokinetic Simulations of Plasma Turbulence}}. Ph.D. thesis,
  Westf{\"{a}}lische Wilhelms-Universit{\"{a}}t M{\"{u}}nster.

\bibitem[Qin {\em et~al.\/}(2007)Qin, Cohen, Nevins \& Xu]{qin2007}
{\sc \au{Qin, H.}, \au{Cohen, R.~H.}, \au{Nevins, W.~M.} \& \au{Xu, X.~Q.}}
  \yr{2007}  \at{{Geometric gyrokinetic theory for edge plasmas}}.  \jt{Phys.
  Plasmas}  \bvol{14}~(5),  \pg{56110}.

\bibitem[Reed \& Hill(1973)]{reed1973}
{\sc \au{Reed, W.~H.} \& \au{Hill, T.~R.}} \yr{1973}  \bt{{Triangular mesh
  methods for the neutron transport equation}}. {\em Tech. Rep.\/}.  \org{Los
  Alamos Scientific Laboratory}, Los Alamos, NM.

\bibitem[Reiter {\em et~al.\/}(1991)Reiter, Kever, Wolf, Baelmans, Behrisch \&
  Schneider]{reiter1991}
{\sc \au{Reiter, D.}, \au{Kever, H.}, \au{Wolf, G.~H.}, \au{Baelmans, M.},
  \au{Behrisch, R.} \& \au{Schneider, R.}} \yr{1991}  \at{{Helium removal from
  tokamaks}}.  \jt{Plasma Phys. Control. Fusion}  \bvol{33}~(13),
  \pg{1579--1600}.

\bibitem[Rewoldt {\em et~al.\/}(1987)Rewoldt, Tang \& Hastie]{rewoldt1987}
{\sc \au{Rewoldt, G.}, \au{Tang, W.~M.} \& \au{Hastie, R.~J.}} \yr{1987}
  \at{{Collisional effects on kinetic electromagnetic modes and associated
  quasilinear transport}}.  \jt{Phys. Fluids}  \bvol{30}~(3),  \pg{807}.

\bibitem[Reynders(1993)]{reynders1993gyrokinetic}
{\sc \au{Reynders, J. V.~W.}} \yr{1993}  \at{{Gyrokinetic simulation of
  finite--beta plasmas on parallel architectures}}. Ph.D. thesis, Princeton
  University.

\bibitem[Ribeiro \& Scott(2008)]{ribeiro2008}
{\sc \au{Ribeiro, T.~T.} \& \au{Scott, B.}} \yr{2008}  \at{{Gyrofluid
  turbulence studies of the effect of the poloidal position of an axisymmetric
  Debye sheath}}.  \jt{Plasma Phys. Control. Fusion}  \bvol{50}~(5),  \pg{25}.

\bibitem[Ricci(2015)]{ricci2015}
{\sc \au{Ricci, P.}} \yr{2015}  \at{{Simulation of the scrape-off layer region
  of tokamak devices}}.  \jt{J. Plasma Phys.} .

\bibitem[Ricci {\em et~al.\/}(2012)Ricci, Halpern, Jolliet, Loizu, Mosetto,
  Fasoli, Furno \& Theiler]{ricci2012simulation}
{\sc \au{Ricci, P.}, \au{Halpern, F.~D.}, \au{Jolliet, S.}, \au{Loizu, J.},
  \au{Mosetto, A.}, \au{Fasoli, A.}, \au{Furno, I.} \& \au{Theiler, C.}}
  \yr{2012}  \at{{Simulation of plasma turbulence in scrape--off layer
  conditions: the GBS code, simulation results and code validation}}.
  \jt{Plasma Phys. Control. Fusion}  \bvol{54}~(12),  \pg{124047}.

\bibitem[Ricci \& Rogers(2013)]{ricci2013}
{\sc \au{Ricci, P.} \& \au{Rogers, B.~N.}} \yr{2013}  \at{{Plasma turbulence in
  the scrape-off layer of tokamak devices}}.  \jt{Phys. Plasmas}
  \bvol{20}~(1),  \pg{10702}.

\bibitem[Rognlien {\em et~al.\/}(1994)Rognlien, Brown, Campbell, Kaiser, Knoll,
  McHugh, Porter, Rensink \& Smith]{rognlien1994}
{\sc \au{Rognlien, T.~D.}, \au{Brown, P.~N.}, \au{Campbell, R.~B.}, \au{Kaiser,
  T.~B.}, \au{Knoll, D.~A.}, \au{McHugh, P.~R.}, \au{Porter, G.~D.},
  \au{Rensink, M.~E.} \& \au{Smith, G.~R.}} \yr{1994}  \at{{2‐D Fluid
  Transport Simulations of Gaseous/Radiative Divertors}}.  \jt{Contrib. to
  Plasma Phys.}  \bvol{34}~(2-3),  \pg{362--367}.

\bibitem[Ryutov(2006)]{ryutov2006}
{\sc \au{Ryutov, D.~D.}} \yr{2006}  \at{{The dynamics of an isolated plasma
  filament at the edge of a toroidal device}}.  \jt{Phys. Plasmas}
  \bvol{13}~(12),  \pg{122307}.

\bibitem[Schneider {\em et~al.\/}(1992)Schneider, Reiter, Zehrfeld, Braams,
  Baelmans, Geiger, Kastelewicz, Neuhauser \& Wunderlich]{schneider1992}
{\sc \au{Schneider, R.}, \au{Reiter, D.}, \au{Zehrfeld, H.~P.}, \au{Braams,
  B.}, \au{Baelmans, M.}, \au{Geiger, J.}, \au{Kastelewicz, H.}, \au{Neuhauser,
  J.} \& \au{Wunderlich, R.}} \yr{1992}  \at{{B2-EIRENE simulation of ASDEX and
  ASDEX-Upgrade scrape-off layer plasmas}}.  \jt{J. Nucl. Mater.}
  \bvol{196-198}~(C),  \pg{810--815}.

\bibitem[Scott(1997)]{scott1997}
{\sc \au{Scott, B.}} \yr{1997}  \at{{Three-dimensional computation of drift
  Alfv{\'{e}}n turbulence}}.  \jt{Plasma Phys. Control. Fusion}
  \bvol{39}~(10),  \pg{1635--1668}.

\bibitem[Scott \& Smirnov(2010)]{scott2010}
{\sc \au{Scott, B.} \& \au{Smirnov, J.}} \yr{2010}  \at{{Energetic consistency
  and momentum conservation in the gyrokinetic description of tokamak
  plasmas}}.  \jt{Phys. Plasmas}  \bvol{17}~(11),  \pg{112302}.

\bibitem[Sharma \& McMillan(2015)]{sharma2015}
{\sc \au{Sharma, A.~Y.} \& \au{McMillan, B.~F.}} \yr{2015}  \at{{A reanalysis
  of a strong-flow gyrokinetic formalism}}.  \jt{Phys. Plasmas}  \bvol{22}~(3),
   \pg{32510}.

\bibitem[Sharma \& McMillan(2020)]{sharma2020}
{\sc \au{Sharma, A.~Y.} \& \au{McMillan, B.~F.}} \yr{2020}  \at{{Solving
  gyrokinetic systems with higher-order time dependence}}.  \jt{J. Plasma
  Phys.}  \bvol{86},  \pg{905860401}.

\bibitem[Shi(2017)]{shi-thesis}
{\sc \au{Shi, E.~L.}} \yr{2017}  \at{{Gyrokinetic continuum simulation of
  turbulence in open--field--line plasmas}}. Ph.D. thesis, Princeton
  University.

\bibitem[Shi {\em et~al.\/}(2017)Shi, Hammett, Stoltzfus-Dueck \&
  Hakim]{shi2017}
{\sc \au{Shi, E.~L.}, \au{Hammett, G.~W.}, \au{Stoltzfus-Dueck, T.} \&
  \au{Hakim, A.}} \yr{2017}  \at{{Gyrokinetic continuum simulation of
  turbulence in a straight open-field-line plasma}}.  \jt{J. Plasma Phys.}
  \bvol{83}~(3).

\bibitem[Shi {\em et~al.\/}(2019)Shi, Hammett, Stoltzfus-Dueck \&
  Hakim]{shi2019}
{\sc \au{Shi, E.~L.}, \au{Hammett, G.~W.}, \au{Stoltzfus-Dueck, T.} \&
  \au{Hakim, A.}} \yr{2019}  \at{{Full- f gyrokinetic simulation of turbulence
  in a helical open-field-line plasma}}.  \jt{Phys. Plasmas}  \bvol{26}~(1),
  \pg{012307}.

\bibitem[Shimizu {\em et~al.\/}(2003)Shimizu, Takizuka, Sakurai, Tamai,
  Takenaga, Kubo \& Miura]{shimizu2003}
{\sc \au{Shimizu, K.}, \au{Takizuka, T.}, \au{Sakurai, S.}, \au{Tamai, H.},
  \au{Takenaga, H.}, \au{Kubo, H.} \& \au{Miura, Y.}} \yr{2003}
  \at{{Simulation of divertor detachment characteristics in JT-60 with
  superconducting coils}}.  \jt{J. Nucl. Mater.}  \bvol{313-316}~(SUPPL.),
  \pg{1277--1281}.

\bibitem[Shu(2002)]{shu2002survey}
{\sc \au{Shu, C.-W.}} \yr{2002}  \at{{A survey of strong stability preserving
  high order time discretizations}}.  \bt{In {\em Collect. Lect. Preserv. Stab.
  under Discret.\/}},  \pg{pp. 51--65}.  \publ{SIAM Philadelphia, PA}.

\bibitem[Shu(2009)]{shu2009discontinuous}
{\sc \au{Shu, C.-W.}} \yr{2009}  \at{{Discontinuous Galerkin methods: general
  approach and stability}}.  \bt{In {\em Numerical solutions of partial
  differential equations\/} (ed. \ed{S.~Bertoluzza, S.~Falletta, G.~Russo \&
  C.-W. Shu})},  \pg{pp. 149--195}.  \publ{Birkh{\"{a}}user Basel}.

\bibitem[Simonini {\em et~al.\/}(1994)Simonini, Corrigan, Radford, Spence \&
  Taroni]{simonini1994}
{\sc \au{Simonini, R.}, \au{Corrigan, G.}, \au{Radford, G.}, \au{Spence, J.} \&
  \au{Taroni, A.}} \yr{1994}  \at{{Models and Numerics in the Multi‐Fluid
  2‐D Edge Plasma Code EDGE2D/U}}.  \jt{Contrib. to Plasma Phys.}
  \bvol{34}~(2-3),  \pg{368--373}.

\bibitem[Snyder {\em et~al.\/}(2011)Snyder, Groebner, Hughes, Osborne,
  Beurskens, Leonard, Wilson \& Xu]{snyder2011}
{\sc \au{Snyder, P.~B.}, \au{Groebner, R.~J.}, \au{Hughes, J.~W.}, \au{Osborne,
  T.~H.}, \au{Beurskens, M.}, \au{Leonard, A.~W.}, \au{Wilson, H.~R.} \&
  \au{Xu, X.~Q.}} \yr{2011} {A first-principles predictive model of the
  pedestal height and width: Development, testing and ITER optimization with
  the EPED model}.

\bibitem[Snyder \& Hammett(2001)]{snyder2001}
{\sc \au{Snyder, P.~B.} \& \au{Hammett, G.~W.}} \yr{2001}  \at{{A Landau fluid
  model for electromagnetic plasma microturbulence}}.  \jt{Phys. Plasmas}
  \bvol{8}~(7),  \pg{3199--3216}.

\bibitem[Stangeby(2000)]{stangeby2000}
{\sc \au{Stangeby, P.~C.}} \yr{2000} {\em {The Plasma Boundary of Magnetic
  Fusion Devices}\/}.  \publ{Taylor \& Francis}.

\bibitem[Startsev \& Lee(2014)]{startsev2014}
{\sc \au{Startsev, E.~A.} \& \au{Lee, W.~W.}} \yr{2014}  \at{{Finite-$\beta$
  simulation of microinstabilities}}.  \jt{Phys. Plasmas}  \bvol{21}~(2).

\bibitem[Stegmeir {\em et~al.\/}(2016)Stegmeir, Coster, Maj, Hallatschek \&
  Lackner]{stegmeir2016}
{\sc \au{Stegmeir, A.}, \au{Coster, D.}, \au{Maj, O.}, \au{Hallatschek, K.} \&
  \au{Lackner, K.}} \yr{2016}  \at{{The field line map approach for simulations
  of magnetically confined plasmas}}.  \jt{Comput. Phys. Commun.}  \bvol{198},
  \pg{139--153}.

\bibitem[Stegmeir {\em et~al.\/}(2018)Stegmeir, Coster, Ross, Maj, Lackner \&
  Poli]{stegmeir2018}
{\sc \au{Stegmeir, A.}, \au{Coster, D.}, \au{Ross, A.}, \au{Maj, O.},
  \au{Lackner, K.} \& \au{Poli, E.}} \yr{2018}  \at{{GRILLIX: A 3D turbulence
  code based on the flux-coordinate independent approach}}.  \jt{Plasma Phys.
  Control. Fusion}  \bvol{60}~(3),  \pg{35005}.

\bibitem[Sugama(2000)]{sugama2000}
{\sc \au{Sugama, H.}} \yr{2000}  \at{{Gyrokinetic field theory}}.  \jt{Phys.
  Plasmas}  \bvol{7}~(2),  \pg{466--480}.

\bibitem[Tamain {\em et~al.\/}(2010)Tamain, Ghendrih, Tsitrone, Grandgirard,
  Garbet, Sarazin, Serre, Ciraolo \& Chiavassa]{tamain2010tokam}
{\sc \au{Tamain, P.}, \au{Ghendrih, P.}, \au{Tsitrone, E.}, \au{Grandgirard,
  V.}, \au{Garbet, X.}, \au{Sarazin, Y.}, \au{Serre, E.}, \au{Ciraolo, G.} \&
  \au{Chiavassa, G.}} \yr{2010}  \at{{TOKAM-3D: A 3D fluid code for transport
  and turbulence in the edge plasma of Tokamaks}}.  \jt{J. Comput. Phys.}
  \bvol{229}~(2),  \pg{361--378}.

\bibitem[Told(2012)]{told-thesis}
{\sc \au{Told, D.}} \yr{2012}  \at{{Gyrokinetic Microturbulence in Transport
  Barriers}}. Ph.D. thesis, Universit{\"{a}}t Ulm.

\bibitem[Vianello {\em et~al.\/}(2011)Vianello, Naulin, Schrittwieser,
  M{\"{u}}ller, Zuin, Ionita, Rasmussen, Mehlmann, Rohde, Cavazzana \&
  Maraschek]{vianello2011}
{\sc \au{Vianello, N.}, \au{Naulin, V.}, \au{Schrittwieser, R.},
  \au{M{\"{u}}ller, H.~W.}, \au{Zuin, M.}, \au{Ionita, C.}, \au{Rasmussen,
  J.~J.}, \au{Mehlmann, F.}, \au{Rohde, V.}, \au{Cavazzana, R.} \&
  \au{Maraschek, M.}} \yr{2011}  \at{{Direct observation of current in type-I
  edge-localized-mode filaments on the ASDEX upgrade tokamak}}.  \jt{Phys. Rev.
  Lett.}  \bvol{106}~(12).

\bibitem[Wagner {\em et~al.\/}(1982)Wagner, Becker, Behringer, Campbell,
  Eberhagen, Engelhardt, Fussmann, Gehre, Gernhardt, Gierke, Haas, Huang,
  Karger, Keilhacker, Kl\"{u}ber, Kornherr, Lackner, Lisitano, Lister, Mayer,
  Meisel, Muller, Murmann, Niedermeyer, Poschenrieder, Rapp, R{\"{o}}hr,
  Schneider, Siller, Speth, St{\"{a}}bler, Steuer, Venus, Vollmer \&
  Y\"{u}]{wagner1982}
{\sc \au{Wagner, F.}, \au{Becker, G.}, \au{Behringer, K.}, \au{Campbell, D.},
  \au{Eberhagen, A.}, \au{Engelhardt, W.}, \au{Fussmann, G.}, \au{Gehre, O.},
  \au{Gernhardt, J.}, \au{Gierke, G.~V.}, \au{Haas, G.}, \au{Huang, M.},
  \au{Karger, F.}, \au{Keilhacker, M.}, \au{Kl\"{u}ber, O.}, \au{Kornherr, M.},
  \au{Lackner, K.}, \au{Lisitano, G.}, \au{Lister, G.~G.}, \au{Mayer, H.~M.},
  \au{Meisel, D.}, \au{Muller, E.~R.}, \au{Murmann, H.}, \au{Niedermeyer, H.},
  \au{Poschenrieder, W.}, \au{Rapp, H.}, \au{R{\"{o}}hr, H.}, \au{Schneider,
  F.}, \au{Siller, G.}, \au{Speth, E.}, \au{St{\"{a}}bler, A.}, \au{Steuer,
  K.~H.}, \au{Venus, G.}, \au{Vollmer, O.} \& \au{Y\"{u}, Z.}} \yr{1982}
  \at{{Regime of improved confinement and high beta in neutral-beam-heated
  divertor discharges of the ASDEX tokamak}}.  \jt{Phys. Rev. Lett.}
  \bvol{49}~(19),  \pg{1408--1412}.

\bibitem[Wang {\em et~al.\/}(2015)Wang, Hakim, Bhattacharjee \&
  Germaschewski]{wang2015comparison}
{\sc \au{Wang, L.}, \au{Hakim, A.~H.}, \au{Bhattacharjee, A.} \&
  \au{Germaschewski, K.}} \yr{2015}  \at{{Comparison of multi--fluid moment
  models with particle--in--cell simulations of collisionless magnetic
  reconnection}}.  \jt{Phys. Plasmas}  \bvol{22}~(1),  \pg{12108}.

\bibitem[Wang {\em et~al.\/}(2006)Wang, Lin, Tang, Lee, Ethier, Lewandowski,
  Rewoldt, Nahm \& Manickam]{wang2006}
{\sc \au{Wang, W.~X.}, \au{Lin, Z.}, \au{Tang, W.~M.}, \au{Lee, W.~W.},
  \au{Ethier, S.}, \au{Lewandowski, J.~L.}, \au{Rewoldt, G.}, \au{Nahm, T.~S.}
  \& \au{Manickam, J.}} \yr{2006}  \at{{Gyro-kinetic simulation of global
  turbulent transport properties in tokamak experiments}}.  \jt{Phys. Plasmas}
  \bvol{13}~(9),  \pg{969}.

\bibitem[Watanabe \& Sugama(2005)]{watanabe2005velocity}
{\sc \au{Watanabe, T.-H.~H.} \& \au{Sugama, H.}} \yr{2005}
  \at{{Velocity--space structures of distribution function in toroidal ion
  temperature gradient turbulence}}.  \jt{Nucl. Fusion}  \bvol{46}~(1),
  \pg{24}.

\bibitem[Wesson(2005)]{wesson2005}
{\sc \au{Wesson, J.~A.}} \yr{2005} {\em {Tokamaks 3rd Edition}\/},
  \arxiv{arXiv: arXiv:1011.1669v3}.

\bibitem[Wilkie \& Dorland(2016)]{wilkie2016}
{\sc \au{Wilkie, G.~J.} \& \au{Dorland, W.}} \yr{2016}  \at{{Fundamental form
  of the electrostatic $\delta$f -PIC algorithm and discovery of a converged
  numerical instability}}.  \jt{Phys. Plasmas}  \bvol{23}~(5),  \pg{052111}.

\bibitem[Xu {\em et~al.\/}(2010)Xu, Naulin, Fundamenski, Rasmussen, Nielsen \&
  Wan]{xu2010}
{\sc \au{Xu, G.~S.}, \au{Naulin, V.}, \au{Fundamenski, W.}, \au{Rasmussen,
  J.~J.}, \au{Nielsen, A.~H.} \& \au{Wan, B.~N.}} \yr{2010}  \at{{Intermittent
  convective transport carried by propagating electromagnetic filamentary
  structures in nonuniformly magnetized plasma}}.  \jt{Phys. Plasmas}
  \bvol{17}~(2).

\bibitem[Xu {\em et~al.\/}(2008)Xu, Umansky, Dudson \& Snyder]{xu2008boundary}
{\sc \au{Xu, X.}, \au{Umansky, M.}, \au{Dudson, B.} \& \au{Snyder, P.}}
  \yr{2008}  \at{{Boundary plasma turbulence simulations for tokamaks}}.
  \jt{Comm. Comput. Phys}  \bvol{4}~(5),  \pg{949--979}.

\bibitem[Zeiler {\em et~al.\/}(1997)Zeiler, Drake \& Rogers]{zeiler1997}
{\sc \au{Zeiler, A.}, \au{Drake, J.~F.} \& \au{Rogers, B.}} \yr{1997}
  \at{{Nonlinear reduced Braginskii equations with ion thermal dynamics in
  toroidal plasma}}.  \jt{Phys. Plasmas}  \bvol{4}~(6),  \pg{2134--2138}.

\bibitem[Zhang \& Shu(2011)]{zhang2011}
{\sc \au{Zhang, X.} \& \au{Shu, C.~W.}} \yr{2011} {Maximum-principle-satisfying
  and positivity-preserving high-order schemes for conservation laws: Survey
  and new developments}.

\bibitem[Zhang {\em et~al.\/}(2020)Zhang, Krasheninnikov \&
  Smolyakov]{zhang2020}
{\sc \au{Zhang, Y.}, \au{Krasheninnikov, S.~I.} \& \au{Smolyakov, A.~I.}}
  \yr{2020}  \at{{Influence of flow shear on localized Rayleigh–Taylor and
  resistive drift wave instabilities}}.  \jt{Contrib. to Plasma Phys.}
  \bvol{60}~(5-6),  \pg{e201900098}.

\bibitem[Zhu {\em et~al.\/}(2017)Zhu, Francisquez \& Rogers]{zhu2017global}
{\sc \au{Zhu, B.}, \au{Francisquez, M.} \& \au{Rogers, B.~N.}} \yr{2017}
  \at{{Global 3D two--fluid simulations of the tokamak edge region: Turbulence,
  transport, profile evolution, and spontaneous E x B rotation}}.  \jt{Phys.
  Plasmas}  \bvol{24}~(5),  \pg{55903}.

\bibitem[Zhu {\em et~al.\/}(2018)Zhu, Francisquez \& Rogers]{zhu2018}
{\sc \au{Zhu, B.}, \au{Francisquez, M.} \& \au{Rogers, B.~N.}} \yr{2018}
  \at{{GDB: A global 3D two-fluid model of plasma turbulence and transport in
  the tokamak edge}}.  \jt{Comput. Phys. Commun.}  \bvol{232},  \pg{46--58}.

\bibitem[Zhu {\em et~al.\/}(2006)Zhu, Hegna \& Sovinec]{zhu2006}
{\sc \au{Zhu, P.}, \au{Hegna, C.~C.} \& \au{Sovinec, C.~R.}} \yr{2006}
  \at{{Nonlinear growth of a line-tied g mode near marginal stability}}.
  \jt{Phys. Plasmas}  \bvol{13}~(10),  \pg{102307}.

\bibitem[Zocco {\em et~al.\/}(2015)Zocco, Helander \& Connor]{zocco2015}
{\sc \au{Zocco, A.}, \au{Helander, P.} \& \au{Connor, J.~W.}} \yr{2015}
  \at{{Magnetic compressibility and ion-temperature-gradient-driven
  microinstabilities in magnetically confined plasmas}}.  \jt{Plasma Phys.
  Control. Fusion}  \bvol{57}~(8).

\bibitem[Zweben {\em et~al.\/}(2007)Zweben, Boedo, Grulke, Hidalgo, LaBombard,
  Maqueda, Scarin \& Terry]{zweben2007}
{\sc \au{Zweben, S.~J.}, \au{Boedo, J.~A.}, \au{Grulke, O.}, \au{Hidalgo, C.},
  \au{LaBombard, B.}, \au{Maqueda, R.~J.}, \au{Scarin, P.} \& \au{Terry,
  J.~L.}} \yr{2007}  \at{{Edge turbulence measurements in toroidal fusion
  devices}}.  \jt{Plasma Phys. Control. Fusion}  \bvol{49}~(7),  \pg{S1}.

\bibitem[Zweben {\em et~al.\/}(2015)Zweben, Davis, Kaye, Myra, Bell, Leblanc,
  Maqueda, Munsat, Sabbagh, Sechrest \& Stotler]{zweben2015}
{\sc \au{Zweben, S.~J.}, \au{Davis, W.~M.}, \au{Kaye, S.~M.}, \au{Myra, J.~R.},
  \au{Bell, R.~E.}, \au{Leblanc, B.~P.}, \au{Maqueda, R.~J.}, \au{Munsat, T.},
  \au{Sabbagh, S.~A.}, \au{Sechrest, Y.} \& \au{Stotler, D.~P.}} \yr{2015}
  \at{{Edge and SOL turbulence and blob variations over a large database in
  NSTX}}.  \jt{Nucl. Fusion}  \bvol{55}~(9),  \pg{093035}.

\bibitem[Zweben {\em et~al.\/}(2020)Zweben, Fredrickson, Myra, Podest{\`{a}} \&
  Scotti]{zweben2020}
{\sc \au{Zweben, S.~J.}, \au{Fredrickson, E.~D.}, \au{Myra, J.~R.},
  \au{Podest{\`{a}}, M.} \& \au{Scotti, F.}} \yr{2020}  \at{{MHD-blob
  correlations in NSTX}}.  \jt{Phys. Plasmas}  \bvol{27}~(5),  \pg{52505}.

\bibitem[Zweben {\em et~al.\/}(2017)Zweben, Terry, Stotler \&
  Maqueda]{zweben2017}
{\sc \au{Zweben, S.~J.}, \au{Terry, J.~L.}, \au{Stotler, D.~P.} \& \au{Maqueda,
  R.~J.}} \yr{2017}  \at{{Invited Review Article: Gas puff imaging diagnostics
  of edge plasma turbulence in magnetic fusion devices}}.  \jt{Rev. Sci.
  Instrum.}  \bvol{88}~(4),  \pg{41101}.

\end{thebibliography}

\end{document}